\documentclass[11pt]{article}
\usepackage[margin=1.25in]{geometry}
\RequirePackage{amsmath,amsfonts,amsthm,amssymb}
\RequirePackage{accents}
\RequirePackage{algorithm,algorithmic}
\RequirePackage{xcolor}
\RequirePackage{bbm}
\RequirePackage{gensymb}
\RequirePackage{centernot}
\RequirePackage{caption}
\RequirePackage{bm}
\RequirePackage{enumerate}
\RequirePackage{enumitem}
\RequirePackage{graphicx}
\RequirePackage{multirow}
\RequirePackage{booktabs}
\RequirePackage{mathtools}
\RequirePackage[authoryear]{natbib}



\theoremstyle{plain}
\newtheorem{theorem}{Theorem}[section]
\newtheorem{corollary}{Corollary}[section]

\newtheorem{lemma}{Lemma}[section]
\newtheorem{proposition}{Proposition}[section]
\newtheorem{assumption}{Assumption}[section]

\usepackage{url}

\theoremstyle{remark}
\newtheorem{definition}{Definition}[section]
\newtheorem{remark}{Remark}[section]
\newtheorem{example}{Example}[section]

\newcommand{\Var}{\textnormal{Var}}

\newcommand{\Cov}{\textnormal{Cov}}

\newcommand{\Unif}{\textnormal{Unif}}

\newcommand{\Bern}{\textnormal{Bern}}

\newcommand{\aipw}{\textnormal{AIPW}}
\newcommand{\pl}{\textnormal{PL}}
\newcommand{\epl}{\textnormal{EPL}}

\newcommand{\diag}{\textnormal{diag}}

\newcommand{\ate}{\textnormal{ATE}}

\newcommand{\expit}{\textnormal{expit}}

\newcommand{\ovl}{\overline}

\newcommand{\indic}{\bm{1}}

\newcommand{\simiid}{\stackrel{\mathrm{iid}}{\sim}} 

\newcommand{\ce}{\mathcal{E}}
\newcommand{\cf}{\mathcal{F}}

\newcommand{\ci}{\mathcal{I}}

\newcommand{\cn}{\mathcal{N}}

\newcommand{\cs}{\mathcal{S}}
\newcommand{\ct}{\mathcal{T}}
\newcommand{\cw}{\mathcal{W}}

\renewcommand{\le}{\leqslant}
\renewcommand{\leq}{\leqslant}

\renewcommand{\geq}{\geqslant}
\renewcommand{\Pr}{\textnormal{Pr}}
\renewcommand{\preceq}{\preccurlyeq}
\renewcommand{\succeq}{\succcurlyeq}
\newcommand{\dunif}{\mathbb{U}}

\newcommand{\fromart}[1]{\begingroup\color{blue}#1\endgroup}

\newcommand{\e}{\mathbb{E}}
\newcommand{\real}{\mathbb{R}}
\newcommand{\ess}{\mathbb{S}}

\newcommand{\mrd}{\mathrm{d}} 
\newcommand{\giv}{\!\mid\!}

\newcommand{\cg}{\mathcal{G}}
\newcommand{\cm}{\mathcal{M}}
\newcommand{\cx}{\mathcal{X}}

\newcommand{\phe}{\phantom{=}}

\newcommand{\tr}{\textnormal{tr}}

\newcommand{\sconv}{\textnormal{sconv}}
\newcommand{\indep}{\perp \!\!\! \perp}
\newcommand{\tod}{\stackrel{\mathrm{d\,\,}}\rightarrow}

\newcommand{\la}{\mathrm{LA}}
\DeclareMathOperator*{\argmin}{arg\,min}
\DeclareMathOperator*{\argmax}{arg\,max}

\title{Double machine learning and design in batch adaptive experiments}
\author{Harrison H. Li\\Stanford University
 \and
Art B. Owen\\
Stanford University
}
\date{September 2023}
\begin{document}
\maketitle
\begin{abstract}
We consider an experiment with at least two stages or batches and $O(N)$ subjects per batch.
First, we propose a semiparametric treatment effect estimator that efficiently pools information across the batches,
and show it asymptotically dominates alternatives that aggregate single batch estimates.
Then,
we consider the design problem of learning propensity scores for assigning treatment in the later batches of the experiment 
to maximize the asymptotic precision of this estimator.
For two common causal estimands,
we estimate this precision using observations from previous batches,
and then solve a finite-dimensional concave maximization problem
to adaptively learn flexible propensity scores that converge to suitably defined optima in each batch at rate $O_p(N^{-1/4})$.
By extending the framework of double machine learning,
we show this rate suffices for our pooled estimator
to attain the targeted precision after each batch,
as long as nuisance function estimates converge at rate $o_p(N^{-1/4})$.
These relatively weak rate requirements enable the investigator to avoid the common practice of discretizing the covariate space for design and estimation in batch adaptive experiments
while maintaining the advantages of pooling.
Our numerical study shows that
such discretization often leads to substantial asymptotic and finite sample precision losses 
outweighing any gains from design.
\end{abstract}

\section{Introduction}\label{sec:introduction}

In sequential experimentation, we can use 
earlier observations to adjust
our treatment allocation policy for subsequent observations and thereby gain improved
estimation of causal effects in the overall study.
For instance, for an experiment with one treatment arm and one control arm,~\citet{neyman1934} showed that choosing the number of subjects in each arm to be proportional to the outcome standard deviation of that arm
minimizes the variance of the treatment effect estimate based on the difference in means.
While these standard deviations are unknown,
they can be estimated using the initial data.
Then the Neyman allocation can be approximated to improve the sample efficiency of the remainder of the experiment~\citep{hahn2011adaptive,blackwell2022batch, zhao2023adaptive,dai2023clip}.

We study a version of this design problem 
for an experiment
divided into a small number of stages or \emph{batches}.
The design, or treatment assignment mechanism,
can be updated \emph{adaptively} for later batches based on the observations from earlier batches
to improve
the precision of the causal estimate computed at the end of the experiment.
A salient feature of our setting is knowledge of pre-treatment covariates that can further improve precision.
Thus, we conceptualize our design problem as choosing a \emph{propensity score} for each batch.
The propensity score specifies the probability that a subject receives treatment
given their covariates
(throughout, we consider the setting of a binary treatment).
The propensity score is well known to be a key mathematical object to be estimated in the causal analysis of observational data (e.g.~\citet{rosenbaum1983central}).
In a randomized experiment it is known and under the control of the investigator.
Hence, it can be exploited for design.

Our specific design objective is to minimize an appropriate scalarization of the asymptotic covariance matrix of an estimator that efficiently pools information across all batches of the experiment.
After describing our mathematical notation and setup in Section~\ref{sec:setup},
we present and study an oracle version of this pooled estimator in Section~\ref{sec:pooled_estimation}.
That oracle is given knowledge of some nuisance parameters, 
typically infinite-dimensional mean or variance functions.
It pertains to 
a so-called
``non-adaptive batch experiment"
where treatment is assigned with possibly varying but nonrandom propensity scores across batches.
In certain cases,
our oracle pooled estimator asymptotically dominates the best possible 
alternative that aggregates single batch estimates,
regardless of the per-batch propensities.
This justifies designing for the pooled estimator instead of an
aggregation-based alternative.

For such a design procedure to be useful,
however, we must show that the targeted asymptotic precision of the oracle pooled estimator is in fact attainable in a batched experiment where the propensity scores used are adaptive (data-dependent)
and the nuisance parameters need to be estimated.
We address these challenges in Section~\ref{sec:batch_clt}
by extending the framework of double machine learning formalized by~\citet{chernozhukov2018double},
hereafter DML,
to what we call a ``convergent split batch adaptive experiment," or CSBAE.
In a CSBAE,
observations in each batch are split into $K$ folds,
and treatment is assigned in each fold according to an adaptive propensity score that only depends on observations from previous batches within the same fold.
Within a given batch,
the $K$ adaptive propensity scores are further required to converge to a common limit
at rate $O_p(N^{-1/4})$ in root mean square (RMS).
Our DML extension then shows that by plugging in estimated nuisance functions,
we can construct a feasible estimator in a CSBAE
that is asymptotically equivalent to
the oracle pooled estimator computed on the limiting non-adaptive batch experiment.
The nuisance function estimates only need to converge at the rate $o_p(N^{-1/4})$.

Section~\ref{sec:batch_learning} details a finite dimensional concave maximization procedure (Algorithm~\ref{alg:csbae})
that provably constructs a CSBAE
for which the limiting propensities in each batch are sequentially optimal within a function class satisfying standard complexity conditions.
Hence, we can effectively design for our pooled estimator in a batch adaptive experiment
with theoretical guarantees that our final estimator will indeed attain the targeted optimal asymptotic precision,
even with a fairly flexible propensity score learning method and nonparametric machine learning estimates for nuisance functions.

To the best of our knowledge,
existing work either designs for a less efficient alternative to our pooled estimator,
or discretizes the covariate space at the design and estimation stages to construct a feasible variant of the pooled estimator on a batch adaptive experiment.
Our simulations in Section~\ref{sec:simulations} suggest that the latter approach in particular
can lead to substantial precision losses that swamp any gains from design,
even with a moderate number of continuous covariates.
Thus, for the practitioner,
we provide an end-to-end design and estimation procedure to efficiently handle continuous covariates in a batched experiment.

\subsection{Related work}
\label{sec:lit_review}
There has been substantial research interest in adaptive experiment designs in recent years.
In many applications,
treatment assignments are updated in an attempt to maximize the (expected) 
response values
of either those in the experiment,
as in adaptive bandit algorithms~\citep{russo2018tutorial,hao2020adaptive},
or those in the superpopulation from which the experimental subjects are assumed to arrive~\citep{xu2018fully,kasy2021adaptive}.
Inference on data collected from these algorithms can be challenging since the treatment assignment rules often do not converge~\citep{hadad2021confidence,zhang2020inference,zhang2021statistical}.
By contrast,
in our setting where the goal is purely statistical (maximizing asymptotic precision of the treatment estimate),
the design objective is a static propensity score to be learned consistently.
An interesting direction for further study would be to design for a mixture of both statistical and non-statistical objectives.
For example, one might expand the literature on tie-breaker designs~\citep{owen2020optimizing,morrison2022optimality,li2023general,kluger2023kernel}) to the setting of batched experiments.

The present work can be viewed as an extension of~\citet{hahn2011adaptive} in several directions.
Those authors considered a two batch experiment
to estimate the average treatment effect (ATE) as precisely as possible.
Using data from the first batch to estimate variance functions,
they estimate the asymptotic variance of a pooled version of the semiparametric efficient ATE estimator of~\citet{hirano2003efficient}
for a coarsely discretized covariate.
Then,
they learn a propensity score for the second batch that approximately minimizes this variance.
The covariate discretization ensures
nuisance functions and optimal propensities can be estimated at parametric $O_p(N^{-1/2})$ rates without parametric assumptions.
Consequently,
a feasible version of the pooled estimator
indeed attains the targeted asymptotic variance  on the batch adaptive experiment.
We generalize this pooling construction beyond the setting of ATE estimation and relax these rate requirements
to those described in the previous section.
This permits more efficient handling of continuous covariates through nonparametric nuisance function estimates
and more flexible adaptive propensity scores.

Other approaches to extend the work of~\citet{hahn2011adaptive} include~\citet{kato2020efficient},
who consider an online setting where subjects from a stationary superpopulation enter one at a time,
without batches.
Similarly, the literature on covariate-adjusted response-adaptive (CARA) designs has focused on different but related objectives, both statistical and ethical~\citep{zhang2007asymptotic,zhu2023covariate}.
In the batched setting,
~\citet{tabord-meehan2022stratification}
proposes a method to learn a variance-minimizing stratification of the covariate space of fixed size,
avoiding the need to discretize the space prior to observing the data as in~\citet{hahn2011adaptive}.~\citet{cytrynbaum2021designing} showed that by performing a form of highly stratified treatment assignment called local randomization,
consistent variance function estimates from the first batch
make it possible to attain the semiparametric lower bound for ATE estimation in the second batch with optimal propensity score
without having to estimate the conditional mean functions.
Neither of these approaches, however,
maintains the efficiency advantages of pooling.
They also do not immediately extend beyond ATE estimation.

\section{Setup and notation}
\label{sec:setup}
Let $T \geq 2$ be the number of batches in the experiment.
Each subject $i=1,\ldots,N_t$ in batch $t=1,\ldots,T$
has observed covariates $X_{ti}\in\real^d$ and potential
outcomes $Y_{ti}(0),Y_{ti}(1)\in\real$.
We place these in the vectors $S_{ti}=(X_{ti}^\top, Y_{ti}(0),Y_{ti}(1))^\top$ which
are exogenous in our model.
Let $Z_{ti} \in \{0,1\}$ be the binary treatment indicator for this subject,
which is controlled by the investigator.
Under the usual stable unit value treatment assumption (SUTVA),
the observed outcome is
\begin{equation}
\label{eq:sutva}
Y_{ti} = Z_{ti}Y_{ti}(1) + (1-Z_{ti})Y_{ti}(0).
\end{equation}
Then the available data for the subject is $W_{ti}=(X_{ti}^\top,Z_{ti},Y_{ti})^\top \in \cw$.
We assume that the vectors $S_{ti}\simiid P^S$ for $t=1,\ldots,T$ and $i=1,\ldots,N_t$,
for some distribution $P^S$.
Appendix~\ref{app:nonstationary} relaxes this assumption to 
permit certain forms of non-stationarity across batches,
such as covariate shifts.
It will be convenient to define the functions 
\begin{equation}
\label{eq:cond_moments}
m_0(z,x)=\e[Y(z) \giv X=x] 
\quad \text{and}\quad   
v_0(z,x)=\Var(Y(z) \giv X=x),
\end{equation}
for $z\in\{0,1\}$ and $x\in\cx$.
These expectations are taken under $P^S$.

Let $N = N_1 + \ldots +N_T$. 
Then as in~\citet{hahn2011adaptive},~\citet{che2023adaptive} and others,
we consider a proportional asymptotic regime
\begin{equation}
\label{eq:prop_asymp_limit}
\lim_{N \rightarrow \infty} \frac{N_t}{N} = \kappa_t \in (0,1), \quad t=1,\ldots,T
\end{equation}
as $N \rightarrow \infty$.
In settings where the batch sizes are fully controlled by the experimenter,
it may be theoretically preferable to make initial batch sample sizes a vanishing fraction of the total sample size~\citep{zhao2023adaptive}.
However, 
in many settings
the batch sizes are exogenously constrained to satisfy~\eqref{eq:prop_asymp_limit}
unless observations are discarded.

For various $q \geq 1$ and probability measures $P$ on some space $\Omega$,
it will be useful to consider function norms of the form
\[
\|f\|_{q,P} = \left(\int |f(w)|^q \mrd P(w)\right)^{1/q}
\]
for $f\in L^q(P)$.
We will  
use propensity scores
denoted by $e(\cdot)$ with various subscripts.
A propensity score $e(\cdot)$ specifies $e(x)=\Pr(Z=1 \mid X=x)$,
the probability of treatment conditional on covariates.
We will typically require propensity scores to lie in $\cf_{\gamma}$,
the set of all measurable functions on $\cx$ taking on values in the interval $[\gamma,1-\gamma]$ for some $\gamma \in [0,1/2)$.
We use $\|A\|$ to denote the square root of the sum of the squared entries of any vector, matrix, or tensor $A$.
For any integer $p \geq 1$, 
$\ess_+^p$ will denote the set of symmetric positive semidefinite $p \times p$ real matrices,
and $\ess_{++}^p$ will be the set of symmetric positive definite $p \times p$ real matrices.
Finally, for any real vector $v$, we write $v^{\otimes 2}=vv^\top$.

We summarize the preceding requirements for the data generating process in Assumption~\ref{assump:DGP}.
Assumption~\ref{assump:DGP} does not impose any restrictions on the treatment assignment process,
which will be discussed at length in subsequent sections.

\begin{assumption}[Data generating process]
\label{assump:DGP}
For some fixed number of batches $T \geq 2$,
the vectors
\[
S_{ti}=(X_{ti},Y_{ti}(0),Y_{ti}(1)), \quad 1 \leq t \leq T,\quad 1 \leq i \leq N_t
\]
are independent and identically distributed (i.i.d.) from a distribution $P^S$.
Furthermore, the sample sizes $N_t$ satisfy~\eqref{eq:prop_asymp_limit},
and the vector $W_{ti}=(X_{ti},Z_{ti},Y_{ti})$ is observed
where the outcomes $Y_{ti}$ satisfy the SUTVA assumption~\eqref{eq:sutva}.
\end{assumption} 

\subsection{Estimands and score equations}
Consider the setting where $T=1$ (so we can drop the batch subscript $t$),
and the observations $W_1,\ldots,W_N$
are i.i.d.
Suppose additionally that~\eqref{eq:sutva} holds
along with the unconfoundedness assumption
\[
(Y_i(0),Y_i(1)) \indep Z_i \mid X_i, \quad i=1,\ldots,N.
\]
Then many popular causal estimands $\theta_0 \in \Theta \subseteq \real^p$ are identified by a score equation 
\[
\e[s(W;\theta_0,\nu_0,e_0)]=0.
\]
In this score equation,
$\nu_0$ is a vector of possibly infinite-dimensional nuisance parameters lying in a nuisance set $\cn$,
and $e_0=e_0(\cdot):\cx \rightarrow [0,1]$ is the propensity score.
Following Section 3.1 of~\citet{chernozhukov2018double},
we will assume for simplicity that the score $s(\cdot)$ is \emph{linear} in the sense that
\begin{equation}
\label{eq:linear_score}
s(w;\theta,\nu,e) = s_a(w;\nu,e)\theta + s_b(w;\nu,e), \quad \forall w \in \cw,\ \theta \in \Theta,\ (\nu,e) \in \cn \times \cf_{\gamma}
\end{equation}
for some $\gamma \in [0,1/2)$,
$s_a(\cdot,\nu,e):\cw \rightarrow \real^{p \times p}$,
and $s_b(\cdot,\nu,e): \cw \rightarrow \real^p$.

When $T>1$,
propensity scores may vary across batches by design or external constraints.
For any propensity $e=e(\cdot)$ and integrable function $f:\cw \rightarrow \real$,
we use the subscripted notation $\e_e[f(W)]=\int f(w) \mrd P_e(w)$ where $P_e=P_e^W$ is the distribution of 
$W=(X,Z,Y)=(X,Z,ZY(1)+(1-Z)Y(0))$
induced by $S=(X,Y(0),Y(1)) \sim P^S$
and $Z \giv X \sim \Bern(e(X))$ under the SUTVA assumption~\eqref{eq:sutva}.
Further let $P^X$ be the marginal distribution of $X$ under $S \sim P^S$.
Then we will require the following score equations to hold for some $\gamma \in [0,1/2)$
to identify the our causal estimand $\theta_0$:
\begin{equation}
\label{eq:score}
\e_e[s(W;\theta_0,\nu_0,e')]=0, \quad \forall e,e' \in \cf_{\gamma}.
\end{equation}

Note that~\eqref{eq:score}
requires the score $s(\cdot)$ to have mean 0 when \emph{any} propensity score $e'(\cdot) \in \cf_{\gamma}$ is plugged in.
This plug-in propensity $e'(\cdot)$ may differ from the propensity $e(\cdot) \in \cf_{\gamma}$ used for treatment assignment
in the experiment that generated the observations $W$.
Such a robustness property is satisfied by definition so long as the score $s(\cdot)$ is \emph{doubly robust}.
It is required to ensure the validity of the pooled estimator that we propose in Section~\ref{sec:pooled_estimation}.
That estimator requires plugging in a mixture propensity score that averages propensities across all batches $t=1,\ldots,T$.
While identification of $\theta_0$ within each batch is possible by only requiring~\eqref{eq:score} to hold when $e(\cdot)=e'(\cdot)$,
this will not be sufficient to ensure validity of our pooled estimator. 
We formally restate our requirements on identification of the estimand $\theta_0$ in Assumption~\ref{assump:identification}.
\begin{assumption}[Estimand identification]
\label{assump:identification}
The estimand $\theta_0 \in \real^p$ of interest satisfies~\eqref{eq:score} for some $\gamma \in [0,1/2)$,
some nuisance parameters $\nu_0$ lying in a known convex set $\cn$,
and some score $s(\cdot)$ satisfying~\eqref{eq:linear_score}.
\end{assumption}

The first estimand we are motivated by is the ATE, given by
$\theta_{0,\ate}=\e[Y(1)-Y(0)]\in\real$
in our notation.
For an investigator interested in modeling how the treatment effect varies with $X$,
they may instead wish to estimate the regression parameter $\theta_{0,\pl} \in \real^p$ under a linear treatment effect assumption
\begin{equation}
\label{eq:pl_assumption}
\e[Y(1)-Y(0) \mid X] = \psi(X)^{\top}\theta_{0,\pl}.
\end{equation}
See~\citet{robinson1988root} for background on semiparametric estimation of $\theta_{0,\pl}$ under~\eqref{eq:pl_assumption},
which characterizes the well-known ``partially linear model."
We show next that both $\theta_{0,\ate}$ and $\theta_{0,\pl}$ are identified by score functions
that are linear in the sense of~\eqref{eq:linear_score}
and robust in the sense of~\eqref{eq:score},
and hence identifiable according to Assumption~\ref{assump:identification}.

\begin{example}[ATE estimation]
\label{ex:aipw_score}
Let $\theta_0=\theta_{0,\ate}$ be the estimand of interest.
Now consider the augmented inverse propensity weighting (AIPW) score function
\begin{equation}
\label{eq:aipw_score}
s_{\aipw}(W;\theta,\nu,e) = m(1,X)-m(0,X)+\frac{Z(Y-m(1,X))}{e(X)}-\frac{(1-Z)(Y-m(0,X))}{1-e(X)}-\theta
\end{equation}
for nuisance parameter $\nu=(m(0,\cdot),m(1,\cdot))$.
For each $\gamma > 0$,
it is well known that $\e_{e}[s_{\aipw}(W;\theta_0,\nu_0,e')]=0$ for any $e(\cdot),e'(\cdot)$ in $\cf_{\gamma}$
when $\nu_0=\nu_{0,\aipw}=(m_0(0,\cdot), m_0(1,\cdot))$ lies in
the nuisance set $\cn=\cn_{\aipw}=L^1(P^X) \times L^1(P^X)$.
Hence,
$s_{\aipw}(\cdot)$ satisfies the score equation~\eqref{eq:score}.
This score is also linear
because $s_{\aipw}=s_{\aipw,a}\theta+s_{\aipw,b}$ for
\begin{align*}
s_{\aipw,a}(W;\nu,e) & = -1,\quad\text{and} \\
s_{\aipw,b}(W;\nu,e) & = m(1,X)-m(0,X)+\frac{Z(Y-m(1,X))}{e(X)}-\frac{(1-Z)(Y-m(0,X))}{1-e(X)}.
\end{align*}
which completes the task of showing that $\theta_{0,\ate}$ satisfies Assumption~\ref{assump:identification}.
\end{example}

\begin{example}[Partially linear model]
\label{ex:epl_score}
Suppose the linear treatment effect assumption~\eqref{eq:pl_assumption} holds and $\theta_0=\theta_{0,\pl}$ is the estimand of interest. Now consider the weighted least squares score
\begin{equation}
\label{eq:epl_score}
s_{\epl}(W;\theta,\nu,e) = w(X;\nu,e)(Z-e(X))(Y-m(0,X)-Z\psi(X)^{\top}\theta)\psi(X)
\end{equation}
for nonnegative weights 
$$w(X,\nu,e)= (v(0,x)e(x)+v(1,x)(1-e(x)))^{-1}$$ 
with nuisance parameter $\nu=(m(0,\cdot),v(0,\cdot),v(1,\cdot)^\top$.
Let $\cf(\cx;I)$ be the set of all measurable functions $f:\cx \rightarrow I$.
Then if $\nu_0=(m_0(0,\cdot),v_0(0,\cdot),v_0(1,\cdot))$ lies in the nuisance set $\cn=\cn_{\epl}=L^2(P^X) \times \cf(\cx;[c,\infty)) \times \cf(\cx;[c,\infty))$
for some $c>0$,
we have $\e_e[s_{\epl}(W;\theta_0,\nu_0,e')]=0$ for any $e(\cdot),e'(\cdot)$ in $\cf_0$.
Furthermore,
$s_{\epl}(\cdot)$ is linear, 
because $s_{\epl}=s_{\epl,a}\theta+s_{\epl,b}$ for
\begin{align*}
s_{\epl,a}(W;\nu,e) & = -w(X;\nu,e)Z(Z-e(X))\psi(X)\psi(X)^{\top},\quad\text{and} \\
s_{\epl,b}(W;\nu,e) &= w(X;\nu,e)(Z-e(X))(Y-m(0,X))\psi(X).
\end{align*}
Thus, $\theta_{0,\epl}$ satisfies Assumption~\ref{assump:identification} with $\gamma=0$ and the score $s_{\epl}(\cdot)$.
Note that the score equations~\eqref{eq:score} hold for any nonnegative weight functions $w(\cdot,\nu,e) \in L^1(P^X)$,
though the specific choice in $s_{\epl}(\cdot)$ is semiparametrically efficient~\citep{chamberlain1992efficiency,ma2006efficient}.
\end{example}

Some of our later results pertain to general estimands identified by Assumption~\ref{assump:identification}.
Others will be specialized to the settings of Examples~\ref{ex:aipw_score} and~\ref{ex:epl_score}.

\section{Oracle pooled and aggregate estimation (non-adaptive)}
\label{sec:pooled_estimation}

Here we propose and analyze an oracle estimator $\hat{\theta}^*$ of a generic estimand $\theta_0$ satisfying Assumption~\ref{assump:identification} with score $s(\cdot)$.
It is an oracle in the sense that it uses the unknown
true value of the nuisance parameter $\nu_0$ from the score equations~\eqref{eq:score}.
We discuss feasible estimation of $\theta_0$, including estimation of $\nu_0$, 
in Section~\ref{sec:batch_clt}.
The estimator $\hat{\theta}^*$ pools observations across all batches $t=1,\ldots,T$.
We then prove a central limit theorem (CLT) for it
in the setting of a \emph{non-adaptive batch experiment}
where treatment in each batch $t=1,\ldots,T$ is assigned according to a fixed (non-random) propensity score $e_t(\cdot)$:

\begin{definition}[Non-adaptive batch experiment]
\label{def:non_adaptive_batch_experiment}
A \textbf{non-adaptive batch experiment} has data generating process satisfying Assumption~\ref{assump:DGP} and treatment assignments satisfying
\[
Z_{ti} = \indic(U_{ti} \leq e_t(X_{ti})), \quad t=1,\ldots,T,\ i=1,\ldots,N_t
\]
for some \emph{nonrandom} (i.e., non-adaptive) batch propensity scores $e_1(\cdot),\ldots,e_T(\cdot)$ and uniformly distributed random variables $\{U_{ti} \mid t=1,\ldots,T,i=1,\ldots,N_t\}$ that are i.i.d.\ and independent of the vectors $\{S_{ti} \mid t=1,\ldots,T,i=1,\ldots,N_t\}$.
\end{definition}

Next, we compare the pooled estimator $\hat{\theta}^*$  to an alternative oracle
that makes an optimal linear aggregation of per-batch
estimates.
It also satisfies a CLT,
but we show that our pooling
strategy dominates aggregation in terms of efficiency in the setting of Examples~\ref{ex:aipw_score} and~\ref{ex:epl_score}.
We remark that some authors
(e.g.~\citet{tabord-meehan2022stratification}) refer to this aggregation approach as pooling,
but we reserve that term for pooling data, not estimators. 

\subsection{Pooled oracle estimator}
The main idea behind the construction of our oracle estimator $\hat{\theta}^*$
is as follows.
After collecting the observations from all batches $t=1,\ldots,T$ in a non-adaptive batch experiment,
we ignore the batch structure and pool together the observations across batches.
Now consider a random draw $W=(X,Z,Y)$ from these pooled observations $\{W_{ti} \mid 1 \leq t \leq T, 1 \leq i \leq N_t\}$.
In the notation of Section~\ref{sec:setup},
it is straightforward to show that the distribution of $W$ is $P_{e_{0,N}}=\sum_{t=1}^T (N_t/N)P_{e_t}$,
where $e_{0,N}(\cdot)$ is the mixture propensity score
\begin{equation}
\label{eq:e_0_N}
e_{0,N}(x) = \Pr(Z=1 \mid X) = \sum_{t=1}^T \frac{N_t}{N}e_t(x).
\end{equation}
It will also be helpful to define the limiting mixture propensity score $e_0(\cdot)$ under the proportional asymptotics~\eqref{eq:prop_asymp_limit}:
\begin{equation}
\label{eq:e_0}
e_0(x) = \sum_{t=1}^T \kappa_te_t(x).
\end{equation}
When a particular set of nonrandom batch propensities $e_1(\cdot),\ldots,e_T(\cdot)$ is relevant,
we omit an additional subscript by letting
$\e_{0,N}[f(W)]=\e_{e_{0,N}}[f(W)]$
and $\e_0[f(W)]=\e_{e_0}[f(W)]$.
Using this notation, the oracle pooled estimator $\hat{\theta}^*$ is derived by solving the sample analogue of the mixture score equations $\e_{0,N}[s(W;\theta_0,\nu_0,e_{0,N})]=0$ for $\theta$:
\begin{equation}
\label{eq:theta_star}
\hat{\theta}^* = -\biggl(\frac{1}{N}\sum_{t=1}^T\sum_{i=1}^{N_t} s_a(W_{ti};\nu_0,e_{0,N})\biggr)^{-1} \biggl(\frac{1}{N}\sum_{t=1}^T\sum_{i=1}^{N_t} s_b(W_{ti};\nu_0,e_{0,N})\biggr).
\end{equation}
This $\hat\theta^*$ is only defined when the matrix inverse 
in~\eqref{eq:theta_star} exists.
That will be the case with probability tending to 1 so long as $\e_0[s_a(W;\nu_0,e_0)]$ is invertible,
which we will require in our theoretical results.
One such result is a CLT for $\hat{\theta}^*$.
The proofs of all our technical results are provided in Appendix~\ref{app:proofs}.
\begin{proposition}[Oracle CLT]
\label{prop:oracle_clt}
Let $\theta_0 \in \real^p$ be an estimand satisfying Assumption~\ref{assump:identification} for some $\gamma \in [0,1/2)$,
some score $s(\cdot)$,
and some nuisance parameters $\nu_0$.
Suppose observations $\{W_{ti} \mid t=1,\ldots,T,i=1,\ldots,N_t\}$ are collected from a non-adaptive batch experiment with batch propensities $e_1(\cdot),\ldots,e_T(\cdot) \in \cf_{\gamma}$,
and define $e_{0,N}=e_{0,N}(\cdot)$ and $e_0=e_0(\cdot)$ as in~\eqref{eq:e_0_N} and~\eqref{eq:e_0}, respectively.
Further assume the following conditions hold:
\begin{enumerate}
    \item \label{cond:score_continuity} For some sequence $\delta_N \downarrow 0$, we have 
    \begin{align*}
    \bigl(\e_0[\|s_a(W;\nu_0,e_{0,N})-s_a(W;\nu_0,e_0)\|^2]\bigr)^{1/2} & \leq \delta_N,\quad\text{and} \\
    \bigl(\e_0[\|s(W;\theta_0,\nu_0,e_{0,N})-s(W;\theta_0,\nu_0,e_0)\|^2]\bigr)^{1/2} & \leq \delta_N.
    \end{align*}
    \item \label{cond:s_a_invertibility} $\e_0[s_a(W;\nu_0,e_0)]$ is invertible and $\e_0[\|s(W;\theta_0,\nu_0,e_0)\|^2] < \infty$.
    \item \label{cond:moment_boundedness} For some $q>2$ and $C < \infty$ we have $\e_0[\|s(W;\theta_0,\nu_0,e_{0,N})\|^q]\leq C$ for all sufficiently large $N$.
\end{enumerate}
Then with $\hat{\theta}^*$ as defined in~\eqref{eq:theta_star}, we have
\begin{align*}
\sqrt{N}(\hat{\theta}^*-\theta_0) & \tod \mathcal{N}(0, V_0)
\end{align*}
where
\begin{align*}
V_0 & = \bigl(\e_0[s_a(W;\nu_0,e_0)]\bigr)^{-1}\bigl(\e_0[s(W;\theta_0,\nu_0,e_0)^{\otimes 2}]\bigr)\bigl(\e_0[s_a(W;\nu_0,e_0)]\bigr)^{-1}.
\end{align*}
\end{proposition}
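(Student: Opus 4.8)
The plan is to exploit the linearity~\eqref{eq:linear_score} of the score to reduce the statement to a law of large numbers for one sample average and a central limit theorem (CLT) for another, each taken over the pooled triangular array of independent but \emph{not} identically distributed summands $\{W_{ti}\}$. Writing
\[
\hat A_N=\frac1N\sum_{t=1}^T\sum_{i=1}^{N_t}s_a(W_{ti};\nu_0,e_{0,N})
\quad\text{and}\quad
G_N=\frac1{\sqrt N}\sum_{t=1}^T\sum_{i=1}^{N_t}s(W_{ti};\theta_0,\nu_0,e_{0,N}),
\]
the estimating equation~\eqref{eq:theta_star} together with linearity gives $\hat A_N(\hat\theta^*-\theta_0)=-G_N/\sqrt N$, so that $\sqrt N(\hat\theta^*-\theta_0)=-\hat A_N^{-1}G_N$ on the event that $\hat A_N$ is invertible. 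Setting $A_0:=\e_0[s_a(W;\nu_0,e_0)]$ and $\Sigma_0:=\e_0[s(W;\theta_0,\nu_0,e_0)^{\otimes 2}]$, the goal becomes to show that $\hat A_N\to A_0$ in probability (so that $\hat A_N^{-1}\to A_0^{-1}$ by continuity of matrix inversion near the invertible $A_0$ of Condition~\ref{cond:s_a_invertibility}, with the inverse existing with probability tending to $1$) and that $G_N\tod\mathcal N(0,\Sigma_0)$, after which Slutsky's theorem delivers $\sqrt N(\hat\theta^*-\theta_0)\tod\mathcal N(0,A_0^{-1}\Sigma_0(A_0^{-1})^\top)$, the sandwich form matching $V_0$.

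The CLT for $G_N$ is the crux. The key point is that each summand is \emph{exactly} mean zero: since $W_{ti}\sim P_{e_t}$ with $e_t\in\cf_\gamma$ and the plug-in propensity $e_{0,N}\in\cf_\gamma$ as well, the robustness requirement~\eqref{eq:score} gives $\e_{e_t}[s(W;\theta_0,\nu_0,e_{0,N})]=0$ for every batch $t$, so $\e[G_N]=0$ despite the mismatch between the data-generating $e_t$ and the pooled $e_{0,N}$. Using the mixture identity $P_{e_{0,N}}=\sum_t(N_t/N)P_{e_t}$, the summand covariances aggregate to $\Var(G_N)=\e_{0,N}[s(W;\theta_0,\nu_0,e_{0,N})^{\otimes 2}]$. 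I would show this converges to $\Sigma_0$ by splitting the difference into (i) a change of plug-in propensity from $e_{0,N}$ to $e_0$ \emph{at the fixed measure} $\e_0$, controlled by the $L^2$ continuity bound of Condition~\ref{cond:score_continuity} via Cauchy--Schwarz against the $L^q$ bound of Condition~\ref{cond:moment_boundedness}, and (ii) a change of measure $\e_{0,N}-\e_0=\sum_t(N_t/N-\kappa_t)\e_{e_t}$, whose coefficients vanish by~\eqref{eq:prop_asymp_limit} while each $\e_{e_t}$-moment stays bounded because $\kappa_t>0$. Finally, since $N_t/N\to\kappa_t>0$ forces $\e_{0,N}[\|s\|^q]\le2\e_0[\|s\|^q]\le2C$ for large $N$, the normalized $q$th absolute moments sum to $O(N^{1-q/2})\to0$; this Lyapunov condition of order $q>2$, with the variance convergence just established, yields $G_N\tod\mathcal N(0,\Sigma_0)$ by the Lindeberg--Feller CLT for triangular arrays (reducing to the scalar case by Cramér--Wold).

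For the law of large numbers, the same mixture identity gives $\e[\hat A_N]=\e_{0,N}[s_a(W;\nu_0,e_{0,N})]$, and its convergence to $A_0$ follows by the identical two-step (propensity-then-measure) argument built on Condition~\ref{cond:score_continuity}; meanwhile $\Var(\hat A_N)\le N^{-1}\e_{0,N}[\|s_a(W;\nu_0,e_{0,N})\|^2]=O(1/N)$ once that second moment is controlled, so $\hat A_N\to A_0$ in probability. Combining with the algebraic reduction of the first paragraph then completes the proof via Slutsky.

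I anticipate the main obstacle to be the bookkeeping of the \emph{double} limit: the summands themselves depend on $N$ through $e_{0,N}$, and the array is only \emph{asymptotically} a sample from the limiting mixture $P_{e_0}$. Conditions~\ref{cond:score_continuity}--\ref{cond:moment_boundedness} are tailored precisely to decouple these two effects---the continuity bound transfers everything from $e_{0,N}$ to $e_0$ at the fixed measure $\e_0$, while the uniform $L^q$ bound simultaneously furnishes the Lyapunov condition and the uniform integrability needed to pass $\e_{0,N}\to\e_0$---so the real care lies in checking that no term escapes the reach of one of these three conditions. The most delicate such term is the second moment of $s_a$ entering $\hat A_N$, which is not bounded directly by Conditions~\ref{cond:s_a_invertibility}--\ref{cond:moment_boundedness}; here I would again invoke Condition~\ref{cond:score_continuity} to reduce $\e_0[\|s_a(W;\nu_0,e_{0,N})\|^2]$ to $\e_0[\|s_a(W;\nu_0,e_0)\|^2]$ up to $\delta_N$, the finiteness of the latter being implicit in the very statement of that $L^2$ continuity bound.
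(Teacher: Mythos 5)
Your overall route coincides with the paper's: use linearity~\eqref{eq:linear_score} to write $\sqrt{N}(\hat\theta^*-\theta_0)=-\hat A_N^{-1}G_N$ with $\hat A_N$ the averaged $s_a$ and $G_N$ the normalized score sum, observe that every summand of $G_N$ is exactly mean zero because the robustness property~\eqref{eq:score} holds with the plug-in $e_{0,N}\in\cf_\gamma$, prove a triangular-array CLT for $G_N$ via a Lyapunov condition supplied by condition~\ref{cond:moment_boundedness} and Cram\'er--Wold, identify the limiting covariance with $\e_0[s(W;\theta_0,\nu_0,e_0)^{\otimes 2}]$ through the mixture identity $P_{e_{0,N}}=\sum_{t}(N_t/N)P_{e_t}$, and finish with Slutsky. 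Your CLT half is sound; in fact your explicit two-step justification of the variance convergence (change the plug-in propensity under the fixed measure $\e_0$ using condition~\ref{cond:score_continuity}, then change the measure using $N_t/N\to\kappa_t$ and $\mrd P_{e_t}/\mrd P_{e_0}\le\kappa_t^{-1}$) is spelled out more carefully than in the paper, which instead runs a Lyapunov CLT batch by batch and sums the independent batches.

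The gap is in your law of large numbers for $\hat A_N$. You bound $\Var(\hat A_N)\le N^{-1}\e_{0,N}[\|s_a(W;\nu_0,e_{0,N})\|^2]$ and then assert that the required second moment is finite because finiteness of $\e_0[\|s_a(W;\nu_0,e_0)\|^2]$ is ``implicit in the very statement'' of condition~\ref{cond:score_continuity}. It is not: that condition controls only the $L^2(P_{e_0})$ norm of the \emph{difference} $s_a(W;\nu_0,e_{0,N})-s_a(W;\nu_0,e_0)$, which can be small (even identically zero) while each term separately fails to be square integrable. The hypotheses of the proposition guarantee only a finite \emph{first} moment for $s_a(W;\nu_0,e_0)$ --- implicitly, through condition~\ref{cond:s_a_invertibility}, which presupposes that $\e_0[s_a(W;\nu_0,e_0)]$ exists --- so your Chebyshev argument is unavailable as stated. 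The repair is exactly the paper's device: decompose $\hat A_N=r_N+\sum_{t}(N_t/N)\,N_t^{-1}\sum_{i}s_a(W_{ti};\nu_0,e_0)$, where $r_N$ aggregates the differences; then $\e[\|r_N\|]\le\sum_t (N_t/N)\kappa_t^{-1}\delta_N$ by Jensen ($L^1\le L^2$) together with $\mrd P_{e_t}/\mrd P_{e_0}\le\kappa_t^{-1}$, so $r_N=o_p(1)$ by Markov's inequality, while each per-batch average converges to $\e_{e_t}[s_a(W;\nu_0,e_0)]$ by the weak law of large numbers for i.i.d.\ vectors with finite mean, and the mixture identity recombines these limits into $\e_0[s_a(W;\nu_0,e_0)]$. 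With that substitution for your second-moment step, the rest of your argument goes through.
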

\begin{proof}
See Appendix~\ref{proof:prop:oracle_clt}.
\end{proof}

For the estimands
$\theta_{0,\ate}$ and $\theta_{0,\pl}$,
following~\eqref{eq:theta_star} 
we use the scores $s_{\aipw}(\cdot)$ and $s_{\epl}(\cdot)$,
respectively to derive the oracle estimates
\begin{align}
\hat{\theta}^*_{\aipw} & =  \frac{1}{N} \sum_{t=1}^T\sum_{i=1}^{N_t} s_{\aipw,b}(W_{ti};\nu_{0,\aipw},e_{0,N}) \quad \text{ (recall $s_{\aipw, a}=-1$)},\quad \text{and} \label{eq:theta_star_aipw} \\
\hat{\theta}^*_{\epl} & = -\biggl(\frac{1}{N} \sum_{t=1}^T\sum_{i=1}^{N_t} s_{\epl,a}(W_{ti};\nu_{0,\epl},e_{0,N})\biggr)^{-1}\biggl(\frac{1}{N}\sum_{t=1}^T\sum_{i=1}^{N_t} s_{\epl,b}(W_{ti};\nu_{0,\epl},e_{0,N})\biggr), \label{eq:theta_star_epl} 
\end{align}
We now specialize the generic oracle CLT of Proposition~\ref{prop:oracle_clt} to these two estimators under some regularity conditions.
\begin{assumption}
\label{assump:ate_regularity}[Regularity for estimating $\theta_{0,\ate}$]
For some $C< \infty$ and $q>2$,
we have $(\e[|Y(z)|^q])^{1/q} \leq C$ and $\e[Y(z)^2 \mid X=x] \leq C$ for all $z=0,1$ and $x \in \cx$.
\end{assumption}
\begin{assumption}
\label{assump:pl_regularity}[Regularity for estimating $\theta_{0,\pl}$]
For some $C< \infty$ and $q>2$,
Assumption~\ref{assump:ate_regularity} holds.
Additionally, $\|\psi(x)\| \leq C$ for all $x \in \cx$,
and there exists $c>0$ such that $v_0(z,x) \geq c$ for all $z=0,1$ and $x \in \cx$.
Finally, the linear treatment effect assumption~\eqref{eq:pl_assumption} holds.
\end{assumption}
\begin{corollary}[Oracle CLT for $\hat{\theta}_{\aipw}^*$]
\label{cor:ate_oracle_clt}
Suppose Assumption~\ref{assump:ate_regularity} holds,
and let $\{W_{ti} \mid t=1,\ldots,T,i=1,\ldots,N_t\}$ be observations from a non-adaptive batch experiment with batch propensities $e_1(\cdot),\ldots,e_T(\cdot) \in \cf_{\gamma}$ for some $\gamma>0$.
Then $\sqrt{N}(\hat{\theta}^*_{\aipw}-\theta_{0,\ate}) \tod \mathcal{N}(0,V_{0,\aipw})$
where
\begin{align}
V_{0,\aipw} & = \e\left[\frac{v_0(1,X)}{e_0(X)} + \frac{v_0(0,X)}{1-e_0(X)} + (m_0(1,X)-m_0(0,X)-\theta_{0,\ate})^2\right]. \label{eq:v_0_aipw}
\end{align}
\end{corollary}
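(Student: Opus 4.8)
The plan is to obtain this corollary by verifying the three numbered conditions of Proposition~\ref{prop:oracle_clt} for the specific score $s_{\aipw}(\cdot)$ at the true nuisance $\nu_{0,\aipw} = (m_0(0,\cdot), m_0(1,\cdot))$, and then simplifying the generic variance $V_0$ to the claimed form $V_{0,\aipw}$. The invertibility half of Condition~\ref{cond:s_a_invertibility} is immediate since $s_{\aipw,a} \equiv -1$, so that $\e_0[s_{\aipw,a}(W;\nu_{0,\aipw},e_0)] = -1$ is trivially invertible. For the finite-moment requirements in Conditions~\ref{cond:s_a_invertibility} and~\ref{cond:moment_boundedness}, I would use that the inverse-propensity weights are bounded by $1/\gamma$, because $e_{0,N}, e_0 \in \cf_\gamma$ (convex combinations of functions in $\cf_\gamma$ remain in $\cf_\gamma$), and transfer the moment bound $(\e[|Y(z)|^q])^{1/q} \le C$ from $Y(z)$ to $m_0(z,\cdot)$ by Jensen's inequality. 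A triangle inequality in $L^q(P_{e_0})$ then gives a bound on $\e_0[\|s_{\aipw}(W;\theta_{0,\ate},\nu_{0,\aipw},e_{0,N})\|^q]$ that is uniform in $N$, which is Condition~\ref{cond:moment_boundedness} (and the $q=2$ case covers the remaining half of Condition~\ref{cond:s_a_invertibility}).

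For Condition~\ref{cond:score_continuity}, the first displayed bound holds trivially with difference identically zero, since $s_{\aipw,a} \equiv -1$ does not depend on the plugged-in propensity. For the second bound, $s_{\aipw}$ depends on the propensity only through the two inverse-weight terms, so the difference $s_{\aipw}(W;\theta_{0,\ate},\nu_{0,\aipw},e_{0,N}) - s_{\aipw}(W;\theta_{0,\ate},\nu_{0,\aipw},e_0)$ equals $Z(Y - m_0(1,X))\bigl(e_{0,N}(X)^{-1} - e_0(X)^{-1}\bigr)$ minus the analogous control-arm term. The key estimate is the uniform bound $\sup_x |e_{0,N}(x) - e_0(x)| \le \sum_{t=1}^T |N_t/N - \kappa_t| =: \eta_N$, which tends to $0$ by~\eqref{eq:prop_asymp_limit}; combined with the lower bound $\gamma$ this yields $|e_{0,N}(x)^{-1} - e_0(x)^{-1}| \le \eta_N/\gamma^2$. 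Taking the $L^2(P_{e_0})$ norm and using $\e_0[Z(Y - m_0(1,X))^2 \mid X] = e_0(X) v_0(1,X) \le C$ together with its control analogue from Assumption~\ref{assump:ate_regularity} produces the required $\delta_N = O(\eta_N)$.

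With the three conditions verified, Proposition~\ref{prop:oracle_clt} delivers $\sqrt{N}(\hat\theta^*_{\aipw} - \theta_{0,\ate}) \tod \mathcal{N}(0, V_0)$ with $V_0 = \e_0[s_{\aipw}(W;\theta_{0,\ate},\nu_{0,\aipw},e_0)^2]$, since the sandwiching factors $(\e_0[s_{\aipw,a}])^{-1} = -1$ square to one and $p=1$ here. The remaining task is the standard AIPW variance computation. Writing $\tau(X) = m_0(1,X) - m_0(0,X)$, I would condition on $X$: each inverse-propensity correction term has conditional mean zero under $P_{e_0}$, so $\e_0[s_{\aipw} \mid X] = \tau(X) - \theta_{0,\ate}$. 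Because $Z$ and $1-Z$ are never simultaneously nonzero, the two correction terms have vanishing conditional cross term, and the conditional expectations of their squares are $v_0(1,X)/e_0(X)$ and $v_0(0,X)/(1-e_0(X))$. The law of total variance then assembles these into $V_{0,\aipw}$ exactly as stated. I expect the only mildly delicate step to be Condition~\ref{cond:score_continuity}, where one controls the perturbation in the reciprocal weights; this is precisely where $\gamma > 0$ is needed (in contrast to the $\gamma = 0$ permitted in Example~\ref{ex:epl_score}), since without a uniform lower bound on $e_0$ the reciprocal weights cannot be controlled in $L^2$. Everything else reduces to bounded-moment bookkeeping and the textbook AIPW variance decomposition.
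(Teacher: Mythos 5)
Your proposal is correct and follows essentially the same route as the paper's own proof: verify the three numbered conditions of Proposition~\ref{prop:oracle_clt} (with the same key ingredients --- the uniform bound $\sup_x|e_{0,N}(x)-e_0(x)|\to 0$ from~\eqref{eq:prop_asymp_limit}, the $\gamma$-lower bound on the mixture propensities, and Jensen/Minkowski for the moment conditions), then apply the proposition and reduce $V_0=\e_0[s_{\aipw}(W;\theta_{0,\ate},\nu_{0,\aipw},e_0)^2]$ to $V_{0,\aipw}$ by the standard expansion in which the cross terms vanish by conditioning on $X$ and by $Z(1-Z)=0$. The one hypothesis of the proposition you do not explicitly check --- that $\theta_{0,\ate}$ satisfies Assumption~\ref{assump:identification} with $s_{\aipw}(\cdot)$ --- is already established in Example~\ref{ex:aipw_score}, so this is not a gap.
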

\begin{proof}
See Appendix \ref{proof:cor:ate_oracle_clt}.
\end{proof}

\begin{corollary}[Oracle CLT for $\hat{\theta}_{\epl}^*$]
\label{cor:pl_oracle_clt}
Suppose Assumption~\ref{assump:pl_regularity} holds,
and let $\{W_{ti} \mid t=1,\ldots,T,i=1,\ldots,N_t\}$ be observations from a non-adaptive batch experiment with batch propensities $e_1(\cdot),\ldots,e_T(\cdot) \in \cf_0$,
where $\e[e_0^2(X)(1-e_0(X))^2\psi(X)\psi(X)^{\top}] \in \ess^p_{++}$.
Then $\sqrt{N}(\hat{\theta}^*_{\epl}-\theta_{0,\pl}) \tod \mathcal{N}(0,V_{0,\epl})$,
where
\begin{align}
V_{0,\epl} & = \left(\e\left[\frac{e_0(X)(1-e_0(X))}{v_0(0,X)e_0(X) + v_0(1,X)(1-e_0(X))}\psi(X)\psi(X)^{\top}\right]\right)^{-1}. \label{eq:v_0_pl}
\end{align}
\end{corollary}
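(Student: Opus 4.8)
The plan is to apply the generic oracle CLT of Proposition~\ref{prop:oracle_clt} with score $s=s_{\epl}(\cdot)$, estimand $\theta_0=\theta_{0,\pl}$, and nuisance $\nu_0=\nu_{0,\epl}=(m_0(0,\cdot),v_0(0,\cdot),v_0(1,\cdot))$. The two tasks are then (i) to verify Conditions~\ref{cond:score_continuity}--\ref{cond:moment_boundedness} of that proposition under Assumption~\ref{assump:pl_regularity}, and (ii) to evaluate the sandwich matrix $V_0$ and show that it collapses to the stated form~\eqref{eq:v_0_pl}. A useful preliminary simplification, valid at the truth under the linear treatment effect assumption~\eqref{eq:pl_assumption}, is that the residual factor reduces to $Y-m_0(0,X)-Z\psi(X)^\top\theta_{0,\pl}=Y-m_0(Z,X)$, which has conditional mean $0$ and conditional variance $v_0(Z,X)$ given $(X,Z)$ (using that, under $\e_0$, $Z$ is drawn independently of the potential outcomes given $X$). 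I would record this identity first, since it drives both tasks.

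For task (i), the key structural observation is that the weight $w(X;\nu_0,e)=(v_0(0,X)e(X)+v_0(1,X)(1-e(X)))^{-1}$ stays bounded by $1/c$ uniformly in $e\in\cf_0$ because $v_0(z,x)\ge c>0$ by Assumption~\ref{assump:pl_regularity}; unlike the AIPW score, $s_{\epl}$ never divides by $e$ or $1-e$, which is exactly why the corollary can allow $\gamma=0$. Combined with $\|\psi\|\le C$, $|Z-e|\le 1$, and the $L^q$ control of $Y-m_0(Z,X)$ inherited from $(\e[|Y(z)|^q])^{1/q}\le C$, this gives Condition~\ref{cond:moment_boundedness} directly, and the finiteness claim $\e_0[\|s_{\epl}\|^2]<\infty$ in Condition~\ref{cond:s_a_invertibility} follows from the same bounds. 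For Condition~\ref{cond:score_continuity} I would note that $e\mapsto w(X;\nu_0,e)$ and $e\mapsto (Z-e)$ are Lipschitz in the propensity argument uniformly in $X$ (the former because $|\partial_e w|\le C/c^2$), so that $\|s_{\epl,a}(\cdot,\nu_0,e_{0,N})-s_{\epl,a}(\cdot,\nu_0,e_0)\|$ and the analogous difference for the full score at $\theta_0$ are bounded pointwise by a constant times $\|e_{0,N}-e_0\|_\infty\le\sum_{t=1}^T|N_t/N-\kappa_t|=:\delta_N\downarrow 0$, where the last inequality uses $e_t\in[0,1]$ and the bound after it uses~\eqref{eq:prop_asymp_limit}.

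For task (ii), I would compute the two expectations in $V_0$ by conditioning on $X$. Using $Z^2=Z$ gives $Z(Z-e_0(X))=Z(1-e_0(X))$, so $\e_0[s_{\epl,a}(W;\nu_0,e_0)]=-\e[w(X;\nu_0,e_0)\,e_0(X)(1-e_0(X))\psi(X)\psi(X)^\top]=:-A$, which is precisely the matrix inverted in~\eqref{eq:v_0_pl}. For the ``meat,'' the residual identity gives $\e_0[(Z-e_0(X))^2(Y-m_0(Z,X))^2\mid X]=e_0(X)(1-e_0(X))[(1-e_0(X))v_0(1,X)+e_0(X)v_0(0,X)]$, and after multiplying by $w(X;\nu_0,e_0)^2$ the bracketed factor cancels one power of the weight, leaving $\e_0[s_{\epl}(W;\theta_{0,\pl},\nu_0,e_0)^{\otimes 2}]=\e[w(X;\nu_0,e_0)\,e_0(X)(1-e_0(X))\psi(X)\psi(X)^\top]=A$. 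Hence $V_0=(-A)^{-1}A(-A)^{-1}=A^{-1}$, matching~\eqref{eq:v_0_pl}; this cancellation is the manifestation of the semiparametric efficiency of the weight noted in Example~\ref{ex:epl_score}.

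The remaining point is the invertibility of $A$ demanded by Condition~\ref{cond:s_a_invertibility}. Here I would use $v_0(z,x)\le C$ (from $\e[Y(z)^2\mid X=x]\le C$) to bound the weight below, giving $A\succeq C^{-1}\e[e_0(X)(1-e_0(X))\psi(X)\psi(X)^\top]$, and then the elementary pointwise inequality $e_0(1-e_0)\ge e_0^2(1-e_0)^2$ to conclude $A\succeq C^{-1}\e[e_0^2(X)(1-e_0(X))^2\psi(X)\psi(X)^\top]\succ 0$ by the corollary's hypothesis. I expect task (ii), and specifically making the weight cancellation in the meat line up exactly with $A$, to be the part most worth doing carefully; everything in task (i) is routine once the uniform boundedness of $w$ is in hand, and it is that boundedness, rather than any overlap condition, that makes the $\gamma=0$ case go through.
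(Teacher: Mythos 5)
Your proposal is correct and follows essentially the same route as the paper's proof: it verifies the hypotheses of Proposition~\ref{prop:oracle_clt} via the uniform bounds $C^{-1}\le w(X;\nu_0,e)\le c^{-1}$ (which is exactly what permits $\gamma=0$), and then shows the ``meat'' $\e_0[s_{\epl}(W;\theta_{0,\pl},\nu_0,e_0)^{\otimes 2}]$ equals $-\e_0[s_{\epl,a}(W;\nu_0,e_0)]$ so the sandwich collapses to~\eqref{eq:v_0_pl}. Your invertibility argument, lower bounding the weight by $C^{-1}$ and using $e_0(1-e_0)\ge e_0^2(1-e_0)^2$ together with the hypothesis on $\e[e_0^2(X)(1-e_0(X))^2\psi(X)\psi(X)^{\top}]$, is also the paper's.
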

\begin{proof}
See Appendix \ref{proof:cor:pl_oracle_clt}.
\end{proof}

\subsection{The aggregated oracle estimator}

When the covariate space $\cx$ is finite,
the oracle pooled estimator $\hat{\theta}^*_{\aipw}$ is equivalent to an oracle variant of the estimator proposed by~\citet{hahn2011adaptive}.
The complexity of constructing a \emph{feasible} pooled estimator 
for an \emph{adaptive} experiment in more general settings
has led other authors to instead consider single batch estimators that can lose considerable efficiency.
For instance,~\citet{cytrynbaum2021designing} proposes simply discarding the first batch in a two-batch experiment when computing the final estimate,
which is clearly inadmissible in our proportional asymptotic regime~\eqref{eq:prop_asymp_conv}.
Section 3.2 of~\citet{tabord-meehan2022stratification} suggests instead taking a linear aggregation of estimates computed separately on each batch,
as described above.
We will now show that even the best linearly aggregated oracle estimator is asymptotically dominated by our pooled estimators $\hat{\theta}^*_{\aipw}$ and $\hat{\theta}^*_{\epl}$.
While~\citet{tabord-meehan2022stratification} hypothesized in their Appendix C.2 that this may be true for ATE estimation,
they do not pursue this further as are unable to construct a feasible pooled estimator 
attaining the targeted oracle variance
when batch propensities are chosen adaptively using their stratification trees.
By contrast, our design approach
will allow us to construct such an estimator using our extension of double machine learning in Section~\ref{sec:batch_clt}.

Fix a non-adaptive batch experiment with batch propensities $e_1(\cdot),\ldots,e_T(\cdot)$.
For each batch $t=1,\ldots,T$,
consider an (oracle) estimator $\hat{\theta}_{t,\aipw}^*$ for $\theta_{0,\ate}$ computed by solving the empirical analogue of the score equations $\e_{e_t}[s_{\aipw}(W;\theta_{0,\ate},\nu_{0,\aipw},e_t)]=0$
that averages only those observations in batch $t$:
\[
\hat{\theta}_{t,\aipw}^* = -\biggl(\frac{1}{N_t}\sum_{i=1}^{N_t} s_{\aipw,a}(W_{ti};\nu_{0,\aipw},e_t)\biggr)^{-1} \biggl(\frac{1}{N_t}\sum_{i=1}^{N_t} s_{\aipw,b}(W_{ti};\nu_{0,\aipw},e_t)\biggr).
\]
By applying Proposition~\ref{prop:oracle_clt} with a single batch,
for each $t=1,\ldots,T$ we obtain the CLT
\begin{align*}
\sqrt{N_t}(\hat{\theta}_{t,\aipw}^*-\theta_{0,\ate}) & \tod \mathcal{N}(0,V_{t,\aipw}), 
\end{align*}
where
\begin{align*}
V_{t,\aipw} &=A_{t,\aipw}^{-1}B_{t,\aipw}A_{t,\aipw}^{-1},\quad\text{for} \\
A_{t,\aipw} & = \e_{e_t}[s_{\aipw,a}(W;\nu_{0,\aipw},e_t)],\quad\text{and} \\
B_{t,\aipw} & = \e_{e_t}[s(W;\theta_{0,\ate},\nu_{0,\aipw},e_t)^{\otimes 2}].
\end{align*}
Now, as stated in~\citet{slud2018combining},
the asymptotically unbiased linear combination of $\hat{\theta}_{1,\aipw}^*,\ldots,\hat{\theta}_{T,\aipw}^*$ with the smallest asymptotic covariance matrix with respect to the semidefinite ordering is the inverse covariance weighted estimator
\[
\hat{\theta}_{\aipw}^{*,(\la)} = \biggl(\,\sum_{t=1}^T \kappa_t V_{t,\aipw}^{-1}\biggr)^{-1} \sum_{t=1}^T \kappa_t V_{t,\aipw}^{-1}\hat{\theta}_{t,\aipw}^*.
\]
This optimal linearly aggregated estimator $\hat{\theta}_{\aipw}^{*,(\la)}$ satisfies the CLT
\[
\sqrt{N}\bigl(\hat{\theta}_{\aipw}^{*,(\la)}-\theta_{0,\ate}\bigr) \tod \mathcal{N}(0,V_{\aipw}^{(\la)}), \quad V_{\aipw}^{(\la)} = \biggl(\sum_{t=1}^T \kappa_t V_{t,\aipw}^{-1}\biggr)^{-1}.
\]

We can similarly define the linearly aggregated oracle estimator $\hat{\theta}_{\epl}^{*,(\la)}$ for $\theta_{0,\pl}$ based on combining per-batch estimates from the score $s_{\epl}(\cdot)$,
which satisfies a CLT $\sqrt{N}(\hat{\theta}^{*,(\la)}_{\epl}-\theta_{0,\pl}) \tod \mathcal{N}(0,V_{\epl}^{(\la)})$.
Here $V_{\epl}^{(\la)}$ is given by replacing 
$s_{\aipw}(\cdot)$ with $s_{\epl}(\cdot)$, $\nu_{0,\aipw}$ with $\nu_{0,\epl}$,
and $\theta_{0,\ate}$ with $\theta_{0,\pl}$ in the definition of $V_{\aipw}^{(\la)}$.
Our main result is then that regardless of the batch propensities,
$\hat{\theta}_{\aipw}^{*,(\la)}$ and $\hat{\theta}_{\epl}^{*,(\la)}$ are asymptotically dominated by our pooled estimators $\hat{\theta}_{0,\aipw}$ and $\hat{\theta}_{0,\epl}$, respectively.
This motivates our work in Section~\ref{sec:batch_learning} that designs for these estimators.
\begin{theorem}[Pooling dominates linear aggregation]
\label{thm:pooled_covariance}
Under the conditions of Corollary~\ref{cor:ate_oracle_clt},
$V_{0,\aipw} \leq V^{(\la)}_{\aipw}$.
Under the conditions of Corollary~\ref{cor:pl_oracle_clt},
$V_{0,\epl} \preceq V^{(\la)}_{\epl}$.
\end{theorem}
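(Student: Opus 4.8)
The plan is to reduce each claim to a concavity property of a scalar function of the propensity score, exploited through Jensen's inequality and the fact that the pooled propensity $e_0 = \sum_t \kappa_t e_t$ is the $\kappa$-weighted average of the batch propensities. The two cases differ in how the per-batch and pooled asymptotic variances are structured: in the partially linear case each $V_{t,\epl}$ is the inverse of an ``information matrix'' that is itself an expectation, so informations add and a single matrix inversion finishes the argument; in the ATE case each $V_{t,\aipw}$ is directly an expectation, so the optimal aggregation produces a harmonic mean of expectations and an extra concavity step is needed.

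For the partially linear model, I would first use the order-reversing property of matrix inversion to rewrite the target $V_{0,\epl} \preceq V^{(\la)}_{\epl}$ as $M_0 \succeq \sum_t \kappa_t M_t$, where $M_t = \e[\phi(e_t(X))\psi(X)\psi(X)^\top]$ and $\phi(e) = e(1-e)/(v_0(0,X)e + v_0(1,X)(1-e))$, so that $V_{t,\epl} = M_t^{-1}$ and $V^{(\la)}_{\epl} = (\sum_t \kappa_t M_t)^{-1}$. The heart of the matter is then the claim that $\phi$ is concave in $e$ on $[0,1]$ for fixed positive $v_0(0,X), v_0(1,X)$. A direct differentiation gives $\phi''(e) = -2v_0(0,X)v_0(1,X)/(v_0(0,X)e + v_0(1,X)(1-e))^3 < 0$, so by Jensen we obtain the pointwise bound $\phi(e_0(X)) \ge \sum_t \kappa_t \phi(e_t(X))$. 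Multiplying by the positive semidefinite matrix $\psi(X)\psi(X)^\top$ and taking expectations yields $M_0 \succeq \sum_t \kappa_t M_t$, which is exactly what is needed after inverting.

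For the ATE, I would begin by expressing the per-batch and pooled variances as expectations of conditional variances. Writing $b(X) = m_0(1,X) - m_0(0,X) - \theta_{0,\ate}$ and conditioning on $X$ (using $A_{t,\aipw} = -1$, unconfoundedness, and $Z(1-Z) = 0$) gives $V_{t,\aipw} = \e[g(e_t(X);X)]$ and $V_{0,\aipw} = \e[g(e_0(X);X)]$ for the conditional variance $g(e;X) = b(X)^2 + v_0(1,X)/e + v_0(0,X)/(1-e)$. The lift to $V_{0,\aipw} \le V^{(\la)}_{\aipw} = (\sum_t \kappa_t V_{t,\aipw}^{-1})^{-1}$ then proceeds in two steps. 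First, I would establish the pointwise inequality $g(e_0(X);X) \le (\sum_t \kappa_t g(e_t(X);X)^{-1})^{-1}$, which is equivalent to concavity of $e \mapsto 1/g(e;X)$; writing $h(e) = g(e;X)$, this reduces to verifying $h(e)h''(e) \ge 2h'(e)^2$, and substituting $\alpha = 1/e$, $\beta = 1/(1-e)$ collapses the difference of the two sides to $v_0(1,X)v_0(0,X)\,\alpha\beta(\alpha+\beta)^2 \ge 0$ after discarding the nonnegative $b(X)^2$ contribution. Second, since the weighted harmonic mean $(v_1,\ldots,v_T) \mapsto (\sum_t \kappa_t v_t^{-1})^{-1}$ is concave, being a pointwise minimum over weights $w$ (with $\sum_t w_t = 1$) of the linear maps $\sum_t w_t^2 v_t/\kappa_t$, Jensen's inequality gives $\e[(\sum_t \kappa_t g(e_t(X);X)^{-1})^{-1}] \le (\sum_t \kappa_t V_{t,\aipw}^{-1})^{-1} = V^{(\la)}_{\aipw}$. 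Chaining the two steps with $V_{0,\aipw} = \e[g(e_0(X);X)]$ completes the argument.

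I expect the ATE case to be the main obstacle, for two reasons. Conceptually, the aggregated ATE variance is a harmonic mean of expectations rather than an inverse of a sum of expectations, so a single pointwise comparison does not suffice and one must separately invoke concavity of the harmonic mean to move the expectation inside. Computationally, verifying that $1/g(e;X)$ is concave is the delicate step, since concavity of $1/h$ fails for generic convex $h$; the argument genuinely uses the special reciprocal form of $g$, and I would record the substitution $\alpha = 1/e$, $\beta = 1/(1-e)$ that reduces the required inequality to the manifestly nonnegative $v_0(1,X)v_0(0,X)\,\alpha\beta(\alpha+\beta)^2$. The partially linear case, by contrast, should be routine once the concavity of $\phi$ and the order-reversal of matrix inversion are in hand.
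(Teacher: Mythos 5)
Your proof is correct. The partially linear half coincides with the paper's own argument: both reduce the claim, after order-reversal of matrix inversion, to the pointwise concavity in $e$ of $e(1-e)/\bigl(v_0(0,x)e+v_0(1,x)(1-e)\bigr)$, multiplied by the positive semidefinite matrix $\psi(x)\psi(x)^\top$ and integrated; your explicit second-derivative computation $-2v_0(0,x)v_0(1,x)/\bigl(v_0(0,x)e+v_0(1,x)(1-e)\bigr)^3$ is exactly the fact the paper invokes.

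The ATE half takes a genuinely different route. The paper proves concavity of the \emph{functional} $e(\cdot)\mapsto h(e)=\bigl(\e[g(e(X);X)]\bigr)^{-1}$ on $\cf_\gamma$ directly: it differentiates twice along segments $e_\lambda=e_1+\lambda(e_2-e_1)$, obtaining a term of the form $B_\lambda h(e_\lambda)B_\lambda - C_\lambda$, and then recognizes $h(e_\lambda)^{-1}$, $B_\lambda$, $C_\lambda$ as $\Cov(b_\lambda)$, $\Cov(c_\lambda,b_\lambda)$, $\Cov(c_\lambda)$ for explicit mean-zero random vectors built from the potential outcomes, finishing with the multivariate Cauchy--Schwarz inequality $\Cov(c)\succeq \Cov(c,b)\Cov(b)^{-1}\Cov(c,b)^\top$ (the paper's Lemma~\ref{lemma:multivariate_cs}). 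You instead factor the claim into two finite-dimensional concavity facts: pointwise concavity of $1/g(e;x)$ in $e$ (your $hh''\geq 2(h')^2$ identity checks out; the difference equals $2b(x)^2\bigl(v_0(1,x)\alpha^3+v_0(0,x)\beta^3\bigr)+2v_0(0,x)v_0(1,x)\alpha\beta(\alpha+\beta)^2\geq 0$), and concavity of the weighted harmonic mean on $\real_{++}^T$ via its representation as a minimum of the linear maps $\sum_t w_t^2 v_t/\kappa_t$, chained by Jensen. Your route is more elementary: no differentiation of a functional, no covariance representation. What the paper's argument buys is robustness to the matrix-valued case --- the appendix notes the ATE claim extends to multivariate outcomes $Y(z)\in\real^q$, and the covariance/Cauchy--Schwarz step handles that directly, whereas your harmonic-mean step is intrinsically scalar (concavity of a matrix harmonic mean is a substantially deeper fact). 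For the theorem as stated, with scalar $\theta_{0,\ate}$, your proof is complete; the only caveat worth recording is that the pointwise step presupposes $g(e;X)>0$, but in the degenerate case where $b(X)$, $v_0(0,X)$, $v_0(1,X)$ all vanish, both sides of your pointwise inequality are zero, so nothing breaks.
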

\begin{proof}
See Appendix~\ref{proof:thm:pooled_covariance}.
\end{proof}

\section{Feasible pooled estimation in batch adaptive experiments}
\label{sec:batch_clt}
The oracle estimator $\hat\theta^*$ of~\eqref{eq:theta_star} depends on nuisance parameters $\nu_0$ that are unknown in practice.
Additionally, recall that our CLT for $\hat\theta^*$ (Proposition~\ref{prop:oracle_clt}) holds for a non-adaptive batch experiment.
Our goal is to choose propensities adaptively in each batch to improve precision.
Therefore, we 
would like to 
develop a feasible estimator $\hat\theta$ that attains the targeted asymptotic variance for experiments where treatment is assigned \emph{adaptively},
even when the nuisance parameters $\nu_0$ must be estimated.

As mentioned above,
our construction of such a feasible estimator $\hat{\theta}$ is based on extending the double machine learning (DML) framework of~\citet{chernozhukov2018double}.
The main requirements for $\hat{\theta}$ to have the same asymptotic variance as the corresponding oracle are convergence rate guarantees for both nuisance parameter estimates and the adaptive propensities.
Our DML extension ensures that these rate requirements can be made sub-parametric,
enabling the use of somewhat flexible machine learning methods.

The typical DML setting assumes access to a single sample $W_1,\ldots,W_N$ of i.i.d.\ observations.
An example of this setting is a non-adaptive batch experiment with $T=1$ and propensity $e_0(\cdot)$.
Then a standard DML estimator is based on two ingredients: a Neyman orthogonal score and cross-fitting.
Neyman orthogonality of the score $s(\cdot)$ at $(\nu_0,e_0)$ means a local insensitivity of the score equations to perturbations in $(\nu_0,e_0)$ in any direction:
\[
\frac{\partial}{\partial \lambda} \e_{e_0}\bigl[s(W_i;\theta_0,\nu_0+\lambda(\nu-\nu_0),e_0+\lambda(e-e_0))\bigr]=0, \quad \forall (\nu,e) \in \cn \times \cf_{\gamma}.
\]
It is well known that the scores $s_{\aipw}(\cdot)$ and $s_{\epl}(\cdot)$ are Neyman orthogonal~\citep{chernozhukov2018double}.
Given a Neyman orthogonal score $s(\cdot)$,
DML proceeds by constructing an estimator $\hat{\theta}$ by cross-fitting.
In cross-fitting,
the indices $1,\ldots,N$ are partitioned into $K$ (roughly) equally sized folds $\ci_1,\ldots,\ci_K$.
Then $\hat{\theta}$ is computed as the solution to the empirical score equations
$N^{-1} \sum_{k=1}^K \sum_{i \in \ci_k} s(W_i;\hat{\theta},\hat{\nu}^{(-k)},\hat{e}^{(-k)}) = 0$,
where for each $k=1,\ldots,K$,
$\hat{\nu}^{(-k)}$ and $\hat{e}^{(-k)}(\cdot)$ are estimates of $\nu_0$ and $e_0(\cdot)$,
respectively.
Each pair $(\hat{\nu}^{(-k)},\hat{e}^{(-k)}(\cdot))$ depends only on the observations $\{W_i \mid i \notin \ci_k\}$ outside fold $k$.
The sample splitting ensures that for each $k=1,\ldots,K$,
the estimates $(\hat{\nu}^{(-k)},\hat{e}^{(-k)})$ are independent of the observations in fold $k$.
By the arguments of~\citet{chernozhukov2018double},
such independence is key to guarantee that the feasible estimator $\hat{\theta}$ is equivalent (up to first order asymptotics) to the oracle $\hat{\theta}^*$ solving
$N^{-1} \sum_{k=1}^K \sum_{i \in \ci_k} s(W_i;\hat{\theta},\nu_0,e_0) = 0$ (cf.~\eqref{eq:theta_star}),
even when the estimates $(\hat{\nu}^{(-k)},\hat{e}^{(-k)})$ converge at sub-parametric rates.

To maintain this independence in an \emph{adaptive} batched experiment,
we require sample splitting at the \emph{design} stage,
as illustrated in Figure~\ref{fig:csbae},
along with convergence of the adaptive propensity scores.
Our notion of a convergent split batch adaptive experiment (CSBAE) in Definition~\ref{def:CSBAE} formalizes this. 
The main idea is to split the observations in \emph{every} batch $t=1,\ldots,T$ into $K$ folds.
Re-using the notation above from the standard DML setting with $T=1$,
we let $\ci_k$ denote the set of batch and observation indices $(t,i)$ assigned to fold $k=1,\ldots,K$.
Then the adaptive propensity used to assign treatment to a subject in batches $t=2,\ldots,T$ 
is allowed to only depend on observations in previous batches from the same fold as this subject.
To ensure that the adaptivity does not introduce any additional variability
(up to first-order asymptotics)
into the final estimator,
a CSBAE requires these adaptive propensities to converge to nonrandom limits $e_1(\cdot),\ldots,e_T(\cdot)$ at RMS rate $O_p(N^{-1/4})$.
While this convergence requirement may appear restrictive,
in Section~\ref{sec:batch_learning} we show how it can be ensured by design by solving an appropriate finite-dimensional concave maximization procedure.
Moreover, the limiting propensity scores from this procedure will be provably optimal,
in a sense we make more precise in Section~\ref{sec:batch_learning}.

\begin{figure}[!ht]
\includegraphics[width=\linewidth]{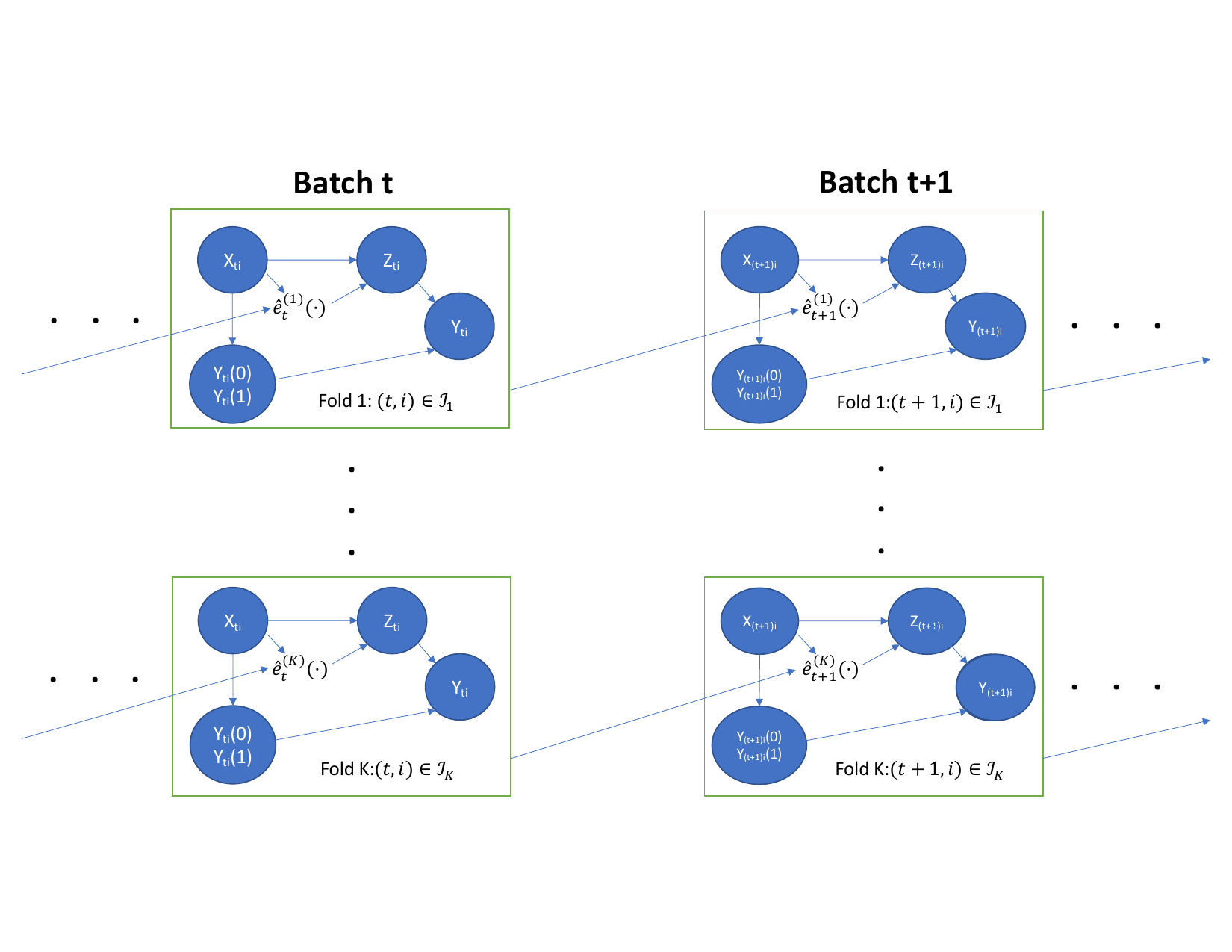}
\caption{A graphical representation of the dependencies among the observations in two batches of a CSBAE (Definition~\ref{def:CSBAE}).
Note the lack of vertical arrows,
indicating independence between the observations in fold $j$ and fold $k$ for any distinct $j$, $k$ in $\{1,\ldots,K\}$.}
\label{fig:csbae}
\end{figure}
\begin{definition}[Convergent split batch adaptive experiment]
\label{def:CSBAE}
A \textbf{convergent split batch adaptive experiment (CSBAE)}
is an experiment with data generating process satisfying Assumption~\ref{assump:DGP}
where each observation index $(t,i)$ is assigned
to one of $K$ folds $\ci_1,\ldots,\ci_K$.
The fold assignments are such that $n_{t,k}= |\{(t,i) \in \ci_k \mid i=1,\ldots,N_t\}|$,
the number of observations in batch $t$ assigned to fold $k$,
satisfies $|n_{t,k}-N_t/K| \leq 1$ for all $t=1,\ldots,T$, $k=1,\ldots,K$.
Now let $P_{N,t}^{X,(k)}$ be the empirical distribution on $\{X_{ti} \mid (t,i) \in \ci_k, 1 \leq i \leq N_t\}$,
the covariates in batch $t$ and fold $k$,
and define $\cs_t^{X,(k)}$
to be the $\sigma$-algebra generated by the covariates $\{X_{ti} \mid (t,i) \in \ci_k, 1 \leq i \leq N_t\}$ in batch $t$ and fold $k$ along with the observations $\{W_{ui} \mid (u,i) \in \ci_k, u=1,\ldots,t-1,i=1,\ldots,N_u\}$ in fold $k$ and any of the previous batches $1,\ldots,t-1$.
We further require the following for each batch $t=1,\ldots,T$ and fold $k=1,\ldots,K$:
\begin{enumerate}
\item Treatment is assigned according to an adaptive propensity $\hat{e}_t^{(k)}(\cdot)$ that is measurable with respect to $\cs_t^{X,(k)}$.
    That is, the treatment indicators can be represented as
    \begin{equation}
    \label{eq:CSBAE_treatment}
    Z_{ti}=\indic(U_{ti} \leq \hat{e}_t^{(k)}(X_{it})), \quad (t,i) \in \ci_k
    \end{equation}
    where $\{U_{ti}:1 \leq t \leq T, 1 \leq i \leq N_t\}$ is a collection of i.i.d.\ uniformly distributed random variables independent of the vectors $\{S_{ti} \mid t=1,\ldots,T,i=1,\ldots,N_t\}$.
    \item For some nonrandom propensity $e_t(\cdot)$,
    the adaptive propensity $\hat{e}_t^{(k)}(\cdot)$ satisfies
    \begin{equation}
    \label{eq:csbae_limit}
    \|\hat{e}_t^{(k)}-e_t\|_{2,P_{N,t}^{X,(k)}} = O_p(N^{-1/4}), \quad t=1,\ldots,T, \ k=1,\ldots,K. 
    \end{equation}
\end{enumerate}
\begin{remark}
The left-hand side of equation~\eqref{eq:csbae_limit} uses an $L^2$ norm on the empirical distribution $P_{N,t}^{X,(k)}$ of the covariates of the subjects that will be assigned treatment according to the learned propensity $\hat{e}_t^{(k)}(\cdot)$.
These covariates will also be used to learn $\hat{e}_t^{(k)}(\cdot)$ itself in our propensity learning procedure of Section~\ref{sec:batch_learning}.
Thus, we can interpret~\eqref{eq:csbae_limit} as a rate requirement on the ``in-sample" convergence of $\hat e_t^{(k)}(\cdot)$.
\end{remark}
\end{definition}
Given a CSBAE,
our
feasible estimator is
\begin{equation}
\label{eq:theta_hat}
\hat{\theta} = -\biggl(\frac{1}{N} \sum_{k=1}^K \sum_{(t,i) \in \ci_k} s_a(W_{ti};\hat{\nu}^{(-k)},\hat{e}^{(-k)})\biggr)^{-1}\biggl(\frac{1}{N} \sum_{k=1}^K \sum_{(t,i) \in \ci_k} s_b(W_{ti};\hat{\nu}^{(-k)},\hat{e}^{(-k)})\biggr).
\end{equation}
As in the standard (single batch) DML setting,
for each $k=1,\ldots,K$,
$\hat{\nu}^{(-k)}$ and $\hat{e}^{(-k)}(\cdot)$ are estimates of the nuisance parameters $\nu_0$ and the mixture propensity $e_{0,N}(\cdot)$ defined in~\eqref{eq:e_0_N}, respectively,
that depend only on the observations $\{W_{ti} \mid (t,i) \notin \ci_k\}$ outside fold $k$.
These observations are fully independent of the observations in fold $k$ (across \emph{all} batches $t=1,\ldots,T$) 
by the construction of a CSBAE.
As in the single batch case,
given $o_p(N^{-1/4})$ convergence of the estimators $\hat{\nu}^{(-k)}$ to $\nu_0$,
this independence along with~\eqref{eq:csbae_limit} enable a DML-style argument that $\hat{\theta}$ is asymptotically equivalent to the oracle $\hat{\theta}^*$ under~\eqref{eq:csbae_limit} computed on a counterfactual
non-adaptive batch experiment with propensities $e_1(\cdot),\ldots,e_T(\cdot)$.
This argument proceeds by coupling the treatment indicators $Z_{ti}$ in the
CSBAE with counterfactual treatment indicators $\tilde{Z}_{ti}=\bm{1}(U_{ti} \leq e_t(X_{ti}))$.
\begin{assumption}[Score properties and convergence rates for estimating nuisance parameters and the mixture propensity in a CSBAE]
\label{assump:dml}
Observations $\{W_{ti}:1 \leq t \leq T, 1 \leq i \leq N_t\}$ are collected from a CSBAE with limiting batch propensities $e_1(\cdot),\ldots,e_T(\cdot)$.
Additionally,
the estimand $\theta_0$ of interest is identified as in Assumption~\ref{assump:identification} by some score $s(\cdot)$,
nuisance parameters $\nu_0 \in \cn$,
and $\gamma \in [0,1/2)$,
such that the propensity collection $\cf_{\gamma}$ contains $e_1(\cdot),\ldots,e_T(\cdot)$.
Defining $W_{ti}(z)=(Y_{ti}(z),X_{ti},z)$ for $z=0,1$,
the score $s(\cdot)$ has the following properties:
\begin{enumerate}[label=(\alph*)]
\item \label{cond:regularity} The matrix $\e_0[s_a(W;\nu_0,e_0)]$ is invertible and $\e_0\bigl[\|s(W;\theta_0,\nu_0,e_0)\|^2\bigr] < \infty$.
\item \label{cond:score_differentiability} The mapping $\lambda \mapsto \e_{0,N}[s(W;\theta_0,\nu_0+\lambda(\nu-\nu_0),e_{0,N}+\lambda(e-e_{0,N})]$ is twice continuously differentiable on $[0,1]$ for each $(\nu,e) \in \ct=\cn \times \cf_{\gamma}$.
\item \label{cond:potential_score_equations} 
All propensities $e(\cdot) \in \cf_{\gamma}$ satisfy
\begin{equation}
\label{eq:potential_score_diff}
\e\bigl[s(W_{ti}(1);\theta_0,\nu_0,e)-s(W_{ti}(0);\theta_0,\nu_0,e) \giv X_{ti}\bigr] = 0,
\end{equation}
for all $t=1,\dots,T$ and $i=1,\dots,N_t$.
\end{enumerate}
Also, there exist estimators $\hat{\nu}^{(-k)}$ and $\hat{e}^{(-k)}(\cdot)$ of $\nu_0$ and the mixture propensity $e_{0,N}(\cdot)$ defined in~\eqref{eq:e_0_N}, 
respectively,
that depend only on the observations outside fold $k$ of the CSBAE.
Next,
there are nonrandom subsets $\ct_N \subseteq \ct$ containing $(\nu_0,e_{0,N}(\cdot))$,
such that for all $k=1,\ldots,K$,
$\Pr((\hat{\nu}^{(-k)},\hat{e}^{(-k)}(\cdot)) \in \ct_N) \rightarrow 1$ as $N \rightarrow \infty$.
The sets $\ct_N$ shrink quickly enough for the following to hold for all $(\nu,e)\in\ct_N$,
all $\lambda\in(0,1)$
and all $z\in\{0,1\}$
when $N$ is sufficiently large:
\begin{align}
\bigg\|\frac{\partial}{\partial \lambda} \e_{0,N}[s(W;\theta_0,\nu_0+\lambda(\nu-\nu_0),e_{0,N}+\lambda(e-e_{0,N}))] \Big|_{\lambda=0} \bigg\| & \leq N^{-1/2}\delta_N \label{eq:neyman_orthogonality} \\
\bigg\|\frac{\partial^2}{\partial \lambda^2} \e_{0,N}[s(W;\theta_0,\nu_0+\lambda(\nu-\nu_0),e_{0,N}+\lambda(e-e_{0,N}))] \bigg\| & \leq N^{-1/2}\delta_N
\label{eq:vanishing_second_derivatives} \\
\left(\e_{0}\bigl[\|s_a(W;\nu,e)-s_a(W;\nu_0,e_0)\|^2\bigr]\right)^{1/2} & \leq \delta_N  \label{eq:s_a_consistency} \\
\left(\e_{0}\bigl[\|s(W;\theta_0,\nu,e)-s(W;\theta_0,\nu_0,e_0)\|^2\bigr]\right)^{1/2} & \leq \delta_N
\label{eq:s_consistency} \\
\bigl(\e_{0}\bigl[\|s_a(W(z);\nu,e)\|^q\bigr]\bigr)^{1/q} & \leq C
\label{eq:s_a_moment_condition} \\
\bigl(\e_{0}\bigr[\|s(W(z);\theta_0,\nu,e\|^q\bigr]\bigr)^{1/q} & \leq C.
\label{eq:s_moment_condition}
\end{align}
Finally, 
letting $\cs^{(-k)}$ be the $\sigma$-algebra generated by the observations $\{W_{ti}:(t,i) \notin \ci_k\}$ outside fold $k$ across all batches 1 through $T$,
we require
\begin{equation}
\label{eq:S_t_neg_k}
S_t^{(-k)}(z) = o_p(N^{-1/4}), \quad z \in \{0,1\}, \quad k=1,\ldots,K
\end{equation}
where
\[
S_t^{(-k)}(z) = \sqrt{
\frac{1}{n_{t,k}} \sum_{i:(t,i) \in \ci_k} \Bigl\|\e\bigl[s(W_{ti}(z);\hat{\nu}^{(-k)},\hat{e}^{(-k)})-s(W_{ti}(z);\nu_0,e_{0,N}) \bigm| \cs^{(-k)},X_{ti}\bigr]\Bigr\|^2}.
\]
\end{assumption}

\begin{theorem}[Feasible CLT for a CSBAE]
\label{thm:batch_clt}
Suppose Assumption~\ref{assump:dml} holds. Then for $\hat{\theta}$ defined in~\eqref{eq:theta_hat}
there exists a non-adaptive batch experiment with propensities $e_1(\cdot),\ldots,e_T(\cdot)$
for which
\[
\hat{\theta} = \hat{\theta}^* + o_p(N^{-1/2}).
\]
Here $\hat{\theta}^*$ is the oracle~\eqref{eq:theta_star} computed on this non-adaptive batch experiment.
Then $\sqrt{N}(\hat{\theta}-\theta_0) \tod \mathcal{N}(0,V_0)$ where $V_0$ is the limiting covariance derived in Proposition~\ref{prop:oracle_clt}.
\end{theorem}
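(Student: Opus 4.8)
The plan is to realize $\hat\theta$ and the oracle $\hat\theta^*$ on a common probability space by a coupling and then invoke Proposition~\ref{prop:oracle_clt}. Using the same uniforms $U_{ti}$ that drive the CSBAE, define counterfactual treatments $\tilde Z_{ti}=\indic(U_{ti}\le e_t(X_{ti}))$ and counterfactual observations $\tilde W_{ti}=(X_{ti},\tilde Z_{ti},\tilde Z_{ti}Y_{ti}(1)+(1-\tilde Z_{ti})Y_{ti}(0))$; these constitute a non-adaptive batch experiment with the limiting propensities $e_1(\cdot),\dots,e_T(\cdot)$, and $\hat\theta^*$ is its oracle~\eqref{eq:theta_star} at $(\nu_0,e_{0,N})$. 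Because the score is linear~\eqref{eq:linear_score}, both estimators take the form $-\hat A^{-1}\hat B$ with $\hat A,\hat B$ averages of $s_a$ and $s_b$, so it suffices to establish: (i) the $s_a$-averages for $\hat\theta$ and $\hat\theta^*$ both converge in probability to the invertible matrix $\e_0[s_a(W;\nu_0,e_0)]$ of condition~\ref{cond:regularity}; and (ii) the scaled score difference
\[
\Delta=\frac{1}{\sqrt N}\sum_{k=1}^K\sum_{(t,i)\in\ci_k}\bigl[s(W_{ti};\theta_0,\hat\nu^{(-k)},\hat e^{(-k)})-s(\tilde W_{ti};\theta_0,\nu_0,e_{0,N})\bigr]
\]
is $o_p(1)$. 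Claim (i) follows from the same coupling and $L^2$-consistency bounds used for (ii) but without the $\sqrt N$ scaling, so I focus on (ii); granting both, Slutsky gives $\hat\theta=\hat\theta^*+o_p(N^{-1/2})$, and Proposition~\ref{prop:oracle_clt} applied to the coupled non-adaptive experiment (its hypotheses following from condition~\ref{cond:regularity}, the consistency bounds~\eqref{eq:s_a_consistency}--\eqref{eq:s_consistency}, and the moment bounds~\eqref{eq:s_a_moment_condition}--\eqref{eq:s_moment_condition}) yields $\sqrt N(\hat\theta-\theta_0)\tod\cn(0,V_0)$.

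To control $\Delta$ I would write $\Delta=\Delta_{\mathrm{coup}}+\Delta_{\mathrm{nuis}}$, where
\[
\Delta_{\mathrm{coup}}=\frac{1}{\sqrt N}\sum_{k,(t,i)\in\ci_k}\bigl[s(W_{ti};\theta_0,\hat\nu^{(-k)},\hat e^{(-k)})-s(\tilde W_{ti};\theta_0,\hat\nu^{(-k)},\hat e^{(-k)})\bigr]
\]
swaps the adaptive data for the coupled data at the estimated nuisances, and $\Delta_{\mathrm{nuis}}$ is the remaining nuisance-estimation error evaluated on the non-adaptive data $\tilde W_{ti}$. The term $\Delta_{\mathrm{nuis}}$ is exactly a standard DML term on the non-adaptive experiment, where $e_{0,N}$ is the correct population propensity: cross-fitting makes $(\hat\nu^{(-k)},\hat e^{(-k)})$ independent of the (actual or counterfactual) fold-$k$ observations across all batches, since each fold is a function of its own exogenous draws $\{S_{ti},U_{ti}\}$. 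Hence, conditionally on $\cs^{(-k)}$ the centered empirical-process piece has variance bounded by the consistency of~\eqref{eq:s_a_consistency}--\eqref{eq:s_consistency} and is $o_p(1)$, while a second-order Taylor expansion of $\lambda\mapsto\e_{0,N}[s(\cdot;\theta_0,\nu_0+\lambda(\hat\nu^{(-k)}-\nu_0),e_{0,N}+\lambda(\hat e^{(-k)}-e_{0,N}))]$ bounds the bias piece using Neyman orthogonality~\eqref{eq:neyman_orthogonality} and the vanishing second derivative~\eqref{eq:vanishing_second_derivatives}.

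For $\Delta_{\mathrm{coup}}$, the key identity is that $W_{ti}$ and $\tilde W_{ti}$ differ only in their treatment coordinate, so
\[
s(W_{ti};\theta_0,\nu,e)-s(\tilde W_{ti};\theta_0,\nu,e)=(Z_{ti}-\tilde Z_{ti})\bigl[s(W_{ti}(1);\theta_0,\nu,e)-s(W_{ti}(0);\theta_0,\nu,e)\bigr]
\]
for any fixed $(\nu,e)$. Conditioning on $\cs^{(-k)}$ (which fixes $\hat\nu^{(-k)},\hat e^{(-k)}$), on $X_{ti}$, and on the in-fold history fixing $\hat e_t^{(k)}$, the independence $U_{ti}\indep S_{ti}$ factorizes the conditional mean of each summand into $(\hat e_t^{(k)}(X_{ti})-e_t(X_{ti}))$ times $\e[s(W_{ti}(1);\theta_0,\hat\nu^{(-k)},\hat e^{(-k)})-s(W_{ti}(0);\theta_0,\hat\nu^{(-k)},\hat e^{(-k)})\mid X_{ti},\cs^{(-k)}]$. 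At the true nuisances this contrast vanishes by the balance condition~\eqref{eq:potential_score_diff}, so the conditional contrast at estimated nuisances is a difference of estimated-versus-true potential scores whose RMS over the fold is bounded by $S_t^{(-k)}(1)+S_t^{(-k)}(0)=o_p(N^{-1/4})$ from~\eqref{eq:S_t_neg_k}. A Cauchy--Schwarz bound pairs this against $\|\hat e_t^{(k)}-e_t\|_{2,P_{N,t}^{X,(k)}}=O_p(N^{-1/4})$ from~\eqref{eq:csbae_limit}, and since $n_{t,k}\asymp N$ the bias of $\Delta_{\mathrm{coup}}$ is $\sqrt N\cdot O_p(N^{-1/4})o_p(N^{-1/4})=o_p(1)$. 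The centered part of $\Delta_{\mathrm{coup}}$ is a sum of conditionally independent mean-zero terms, each nonzero only on the event that $U_{ti}$ lies between $\hat e_t^{(k)}(X_{ti})$ and $e_t(X_{ti})$, an event of conditional probability $|\hat e_t^{(k)}(X_{ti})-e_t(X_{ti})|$; a conditional-variance computation combining this with the moment bounds~\eqref{eq:s_a_moment_condition}--\eqref{eq:s_moment_condition} shows it too is $o_p(1)$.

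The main obstacle is the entanglement of adaptivity with nuisance estimation: both the adaptive propensity $\hat e_t^{(k)}$ and the out-of-fold estimates $(\hat\nu^{(-k)},\hat e^{(-k)})$ are data-dependent, and the target population $\e_{0,N}$ against which Neyman orthogonality and the derivative bounds are stated is the \emph{non-adaptive} mixture rather than the adaptive law. Making the decomposition go through therefore requires choosing the conditioning filtrations so that (a) cross-fitting independence still renders the empirical-process terms conditionally mean-zero with the claimed variance, and (b) the coupling identity's conditional expectation cleanly isolates the propensity gap while leaving a residual that $S_t^{(-k)}(z)$ genuinely dominates. Lining up these $\sigma$-algebras --- in particular justifying that the potential-score contrast may be evaluated at $(\hat\nu^{(-k)},\hat e^{(-k)})$ inside the conditional expectation defining $S_t^{(-k)}(z)$, precisely because those estimates are $\cs^{(-k)}$-measurable and independent of the in-fold uniforms --- is the delicate bookkeeping that the full proof must carry out.
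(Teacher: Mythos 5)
Your proposal is correct and follows essentially the same route as the paper's proof: the same coupling $\tilde Z_{ti}=\indic(U_{ti}\le e_t(X_{ti}))$ via the shared uniforms, the same split of the score difference into a standard-DML term on the counterfactual data (controlled by Neyman orthogonality~\eqref{eq:neyman_orthogonality} and the second-derivative bound~\eqref{eq:vanishing_second_derivatives}) plus a coupling term controlled by~\eqref{eq:potential_score_diff}, Cauchy--Schwarz pairing $S_t^{(-k)}(z)$ from~\eqref{eq:S_t_neg_k} against the propensity rate~\eqref{eq:csbae_limit}, and a conditional-variance bound for the centered remainder. The only difference is bookkeeping: the paper routes through an explicit intermediate estimator $\tilde\theta$ (proving $\tilde\theta=\hat\theta^*+o_p(N^{-1/2})$ and $\hat\theta=\tilde\theta+o_p(N^{-1/2})$ separately, each with its own matrix-inverse terms), whereas you fold both steps into a single score-difference $\Delta$ and one Slutsky argument, which is equivalent.
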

\begin{proof} See Appendix~\ref{proof:thm:batch_clt}.
\end{proof}

In Assumption~\ref{assump:dml},
the conditions~\ref{cond:regularity} and~\ref{cond:score_differentiability}
along with the inequalities~\eqref{eq:neyman_orthogonality} through~\eqref{eq:s_moment_condition} are direct extensions of Assumptions 3.1 and 3.2 in~\citet{chernozhukov2018double}
for ordinary DML ($T=1$).
The equations~\eqref{eq:potential_score_diff} and~\eqref{eq:S_t_neg_k} are additional requirements that enable the dependence across batches in a CSBAE to be sufficiently weak
so that $\hat{\theta}$ computed on the CSBAE is asymptotically equivalent to a version of it computed on the limiting non-adaptive batch experiment.

We can show that Assumption~\ref{assump:dml} is satisfied for estimating $\theta_{0,\ate}$ and $\theta_{0,\pl}$ with $s_{\aipw}(\cdot)$ and $s_{\epl}(\cdot)$, respectively
under simple rate conditions on nuisance parameter and propensity estimation rates that mirror those in Section 5 of~\citet{chernozhukov2018double} for single-batch DML estimators.
Then we apply Theorem~\ref{thm:batch_clt} to construct feasible pooled estimators $\hat{\theta}_{\aipw}$ and $\hat{\theta}_{\epl}$ as special cases of equation~\eqref{eq:theta_hat},
which attain the oracle asymptotic variances $V_{0,\aipw}$ and $V_{0,\epl}$ defined in~\eqref{eq:v_0_aipw} and~\eqref{eq:v_0_pl}, respectively. 

\begin{corollary}[Feasible pooled estimation of $\theta_{0,\ate}$ in a CSBAE]
\label{cor:AIPW}
Suppose observations are collected from a CSBAE
for which the regularity conditions of Assumption~\ref{assump:ate_regularity} hold
for some $q>2$ and $C<\infty$
and the limiting batch propensities $e_1(\cdot),\ldots,e_T(\cdot)$ are in $\cf_{\gamma}$
for some $\gamma > 0$.
Additionally, for each $k=1,\ldots,K$, 
suppose we have estimates $\hat{m}^{(-k)}(\cdot)$ and $\hat{e}^{(-k)}(\cdot)$ of the mean function $m_0(\cdot)$ and mixture propensity $e_{0,N}(\cdot)$,
respectively, both depending only on the observations outside fold $k$,
such that the following are true:
\begin{enumerate}
    \item $\|\hat{m}^{(-k)}(z,\cdot)-m_0(z,\cdot)\|_{2,P^X}=o_p(N^{-1/4}), \quad z=0,1,$
    \item $\|\hat{e}^{(-k)}-e_{0,N}\|_{2,P^X} = O_p(N^{-1/4}),$
    \item $\|\hat{m}^{(-k)}(z,\cdot)-m_0(z,\cdot)\|_{q,P^X} \leq C, \quad z=0,1$, 
    \item $\hat{e}^{(-k)}(\cdot) \in \cf_{\gamma}$ with probability tending to 1.
\end{enumerate}
Then $N^{1/2}(\hat{\theta}_{\aipw}-\theta_{0,\ate}) \tod \mathcal{N}(0,V_{0,\aipw})$.
\end{corollary}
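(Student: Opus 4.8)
The plan is to obtain the result as a direct specialization of the feasible CLT for a CSBAE (Theorem~\ref{thm:batch_clt}) to the AIPW score $s_{\aipw}(\cdot)$, taken with nuisance $\nu_0 = \nu_{0,\aipw} = (m_0(0,\cdot), m_0(1,\cdot))$. Example~\ref{ex:aipw_score} already establishes that $\theta_{0,\ate}$ is identified by $s_{\aipw}(\cdot)$ in the sense of Assumption~\ref{assump:identification}, so the work reduces to verifying the remaining hypotheses of Assumption~\ref{assump:dml} under conditions 1--4 and Assumption~\ref{assump:ate_regularity}. Once this is done, Theorem~\ref{thm:batch_clt} gives $\hat\theta_{\aipw} = \hat\theta^*_{\aipw} + o_p(N^{-1/2})$ for the oracle on the limiting non-adaptive experiment with propensities $e_1(\cdot),\ldots,e_T(\cdot)$, and Corollary~\ref{cor:ate_oracle_clt} identifies the limiting covariance of that oracle as $V_{0,\aipw}$, yielding the claimed CLT.

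First I would dispatch the structural score conditions~\ref{cond:regularity}--\ref{cond:potential_score_equations}. Condition~\ref{cond:regularity} is immediate since $s_{\aipw,a} \equiv -1$ is invertible, and the second-moment bound follows from Assumption~\ref{assump:ate_regularity} together with $e_0 \in \cf_\gamma$ (so $1/e_0, 1/(1-e_0) \leq 1/\gamma$). The smoothness in condition~\ref{cond:score_differentiability} holds because $s_{\aipw}$ depends on the propensity only through $1/e$ and $1/(1-e)$, smooth on $[\gamma,1-\gamma]$, and linearly on the mean nuisance. For~\ref{cond:potential_score_equations}, a short computation gives $s_{\aipw}(W_{ti}(1);\theta_0,\nu_0,e) - s_{\aipw}(W_{ti}(0);\theta_0,\nu_0,e) = (Y_{ti}(1) - m_0(1,X_{ti}))/e(X_{ti}) + (Y_{ti}(0) - m_0(0,X_{ti}))/(1 - e(X_{ti}))$, whose conditional mean given $X_{ti}$ vanishes by the definition of $m_0$.

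The two substantive steps are the first/second-order expansions~\eqref{eq:neyman_orthogonality}--\eqref{eq:vanishing_second_derivatives} and the batch condition~\eqref{eq:S_t_neg_k}. For the Gateaux derivatives I would exploit double robustness: since $\e_{0,N}[s_{\aipw}(W;\theta_0,\nu_0,e)] = 0$ for every $e \in \cf_\gamma$ and $\e_{0,N}[s_{\aipw}(W;\theta_0,\nu,e_{0,N})]=0$ for every $\nu$, both partial derivatives at $(\nu_0,e_{0,N})$ vanish, so the left side of~\eqref{eq:neyman_orthogonality} is identically zero. For~\eqref{eq:vanishing_second_derivatives} I would compute $\partial_\lambda^2 \e_{0,N}[s_{\aipw}]$ pointwise in $X$; the only surviving terms are cross terms proportional to $(m(z,X) - m_0(z,X))(e(X) - e_{0,N}(X))$ divided by powers of the bounded-below perturbed propensity, so Cauchy--Schwarz over $P^X$ bounds the second derivative by $C\|m-m_0\|_{2,P^X}\,\|e - e_{0,N}\|_{2,P^X}$. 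On $\ct_N$ this product is $o_p(N^{-1/4}) \cdot O_p(N^{-1/4}) = o_p(N^{-1/2})$ by conditions 1--2, which is the promised product-of-rates structure and the reason the propensity need only converge at $O_p(N^{-1/4})$. I would define $\ct_N$ through these two rates together with the uniform moment bound (condition 3) and the containment $\hat e^{(-k)} \in \cf_\gamma$ (condition 4); conditions 1--4 then ensure $\Pr((\hat\nu^{(-k)}, \hat e^{(-k)}) \in \ct_N) \to 1$, and the remaining continuity and $q$-th moment inequalities~\eqref{eq:s_a_consistency}--\eqref{eq:s_moment_condition} are routine given Assumption~\ref{assump:ate_regularity} and $s_{\aipw,a} \equiv -1$.

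I expect~\eqref{eq:S_t_neg_k} to be the main obstacle, since it is the genuinely new batch ingredient. The conditional score difference at a potential outcome simplifies, e.g.\ for $z=1$, to $\e[s_{\aipw}(W_{ti}(1);\hat\nu^{(-k)},\hat e^{(-k)}) - s_{\aipw}(W_{ti}(1);\nu_0,e_{0,N}) \mid \cs^{(-k)}, X_{ti}] = (\hat m^{(-k)}(1,X_{ti}) - m_0(1,X_{ti}))(1 - 1/\hat e^{(-k)}(X_{ti})) - (\hat m^{(-k)}(0,X_{ti}) - m_0(0,X_{ti}))$; because at $\nu=\nu_0$ the augmentation term has zero conditional mean for every propensity, no standalone propensity error $\hat e^{(-k)} - e_{0,N}$ survives, leaving only the mean-function errors carrying bounded propensity factors. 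Since $\hat e^{(-k)} \in \cf_\gamma$, this is pointwise $O(|\hat m^{(-k)}(1,X_{ti}) - m_0(1,X_{ti})| + |\hat m^{(-k)}(0,X_{ti}) - m_0(0,X_{ti})|)$, so $S_t^{(-k)}(1)$ is controlled by an empirical $L^2$ norm of the mean-function error over the batch-$t$, fold-$k$ covariates. The key move is that $\hat m^{(-k)}$ and $\hat e^{(-k)}$ are $\cs^{(-k)}$-measurable while those covariates are i.i.d.\ $P^X$ and independent of $\cs^{(-k)}$; conditioning on $\cs^{(-k)}$ lets me pass from the empirical to the population norm $\|\hat m^{(-k)}(z,\cdot) - m_0(z,\cdot)\|_{2,P^X}$, which is $o_p(N^{-1/4})$ by condition 1, and the case $z=0$ is symmetric. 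Assembling these verifications discharges Assumption~\ref{assump:dml} and completes the proof.
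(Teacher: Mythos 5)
Your proposal is correct and takes essentially the same route as the paper's own proof: both reduce the claim to verifying Assumption~\ref{assump:dml} for $s_{\aipw}(\cdot)$ — with the same computation showing~\eqref{eq:potential_score_diff} holds, the same definition of $\ct_N$ built from conditions 1--4 (including the product-of-rates bound $\|\hat m^{(-k)}(z,\cdot)-m_0(z,\cdot)\|_{2,P^X}\,\|\hat e^{(-k)}-e_{0,N}\|_{2,P^X}=o_p(N^{-1/2})$ that drives~\eqref{eq:vanishing_second_derivatives}), and the same conditional-expectation simplification showing that only mean-function errors, with bounded propensity factors, survive in $S_t^{(-k)}(z)$ before passing from the empirical to the population $L^2$ norm via conditioning on $\cs^{(-k)}$ — and then conclude via Theorem~\ref{thm:batch_clt} and Corollary~\ref{cor:ate_oracle_clt}. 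The only cosmetic difference is that you obtain Neyman orthogonality~\eqref{eq:neyman_orthogonality} from double robustness of the score in each argument separately, whereas the paper performs (and cites) the explicit Gateaux-derivative computation; the two arguments are interchangeable.
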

\begin{proof}
See Appendix \ref{proof:cor:AIPW}.
\end{proof}

\begin{corollary}[Feasible estimation of $\theta_{0,\pl}$ in a CSBAE]
\label{cor:pl}
Suppose observations are collected from a CSBAE
for which the regularity conditions of Assumption~\ref{assump:pl_regularity} hold
for some $q>2$ and $0<c<C<\infty$,
and for which the limiting batch propensities $e_1(\cdot),\ldots,e_T(\cdot)$ are in $\cf_0$
with $\e[e_0^2(X)(1-e_0(X))^2\psi(X)\psi(X)^{\top}] \in \ess^p_{++}$.
For each $k=1,\ldots,K$, 
assume we have estimates $\hat{m}^{(-k)}(0,\cdot)$, $\hat{v}^{(-k)}(\cdot,\cdot)$, and $\hat{e}^{(-k)}(\cdot)$ of the mean function $m_0(0,\cdot)$,
the variance function $v_0(\cdot, \cdot)$,
and the mixture propensity $e_{0,N}(\cdot)$,
respectively,
all depending only on the observations outside fold $k$,
such that the following are true:
\begin{enumerate}
    \item $\|\hat{m}^{(-k)}(0,\cdot)-m_0(0,\cdot)\|_{2,P^X}=o_p(N^{-1/4})$,
    \item $\|\hat{v}^{(-k)}(z,\cdot)-v_0(z,\cdot)\|_{2,P^X}=o_p(1), \quad z=0,1$,
    \item $\|\hat{e}^{(-k)}-e_{0,N}\|_{2,P^X} = O_p(N^{-1/4})$,
    \item $\|\hat{m}^{(-k)}(0,\cdot)-m_0(0,\cdot)\|_{q,P^X} \leq C,$
    \item $\inf_{x \in \cx} \hat{v}^{(-k)}(z,x) \geq c, \quad z=0,1$.
\end{enumerate}
Then $N^{1/2}(\hat{\theta}_{\epl}-\theta_{0,\pl}) \tod \mathcal{N}(0,V_{0,\epl})$.
\end{corollary}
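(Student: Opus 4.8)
The plan is to apply Theorem~\ref{thm:batch_clt} with the score $s_{\epl}(\cdot)$ of~\eqref{eq:epl_score}, which reduces the entire corollary to showing that the hypotheses of Corollary~\ref{cor:pl_oracle_clt} together with the five stated rate conditions imply Assumption~\ref{assump:dml} for $s_{\epl}$. Once Assumption~\ref{assump:dml} is in force, Theorem~\ref{thm:batch_clt} gives $\hat\theta_{\epl}=\hat\theta^*_{\epl}+o_p(N^{-1/2})$ on a limiting non-adaptive batch experiment with propensities $e_1(\cdot),\dots,e_T(\cdot)$, and Corollary~\ref{cor:pl_oracle_clt} identifies the resulting limiting covariance as $V_{0,\epl}$, which yields the claim. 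The bulk of the work is therefore the verification of Assumption~\ref{assump:dml}.

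I would first dispatch the structural conditions~\ref{cond:regularity}--\ref{cond:potential_score_equations}. Condition~\ref{cond:regularity} is immediate: the computation in Corollary~\ref{cor:pl_oracle_clt} gives $\e_0[s_{\epl,a}(W;\nu_0,e_0)]=-V_{0,\epl}^{-1}$, invertible under the assumed positive definiteness, while the second-moment bound follows from the boundedness of the weight $w$ (since $v_0\ge c$ forces its denominator to exceed $c$), from $\|\psi\|\le C$, and from the $q$-th moment control on $Y$ in Assumption~\ref{assump:pl_regularity}. Condition~\ref{cond:score_differentiability} holds because $w(x;\nu,e)$ is smooth with denominator bounded below by $c$ throughout $\ct_N$, so the integrand is twice continuously differentiable in $\lambda$ with dominated derivatives. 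The key structural fact is condition~\ref{cond:potential_score_equations}: plugging $W_{ti}(z)=(Y_{ti}(z),X_{ti},z)$ into $s_{\epl}$ and conditioning on $X_{ti}$, the $z=1$ branch is proportional to $m_0(1,X)-m_0(0,X)-\psi(X)^\top\theta_{0,\pl}$ and the $z=0$ branch to $m_0(0,X)-m_0(0,X)$, so the linear treatment effect assumption~\eqref{eq:pl_assumption} makes \emph{each branch} vanish separately and~\eqref{eq:potential_score_diff} follows.

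Next I would take $\ct_N$ to be the shrinking neighborhood of $(\nu_0,e_{0,N})$ cut out by the five rate conditions. Writing $\Delta_m=m(0,\cdot)-m_0(0,\cdot)$, $\Delta_v$ for the variance perturbation, $\Delta_e=e-e_{0,N}$, and $R=Y-m_0(0,X)-Z\psi(X)^\top\theta_{0,\pl}$ (for which $\e_{0,N}[R\mid X]=0$ and $\e_{0,N}[(Z-e_{0,N}(X))R\mid X]=0$), the first-derivative bound~\eqref{eq:neyman_orthogonality} holds with left-hand side identically zero: perturbing $m(0,\cdot)$ or $v$ leaves a factor $\e_{0,N}[(Z-e_{0,N}(X))R\mid X]$, while perturbing the plug-in $e$ contributes through both $w$ and the explicit $(Z-e)$ factor, each piece leaving either that same factor or $\e_{0,N}[R\mid X]$, all of which vanish. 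The crux is the second-derivative bound~\eqref{eq:vanishing_second_derivatives}: expanding $\partial_\lambda^2\e_{0,N}[s_{\epl}]$ and taking conditional expectations given $X$, every surviving term carries the mixed factor $\Delta_m\Delta_e$ (the familiar product-bias structure of a doubly robust score), so it is bounded in norm by a constant multiple of $\e_{0,N}[|\Delta_m|\,|\Delta_e|\,\|\psi\|]\le C\|\Delta_m\|_{2,P^X}\|\Delta_e\|_{2,P^X}=o_p(N^{-1/2})$. The main obstacle is precisely to confirm that \emph{no} pure $\Delta_v^2$ (or $\Delta_v$-only) term survives the conditional expectation, so that the slow $o_p(1)$ variance-function rate is tolerated; this is what the Neyman-orthogonality structure buys us. The remaining inequalities~\eqref{eq:s_a_consistency}--\eqref{eq:s_moment_condition} are routine, the consistency bounds following from local Lipschitz continuity of $w$ and $s_{\epl}$ in $(\nu,e)$ together with $\Delta_m,\Delta_v,\Delta_e\to 0$, and the $q$-th moment bounds from $w\le 1/c$, $\|\psi\|\le C$, $0\le e\le 1$, and the moment controls on $Y$ and $\hat m^{(-k)}$.

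Finally I would verify the cross-batch condition~\eqref{eq:S_t_neg_k}. Since $\hat\nu^{(-k)}$ and $\hat e^{(-k)}$ are $\cs^{(-k)}$-measurable and each $S_{ti}$ with $(t,i)\in\ci_k$ is independent of $\cs^{(-k)}$, the inner conditional expectation treats the estimates as fixed and integrates only over $Y_{ti}(z)$ given $X_{ti}$. Using $\e[s_{\epl}(W(z);\nu_0,e_{0,N})\mid X]=0$ (a special case of the condition~\ref{cond:potential_score_equations} computation), the difference collapses for both $z\in\{0,1\}$ to a bounded weight times $\hat m^{(-k)}(0,X)-m_0(0,X)$ times $\psi(X)$, giving $\|\e[\,\cdot\mid\cs^{(-k)},X_{ti}]\|\le (C/c)\,|\hat m^{(-k)}(0,X_{ti})-m_0(0,X_{ti})|$. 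Hence $S_t^{(-k)}(z)$ is at most $C/c$ times the empirical $L^2(P_{N,t}^{X,(k)})$ norm of $\hat m^{(-k)}(0,\cdot)-m_0(0,\cdot)$, and by independence of $\hat m^{(-k)}$ from the fold-$k$ covariates this empirical norm has the same $o_p(N^{-1/4})$ order as the population norm in condition~1, yielding~\eqref{eq:S_t_neg_k}. With every part of Assumption~\ref{assump:dml} verified, Theorem~\ref{thm:batch_clt} and Corollary~\ref{cor:pl_oracle_clt} complete the proof.
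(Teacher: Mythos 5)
Your proposal is correct and follows essentially the same route as the paper's own proof: reduce to verifying Assumption~\ref{assump:dml} for $s_{\epl}$ and invoke Theorem~\ref{thm:batch_clt} together with the oracle computation from Corollary~\ref{cor:pl_oracle_clt}, with the key observations being that each potential-outcome branch of the score has vanishing conditional mean, that the score expectation collapses (after conditioning on $(X,Z)$ and then $X$) to a pure $\Delta_m\Delta_e$ product term so the variance estimates need only be $o_p(1)$, and that the cross-batch quantity $S_t^{(-k)}(z)$ reduces to a bounded multiple of the empirical $L^2$ error of $\hat m^{(-k)}(0,\cdot)$. The paper organizes the Neyman-orthogonality and second-derivative bounds by first writing $\e_{0,N}[s]$ explicitly as $\lambda^2$ times the product term rather than differentiating term by term, but this is the same argument in substance.
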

\begin{proof}
See Appendix \ref{proof:cor:pl}.
\end{proof}
We now compare the rate requirements in Corollaries~\ref{cor:AIPW} and~\ref{cor:pl}
with those needed to prove a feasible CLT for the linearly aggregated estimators discussed in Section~\ref{sec:pooled_estimation}.
Consider a batch adaptive experiment \emph{without} sample splitting,
so that we assign treatment using the propensities $\hat{e}_1(\cdot),\ldots,\hat{e}_T(\cdot)$
where each $\hat{e}_t(\cdot)$ is possibly random but can only depend on the observations in batches $1,\ldots,t-1$,
and converges to some nonrandom $e_t(\cdot)$
at \emph{any} rate;
in particular, this rate may be slower than the $O_p(N^{-1/4})$ rate of~\eqref{eq:csbae_limit}.
Then the standard DML results in Section 5 of~\citet{chernozhukov2018double} show that we can construct feasible estimators $\hat{\theta}_{t,\aipw}$ and $\hat{\theta}_{t,\epl}$ that are asymptotically equivalent to the oracle single-batch estimators $\hat{\theta}_{t,\aipw}^*$ and $\hat{\theta}_{t,\epl}^*$ in Section~\ref{sec:pooled_estimation},
so long as we plug in the true propensity score $\hat{e}_t(\cdot)$ used to assign treatment
and use cross-fitting at the \emph{estimation} stage,
even if all rates in Corollaries~\ref{cor:AIPW} and~\ref{cor:pl} are weakened to $o_p(1)$.
Unfortunately,
we cannot extend this construction to our pooled estimator $\hat{\theta}$ when $T \geq 2$,
since $\hat{\theta}$ plugs in estimates of the mixture propensity $e_{0,N}(\cdot)$,
which does not correspond (in general) to a propensity actually used for treatment in any batch.
Our numerical studies in Section~\ref{sec:simulations} suggest,
however, that the weaker rate requirements for a feasible CLT for the linearly aggregated estimators 
(compared to our pooled estimators) make little difference in practice.
Indeed,
there we also see finite sample advantages to pooling,
beyond those predicted by the asymptotics.

\section{Batch adaptive learning of the optimal propensity score}
\label{sec:batch_learning}
We now discuss how to learn adaptive propensity scores $\hat{e}_t^{(k)}(\cdot)$ that satisfy~\eqref{eq:csbae_limit}
with limiting propensity scores $e_t(\cdot)$ that maximize asymptotic precision of the final estimators $\hat{\theta}_{\aipw}$ and $\hat{\theta}_{\epl}$ constructed in the previous section,
as measured by their asymptotic covariance matrices $V_{0,\aipw}$ and $V_{0,\epl}$.
This generates a CSBAE on which,
by Theorem~\ref{thm:batch_clt},
feasible estimators $\hat{\theta}_{0,\aipw}$ and $\hat{\theta}_{0,\epl}$ achieve the targeted asymptotic variances $V_{0,\aipw}$ and $V_{0,\epl}$.

While $V_{0,\aipw}$ is scalar
and so there is no ambiguity in what it means to maximize asymptotic precision of $\hat{\theta}_{\aipw}$,
when $\theta_{0,\pl}$ is multivariate ($p>1$),
$V_{0,\epl}$ is a matrix.
To handle the multivariate setting,
we follow classical literature on experiment design in regression models
by scalarizing an \emph{information matrix} $\ci \in \ess_+^p$ (typically an inverse covariance matrix) using an \textit{information function} $\Psi:\ess_+^p \rightarrow \real \cup \{-\infty\}$.
See textbooks such as~\citet{atkinson2007optimum,pukelsheim2006optimal} for more background.
We generically write $\ci=\ci(e,\eta)$ to emphasize that in our setting,
$\ci$ will be indexed by a propensity score $e=e(\cdot)$ in some function class $\cf_*$
along with some unknown nuisance functions $\eta=\eta(\cdot)$ to be estimated.
Then the design objective is to learn an optimal propensity $e^*(\cdot)$,
in the sense of maximizing $\Psi(\ci)$:
\begin{equation}
\label{eq:e_star}
e^*(\cdot) \in \argmax_{e \in \cf_*} \Psi(\ci(e;\eta)).
\end{equation}
The nuisance functions $\eta$ in the information matrix are distinct from the nuisance functions $\nu$ in the score function.
Examples of $\eta$ for information matrices based on $V_{0,\aipw}$ and $V_{0,\epl}$ are given in Section~\ref{sec:apply_concave_maximization}.

\subsection{Generic convergence rates with concave maximization} 
We start by considering a feasible procedure to learn the propensity $e^*(\cdot)$ of~\eqref{eq:e_star} when the information matrix $\ci$
and the function class $\cf_*$ being optimized over generically satisfy Assumption~\ref{assump:info_matrix} below.
An explanation of how to apply this generic procedure to construct an appropriate CSBAE that designs for the estimators $\hat{\theta}_{\aipw}$ and $\hat{\theta}_{\epl}$ is deferred to
Section~\ref{sec:apply_concave_maximization}.
Here we instead focus on a precise exposition of our propensity learning procedure and the technical assumptions needed to ensure convergence,
without explicitly invoking any notation or setup from previous sections.
We begin with some generic structure on the information matrix $\ci$ and the function class $\cf_*$. 
Roughly, 
we require the information matrix to be strongly concave in $e(\cdot)$
and the function class $\cf_*$ to be not too complex.
We also allow for interval constraints on the per-batch expected proportion of subjects treated.

\begin{assumption}[Generic optimization setup]
\label{assump:info_matrix}
The information matrix $\ci=\ci(e,\eta)$ in~\eqref{eq:e_star} takes the form
\begin{equation}
\label{eq:generic_info}
\ci(e;\eta) = \e_{X \sim P}\left[f(e(X),\eta(X))\right]
\end{equation}
where $\eta:\cx \rightarrow \cw$ is a vector of possibly unknown functions taking values in some compact set $\cw \subseteq \real^r$,
and $P$ is a \textbf{covariate distribution} from which i.i.d.\ observations $X_1,\ldots,X_n \in \cx$ are drawn.
Additionally,
the function $f:[0,1] \times \cw \rightarrow \ess_+^p$ appearing in~\eqref{eq:generic_info} satisfies the following properties:
\begin{itemize}
    \item There exists an extension of $f(\cdot,\cdot)$ with continuous second partial derivatives to an open neighborhood of $\real^{r+1}$ containing $(\delta,1-\delta) \times \cw$ for some $\delta < 0$.
    \item For some $c>0$ we have 
\begin{equation}
\label{eq:f_strong_concavity}
-f''(e,w) \in \ess_+^p, \quad\text{and}\quad \tr(f''(e,w)) \leq -c,\quad  \forall (e,w) \in [0,1] \times \cw
\end{equation}
where $f''(e,w)$ denotes the second partial derivative of $f(\cdot,\cdot)$ with respect to the first argument, 
evaluated at $(e,w)$.
\end{itemize}
Next, the collection of propensity scores $\cf_*$ to be optimized over
takes the form $\cf_* = \cf_*(m_L,m_H;P) = \{e \in \ce \mid m_L \leq \e_P[e(X)] \leq m_H\}$,
for some \textbf{base propensity class} $\ce \subseteq \cf_0$
and some known \textbf{budget constraints} $(m_L,m_H)$ with $0 \leq m_L \leq m_H \leq 1$.
The base propensity class $\ce$ is convex and closed in $L^2(P)$
and additionally satisfies the following properties for all $n \geq 1$:
\begin{enumerate}
    \item \label{cond:finite_entropy} There exists 
    $C<\infty$ for which 
    \begin{equation}
    \label{eq:finite_entropy}
    \int_0^1 \sqrt{\log \cn(\epsilon, \ce, L^2(P_n))} \,\mrd\epsilon \leq C\quad  \forall n\quad \text{w.p.1}
    \end{equation}
    where $\log \cn(\epsilon,\ce,L^2(P_n))$ is the metric entropy of the base propensity class $\ce$ in $L^2(P_n)$,
    as defined in Appendix~\ref{app:asymptotics},
    and $P_n$ is the empirical distribution on the observations $X_1,\ldots,X_n$.
    \item \label{cond:transform_En} Given any $x_1,\ldots,x_n \in \cx$,
    there exists a convex set $E_n \subseteq [0,1]^n$, possibly depending on $x_1,\ldots,x_n$, such that 
    \begin{itemize}
        \item For every $e \in \ce$, we must have $(e(x_1),e(x_2),\ldots,e(x_n)) \in E_n$, and
        \item For every $(e_1,e_2,\ldots,e_n) \in E_n$, there exists $e \in \ce$ with $e(x_i) = e_i$ for all $i=1,\ldots,n$.
    \end{itemize}
    \item \label{cond:positivity} There exists $e_*(\cdot) \in \cf_*$ such that
    $\e_P[f(e_*(X),\eta(X))] \succeq cI$ for some $c>0$.
    \item \label{cond:hilo} There exist $e_L(\cdot),e_H(\cdot) \in \ce$ such that $\e_P[e_L(X)] > m_L$ and $\e_P[e_H(X)] < m_H$.
\end{enumerate}
\end{assumption}
\begin{remark}
If~\eqref{eq:f_strong_concavity} only holds for $(e,w)$ in $[\gamma,1-\gamma] \times \cw$ for some $\gamma \in (0,1/2)$ (instead of on all of $[0,1] \times \cw$) and the base propensity class $\ce$ is chosen to lie within $\cf_{\gamma}$,
then one can rewrite~\eqref{eq:generic_info} as
\[
\ci(e;\eta) = \e_{X \sim P}[\tilde{f}(e(X),\eta(X))], \quad \tilde{f}(e,w) = f(L_{\gamma}(e), w)
\]
where $L_{\gamma}(e)=\gamma+(1-2\gamma)e$ is an invertible linear mapping.
Then the remainder of Assumption~\ref{assump:info_matrix} holds as stated with $f$ replaced by $\tilde{f}$ and the base propensity class $\ce$ replaced by $\tilde{\ce} = \{L_{\gamma}^{-1}(e(\cdot)) \mid e(\cdot) \in \ce\}$,
and so all results below depending on Assumption~\ref{assump:info_matrix} hold by considering optimization over $\tilde{\ce}$ instead of $\ce$.
\end{remark}
Similar to the idea of empirical risk minimization in supervised learning (ERM, e.g.~\citet{vapnik1991principles} and~\citet{montanari2022universality}),
when Assumption~\ref{assump:info_matrix} is satisfied,
our learning procedure replaces the unknown population expectation appearing in the objective~\eqref{eq:e_star} by a sample average over observations $X_1,\ldots,X_n \simiid P$,
where the generic sample size $n$ diverges.
In our designed CSBAE,
these observations will be the covariates of those subjects in a given batch $t=1,\ldots,T$ of a CSBAE within a given fold $k=1,\ldots,K$,
so $n$ will be identified with the quantity $n_{t,k}$ of Definition~\ref{def:CSBAE}.
The main computational procedure is then a finite dimensional optimization over the values of the propensity score at the $n$ points $X_1,\ldots,X_n$:
\begin{equation}
\label{eq:e_hat}
(\hat{e}_1,\ldots,\hat{e}_n) \in \argmax_{(e_1,\ldots,e_n) \in F_n} \Psi\biggl(n^{-1} \sum_{i=1}^n f(e_i,\hat{\eta}(X_i))\biggr).
\end{equation}
Here $\hat{\eta}(\cdot)$ is an estimate of the nuisance function $\eta(\cdot)$ in~\eqref{eq:generic_info},
and the optimization set $F_n=F_n(m_L,m_H) \subseteq \real^n$ is defined by
\begin{equation}
\label{eq:F_n}
F_n(m_L,m_H) = \biggl\{(e_1,\ldots,e_n) \in E_n \mid m_L \leq \frac1n \sum_{i=1}^n e_i \leq m_H\biggr\} \subseteq \real^n
\end{equation}
where $(m_L,m_H)$ are budget constraints as in Assumption~\ref{assump:info_matrix}
and $E_n$ is as in the numbered condition~\ref{cond:transform_En} of that assumption.

We convert the vector $(\hat{e}_1,\ldots,\hat{e}_n)$ in~\eqref{eq:e_hat} to a propensity score $\hat{e}(\cdot)$
by taking $\hat{e}(\cdot)$ to be any member of the base propensity class $\ce$ of Assumption~\ref{assump:info_matrix}
with $\hat{e}(X_i)=\hat{e}_i$ for each $i=1,\ldots,n$.
The existence of such a propensity $\hat{e}(\cdot)$ is guaranteed by the numbered condition~\ref{cond:transform_En} in Assumption~\ref{assump:info_matrix}.
Then as in the ERM literature,
we use empirical process arguments to guarantee the learned propensity $\hat{e}(\cdot)$ converges to $e^*(\cdot)$ at rate $O_p(n^{-1/4})$
under appropriate restrictions on the complexity of the base propensity class $\ce$.
Our main such restriction is the finite entropy integral requirement in~\eqref{eq:finite_entropy}.
Examples of function classes satisfying this condition
can be found in the literature on empirical processes.
A sufficient condition is that $\log \cn(\epsilon,\ce,L^2(P_n)) \leq K\epsilon^{-2+\delta}$ for some $\delta >0$, some 
$K<\infty$,
and all $\epsilon > 0$.
Examples~\ref{example:monotone} through~\ref{example:VC_hull} below show that this requirement is loose enough to admit some relatively rich function classes.
\begin{example}[Monotone]
\label{example:monotone}
Suppose $d=1$ and let $\ce$ be the set of nondecreasing functions in $\cf_0$.
By Lemma 9.11 of~\citet{kosorok2008introduction},
we know that $\log \cn(\epsilon,\ce,L^2(P_n)) \leq K/\epsilon$
for each $\epsilon>0$ and some positive universal constant $K<\infty$.
\end{example}
\begin{example}[Lipschitz]
\label{example:lipschitz}
Again suppose $d=1$ and let $\ce$ be the set of $L$-Lipschitz functions in $\cf_0$
for some fixed $L>0$.
If $\cx$ is a bounded closed interval,
then the discussion preceding Example 5.11 of~\citet{wainwright2019high} shows that $\log \cn(\epsilon,\ce,L^2(P_n)) \leq K/\epsilon$
for each $\epsilon>0$ and some positive universal constant
$K<\infty$ (which may depend on $L$).
\end{example}
\begin{example}(VC-subgraph class)
\label{example:VC}
Let $\ce$ be any subset of $\cf_0$ that is closed and convex in $L^2(P)$,
and whose subgraphs are a Vapnik-Chervonenkis (VC) class,
meaning they have a finite VC dimension $V$.
A special case is a fully parametric class like $\{x \mapsto  \theta^{\top}\xi(x) \mid \theta^{\top}\bm{1}_p \leq 1, \theta \succeq 0\}$
where $\xi(x) \in [0,1]^p$ is a known set of basis functions
and $\bm{1}_p := (1,\ldots,1) \in \real^p$.
Then by Theorem 2.6.7 of~\citet{van1996weak},
$\cn(\epsilon,\ce,L^2(P_n)) \leq K(1/\epsilon)^{2V-2}$ for some universal $K>0$ depending on the VC dimension $V$ of the subgraphs.
Note that $K$ may depend on $p$.
\end{example}
\begin{example}(Symmetric convex hull of VC-subgraph class)
\label{example:VC_hull}
Let $\ce_0$ be a VC-subgraph class of functions.
The symmetric convex hull of $\ce_0$ is defined as 
\[
\sconv(\ce_0) = \biggl\{\,\sum_{i=1}^m \omega_i e_i \Bigm| e_i \in \ce_0, \sum_{i=1}^m |\omega_i| \leq 1 \biggr\}.
\]
Now suppose $\ce$ is contained within $\ovl{\sconv(\ce_0)}$, the pointwise closure of $\sconv(\ce_0)$.
Let $V<\infty$ be the VC dimension of the collection of subgraphs of functions in $\ce_0$.
Then by Theorem 2.6.9 of~\citet{van1996weak}
we have $\log \cn(\epsilon,\ce,L^2(P_n)) \leq K(1/\epsilon)^{2(1-1/V)}$ for all $\epsilon>0$.
For example, with $\expit(x)=\exp(x)/(1+\exp(x))$ we can take
\begin{equation}
\label{eq:E_conv_hull}
\ce = \biggl\{\,\sum_{i=1}^m \omega_i \expit(\theta_i^{\top}\varphi(x)) \Bigm| \omega_i \geq 0, \sum_{i=1}^m \omega_i \leq 1\biggr\}
\end{equation}
where $\varphi(\cdot)$ is any vector of $p$ real-valued basis functions,
$m$ can be made arbitrarily large,
and $\theta_1,\ldots,\theta_m \in \real^p$ are arbitrary.
This choice of $\ce$ is evidently a closed and convex subset of $\sconv(\ce_0)$ with $\ce_0=\{\expit(\theta^{\top}\varphi(x)) \mid \theta \in \real^p\}$.
Note the collection $\ce_0$ is indeed a VC-subgraph class by Lemmas 2.6.15 and 2.6.17 of~\citet{van1996weak},
as each function in $\ce_0$ is the composition of the monotone function $\expit(\cdot)$
with the $p$-dimensional vector space of functions $\{x \mapsto \theta^{\top}\varphi(x) \mid \theta \in \real^p\}$.
\end{example}

Next, we construct sets $E_n$ that satisfy condition~\ref{cond:transform_En} of Assumption~\ref{assump:info_matrix}
for the base propensity classes in Examples~\ref{example:monotone} to~\ref{example:VC_hull}.
For the set of monotone functions in one dimension (Example~\ref{example:monotone}) we can take
\[
E_n = \{(e_1,\ldots,e_n) \mid  0 \leq e_{\pi(1)} \leq e_{\pi(2)} \leq \cdots \leq e_{\pi(n)} \leq 1\}
\]
where $\pi(\cdot)$ is the inverse of the function that maps each $i \in \{1,\ldots,n\}$ to the rank of $x_i$ among $x_1,\ldots,x_n$ (with any ties broken in some deterministic way).
For the set of $L$-Lipschitz functions (Example~\ref{example:lipschitz}) we can take
\[
E_n = \bigl\{(e_1,\ldots,e_n) \bigm| |e_{\pi(i)}-e_{\pi(i-1)}| \leq L(x_{\pi(i)}-x_{\pi(i-1)}),\, i=2,\dots,n\bigr\}.
\]
For the parametric class in Example~\ref{example:VC} we can take
\[
E_n = \bigl\{(\theta^{\top}\xi(x_1),\ldots,\theta^{\top}\xi(x_n)) \mid \theta^{\top}\bm{1}_p \leq 1, \theta \succeq 0\bigr\}.
\]
Finally, for the class~\eqref{eq:E_conv_hull} in Example~\ref{example:VC_hull} we take
\[
E_n = \biggl\{\,\sum_{i=1}^m \omega_i\bigl(\expit(\theta_i^{\top}\varphi(x_1)),\ldots,\expit(\theta_i^{\top}\varphi(x_n))\bigr) \Bigm| \omega_i \geq 0, \sum_{i=1}^m \omega_i \leq 1 \biggr\}.
\]

The convergence rates of $\hat{e}(\cdot)$ to $e^*(\cdot)$ will be proven using strong concavity of the design objective~\eqref{eq:e_star} on the space $\cf^*$.
This is ensured by Assumption~\ref{assump:info_matrix} along with the following conditions on the information function $\Psi(\cdot)$:

\begin{assumption}[Information function regularity]
\label{assump:Psi}
The information function $\Psi:\ess_+^p \rightarrow \real \cup \{-\infty\}$ is concave, continuous, and nondecreasing with respect to the semidefinite ordering $\succeq$ on $\real^{p \times p}$ and satisfies the following conditions:
\begin{itemize}
\item[(a)] For every $k>0$, $\inf_{B \succeq kI} \Psi(B) > \sup_{A \in \ess_+^p \setminus \ess_{++}^p} \Psi(A) =: \Psi_0$.
\item[(b)] $\Psi(\cdot)$ is
twice continuously differentiable on $\ess_{++}^p$,
such that for all $0<k<K$, there exists $C>0$ such that $\|\nabla \Psi(A)-\nabla \Psi(B)\| \leq C\|A-B\|$ whenever $KI \succeq A \succeq kI$ and $KI \succeq B \succeq kI$.
\item[(c)] For every $0 < k < K$,
$kI \preceq A \preceq KI$ implies $\tilde{k}I \preceq \nabla \Psi(A) \preceq \tilde{K}I$ for some $0 < \tilde{k} < \tilde{K}$.
\item[(d)] For every $K<\infty$ and $\tilde{\Psi}_0 > \Psi_0$, there exists $k > 0$ such that for all $0 \preceq A \preceq KI$ with $\Psi(A) \geq \tilde{\Psi}_0$, we have $A \succeq kI$.
\end{itemize}
\end{assumption}

We can show that Assumption~\ref{assump:Psi} is satisfied by two common information functions:
the ``$A$-optimality" function $\Psi_a(\cdot) = -\tr((\cdot)^{-1})$ with $\Psi_a(M) :=-\infty$ whenever $M$ is singular,
and the ``$D$-optimality" function $\Psi_d(\cdot) = \log(\det(\cdot))$.
The $A$-optimality criterion corresponds to minimizing the average (asymptotic) variance of the components of the estimand,
while $D$-optimality corresponds to minimizing the volume of the ellipsoid spanned by the columns of the (asymptotic) covariance matrix.
\begin{lemma}
\label{lemma:Psi_cond}
The information functions $\Psi_d(\cdot)$ and $\Psi_a(\cdot)$ satisfy Assumption~\ref{assump:Psi}.
\end{lemma}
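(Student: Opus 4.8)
The plan is to verify each requirement of Assumption~\ref{assump:Psi} separately for $\Psi_d(A)=\log\det(A)$ and $\Psi_a(A)=-\tr(A^{-1})$, exploiting that both are spectral functions depending only on the eigenvalues $\lambda_1(A)\le\cdots\le\lambda_p(A)$ of $A$. First I would record the two gradients on $\ess_{++}^p$: a standard matrix-calculus computation gives $\nabla\Psi_d(A)=A^{-1}$ and $\nabla\Psi_a(A)=A^{-2}$. With these in hand, most conditions reduce to elementary eigenvalue bounds. For the three preamble properties, concavity of $\Psi_d$ is the classical concavity of $\log\det$, while concavity of $\Psi_a$ follows from operator convexity of $A\mapsto A^{-1}$ on $\ess_{++}^p$, which makes $\tr(A^{-1})$ convex after taking traces. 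Monotonicity is immediate from the order-reversing property of inversion, $A\succeq B\succ 0\Rightarrow A^{-1}\preceq B^{-1}$, together with $\det A\ge\det B$. Continuity into $\real\cup\{-\infty\}$ holds because whenever $A_n\to A_0$ with $A_0$ singular we have $\lambda_1(A_n)\to 0$ by continuity of eigenvalues, forcing $\Psi(A_n)\to-\infty$, matching the assigned boundary value.

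Conditions (a), (b), (c) are then short spectral computations. For (a), any singular $A$ gives $\Psi(A)=-\infty$, so $\Psi_0=-\infty$, whereas $B\succeq kI$ yields $\log\det B\ge p\log k$ and $-\tr(B^{-1})\ge -p/k$, both finite; hence the strict inequality in (a). For (b), matrix inversion is $C^\infty$ on $\ess_{++}^p$, so both functions are twice continuously differentiable there, and on the compact convex band $\{kI\preceq A\preceq KI\}$ the derivatives of $A\mapsto A^{-1}$ and $A\mapsto A^{-2}$ are bounded in operator norm (by $1/k^2$ and $2/k^3$ respectively, using $D(A^{-1})[H]=-A^{-1}HA^{-1}$), giving the Lipschitz gradient bound. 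For (c), if $kI\preceq A\preceq KI$ then $A^{-1}$ has eigenvalues in $[1/K,1/k]$, so $(1/K)I\preceq\nabla\Psi_d(A)\preceq(1/k)I$; squaring those eigenvalues gives $(1/K^2)I\preceq\nabla\Psi_a(A)\preceq(1/k^2)I$.

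The main obstacle is condition (d), the coercivity-type statement that a level set of $\Psi$ capped by $A\preceq KI$ stays uniformly away from the boundary of the semidefinite cone. I would argue spectrally. Fix $K<\infty$ and $\tilde{\Psi}_0>-\infty$, and suppose $0\preceq A\preceq KI$ with $\Psi(A)\ge\tilde{\Psi}_0$; in particular $A$ is nonsingular, since singular matrices give value $-\infty$. For $\Psi_d$, writing $\log\det A=\sum_i\log\lambda_i(A)\le\log\lambda_1(A)+(p-1)\log K$ and rearranging gives $\lambda_1(A)\ge\exp(\tilde{\Psi}_0)\,K^{-(p-1)}=:k>0$. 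For $\Psi_a$, the bound $\tr(A^{-1})\le-\tilde{\Psi}_0$ combined with $1/\lambda_1(A)\le\tr(A^{-1})$ yields $\lambda_1(A)\ge(-\tilde{\Psi}_0)^{-1}=:k>0$, the case $\tilde{\Psi}_0\ge 0$ being vacuous since $\Psi_a<0$ everywhere on $\ess_{++}^p$. In both cases $A\succeq kI$ with $k$ depending only on $K$ and $\tilde{\Psi}_0$, which is exactly (d). Assembling these verifications over the two functions and the four conditions completes the proof.
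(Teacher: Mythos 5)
Your proof is correct and follows essentially the same route as the paper's: the same gradient formulas $\nabla\Psi_d(A)=A^{-1}$ and $\nabla\Psi_a(A)=A^{-2}$, the identification $\Psi_0=-\infty$ for condition (a), and the same eigenvalue arguments for conditions (c) and (d). The only cosmetic difference is in condition (b), where you bound the derivative of $A\mapsto A^{-1}$ and $A\mapsto A^{-2}$ and invoke the mean value inequality on the convex band $\{A : kI\preceq A\preceq KI\}$, whereas the paper uses the algebraic identities $A^{-1}-B^{-1}=A^{-1}(B-A)B^{-1}$ and $A^{-2}-B^{-2}=A^{-2}(B^2-A^2)B^{-2}$ directly; both yield the required Lipschitz bound on the gradient.
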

\begin{proof}
See Appendix~\ref{proof:lemma:Psi_cond}.
\end{proof}

There are some common information functions that do not satisfy
Assumption~\ref{assump:Psi}.
For example, the ``$E$-optimality" function $\Psi_e(\cdot) = \lambda_{\min}(\cdot)$,
where $\lambda_{\min}(M)$ refers to the smallest eigenvalue of $M \in \real^{p \times p}$,
is not differentiable.
Similarly the function $\Psi_c(\cdot) = -c^{\top}(\cdot)^{-1}c$ (for some fixed $c \in \real^p$),
corresponding to ``$c$-optimality," does not satisfy condition (c).
We leave open the question of whether the $O_p(n^{-1/4})$
convergence rate of $\hat{e}(\cdot)$ to $e^*(\cdot)$ in Lemma~\ref{lemma:concave_maximization} can be extended to these (and other) information functions using different techniques,
and now prove this rate under Assumptions~\ref{assump:info_matrix} and~\ref{assump:Psi}. 

\begin{lemma}[Convergence of generic concave maximization routine]
\label{lemma:concave_maximization}
Suppose Assumption~\ref{assump:info_matrix} holds for some
information matrix $\ci$ of the form~\eqref{eq:generic_info},
covariate distribution $P$, 
and budget constraints $(m_L,m_H)$.
Further assume that for some sequence $\alpha_n \downarrow 0$,
we have an estimate $\hat{\eta}(\cdot)$ of $\eta(\cdot)$ satisfying
\begin{equation}
\label{eq:eta_n}
\hat{\eta}(x) \in \cw,\ \forall x \in \cx, \quad\text{and}\quad \|\hat{\eta}-\eta\|_{2,P_n} = O_p(\alpha_n)
\end{equation}
where $\eta(\cdot)$ is defined by $\ci$ by~\eqref{eq:generic_info}.
Then for any information function $\Psi(\cdot)$ satisfying Assumption~\ref{assump:Psi},
the following statements are true:
\begin{enumerate}
\item There exists an optimal propensity function $e^*(\cdot)$ satisfying~\eqref{eq:e_star} which is unique $P$-almost everywhere. 
\item \label{cond:F_n} There exist optimal
finite sample treatment probabilities $(\hat{e}_1,\ldots,\hat{e}_n) \in [0,1]^n$ satisfying~\eqref{eq:e_hat},
where $F_n=F_n(m_L,m_H)$ is defined as in~\eqref{eq:F_n}.
\item For any such optimal probabilities $(\hat{e}_1,\ldots,\hat{e}_n)$,
there exists a propensity score $\hat{e}(\cdot) \in \ce$ for which $\hat{e}(X_i) = \hat{e}_i$ for each $i=1,\ldots,n$.
Any such function $\hat{e}(\cdot)$ satisfies both $\|\hat{e}-e^*\|_{2,P}=O_p(n^{-1/4} + \alpha_n)$ and $\|\hat{e}-e^*\|_{2,P_n}=O_p(n^{-1/4}+\alpha_n)$.
\end{enumerate}
\end{lemma}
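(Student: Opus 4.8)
The plan is to read~\eqref{eq:e_star} and~\eqref{eq:e_hat} as a matched pair of concave maximization problems and to control the discrepancy between their solutions by combining strong concavity of the objective with empirical process bounds. Write $G(e)=\Psi(\ci(e;\eta))$ for the population objective maximized over $\cf_*$, and, using the numbered condition~\ref{cond:transform_En} of Assumption~\ref{assump:info_matrix}, identify the finite-dimensional problem~\eqref{eq:e_hat} with maximizing the empirical objective $\hat G_n(e)=\Psi\bigl(n^{-1}\sum_{i=1}^n f(e(X_i),\hat\eta(X_i))\bigr)$ over the data-dependent class $\hat\cf_*=\{e\in\ce:m_L\le\e_{P_n}[e(X)]\le m_H\}$; by condition~\ref{cond:transform_En} any $\hat e(\cdot)$ interpolating an optimal $(\hat e_1,\dots,\hat e_n)$ is such a maximizer, since its sample values solve~\eqref{eq:e_hat}. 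I will also use the true-nuisance empirical objective $\tilde G_n(e)=\Psi\bigl(n^{-1}\sum_{i=1}^n f(e(X_i),\eta(X_i))\bigr)$.

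The technical core is to establish strong concavity of $G$, $\tilde G_n$, and $\hat G_n$ in $\|\cdot\|_{2,P}$ (resp.\ $\|\cdot\|_{2,P_n}$) on the relevant region. Since $-f''\succeq 0$ pointwise, $\ci(\cdot;\eta)$ is concave in the semidefinite order, so $G$ is concave because $\Psi$ is concave and nondecreasing. Differentiating twice along $h=e_1-e_0$, the Hessian-of-$\Psi$ term is nonpositive by concavity while the remaining term is $\tr\bigl(\nabla\Psi(\ci)\,\e_P[f''(e,\eta)h^2]\bigr)$; bounding $\nabla\Psi(\ci)\succeq\tilde kI$ via Assumption~\ref{assump:Psi}(c) and using $\tr f''\le -c$ from~\eqref{eq:f_strong_concavity} makes this at most $-\tilde k c\,\|h\|_{2,P}^2$. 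The delicate point, which I expect to be the main obstacle, is that $\nabla\Psi$ only admits the lower bound $\tilde kI$ when $\ci$ lies in a compact positive-definite window $kI\preceq\ci\preceq KI$: the upper bound is automatic from boundedness of $f$, but the lower bound must be shown to hold at all near-optimizers. This localization follows from condition~\ref{cond:positivity} (which forces the optimal value to exceed $\Psi_0$, since $\ci(e_*)\succeq cI$) together with parts (a) and (d) of Assumption~\ref{assump:Psi}, and it must be verified to cover $e^*$, $\hat e$, and the intermediate optimizer below simultaneously.

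Granting strong concavity, parts 1 and 2 are immediate. The set $\cf_*$ is nonempty by condition~\ref{cond:positivity}, and bounded, convex, and closed in $L^2(P)$, hence weakly compact; $G$ is strongly continuous (by dominated convergence, since $f$ is bounded and continuous) and concave, hence weakly upper semicontinuous, so a maximizer $e^*$ exists and is unique $P$-almost everywhere by strong concavity. For part 2, $F_n$ in~\eqref{eq:F_n} is compact and, using condition~\ref{cond:hilo} with the law of large numbers, nonempty with probability tending to one, so the upper semicontinuous objective in~\eqref{eq:e_hat} attains its maximum.

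For the rate I would separate the nuisance and the sampling error through the intermediate optimizer $e_n^\dagger\in\argmax_{e\in\hat\cf_*}\tilde G_n(e)$. The nuisance gap is handled by the standard argmax-stability bound for $\mu$-strongly concave objectives, $\mu\|\hat e-e_n^\dagger\|_{2,P_n}\le\|\nabla\hat G_n(e_n^\dagger)-\nabla\tilde G_n(e_n^\dagger)\|$; since $f'$ is Lipschitz in its second argument and $\nabla\Psi$ is Lipschitz by Assumption~\ref{assump:Psi}(b), this gradient discrepancy is $O_p(\|\hat\eta-\eta\|_{2,P_n})=O_p(\alpha_n)$ by~\eqref{eq:eta_n}, giving $\|\hat e-e_n^\dagger\|=O_p(\alpha_n)$ and crucially keeping the $\alpha_n$ dependence linear rather than $\alpha_n^{1/2}$. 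The sampling gap is handled by strong concavity of $G$: decomposing $G(e^*)-G(e_n^\dagger)$ into differences between $G$ and $\tilde G_n$ plus an optimality term bounds it by $2\sup_{e\in\ce}|\tilde G_n(e)-G(e)|$ up to a budget correction, where the supremum is $O_p(n^{-1/2})$ because $\{f(e(\cdot),\eta(\cdot)):e\in\ce\}$ inherits the finite entropy integral~\eqref{eq:finite_entropy} through the Lipschitz map $e\mapsto f(e,\eta)$ and is therefore Donsker, and the correction from replacing $\cf_*$ by $\hat\cf_*$ is $O_p(n^{-1/2})$ by a uniform law of large numbers on $\e_{P_n}[e]-\e_P[e]$. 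Strong concavity then gives $\|e_n^\dagger-e^*\|_{2,P}=O_p(n^{-1/4})$, and the triangle inequality yields the claimed $\|\hat e-e^*\|_{2,P}=O_p(n^{-1/4}+\alpha_n)$; the $\|\cdot\|_{2,P_n}$ statement follows identically, using the Donsker property to pass between the two norms on $\ce$.
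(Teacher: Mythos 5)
Your proposal is correct in its overall architecture and, for parts 1 and 2 and the $O_p(n^{-1/4})$ sampling term, follows essentially the same path as the paper (weak compactness plus strong concavity for existence/uniqueness; compactness and continuity for the finite-dimensional problem; a basic inequality driven by localized strong concavity and a Dudley entropy bound, with the budget-set mismatch repaired via condition~\ref{cond:hilo} — the paper makes this repair explicit in Lemma~\ref{lemma:e_0_e_n_star} by mixing with $e_L,e_H$). Where you genuinely diverge is the step that keeps the nuisance error linear in $\alpha_n$: the paper first proves the crude rate $O_p(n^{-1/4}+\alpha_n^{1/2})$ and then runs a shell-peeling argument (its Step 4, powered by the locally centered empirical process bound of Lemma~\ref{lemma:centered_empirical_process}) to sharpen $\alpha_n^{1/2}$ to $\alpha_n$, whereas you compare $\hat e$ directly to the true-nuisance empirical maximizer $e_n^\dagger$ via an argmax-stability (first-order variational inequality) bound. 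Your route is viable and arguably cleaner: the constrained optimality conditions at $\hat e$ and $e_n^\dagger$ plus strong monotonicity of $-\nabla \tilde G_n$ give $c_0\min(r_0,u)\,u \leq O_p(\alpha_n)\,u$ for $u=\|\hat e - e_n^\dagger\|_{2,P_n}$, which yields $u=O_p(\alpha_n)$ without peeling; the same Lipschitz structure ($f'_w$, $f_{ww}$ bounded, $\nabla\Psi$ Lipschitz on the localized region) that the paper uses inside Lemma~\ref{lemma:centered_empirical_process} is exactly what controls your gradient discrepancy.

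Two points in your write-up would fail as literally stated and need the paper's localization machinery to repair. First, $\sup_{e \in \ce}|\tilde G_n(e)-G(e)| = O_p(n^{-1/2})$ is false in general: $\Psi$ is not globally Lipschitz (e.g.\ $\Psi_a$ equals $-\infty$ at singular matrices), so the uniform $O_p(n^{-1/2})$ control holds only at the matrix level, $\sup_{e\in\ce}\|M_n(e)-M(e)\|$ as in~\eqref{eq:supremum}, and must be converted into objective-value differences only at the handful of near-optimal points where $\hat M_n$ and $M$ have been shown to be bounded away from singularity — this is precisely the role of the paper's event $B_n$ and the chain of eigenvalue bounds derived from conditions~\ref{cond:positivity}, (a), and (d). Second, your stability inequality evaluates the gradient discrepancy only at $e_n^\dagger$; the variational-inequality derivation requires it either at the \emph{other} maximizer $\hat e$ or uniformly along the segment joining them, and strong monotonicity along that segment also needs localization there. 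Both are fixable: the gradient bound you sketch holds uniformly on the localized region, and semidefinite-order concavity of $e \mapsto \hat M_n(e)$ propagates the endpoint eigenvalue lower bounds to the entire segment. With those repairs (and the crude consistency from your sampling-gap step, needed so that $\hat e$ lands in the localization neighborhood with probability tending to one), your argument closes.
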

\begin{proof}
See Appendix \ref{proof:lemma:concave_maximization}.
\end{proof}

\subsection{Convergence of batch adaptive designs }
\label{sec:apply_concave_maximization}
We now leverage Lemma~\ref{lemma:concave_maximization} to develop a procedure
(Algorithm~\ref{alg:csbae})
that can learn adaptive propensities $\hat{e}_t^{(k)}(\cdot)$ with the convergence guarantees~\eqref{eq:csbae_limit} so that when used for treatment assignment,
they lead to a CSBAE
with limiting propensities that optimize objectives of the form~\eqref{eq:e_star}
with information matrices based on $V_{0,\aipw}^{-1}$ and $V_{0,\epl}^{-1}$.
This shows we can effectively design for the estimators $\hat{\theta}_{\aipw}$ and $\hat{\theta}_{\epl}$.

\begin{algorithm}[tb]
\caption{CSBAE for estimating $\theta_{0,\ate}$ resp. $\theta_{0,\pl}$}
\label{alg:csbae}
\begin{algorithmic}[1]
\REQUIRE Base propensity class $\ce$ satisfying conditions in Assumption~\ref{assump:info_matrix}, initial propensity $e_1(\cdot) \in \cf_{\epsilon_1}$ for some $\epsilon_1 > 0$,
information function $\Psi(\cdot)$ satisfying Assumption~\ref{assump:Psi}, number of folds $K \geq 2$
\FOR{batch $t=1,\ldots,T$} 
\STATE Observe subject covariates $X_{t1},\ldots,X_{tN_t}$
\STATE Split subject indices $i=1,\ldots,N_t$ into $K$ folds $\ci_1,\ldots,\ci_K$, of size as equal as possible
\FOR{fold $k=1,\ldots,K$}
\STATE Label the covariates in batch $t$, fold $k$ by $X_{t1}^{(k)},\ldots,X_{tn_{t,k}}^{(k)}$
\IF{$t=1$}
\STATE Set $(\hat{e}_{t1}^{(k)},\ldots,\hat{e}_{tn_{t,k}}^{(k)})=\big(e_1(X_{t1}^{(k)}),\ldots,e_1(X_{tn_{t,k}}^{(k)})\big)$
\ELSE
\STATE Compute $(\hat{e}_{t1}^{(k)},\ldots,\hat{e}_{tn_{t,k}}^{(k)})$ using right-hand side of~\eqref{eq:e_t_hat_aipw} resp.~\eqref{eq:e_t_hat_epl} given batch $t$ budget constraints $m_{L,t} \leq m_{H,t}$ 
\ENDIF 
\STATE Draw $(U_{t1}^{(k)},\ldots,U_{tn_{t,k}}^{(k)}) \simiid \Unif(0,1)$
\STATE Assign treatments according to $Z_{ti}^{(k)} = \indic(U_{ti}^{(k)} \leq \hat{e}_{ti}^{(k)})$, $i=1,\ldots,n_{t,k}$
\STATE Choose and store any $\hat{e}_t^{(k)}(\cdot) \in \ce$ with $\hat{e}_t^{(k)}(X_{ti}^{(k)})=\hat{e}_{ti}^{(k)}, i=1,\ldots,n_{t,k}$
\ENDFOR
\STATE Observe outcomes $\{W_{ti} \mid 1 \leq i \leq N_t\}$ in batch $t$
\STATE Compute and store $N^{1/4}$-consistent estimates $\big(\hat{v}_{1:t}^{(1)}(\cdot,\cdot),\ldots,\hat{v}_{1:t}^{(K)}(\cdot,\cdot)\big)$ of $v_0(\cdot,\cdot)$ where for each fold $k=1,\ldots,K$,
$\hat{v}_{1:t}^{(k)}(\cdot,\cdot)$ depends only on the observations $\{W_{ui}:(u,i) \in \ci_k, 1 \leq u \leq t\}$ in fold $k$ and batches $1,\ldots,t$
\ENDFOR
\STATE Compute final estimator $\hat\theta$ via~\eqref{eq:theta_hat}, using $(s_{a,\aipw},s_{b,\aipw})$, resp. $(s_{a,\pl},s_{b,\pl})$.
\end{algorithmic}
\end{algorithm}

For simplicity,
we assume treatment in the first batch is assigned according to a non-random propensity
$e_1(\cdot) \in \cf_{\epsilon_1}$ for some $\epsilon_1>0$.
We let $\hat{e}_1^{(k)}(\cdot)=e_1(\cdot), k=1,\ldots,K$.
Then for later batches $t=2,\ldots,T$,
the target propensities are taken to be one of the following,
for an information function $\Psi(\cdot)$ satisfying Assumption~\ref{assump:Psi}:
\begin{equation}
\label{eq:e_t_star_aipw}
e_{t,\aipw}^*(\cdot) \in \argmax_{e_t(\cdot) \in \cf_{*,t}} \Psi(V_{0:t,\aipw}^{-1}) 
\quad\text{or}\quad 
e_{t,\epl}^*(\cdot) \in \argmax_{e_t(\cdot) \in \cf_{*,t}} \Psi(V_{0:t,\epl}^{-1}).
\end{equation}
Above,
$V_{0:t,\aipw}$ and $V_{0:t,\epl}$ are the asymptotic variances of the oracle pooled estimators $\hat{\theta}_{\aipw}$ and $\hat{\theta}_{\epl}$ of~\eqref{eq:theta_star_aipw} and~\eqref{eq:theta_star_epl},
respectively,
\emph{when computed using observations in a non-adaptive batch experiment with only $t$ batches and propensities $e_1(\cdot),\ldots,e_t(\cdot)$}.
By~\eqref{eq:v_0_aipw} we can compute
\begin{equation}
\label{eq:V_0_t_aipw}
V_{0:t,\aipw} = V_{0:t,\aipw}(e_t;\eta_{0,\aipw}) = \e_{P^X}\bigg[\frac{v_0(1,X)}{e_{0:t}(X)} + \frac{v_0(0,X)}{1-e_{0:t}(X)} + (\tau_0(X)-\theta_0)^2\bigg]
\end{equation}
where $\eta_{0,\aipw}(x)$ includes the components $(v_0(0,x),v_0(1,x),\tau_0(x),\theta_0)$.
Similarly,
by~\eqref{eq:v_0_pl} we have
\begin{equation}
\label{eq:V_0_t_pl}
V_{0:t,\epl}(e_t;\eta_{0,\epl}) = \Bigg(\e_{P^X}\bigg[\frac{e_{0:t}(X)(1-e_{0:t}(X))}{v_0(0,X)e_{0:t}(X)+v_0(1,X)(1-e_{0:t}(X))}\psi(X)\psi(X)^{\top}\bigg]\Bigg)^{-1}
\end{equation}
where $\eta_{0,\epl}(x)$ includes the components $(v_0(0,x),v_0(1,x))$.
In both of the preceding equations,
the dependence on the batch $t$ propensity score $e_t(\cdot)$ is through the mixture $e_{0:t}(\cdot)$ given by
\begin{equation}
\label{eq:e_0_x}
e_{0:t}(x) := \left(\sum_{u=1}^t \kappa_u\right)^{-1} \left(\sum_{u=1}^{t-1} \kappa_ue_u(x) + \kappa_te_t(x)\right), \quad x \in \cx.
\end{equation}
Finally, the optimization set $\cf_{*,t}$ in~\eqref{eq:e_t_star_aipw} is
\begin{equation}
\label{eq:F_t}
\cf_{*,t}=\cf_*(m_{L,t},m_{H,t};P^X)=\{e(\cdot) \in \ce \mid m_{L,t} \leq \e_{P^X}[e(X)] \leq m_{H,t}\}
\end{equation}
which satisfies all the conditions in Assumption~\ref{assump:info_matrix} with covariate distribution $P^X$,
budget constraints $(m_{L,t},m_{H,t})$,
and base propensity class $\ce$.

We do not target the final covariances $V_{0,\aipw}=V_{0:T,\aipw}$ and $V_{0,\epl}=V_{0:T,\epl}$ in our discussion here
since the sample sizes and budget constraints in future batches may not be known.
If they are known in advance, 
then we can learn propensities for all future batches simultaneously at the time batch 2 covariates are observed,
and indeed target $V_{0:T,\aipw}$ or $V_{0:T,\epl}$ at that stage.

Now suppose we split our observations into $K$ folds as in a CSBAE,
and for notational simplicity we re-index the covariates in each batch $t=1,\ldots,T$, fold $k=1,\ldots,K$ as $X_{t1}^{(k)},\ldots,X_{tn_{t,k}}^{(k)}$.
Then following~\eqref{eq:e_hat},
we can estimate $e_{t,\aipw}^*(\cdot)$ for each batch $t \geq 2$
within
each fold $k=1,\ldots,K$
by computing
\begin{equation}
\label{eq:e_t_hat_aipw}
\Big(\hat{e}_{t1,\aipw}^{(k)},\ldots,\hat{e}_{tn_{t,k},\aipw}^{(k)}\Big) \in \argmax_{(e_1,\ldots,e_{n_{t,k}}) \in F_{n_{t,k}}} \Psi\biggl(\Big(\hat{V}_{0:t,\aipw}^{(k)}\Big)^{-1}\bigg), \quad k=1,\ldots,K.
\end{equation}
Here $F_{n_{t,k}}=F_{n_{t,k}}(m_{L,t},m_{H,t})$ is defined as in~\eqref{eq:F_n},
and the estimate $\hat{V}_{0:t,\aipw}^{(k)}$ of $V_{0:t,\aipw}$ is given by 
\begin{align}
\hat{V}_{0:t,\aipw}^{(k)} & = \hat{V}_{0:t,\aipw}^{(k)}\Bigl(e_1,\ldots,e_{n_{t,k}};\hat{e}_1^{(k)}(\cdot),\ldots,\hat{e}_{t-1}^{(k)}(\cdot),\hat{v}_{1:(t-1)}^{(k)}(\cdot,\cdot)\Bigr) \nonumber \\
& = -\frac{1}{n_{t,k}} \sum_{i=1}^{n_{t,k}} \frac{\hat{v}_{1:(t-1)}^{(k)}\big(1,X_{ti}^{(k)}\big)}{\hat{e}_{(0:t)i}^{(k)}} + \frac{\hat{v}_{1:(t-1)}^{(k)}\big(0,X_{ti}^{(k)}\big)}{1-\hat{e}_{(0:t)i}^{(t)}} \label{eq:V_0_t_hat_ate}
\end{align}
where each estimate $\hat{v}_{1:(t-1)}^{(k)}(z,\cdot)$ of the variance function $v_0(z,\cdot)$
is computed using only the observations from batches $u=1,\ldots,t-1$ within fold $k$.
Note that any plug-in estimate of $\tau_0(\cdot)$ and $\theta_0$ does not affect the optimization~\eqref{eq:e_t_hat_aipw}
and so can be omitted.
The dependence of $\hat{V}_{0:t,\aipw}^{(k)}$ on the optimization variables $e_1,\ldots,e_{n_{t,k}}$ is through the mixture
quantities
\begin{equation}
\label{eq:e_0_i_hat}
\hat{e}_{(0:t)i}^{(k)} :=  \frac1{N_{1:t}} \Bigg(\,\sum_{u=1}^{t-1} N_u\hat{e}_u^{(k)}(X_{ti}^{(k)}) + N_te_i\Bigg), \quad i = 1,\ldots,n_{t,k}
\end{equation}
where for $u=1,\ldots,t-1$,
$\hat{e}_u^{(k)}(\cdot)$ is the (possibly adaptive) propensity used to assign treatment in batch $u$, fold $k$.
By comparing~\eqref{eq:e_0_i_hat} and~\eqref{eq:e_0_x},
we see that for each batch $u=1,\ldots,t$,
$N_u/N$ is being used as a plug-in estimate of $\kappa_u$
and the adaptive propensity score $\hat{e}_u^{(k)}(\cdot)$ is used as a plug-in estimate of its limit $e_u(\cdot)$.
Finally,
as in the conclusion of Lemma~\ref{lemma:concave_maximization},
the learned adaptive propensity $\hat{e}_{t,\aipw}^{(k)}(\cdot)$ to be used for treatment assignment in batch $t$, fold $k$
is taken to be any choice in the predetermined base collection $\ce$ with $\hat{e}_{t,\aipw}^{(k)}\left(X_{ti}^{(k)}\right)=\hat{e}_{ti,\aipw}^{(k)}$ for all $i=1,\ldots,n_{t,k}$.

Learning $e_{t,\epl}^*(\cdot)$ is exactly analogous;
first we compute
\begin{equation}
\label{eq:e_t_hat_epl}
\Big(\hat{e}_{t1,\epl}^{(k)},\ldots,\hat{e}_{tn_{t,k},\epl}^{(k)}\Big) \in \argmax_{(e_1,\ldots,e_{n_{t,k}}) \in F_{n_{t,k}}} \Psi\bigg(\Big(\hat{V}_{0:t,\epl}^{(k)}\Big)^{-1}\bigg)
\end{equation}
where
\begin{align}
\label{eq:V_0_t_hat_pl}
\hat{V}_{0:t,\epl}^{(k)} & =
\hat{V}_{0:t,\epl}^{(k)}\Bigl(e_1,\ldots,e_{n_{t,k}};\hat{e}_1^{(k)}(\cdot),\ldots,\hat{e}_{t-1}^{(k)}(\cdot),\hat{v}_{1:(t-1)}^{(k)}(\cdot,\cdot)\Bigr) \\
& = \frac{1}{n_{t,k}} \sum_{i=1}^{n_{t,k}}\frac{\hat{e}_{(0:t)i}^{(k)}\big(1-\hat{e}_{(0:t)i}^{(k)}\big)}{\hat{v}_{1:(t-1)}^{(k)}\big(0,X_{ti}^{(k)}\big)\hat{e}_{(0:t)i}^{(k)} + \hat{v}_{1:(t-1)}^{(k)}\big(1,X_{ti}^{(k)}\big)\big(1-\hat{e}_{(0:t)i}^{(k)}\big)}\psi\big(X_{ti}^{(k)}\big)\psi\big(X_{ti}^{(k)}\big)^{\top}.
\end{align}
Then we assign treatment with any propensity $\hat{e}_{t,\epl}^{(k)}(\cdot)$ in the base propensity class $\ce$ satisfying
$\hat{e}_{t,\epl}\big(X_{ti}^{(k)}\big)=\hat{e}_{ti,\epl}^{(k)}$ for all $i=1,\ldots,n_{t,k}$.

The main additional regularity condition required to ensure the adaptive propensities $\hat{e}_{t,\aipw}^{(k)}(\cdot)$ and $\hat{e}_{t,\epl}^{(k)}$ above converge at the desired $O_p(N^{-1/4})$ RMS rate to $e_{t,\aipw}^*$ and $e_{t,\epl}^*$ of~\eqref{eq:e_t_star_aipw}
is the same rate of convergence in the estimates $\hat{\eta}^{(k)}_{\aipw}(\cdot)$ and $\hat{\eta}^{(k)}_{\epl}(\cdot)$ of the nuisance parameters $\eta_{0,\aipw}(\cdot)$ and $\eta_{0,\epl}(\cdot)$.
We also strengthen the sample size asymptotics~\eqref{eq:prop_asymp_limit} by requiring
\begin{equation}
\label{eq:prop_asymp_conv}
\frac{N_t}{N} = \kappa_t + O(N^{-1/4}), \quad t=1,\ldots,T.
\end{equation}

\begin{theorem}[Convergence of Algorithm~\ref{alg:csbae}]
\label{thm:concave_maximization}
Suppose $T \geq 2$,
fix a batch $t \in \{2,\ldots,T\}$
and suppose Assumption~\ref{assump:DGP} holds along with~\eqref{eq:prop_asymp_conv}.
Further assume treatment in batches $1,\ldots,t-1$ is assigned according to a CSBAE
where the batch 1 propensities are $\hat{e}_1^{(1)}(\cdot)=\ldots=\hat{e}_1^{(K)}(\cdot)=e_1(\cdot) \in \cf_{\epsilon_1}$ for some $\epsilon_1 >0$,
and that for each fold $k=1,\ldots,K$:
\begin{itemize}
\item There exists an estimator $\hat{v}^{(k)}(\cdot)$ of the variance function $v_0(\cdot)$ depending only on the observations $\{W_{ui}^{(k)} \mid 1 \leq u \leq t-1\}$ in batches $1,\ldots,t-1$ assigned to fold $k$,
such that $\|\hat{v}^{(k)}(z,\cdot)-v_0(z,\cdot)\|_{2,P^X} = O_p(N^{-1/4})$ for $z=0,1$.
\item There are universal constants $0<c<C<\infty$ for which 
\[
c \leq \inf_{(z,x) \in \{0,1\} \times \cx} \min\big(\hat{v}^{(k)}(z,x),v_0(z,x)\big) \leq \sup_{(z,x) \in \{0,1\} \times \cx} \max\big(\hat{v}^{(k)}(z,x),v_0(z,x)\big) \leq C.
\]
\item The information function $\Psi(\cdot)$ satisfies Assumption~\ref{assump:Psi}.
\end{itemize}
Let $\ce \subseteq \cf_0$ be any base propensity class satisfying the conditions of Assumption~\ref{assump:info_matrix},
and define $P_{N,t}^{(k),X}$ to be the empirical distribution on the covariates $X_{t1}^{(k)},\ldots,X_{tn_{t,k}}^{(k)}$ in batch $t$, fold $k$.
Then for any budget constraints $0 \leq m_{L,t} \leq m_{H,t} \leq 1$ and each fold $k=1,\ldots,K$,
the following holds:
\begin{enumerate}
    \item (Design for $\hat{\theta}_{\aipw}$)
    There exists a target propensity $e_{t,\aipw}^*(\cdot) \in \ce$ satisfying~\eqref{eq:e_t_star_aipw}
    that is unique $P^X$-almost surely.
    Additionally, there exists a solution $(\hat{e}_{t1,\aipw}^{(k)},\ldots,\hat{e}_{tn_{t,k},\aipw}^{(k)})$ to~\eqref{eq:e_t_hat_aipw};
    any such solution has the property that any propensity $\hat{e}_t^{(k)}(\cdot) \in \ce$ with $\hat{e}_t^{(k)}(X_{ti}^{(k)})=\hat{e}_{ti,\aipw}^{(k)}$ for $i=1,\ldots,n_{t,k}$ satisfies $$\|\hat{e}_t^{(k)}-e_{t,\aipw}^*\|_{2,P^X} + \|\hat{e}_t^{(k)}-e_{t,\aipw}^*\|_{2,P_{N,t}^X} = O_p(N^{-1/4}).$$ 
\end{enumerate}
If additionally,
the linear treatment effect assumption~\eqref{eq:pl_assumption} holds for some basis function $\psi(X)$ containing an intercept,
then:
\begin{enumerate}[resume]
    \item (Design for $\hat{\theta}_{\epl}$) There exists a target propensity $e_{t,\epl}^*(\cdot) \in \ce$ satisfying~\eqref{eq:e_t_star_aipw}
    that is unique $P^X$-almost surely.
    Furthermore, 
    there exists a solution $(\hat{e}_{t1,\epl}^{(k)},\ldots,\hat{e}_{tn_{t,k},\epl}^{(k)})$ to~\eqref{eq:e_t_hat_epl};
    any such solution has the property that any propensity $\hat{e}_t^{(k)}(\cdot) \in \ce$ with $\hat{e}_t^{(k)}(X_{ti}^{(k)})=\hat{e}_{ti,\epl}^{(k)}$ for $i=1,\ldots,n_{t,k}$ satisfies $$\|\hat{e}_t^{(k)}-e_{t,\epl}^*\|_{2,P^X} + \|\hat{e}_t^{(k)}-e_{t,\epl}^*\|_{2,P_{N,t}^X} = O_p(N^{-1/4}).$$
\end{enumerate}
\end{theorem}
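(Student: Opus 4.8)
The plan is to derive both parts of the theorem from the generic convergence result of Lemma~\ref{lemma:concave_maximization}, applied separately to the AIPW and EPL designs, by checking that each is an instance of the abstract template of Assumptions~\ref{assump:info_matrix} and~\ref{assump:Psi} with the generic sample size $n$ identified with $n_{t,k}$, the covariate distribution $P$ with $P^X$, and the empirical distribution $P_n$ with $P_{N,t}^{(k),X}$. Since $|n_{t,k}-N_t/K|\le 1$ and $N_t/N\to\kappa_t\in(0,1)$, we have $n_{t,k}\asymp N$, so the Lemma's rate $O_p(n_{t,k}^{-1/4}+\alpha_n)$ collapses to $O_p(N^{-1/4})$ once the nuisance rate $\alpha_n$ is shown to be $O_p(N^{-1/4})$; the Lemma delivers convergence in both $\|\cdot\|_{2,P^X}$ and $\|\cdot\|_{2,P_{N,t}^{(k),X}}$, which is exactly the two-norm bound claimed.

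First I would pin down the identification of the abstract integrand $f$ and nuisance $\eta$. Writing $e_{0:t}(x)=\alpha_t e_t(x)+\beta_t(x)$ with $\alpha_t=\kappa_t/\sum_{u\le t}\kappa_u>0$ and $\beta_t(x)=(\sum_{u\le t}\kappa_u)^{-1}\sum_{u<t}\kappa_u e_u(x)$, the batch-$t$ decision variable $e_t$ enters~\eqref{eq:V_0_t_aipw} and~\eqref{eq:V_0_t_pl} only through the affine map $e_t\mapsto e_{0:t}$. For EPL the information matrix is exactly of the generic form~\eqref{eq:generic_info}, $V_{0:t,\epl}^{-1}=\e_{P^X}[f(e_t(X),\eta(X))]$ with $f(e,\eta)=h(\alpha_t e+\beta)\,\psi\psi^\top$, $h(u)=u(1-u)/(v_0(0)u+v_0(1)(1-u))$ and $\eta=(v_0(0,\cdot),v_0(1,\cdot),\beta_t(\cdot))$, and the finite-sample program~\eqref{eq:e_t_hat_epl} is precisely the instance~\eqref{eq:e_hat} for this $f$. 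For AIPW, where $p=1$ and $V_{0:t,\aipw}$ is a convex scalar functional of $e_t$, I would instead take $f(e,\eta)=M-g(\alpha_t e+\beta)$ with $g(u)=v_0(1)/u+v_0(0)/(1-u)$ and a constant $M>\sup g$; then $\e_{P^X}[f]=M-V_{0:t,\aipw}$ is concave and nonnegative, and since $\Psi$ is increasing its maximizers coincide with the minimizers of $V_{0:t,\aipw}$, hence with the solutions of~\eqref{eq:e_t_hat_aipw}. The variance components range in $[c,C]$ and $\beta_t$ in $[0,1]$, so $\eta$ takes values in a compact set $\cw$; the components $\tau_0,\theta_0$ do not affect the argmax and are dropped.

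Next I would verify Assumption~\ref{assump:info_matrix} for these $f$. Strong concavity~\eqref{eq:f_strong_concavity} reduces to the scalar statements that $u\mapsto -g(u)$ and $u\mapsto h(u)$ are strongly concave, uniformly over $v_0(0),v_0(1)\in[c,C]$; the key leverage is that $e_1\in\cf_{\epsilon_1}$ forces $e_{0:t}\in[\delta,1-\delta]$ with $\delta=\tfrac{\kappa_1}{\sum_{u\le t}\kappa_u}\epsilon_1>0$ \emph{regardless} of $e_t$, so the integrands are evaluated on a compact subinterval of $(0,1)$ where the requisite second-derivative bounds hold (invoking the reparametrization of the Remark after Assumption~\ref{assump:info_matrix} near the endpoints if needed). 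The positivity condition~\ref{cond:positivity} follows by exhibiting a feasible $e_*$ (e.g.\ a constant satisfying the budget): for EPL, $h(e_{0:t})\ge h_{\min}>0$ on the whole support gives $\e_{P^X}[f(e_*,\eta)]\succeq h_{\min}\,\e_{P^X}[\psi\psi^\top]\succeq cI$, using that $\e[e_0^2(1-e_0)^2\psi\psi^\top]\in\ess_{++}^p$ (with the intercept in $\psi$ and $e_{0:t}$ bounded inside $(0,1)$) forces $\e_{P^X}[\psi\psi^\top]\succ0$; for AIPW the choice of $M$ makes $\e_{P^X}[f]\ge c$ for every feasible $e$. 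The entropy bound~\eqref{eq:finite_entropy}, the transfer sets $E_n$, and budget feasibility~\ref{cond:hilo} are inherited from the hypothesis that $\ce$ (hence $\cf_{*,t}$) satisfies Assumption~\ref{assump:info_matrix}, while Assumption~\ref{assump:Psi} on $\Psi$ is assumed directly.

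The crux, and the step I expect to be the main obstacle, is establishing the nuisance rate~\eqref{eq:eta_n} with $\alpha_n=O_p(N^{-1/4})$ in the \emph{batch-$t$, fold-$k$ empirical} norm, because the feasible $\hat\eta^{(k)}$ assembles three pieces measured in different ways: the variance estimates $\hat v^{(k)}$ (given in the population norm $\|\cdot\|_{2,P^X}$), the earlier adaptive propensities $\hat e_u^{(k)}$ entering $\hat\beta_t^{(k)}$ (given by~\eqref{eq:csbae_limit} only in their \emph{own} batch's empirical norm $\|\cdot\|_{2,P_{N,u}^{X,(k)}}$), and the plug-in weights $N_u/N$ for $\kappa_u$. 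I would control the weights via~\eqref{eq:prop_asymp_conv}, which gives $N_u/N=\kappa_u+O(N^{-1/4})$, and then reduce everything to the batch-$t$ empirical norm by a norm-transfer argument: since each $\hat v^{(k)}$ and each $\hat e_u^{(k)}$ depends only on observations in batches $1,\ldots,t-1$ and is therefore independent of the batch-$t$ covariates, and since $\hat e_u^{(k)}\in\ce$ lies in a class with finite entropy integral~\eqref{eq:finite_entropy}, the empirical $L^2$ distances over the independent i.i.d.\ samples $P_{N,u}^{X,(k)}$, $P^X$, and $P_{N,t}^{(k),X}$ are mutually comparable up to lower-order fluctuations (a conditional concentration/ratio-type empirical-process estimate, applied conditionally on batches $<t$). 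Combining these transfers with the Lipschitz dependence of $f$ on $(e,\eta)$ on the compact domain $[\delta,1-\delta]\times\cw$ yields $\|\hat\eta^{(k)}-\eta\|_{2,P_{N,t}^{(k),X}}=O_p(N^{-1/4})$. With this in hand, parts 1 and 2 follow from the three conclusions of Lemma~\ref{lemma:concave_maximization}: existence and $P^X$-a.e.\ uniqueness of $e_{t,\aipw}^*$ resp.\ $e_{t,\epl}^*$, existence of a maximizer of~\eqref{eq:e_t_hat_aipw} resp.~\eqref{eq:e_t_hat_epl}, and the two-norm $O_p(N^{-1/4})$ bound for any $\hat e_t^{(k)}\in\ce$ interpolating that maximizer.
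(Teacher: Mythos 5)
Your proposal is correct and follows essentially the same route as the paper's proof: both reduce to Lemma~\ref{lemma:concave_maximization} with $n=n_{t,k}$, $P=P^X$, $P_n=P_{N,t}^{(k),X}$, the same identification of $f$ and $\eta$ (except that the paper also folds the mixing weight $\kappa_t/\kappa_{1:t}$, estimated by $N_t/N_{1:t}$, into the nuisance $\eta$, which is what makes~\eqref{eq:e_t_hat_aipw} and~\eqref{eq:e_t_hat_epl} \emph{exactly} instances of~\eqref{eq:e_hat} rather than perturbations of them), and the same use of $e_1\in\cf_{\epsilon_1}$ to confine $e_{0:t}$ to a compact subinterval of $(0,1)$ where the uniform second-derivative bounds hold. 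The norm-transfer step you flag as the crux is precisely the point the paper passes over (it verifies~\eqref{eq:eta_n} with $\alpha_N=N^{-1/4}$ by simply citing~\eqref{eq:prop_asymp_conv} and~\eqref{eq:csbae_limit}); your argument via independence of the pre-batch-$t$ estimates from the batch-$t$ covariates together with the entropy bound~\eqref{eq:finite_entropy} is a sound way to supply that detail.
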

\begin{proof}
See Appendix \ref{proof:thm:concave_maximization}.
\end{proof}

\section{Numerical simulations}
\label{sec:simulations}
We implement Algorithm~\ref{alg:csbae} to construct some synthetic CSBAE's that illustrate the finite sample performance of our proposed methods. 
For simplicity we consider $T=2$ batches throughout. 
Our evaluation metric is
the average mean squared error (AMSE) of the estimators $\hat{\theta}_{\aipw}$ and $\hat{\theta}_{\epl}$
computed at the end of each CSBAE.
As a baseline,
we also compute feasible variants of the linearly aggregated estimators $\hat{\theta}_{\aipw}^{(\la)}$ and $\hat{\theta}_{\epl}^{(\la)}$ computed on a ``simple RCT": 
that is,
a non-adaptive batch experiment with a constant propensity score in each batch.
We additionally consider the approach of~\citet{hahn2011adaptive} for design and estimation of $\theta_{0,\ate}$.
This is equivalent to using Algorithm~\ref{alg:csbae} for design (without sample splitting, i.e.\ $K=1$)
and using the pooled $\hat{\theta}_{\aipw}$ as the final estimator,
but with the covariates $X$ replaced everywhere by a coarse discretization $S=S(X)$.
As a hybrid we also consider using the discretized covariates $S$ for design but computing the final estimates $\hat{\theta}_{\aipw}$
and $\hat{\theta}_{\epl}$
using the full original covariate $X$.
To separately attribute efficiency gains to design and pooling,
we also consider the linearly aggregated estimators $\hat{\theta}_{\aipw}^{(\la)}$ and $\hat{\theta}_{\epl}^{(\la)}$ when a modification of Algorithm~\ref{alg:csbae} that targets $V_{2,\aipw}$ and $V_{2,\epl}$
(the asymptotic variances of the estimators $\hat{\theta}_{2,\aipw}$ and $\hat{\theta}_{2,\epl}$ given in Section~\ref{sec:pooled_estimation},
depending only on observations in batch 2)
is used for design.

We consider four data generating processes (DGPs),
distinguished by whether the covariate dimension $d$ is 1 or 10 and whether the conditional variance functions $v_0(\cdot,\cdot)$ are homoskedastic (with $v_0(0,x)=v_0(1,x)=1$ for all $x \in \cx$) or heteroskedastic (with $v_0(0,x)=v_0(1,x)/2=\exp((\bm{1}_d^{\top}x)/(2\sqrt{d}))$.
The scaling by the covariate dimension $d$ in the heteroskedastic variance functions ensures the variance in $v_0(z,X)$ is independent of the covariate dimension $d$. 
In all of the DGPs the covariates are i.i.d. spherical Gaussian,
i.e.\ $P^X=\mathcal{N}(0,I_d)$.
The outcome mean functions are taken to be $m_0(0,x)=m_0(1,x)=\bm{1}_d^{\top}x$ for $\bm{1}_d=(1,\ldots,1)' \in \real^d$.
For estimating $\theta_{0,\pl}$
we use the basis functions $\psi(x)=(1,x^{\top})^{\top} \in \real^p$ where $p=d+1$.
Note $\theta_{0,\ate}=\theta_{0,\pl}=0$. 
The potential outcomes $Y(0),Y(1)$ are generated as follows:
\begin{align*}
(\epsilon(0),\epsilon(1)) \mid X & \sim \mathcal{N}((0,0)^\top,\diag(v_0(0,X),v_0(1,X))) \\
Y(z) & = m_0(z,X) + \epsilon(z), \quad z=0,1.
\end{align*}

For each DGP we run Algorithm~\ref{alg:csbae} with $K=2$ folds,
batch sample sizes $N_1=N_2=1000$,
information function $\Psi=\Psi_a(\cdot)$ (corresponding to $A$-optimality),
and treatment fraction constraints $m_{L,t}=m_{H,t}=0.2$ for $t=1,2$.
In Appendix~\ref{app:simulations},
we present additional simulation results where $m_{L,1}=m_{H,1}$ remain at 0.2 but $m_{L,2}=m_{H,2}=0.4$,
so that the treatment budget for the second batch has increased.
The initial propensity score $e_1(\cdot)$ is taken to be constant (i.e. $e_1(x)=0.2$ for all $x \in \cx$),
and the base propensity class $\ce$ is the set of all $1$-Lipschitz functions taking values in $[0,1]$ when $d=1$ (cf. Example~\ref{example:lipschitz}).
When $d=10$,
we take $\ce$ as in~\eqref{eq:E_conv_hull},
with $\{\theta_1,\ldots,\theta_m\}$ the collection of vectors $(a_1,\ldots,a_{11})' \in \real^{11}$ with each coordinate $a_i \in \{-2,-1,0,1,2\}$ and no more than two of the $a_i$'s nonzero.

The discretization used to implement the approach of~\citet{hahn2011adaptive} partitions $\real^d$ into four bins based on the quartiles of $\bm{1}_d^{\top}X$.
Note that this partition is along
the single dimension along which the variance functions $v_0(\cdot,\cdot)$ vary in the heteroskedastic DGPs,
so we would expect this to perform better than in practice,
where the structure of the variance functions is not known 
(it could possibly be learned,
as in~\citet{tabord-meehan2022stratification}).
 The choice of four bins is based on the experiments of~\citet{hahn2011adaptive},
which find minimal performance difference between two and six bins for the DGP's they consider.
We denote their ``binned" AIPW estimator by $\hat{\theta}_{\aipw}^{(\text{bin})}$.
Recall,
as indicated above,
that this is equivalent to $\hat{\theta}_{\aipw}$ when the only available covariate is the binned $S(X)$ and no cross-fitting is used.
For $\hat{\theta}_{\aipw}^{(\text{bin})}$ the conditional means $\tilde{m}_0(z,s) = \e(Y(z) \mid S=s), z=0,1$ are estimated nonparametrically by sample outcome means among all units with $Z=z$ and $S=s$ in the appropriate batch(es) and fold(s).
Similarly, the conditional variances $\tilde{v}_0(z,s)=\Var(Y(z) \mid S=s)$ are estimated by sample outcome variances.

For all simulations,
the concave maximizations are performed using the CVXR software~\citep{fu2017cvxr} and the MOSEK solver~\citep{mosek}.
All estimates $\hat{m}(z,\cdot)$ of the mean function $m_0(z,\cdot), z=0,1$ are computed 
by fitting a generalized additive model (GAM) to  the outcomes $Y$ and covariates $X$ from the observations with treatment indicators equal to $z$ in the appropriate fold(s) and batch(es).
The GAMs use a thin-plate regression spline basis~\citep{wood2003thin-plate} 
and the degrees of freedom are chosen using the generalized cross-validation procedure implemented in the \texttt{mgcv} package in R~\citep{wood2004stable}.
Variance function estimates $\hat{v}(z,\cdot)$
are computed by first computing $\hat{m}(z,\cdot)$ as above on the appropriate observations,
then fitting a GAM on these same observations to predict the squared residuals $(Y-\hat{m}(z,X))^2$ from $X$.

\subsection{Average treatment effect}
\begin{table}[!htb]
\centering
\caption{Simulated and asymptotic relative efficiencies (as defined in Section~\ref{sec:simulations} of the various design and estimation approaches for $\theta_{0,\ate}$ that we study numerically under each of the four DGP's described in the text.
}
\label{table:ate_sim}
\begin{tabular}{ccccc}
\toprule
DGP & Estimator & Design &  Sim. rel. eff. (90\% CI) & Asymp. rel. eff. \\
\midrule
\multirow{5}{0.15\linewidth}{\centering $d=1$, Homoskedastic} & $\hat{\theta}_{\aipw}$ & Flexible & 0.989 (0.965, 1.013) & 1.000 \\
& $\hat{\theta}_{\aipw}$ & Binned & 1.011 (0.977, 1.046) & 0.999 \\
& $\hat{\theta}_{\aipw}$ & Simple RCT & 1.016 (1.003, 1.029) & 1.000 \\
& $\hat{\theta}_{\aipw}^{(\la)}$ & Flexible & 0.984 (0.971, 0.998) & 0.999 \\
& $\hat{\theta}_{\aipw}^{(\text{bin})}$ & Binned & 0.919 (0.877, 0.961) & 0.885 \\
\midrule
\multirow{5}{0.15\linewidth}{\centering $d=1$, Heteroskedastic} & $\hat{\theta}_{\aipw}$ & Flexible & 1.016 (0.968, 1.064) & 1.048 \\
& $\hat{\theta}_{\aipw}$ & Binned & 1.046 (0.997, 1.096) & 1.041 \\
& $\hat{\theta}_{\aipw}$ & Simple RCT & 0.997 (0.979, 1.016) & 1.000 \\
& $\hat{\theta}_{\aipw}^{(\la)}$ & Flexible & 1.008 (0.971, 1.045) & 1.024 \\
& $\hat{\theta}_{\aipw}^{(\text{bin})}$ & Binned & 1.001 (0.946, 1.057) & 0.963 \\
\midrule
\multirow{5}{0.15\linewidth}{\centering $d=10$, Homoskedastic} & $\hat{\theta}_{\aipw}$ & Flexible & 1.039 (0.990, 1.089) & 1.000 \\
& $\hat{\theta}_{\aipw}$ & Binned & 1.011 (0.959, 1.065) & 1.000 \\
& $\hat{\theta}_{\aipw}$ & Simple RCT & 1.081 (1.052, 1.110) & 1.000 \\
& $\hat{\theta}_{\aipw}^{(\la)}$ & Flexible & 1.047 (1.018, 1.077) & 1.000 \\
& $\hat{\theta}_{\aipw}^{(\text{bin})}$ & Binned & 0.477 (0.439, 0.519) & 0.417 \\
\midrule
\multirow{5}{0.15\linewidth}{\centering $d=10$, Heteroskedastic} & $\hat{\theta}_{\aipw}$ & Flexible & 1.081 (1.023, 1.143) & 1.036 \\
& $\hat{\theta}_{\aipw}$ & Binned & 1.050 (0.986, 1.117) & 1.000 \\
& $\hat{\theta}_{\aipw}$ & Simple RCT & 1.102 (1.070, 1.135) & 1.000 \\
& $\hat{\theta}_{\aipw}^{(\la)}$ & Flexible & 0.991 (0.952, 1.030) & 1.024 \\
& $\hat{\theta}_{\aipw}^{(\text{bin})}$ & Binned & 0.651 (0.597, 0.708) & 0.595 \\
\bottomrule
\end{tabular}
\end{table}

Table~\ref{table:ate_sim} shows the performance of the various design and estimation procedures for $\theta_{0,\ate}$ that we consider.
Each entry is a ``relative efficiency": 
that is, a ratio of the MSE of the baseline approach
(which computes the linearly aggregated $\hat{\theta}_{\aipw}^{(\la)}$ on a simple RCT)
to the MSE of the relevant approach.
The simulated relative efficiencies in the table estimate these MSE's by averaging the squared error of each estimator over 1,000 simulations.
The 90\% confidence intervals for the true finite sample relative efficiency are computed using 10,000 bootstrap replications of these 1,000 simulations.
Finally, the asymptotic relative efficiencies in Table~\ref{table:ate_sim} are computed by estimating the asymptotic variance of each estimator using the appropriate formula,
i.e. $V_{0,\aipw}$ for $\hat{\theta}_{\aipw}$
and $V_{\aipw}^{(\la)}$ for $\hat{\theta}_{\aipw}^{(\la)}$.
The computation is based on a non-adaptive batch experiment with second batch propensity $e_2(\cdot)$ equal to the average of the learned propensities from the relevant approach across the 1,000 simulations,
as a closed form solution for the limiting $e_2(\cdot)$ is not easily obtained in general.
All expectations over the covariate distribution are computed using Monte Carlo integration.

For both of the homoskedastic DGPs,
it is straightforward to show using Jensen's inequality that $e_2^*(x)=0.2$ for all $x$.
It can be further shown that the unpooled and pooled estimators are asymptotically equivalent.
Thus,
for the homoskedastic DGPs,
there is no asymptotic efficiency gain to be had.
With unequal budget constraints
there is some asymptotic benefit to pooling with homoskedastic variance functions (Appendix~\ref{app:simulations_unequal});
however the optimal design will still be the simple RCT.
Nonetheless,
we do observe some finite sample benefits to pooling in Table~\ref{table:ate_sim}.
For example,
the simulated relative efficiency of $\hat{\theta}_{\aipw}$ on the simple RCT
is significantly larger than 1 for both $d=1$ and $d=10$.
We attribute this to improved nuisance estimates for the pooled estimator,
as discussed further in Appendix~\ref{app:oracle_sim}.
As one might expect,
this finite sample improvement from pooling is apparently offset by variance in both the flexible and binned design procedures.
Still,
in the homoskedastic DGPs,
our adaptive approaches do not show significant finite sample performance decline relative to the baseline,
which here is an oracle.
With unequal treatment constraints,
we further obtain asymptotic efficiency gains from pooling (Appendix~\ref{app:simulations_unequal}).

We notice that using the discretized covariate $S(X)$ in place of $X$ for both design and estimation,
as in~\citet{hahn2011adaptive},
leads to a substantial loss of efficiency.
Indeed, when $d=10$ the (asymptotic and simulated) variance of the estimator $\hat{\theta}_{\aipw}^{(\text{bin})}$ is more than double that of our baseline under the homoskedastic DGP,
both asymptotically and in our finite sample simulations.
This efficiency loss occurs because the discretized $S(X)$ explains much less of the variation in the potential outcomes $Y(z)$ than the original $X$.
We expect greater precision losses from this discretization at the estimation stage when the variance functions $v_0(z,\cdot)$ vary substantially within the strata defined by $S(X)$.

For the heteroskedastic DGPs,
we see modest asymptotic efficiency gains from both pooling and design.
Design using the flexible base propensity class leads to about a 2.4\% asymptotic efficiency gain for both $d=1$ and $d=10$,
while pooling provides an additional 1--2\% gain.
These small asymptotic gains appear to be largely canceled out at our sample sizes by the finite sample variability in learning propensity scores,
limiting the net finite sample gains from design.
Of course, with greater heteroskedasticity and/or differences between $v_0(0,\cdot)$ and $v_0(1,\cdot)$,
we would expect greater efficiency gains from design;
in our simulations we have chosen to keep these differences within common ranges in social science studies as per~\citet{blackwell2022batch}.

\subsection{Partially linear model}
Unlike for estimating $\theta_{0,\ate}$,
for estimating $\theta_{0,\pl}$,
we see clear efficiency gains over the baseline from \emph{design} when $d=1$ (Table~\ref{table:epl_sim}).
For instance,
the linearly aggregated estimator $\hat{\theta}_{\epl}^{(\la)}$ exhibits a 5.6\% asymptotic efficiency gain as a result of the flexible design;
replacing this with the pooled estimator $\hat{\theta}_{\epl}$
then yields a total asymptotic gain of 10.0\% over the baseline,
even in the homoskedastic DGP.
The analogous gains for the heteroskedastic DGP are slightly larger.
We once again observe a substantial finite sample benefit to pooling,
with the simulated relative efficiency of the approaches using the pooled $\hat{\theta}_{\epl}$ tending to be larger than the asymptotic relative efficiency.
We attribute this to both improved use of nuisance estimates by the pooled estimator 
(as in the ATE case)
as well as a more fundamental finite sample efficiency boost due to the fact that the asymptotic variance $V_{0,\epl}$ is not exact for the oracle pooled $\hat{\theta}_{\epl}^*$ in finite samples,
whereas $V_{0,\aipw}$ is exact for $\hat{\theta}_{\aipw}^*$;
see Appendix~\ref{app:oracle_sim}.
When $d=10$,
design introduces some more salient finite sample variance from the errors in the concave maximization procedure.
This is offset in both the homoskedastic and heteroskedastic DGPs by the asymptotic gains from the flexible design,
so that when $\hat{\theta}_{\epl}$ is used,
the flexible design ultimately performs similarly to the simple RCT in finite samples.

Even if we use $\hat{\theta}_{\epl}$ and hence the original covariate(s) $X$ in the final estimation step,
we see that the ``binned" design which uses only $S(X)$ for choosing the second batch propensity score
struggles to learn a substantially better propensity score than the simple RCT in all DGPs.
Indeed, in the heteroskedastic DGP with $d=1$,
we see a 2.1\% asymptotic efficiency \emph{loss} relative to the baseline from using the binned design.
By comparison, there is an 11.2\% asymptotic efficiency gain from using the flexible design.
The efficiency loss can occur with the binned design because the objective in~\eqref{eq:e_star} changes when working in terms of the discretized covariate $S(X)$ instead of $X$.
In other words, even though the simple RCT propensity $e_2(x)=0.2$ is within the class of propensities that can be chosen by the binned design,
it is worse according to the binned objective based on $S(X)$,
but not according to the objective based on the original $X$. \\

\begin{table}[!htb]
\centering
\caption{Same as Table~\ref{table:ate_sim},
but for estimating $\theta_{0,\epl}$.
\\
}
\label{table:epl_sim}
\begin{tabular}{ccccc}
\toprule
DGP & Estimator & Design &  Sim. rel. eff. (90\% CI) & Asymp. rel. eff. \\
\midrule
\multirow{4}{0.15\linewidth}{\centering $d=1$, Homoskedastic} & $\hat{\theta}_{\epl}$ & Flexible & 1.150 (1.103, 1.198) & 1.100 \\
& $\hat{\theta}_{\epl}$ & Binned & 1.057 (1.009, 1.107) & 1.026 \\
& $\hat{\theta}_{\epl}$ & Simple RCT & 1.038 (1.018, 1.058) & 1.000 \\
& $\hat{\theta}_{\epl}^{(\la)}$ & Flexible & 1.049 (1.016, 1.083) & 1.056 \\
\midrule
\multirow{4}{0.15\linewidth}{\centering $d=1$, Heteroskedastic} & $\hat{\theta}_{\epl}$ & Flexible & 1.310 (1.214, 1.412) & 1.112 \\
& $\hat{\theta}_{\epl}$ & Binned & 1.049 (0.972, 1.133) & 0.979 \\
& $\hat{\theta}_{\epl}$ & Simple RCT & 1.084 (1.014, 1.159) & 1.000 \\
& $\hat{\theta}_{\epl}^{(\la)}$ & Flexible & 0.983 (0.880, 1.076) & 1.073 \\
\midrule
\multirow{4}{0.15\linewidth}{\centering $d=10$, Homoskedastic} & $\hat{\theta}_{\epl}$ & Flexible & 1.229 (1.203, 1.256) & 1.023 \\
& $\hat{\theta}_{\epl}$ & Binned & 1.177 (1.147, 1.206) & 1.000 \\
& $\hat{\theta}_{\epl}$ & Simple RCT & 1.228 (1.204, 1.251) & 1.000 \\
& $\hat{\theta}_{\epl}^{(\la)}$ & Flexible & 1.023 (1.011, 1.035) & 1.019 \\
\midrule
\multirow{4}{0.15\linewidth}{\centering $d=10$, Heteroskedastic} & $\hat{\theta}_{\epl}$ & Flexible & 0.997 (0.963, 1.032) & 1.071 \\
& $\hat{\theta}_{\epl}$ & Binned & 0.932 (0.899, 0.965) & 1.002 \\
& $\hat{\theta}_{\epl}$ & Simple RCT & 0.982 (0.952, 1.014) & 1.000 \\
& $\hat{\theta}_{\epl}^{(\la)}$ & Flexible & 0.969 (0.942, 0.996) & 1.060 \\
\bottomrule
\end{tabular}
\end{table}
\section{Discussion}
\label{sec:discussion}
We view our primary technical contribution in this paper to be a careful extension of the double machine learning framework that enables both estimation and design in batched experiments based on pooled treatment effect estimators.
This allows the investigator to take advantage of the efficiency gains from pooling and design without needing to make strong parametric assumptions or to discretize their covariates.
As our numerical study in Section~\ref{sec:simulations} shows,
the latter can more than wipe out any efficiency gains from design.

Related to our work is the extensive literature on combining observational data with a (single batch) randomized experiment.
In that setting a primary concern is mitigating bias from unobserved confounders in the observational data~\citep{rosenman2021designing,gagnon2023precise}.
By contrast,
in our setting unconfoundedness holds by design in each batch of the experiment. 
It would be useful to examine if the ideas from the present work can be extended to the observational setting where confounding bias is a concern.  \\

\noindent
\textbf{Acknowledgments:} The authors thank Stefan Wager, Lihua Lei, and Kevin Guo for comments that improved the content of this paper. 
{H.L. was partially supported by the Stanford Interdisciplinary Graduate Fellowship (SIGF).
This work was also supported by the NSF under grant DMS-2152780.

\appendix
\section{Nonstationary batches}
\label{app:nonstationary}
Assumption~\ref{assump:DGP} in the main text
supposes the distribution $P^S$ of the covariates and potential outcomes $S_{ti}=(X_{ti},Y_{ti}(0),Y_{ti}(1))$ is stationary across batches $t=1,\ldots,T$.
Here we relax that assumption.
Let $P^S_t$ be
the distribution of the vector $S_{ti}$ in batch $t=1,\ldots,T$,
now allowed to vary across batches
(in the main text it is assumed that $P^S_1=\dots=P^S_T=P^S$).
Then we have the following relaxation of Assumption~\ref{assump:DGP}:
\begin{assumption}[Relaxation of Assumption~\ref{assump:DGP}]
\label{assump:gen_DGP}
For some fixed number of batches $T \geq 2$,
the vectors
\[
S_{ti}=(X_{ti},Y_{ti}(0),Y_{ti}(1)), \quad 1 \leq t \leq T, \quad 1 \leq i \leq N_t
\]
are mutually independent such that for each batch $t=1,\ldots,T$,
we have $S_{ti} \sim P_t^S$.
Furthermore, the sample sizes $N_t$ satisfy~\eqref{eq:prop_asymp_limit},
and the vector $W_{ti}=(X_{ti},Z_{ti},Y_{ti})$ is observed
where the outcomes $Y_{ti}$ satisfy the SUTVA assumption~\eqref{eq:sutva}.
\end{assumption}
Now letting $P_0^S$ be the mixture distribution $\sum_{t=1}^T \kappa_t P_t^S$,
we introduce the notation
$\e_{t,e}[f(W)]$,
which denotes an expectation under the distribution $P_{t,e}=P_{t,e}^W$ on $W=(X,Z,Y)$ induced by $S=(X,Y(0),Y(1)) \sim P_t^S$ and $Z \mid X \sim \Bern(e(X))$
for any propensity $e(\cdot)$ and $t=0,1,\ldots,T$. 
The notation $P_t^X$ refers to the corresponding marginal distribution of the covariates $X$.
Then the score equation~\eqref{eq:score} will be generalized to
\begin{equation}
\label{eq:gen_score}
\e_{t,e}[s(W;\theta_0,\nu_0,e')]=0, \quad \forall e,e' \in \cf_{\gamma},\quad t=1,\ldots,T,
\end{equation}
which we will require to identify $\theta_0$ in each batch:
\begin{assumption}[Relaxation of Assumption~\ref{assump:identification}]
\label{assump:gen_identification}
The estimand $\theta_0 \in \real^p$ of interest satisfies~\eqref{eq:gen_score} for some $\gamma \in [0,1/2)$,
some nuisance parameters $\nu_0$ lying in a known convex set $\cn$,
and some score $s(\cdot)$ satisfying~\eqref{eq:linear_score}.
\end{assumption}
Equation~\eqref{eq:gen_score} encodes a requirement that the same parameters $\theta_0$ and $\nu_0$ satisfy the score equations for all batches $t=1,\ldots,T$;
to ensure this,
we require those parameters to be stationary across batches.
For instance, for ATE estimation with the score $s_{\aipw}(\cdot)$,
we require the conditional mean functions $\e[Y(z) \mid X=x]=m_0(z,x)$ to be stationary across batches.
For estimation under the partially linear model with the score $s_{\epl}$,
we also require the outcome variance functions $\Var_t(Y(z) \mid X=x)=v_0(z,x)$ to remain stationary.
However, in both cases the covariate distribution can otherwise vary arbitrarily across batches,
as can higher moments of the conditional distributions of the potential outcomes $Y(z)$ given the covariates $X$.

Due to the possibility of covariate shift,
the relevant mixture propensity scores are now
\begin{equation}
\label{eq:mixture_propensity_with_shift}
e_{0,N}(x) = \sum_{t=1}^T \frac{N_t}{N}e_t(x)\frac{\mrd P_t^X}{\mrd P_0^X}(x)\quad\text{and}\quad e_0(x) = \sum_{t=1}^T \kappa_t e_t(x)\frac{\mrd P_t^X}{\mrd P_0^X}(x).
\end{equation}
These definitions generalize~\eqref{eq:e_0_N} and~\eqref{eq:e_0}.
Here $P_t^X$ denotes the marginal distribution of $X$ when $S \sim P_t^S$.
When there is no covariate shift,
we have $\mrd P_t^X/\mrd P_0^X(x)=1$ for all $x$ and we recover~\eqref{eq:e_0_N} and~\eqref{eq:e_0} in the main text.
The expressions in~\eqref{eq:mixture_propensity_with_shift} are derived using Bayes' rule
as the conditional probability that $Z=1$ given $X=x$ when $(X,Z,Y)$ is drawn uniformly at random from the pooled collection of observations $\{W_{ti} \mid 1 \leq t \leq T, 1 \leq i \leq N_t\}$ in a non-adaptive batch experiment
with propensities $e_1(\cdot),\ldots,e_T(\cdot)$.

Where indicated, we are able to generalize various results in Sections~\ref{sec:pooled_estimation} and~\ref{sec:batch_clt}.
The generalized results are as stated in the main text,
if we make the following changes to the notation and assumptions:
\begin{enumerate}
    \item Assumptions~\ref{assump:DGP} and~\ref{assump:identification} are replaced by Assumptions~\ref{assump:gen_DGP}. and~\ref{assump:gen_identification}, respectively
    \item Any references to the mixture propensities in $e_{0,N}(\cdot)$ and $e_0(\cdot)$ correspond to the more general definitions in~\eqref{eq:mixture_propensity_with_shift}, rather than~\eqref{eq:e_0_N} and~\eqref{eq:e_0}.
    \item Any expectations of the form $\e[f(W)]$ without subscripts are interpreted as being taken under the distribution $P_{0,e_0}$ on $W$.
    \item Any expectations of the form $\e_t[f(W)], t=0,1,\ldots,T$ are interpreted as being taken under the distribution $P_{t,e_t}$ on $W$,
    and any expectations of the form $\e_{0,N}[f(W)]$ are interpreted as being taken under the distribution $P_{0,e_{0,N}}$ on $W$.
\end{enumerate}

\section{Technical lemmas} 

Here we give some technical lemmas used in
our proofs.

\subsection{Asymptotics}
\label{app:asymptotics}
\begin{definition}
\label{def:oh_pee}
For any sequence of random vectors $\{X_n:n \geq 1\}$ and constants $a_n \downarrow 0$,
we write $X_n = O_p(a_n)$ if $\lim_{M \rightarrow \infty} \limsup_{n \rightarrow \infty} \Pr(\|X_n\| > Ma_n)=0$.
We write $X_n=o_p(a_n)$ if for every $M > 0$, $\limsup_{n \rightarrow \infty} \Pr(\|X_n\| > Ma_n) = 0$.
\end{definition}
\begin{lemma}
\label{lemma:conditional_oh_pee}
Let $X_n$ be a sequence of random vectors and $\{\cf_n, n \geq 1\}$ be a sequence of $\sigma$-algebras such that $\e\bigl[\|X_n\| \mid \cf_n\bigr] = o_p(1)$.
Then $X_n = o_p(1)$.
\end{lemma}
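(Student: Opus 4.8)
The plan is to unwind the definition of $o_p(1)$ from Definition~\ref{def:oh_pee}, apply the conditional Markov inequality, and then split the target probability according to whether the conditional expectation is small or large. Write $Y_n := \e[\|X_n\| \mid \cf_n]$, so the hypothesis is exactly $Y_n = o_p(1)$, i.e.\ $\Pr(Y_n > \epsilon) \to 0$ for every $\epsilon > 0$. The goal is to establish the same for $\|X_n\|$, namely $\Pr(\|X_n\| > \epsilon) \to 0$ for every $\epsilon > 0$.

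First I would fix $\epsilon > 0$ and $\delta > 0$ and decompose
\[
\Pr(\|X_n\| > \epsilon) \le \Pr(Y_n > \epsilon\delta) + \Pr\bigl(\|X_n\| > \epsilon,\ Y_n \le \epsilon\delta\bigr).
\]
The first term tends to $0$ as $n \to \infty$ directly from the hypothesis $Y_n = o_p(1)$. For the second term, since $Y_n$ is $\cf_n$-measurable so is $\indic(Y_n \le \epsilon\delta)$, and the tower property gives
\[
\Pr\bigl(\|X_n\| > \epsilon,\ Y_n \le \epsilon\delta\bigr) = \e\bigl[\indic(Y_n \le \epsilon\delta)\,\Pr(\|X_n\| > \epsilon \mid \cf_n)\bigr].
\]
Applying the conditional Markov inequality $\Pr(\|X_n\| > \epsilon \mid \cf_n) \le Y_n/\epsilon$ and noting that $Y_n/\epsilon \le \delta$ on the event $\{Y_n \le \epsilon\delta\}$, this is at most $\delta\,\e[\indic(Y_n \le \epsilon\delta)] \le \delta$.

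Combining the two bounds yields $\limsup_{n\to\infty} \Pr(\|X_n\| > \epsilon) \le \delta$; letting $\delta \downarrow 0$ gives $\Pr(\|X_n\| > \epsilon) \to 0$, and since $\epsilon > 0$ was arbitrary this is exactly $X_n = o_p(1)$. The one subtlety, and the sole reason for the truncation onto $\{Y_n \le \epsilon\delta\}$, is that $\e[Y_n]$ need not converge to $0$: convergence in probability does not pass to expectations absent uniform integrability, so one cannot simply write $\Pr(\|X_n\| > \epsilon) \le \e[Y_n]/\epsilon$. Restricting to the event where $Y_n$ is small before taking the expectation sidesteps this issue, and I do not anticipate any genuine obstacle beyond this bookkeeping.
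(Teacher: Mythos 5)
Your proof is correct and takes essentially the same route as the paper: both arguments hinge on the conditional Markov inequality $\Pr(\|X_n\| > \epsilon \mid \cf_n) \leq \e[\|X_n\| \mid \cf_n]/\epsilon$, and your truncation onto the event $\{Y_n \leq \epsilon\delta\}$ is just a self-contained unpacking of the paper's final step, namely that the uniformly bounded sequence $\Pr(\|X_n\| > \epsilon \mid \cf_n) = o_p(1)$ must have expectation tending to zero. There are no gaps.
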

\begin{proof}[Proof of Lemma~\ref{lemma:conditional_oh_pee}]
Fixing $M>0$, we have $M\indic(\|X_n\| > M) \leq \|X_n\|$ for all $n$.
Taking conditional expectations given $\cf_n$ on both sides we have
\begin{equation}
\label{eq:cond_markov}
P\bigl(\|X_n\| > M \mid \cf_n\bigr) \leq M^{-1}\e\bigl[\|X_n\| \mid \cf_n\bigr].
\end{equation}
Thus if $\e\bigl[\|X_n\| \mid \cf_n\bigr] = o_p(1)$ we have $\Pr(\|X_n\|>M \mid \cf_n) = o_p(1)$ as well.
But $\Pr(\|X_n\|>M \mid \cf_n)$ is uniformly bounded so its expectation converges to zero, i.e., $\Pr(\|X_n\|>M)=o(1)$. 
Since $M>0$ was arbitrary we conclude that $X_n=o_p(1)$.
\end{proof}
\begin{lemma}
\label{lemma:oh_pee}
$X_n = O_p(a_n)$ if and only if for every sequence $b_n \uparrow \infty$ we have $\Pr(\|X_n\| > b_na_n) \rightarrow 0$ as $n \rightarrow \infty$.
\end{lemma}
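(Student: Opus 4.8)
The plan is to prove the biconditional by establishing the two implications separately. Throughout I would write $p_n(M)=\Pr(\|X_n\|>Ma_n)$ and note that, for each fixed $n$, this is nonincreasing in $M$; hence $\limsup_{n\to\infty}p_n(M)$ is also nonincreasing in $M$. Consequently the quantity $\lim_{M\to\infty}\limsup_{n\to\infty}p_n(M)$ appearing in Definition~\ref{def:oh_pee} equals $\inf_{M>0}\limsup_{n\to\infty}p_n(M)$, and the statement $X_n=O_p(a_n)$ is equivalent to the following: for every $\epsilon>0$ there exists a finite $M$ with $\limsup_{n\to\infty}p_n(M)<\epsilon$. I would record this reformulation first, since both directions use it.

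For the forward implication, suppose $X_n=O_p(a_n)$ and fix an arbitrary sequence $b_n\uparrow\infty$ together with an $\epsilon>0$. First I would choose $M$ so that $\limsup_n p_n(M)<\epsilon$, and then $N_1$ so that $p_n(M)<\epsilon$ for all $n\ge N_1$. Since $b_n\to\infty$, there is an $N_2$ with $b_n\ge M$ for all $n\ge N_2$; the monotonicity of $p_n(\cdot)$ then gives $\Pr(\|X_n\|>b_na_n)=p_n(b_n)\le p_n(M)<\epsilon$ for every $n\ge\max(N_1,N_2)$. As $\epsilon$ was arbitrary, $\Pr(\|X_n\|>b_na_n)\to0$. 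This direction is routine.

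For the reverse implication I would argue by contraposition. If $X_n$ is not $O_p(a_n)$, then $\inf_{M>0}\limsup_n p_n(M)=c>0$, so by monotonicity $\limsup_n p_n(M)\ge c$ for every $M$. I would then exhibit a single offending sequence $b_n\uparrow\infty$ violating the right-hand condition. Concretely, since $\limsup_n p_n(j)\ge c$ for each integer $j\ge1$, I can select indices $n_1<n_2<\cdots$ with $p_{n_j}(j)>c/2$, choosing each $n_j>n_{j-1}$ (possible because infinitely many $n$ satisfy $p_n(j)>c/2$). Setting $b_n=\max\{j:n_j\le n\}$ for $n\ge n_1$ (and $b_n=0$ otherwise) yields a nondecreasing sequence with $b_n\to\infty$ and $b_{n_j}=j$, whence $\Pr(\|X_{n_j}\|>b_{n_j}a_{n_j})=p_{n_j}(j)>c/2$ along the subsequence $\{n_j\}$. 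Thus $\Pr(\|X_n\|>b_na_n)\not\to0$, contradicting the hypothesis.

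The main obstacle is this reverse direction, and specifically the diagonal construction of the single bad sequence $b_n$: the subtlety is to interleave the threshold-by-threshold selection of the indices $n_j$ (each guaranteed only through a $\limsup$, hence holding for infinitely many but not all $n$) with a globally monotone sequence $b_n$ that still tends to infinity, all while keeping $b_{n_j}$ small enough that $p_{n_j}(b_{n_j})$ stays bounded below by $c/2$. The monotonicity of $p_n(\cdot)$ in its argument is exactly what makes this construction go through cleanly, and it is the only place where any real care is required.
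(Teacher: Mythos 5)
Your proof is correct and follows essentially the same route as the paper's: the forward direction uses that $b_n$ eventually dominates any fixed $M$ together with monotonicity of $M \mapsto \Pr(\|X_n\| > Ma_n)$, and the reverse direction argues by contraposition with the same diagonal construction $b_n = \max\{j : n_j \leq n\}$ along a subsequence where the exceedance probabilities stay bounded away from zero. Your explicit use of $c/2$ when extracting the indices $n_j$ from the $\limsup$ is in fact slightly more careful than the paper's wording, which asserts the bound $\geq \epsilon$ directly, but the arguments are the same in substance.
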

\begin{proof}[Proof of Lemma~\ref{lemma:oh_pee}]
Fix $b_n \uparrow \infty$ and $\epsilon>0$. If $X_n = O_p(a_n)$ then there exists $M < \infty$ such that $\limsup_{n \rightarrow \infty} \Pr(\|X_n\| > Ma_n) < \epsilon$.
Since $b_n > M$ eventually we conclude $\limsup_{n \rightarrow \infty} \Pr(\|X_n\| > b_na_n) < \epsilon$ as well.
With $\epsilon>0$ arbitrary, the result follows. 
Conversely now suppose we do not have $X_n = O_p(a_n)$.
Then there exists $\epsilon>0$ such that $\limsup_{n \rightarrow \infty} \Pr(\|X_n\| > M a_n) \geq \epsilon$ for all $M < \infty$.
Defining $n_0=1$,
this ensures that for each $k=1,2,\ldots$,
there exists $n_k > n_{k-1}$ so that $\Pr(\|X_{n_k}\| > ka_n) \geq \epsilon$.
But then for $b_n = \max\{k \geq 0: n_k \leq n\}$,
we have $b_n \uparrow \infty$ yet $\Pr(\|X_n\| > b_na_n) \geq \epsilon$ for all $n \in n_1,n_2,\ldots$
so $\Pr(\|X_n\| > b_na_n)$ does not converge to 0 as $n \rightarrow \infty$.
\end{proof}
We now state an important result in empirical process theory in our proof of Lemma~\ref{lemma:concave_maximization} above.
For a metric space $(\cm,d)$,
let $B_{\epsilon}(m_0) = \{m \in \cm \mid d(m,m_0) \leq \epsilon\}$ be the $\epsilon$-ball around $m_0 \in \cm$.
For any set $\cs \subseteq \cm$,
the $\epsilon$ covering number $\cn(\epsilon,\cs,d)$ of $\cs$ is then defined as the smallest number of $\epsilon$-balls in $\cm$ whose union contains $\cs$.
For a subset $\cf$ of the space $L^2(\cx;P)$ of $P$-square integrable real-valued functions on $\cx$ with $\sup_{f \in \cf} \|f\|_{\infty} \leq \bar{F} < \infty$,
control over the logarithm of the covering numbers of $\cf$ 
(the \emph{metric entropy})
over a variety of radii $\epsilon$ under the random metric $L^2(P_n)$ given by $L^2(P_n)(f_1,f_2)=\|f_1-f_2\|_{2,P_n} = (\int (f_1(x)-f_2(x))^2 \mrd P_n(x))^{1/2}$
implies control of the empirical process $\sup_{f \in \cf}|(P_n-P)f|$
where $Qf : =\int f(x) \mrd Q(x)$ for any measure $Q$ on $\cx$.
Here $P_n$ is the empirical probability measure on observations $X_1,\ldots,X_n \simiid P$.
This result is due to a ``chaining" argument of~\citet{dudley1967sizes}.
We restate a more direct version of this result below,
which is Lemma A.4 of~\citet{kitagawa2018should}.
\begin{lemma}
\label{lemma:empirical_process}
For a class $\cf$ of $P$-measurable functions $f:\cx \rightarrow \real$ with $\sup_{f \in \cf} \|f\|_{\infty} \leq \bar{F}$,
there exists a universal constant $K<\infty$ such that for $P_n$ the empirical distribution of $X_1,\ldots,X_n \simiid P$,
\[
\sup_{f \in \cf} \big|(P_n-P)f\big| \leq K\bar{F}n^{-1/2} \int_0^1 \sqrt{\log \cn(\epsilon,\cf,L^2(P_n))} \,\mrd\epsilon.
\]
\end{lemma}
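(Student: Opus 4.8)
Since Lemma~\ref{lemma:empirical_process} is a restatement of Lemma A.4 of~\citet{kitagawa2018should}, the plan is to recall the two standard empirical-process ingredients behind it, namely symmetrization and Dudley's chaining bound, with the displayed inequality read in expectation over the randomness (as in the cited reference). First I would symmetrize. Introducing i.i.d.\ Rademacher signs $\varepsilon_1,\ldots,\varepsilon_n$ independent of $X_1,\ldots,X_n$, the usual symmetrization inequality yields
\[
\e\Bigl[\sup_{f \in \cf} |(P_n-P)f|\Bigr] \leq 2\,\e\Bigl[\sup_{f \in \cf} \Bigl|\tfrac1n \sum_{i=1}^n \varepsilon_i f(X_i)\Bigr|\Bigr],
\]
so that it suffices to control the conditional Rademacher complexity of $\cf$ appearing on the right.

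Next I would condition on the sample and analyze the process $f \mapsto n^{-1}\sum_{i=1}^n \varepsilon_i f(X_i)$. By Hoeffding's lemma its increments are sub-Gaussian with respect to the random pseudometric $n^{-1/2}\|\cdot\|_{2,P_n}$, i.e.
\[
\e_{\varepsilon}\Bigl[\exp\Bigl(\lambda\,\tfrac1n \textstyle\sum_{i=1}^n \varepsilon_i (f(X_i)-g(X_i))\Bigr)\Bigr] \leq \exp\Bigl(\tfrac{\lambda^2}{2n}\|f-g\|_{2,P_n}^2\Bigr)
\]
for all $\lambda \in \real$ and $f,g \in \cf$. To such a process I would apply the generic chaining (Dudley entropy integral) bound: I construct nested $\epsilon_j$-nets of $\cf$ under $L^2(P_n)$ at geometrically decreasing scales $\epsilon_j \propto 2^{-j}$, telescope a generic $f$ along its chain of net approximants, and bound each link using a maximal inequality for finitely many sub-Gaussian variables. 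Summing the contributions across scales collapses to the entropy integral, giving
\[
\e_{\varepsilon}\Bigl[\sup_{f \in \cf} \Bigl|\tfrac1n \sum_{i=1}^n \varepsilon_i f(X_i)\Bigr|\Bigr] \leq \frac{K_0}{\sqrt n} \int_0^{\mathrm{diam}_{2,P_n}(\cf)} \sqrt{\log \cn(\epsilon,\cf,L^2(P_n))}\,\mrd\epsilon
\]
for a universal constant $K_0 < \infty$.

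Finally, since $\sup_{f \in \cf}\|f\|_{\infty} \leq \bar{F}$ forces the $L^2(P_n)$-diameter of $\cf$ to be at most $2\bar{F}$, I would rescale the radius of integration by $\bar{F}$ to reduce the upper limit to $1$, absorb the resulting constant into $K$, and take a further expectation over the sample before combining with the symmetrization step; this produces the stated bound. The main obstacle is the chaining step itself: one must build the nested nets, carefully track the sub-Gaussian constants across every scale so that the telescoped maximal inequalities sum to the entropy integral rather than to a divergent series, and confirm that the accumulated constant $K$ depends on none of $\cf$, $n$, or $P$.
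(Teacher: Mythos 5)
The paper offers no proof of this lemma at all: it is restated verbatim from Lemma A.4 of \citet{kitagawa2018should} and attributed to the chaining argument of \citet{dudley1967sizes}, and your symmetrization-plus-Dudley-entropy-integral argument is exactly the standard proof underlying that citation, so your approach coincides with the paper's intended one (including your correct observation that the bound must be read in expectation, since both sides are random and chaining controls only the mean of the supremum). The one wrinkle is your final rescaling step, which actually produces covering numbers at scale $\bar{F}\epsilon$ rather than $\epsilon$; this envelope-normalization discrepancy is inherited from the paper's own statement of the lemma (the cited version keeps the normalization inside the covering number, as it must, since otherwise the inequality can fail for classes with $\bar{F}$ far from $1$) and is immaterial in the paper's applications, where the relevant function classes are uniformly bounded by a constant of order one.
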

\begin{remark}
Often, control of the right-hand side in the previous display is shown by controlling 
\[
\sup_Q \int_0^1 \sqrt{\log \cn(\epsilon,\cf,L^2(Q)} \,\mrd\epsilon
\]
where the supremum is taken over all finitely supported probability measures $Q$.
See Sections 2.5 and 2.6 of~\citet{van1996weak} for further discussion. 
\end{remark}
We also make use of the following elementary results on covering numbers.
\begin{lemma}
\label{lemma:covering_bracketing}
Let $\ce$ be a collection of functions contained in the class $\cf_0$.
Let $\cg = \{x \mapsto g(e(x),\eta(x)) \mid e \in \ce\}$
for some $\eta:\cx \rightarrow \cw$ and $g:[0,1] \times \cw \rightarrow \real$ with $g(\cdot,w)$ continuous on $[0,1]$ and $\sup_{k \in [0,1], w \in \cw} |g'(k,w)| \leq C$
for some $C < \infty$,
where $g'(\cdot,\cdot)$ denotes the partial derivative of $g(\cdot,\cdot)$ with respect to the first argument.
Then for all probability distributions $P$ on $\cx$
and $\epsilon>0$,
we have $\cn(C\epsilon,\cg,L^2(P)) \leq \cn(\epsilon,\ce,L^2(P))$.
\end{lemma}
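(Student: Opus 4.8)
The plan is to exhibit the map $\Phi\colon e\mapsto g(e(\cdot),\eta(\cdot))$ as a Lipschitz map from $(\ce,L^2(P))$ onto $(\cg,L^2(P))$ with constant $C$, and then to observe that the Lipschitz image of an $\epsilon$-cover is a $C\epsilon$-cover. First I would establish a pointwise bound. Fix $x\in\cx$ and write $w=\eta(x)\in\cw$. For any $a,b\in[0,1]$, the function $g(\cdot,w)$ is continuous on $[0,1]$ and differentiable with $\sup_k|g'(k,w)|\le C$, so the mean value theorem yields $|g(a,w)-g(b,w)|\le C|a-b|$. Applying this with $a=e_1(x)$, $b=e_2(x)$ for any $e_1,e_2\in\cf_0$ (both valued in $[0,1]$) and integrating the square against $P$ gives
\[
\|g(e_1(\cdot),\eta(\cdot))-g(e_2(\cdot),\eta(\cdot))\|_{2,P}^2 \le C^2\int |e_1(x)-e_2(x)|^2\,\mrd P(x) = C^2\|e_1-e_2\|_{2,P}^2,
\]
so that $\|g(e_1(\cdot),\eta(\cdot))-g(e_2(\cdot),\eta(\cdot))\|_{2,P}\le C\|e_1-e_2\|_{2,P}$ whenever $e_1,e_2$ are valued in $[0,1]$.

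Next I would transfer a cover. Let $N=\cn(\epsilon,\ce,L^2(P))$ and let $h_1,\ldots,h_N\in L^2(P)$ be centers of $\epsilon$-balls whose union contains $\ce$. Since these centers need not be valued in $[0,1]$, I would replace each $h_j$ by its clamp $\bar h_j(x)=\max(0,\min(1,h_j(x)))$. Clamping onto $[0,1]$ is a pointwise contraction, so $\|\bar h_j-e\|_{2,P}\le\|h_j-e\|_{2,P}$ for every $e\in\cf_0$; as each $e\in\ce$ lies within $\epsilon$ of some $h_j$, it lies within $\epsilon$ of the corresponding $\bar h_j$. Thus $\bar h_1,\ldots,\bar h_N$ remain an $\epsilon$-cover of $\ce$, now with all centers valued in $[0,1]$, so that each $g(\bar h_j(\cdot),\eta(\cdot))$ is well-defined.

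Finally, for any element $g(e(\cdot),\eta(\cdot))\in\cg$ with $e\in\ce$, choosing $j$ with $\|e-\bar h_j\|_{2,P}\le\epsilon$ and invoking the Lipschitz bound gives $\|g(e(\cdot),\eta(\cdot))-g(\bar h_j(\cdot),\eta(\cdot))\|_{2,P}\le C\epsilon$. Hence the $N$ functions $g(\bar h_j(\cdot),\eta(\cdot))$ are centers of $C\epsilon$-balls covering $\cg$, which yields $\cn(C\epsilon,\cg,L^2(P))\le N=\cn(\epsilon,\ce,L^2(P))$.

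The argument is otherwise routine, and I expect the only point needing care to be the clamping step. The covering-number definition in Appendix~\ref{app:asymptotics} allows ball centers anywhere in the ambient space $L^2(P)$, not merely in $\ce$, so the centers $h_j$ may take values outside $[0,1]$ where $g$ is undefined. Clamping is precisely what lets me evaluate $g$ at the centers while preserving both the covering radius (because clamping is a contraction) and the cardinality, and it is the one genuinely non-routine ingredient.
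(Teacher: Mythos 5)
Your proof is correct and follows essentially the same route as the paper's: establish that $e\mapsto g(e(\cdot),\eta(\cdot))$ is $C$-Lipschitz from $(\ce,L^2(P))$ to $(\cg,L^2(P))$ via the uniform bound on $g'$, then push an $\epsilon$-cover of $\ce$ forward to a $C\epsilon$-cover of $\cg$. Your clamping argument is simply an explicit justification of the step the paper dispatches with ``WLOG we can assume each $e_k$ is a member of $\cf_\gamma$,'' so it is a welcome (and correct) elaboration rather than a different approach.
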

\begin{proof}[Proof of Lemma~\ref{lemma:covering_bracketing}]
Fix $\epsilon>0$ and a probability distribution $P$ on $\cx$.
For each $e \in \ce$ let $h^{(e)}(x) = g(e(x),\eta(x))$ for each $x \in \cx$,
so that $h^{(e)} \in \cg$.
Suppose $\{e_1,\ldots,e_N\}$ is an $\epsilon$-cover of $\ce$ in the $L^2(P)$ norm.
WLOG we can assume that each
$e_k$ is a member of $\cf_{\gamma}$.
Then for each $k=1,\ldots,N$,
by the uniform bound on $g'$ we have
\[
\|h^{(e)}-h^{(e_k)}\|_{2,P}^2 = \e_P\bigl[|g(e(X),\eta(X))-g(e_k(X),\eta(X))|^2\bigr]  \leq C^2 \|e-e_k\|_{2,P}^2
\]
so that $\{g^{(e_1)},\ldots,g^{(e_N)}\}$ is a $C\epsilon$ cover of $\cg$ in the $L^2(P)$ norm.
\end{proof}
\begin{lemma}
\label{lemma:sq_bracket}
Let $\ce$ be a collection of functions contained in the class $\cf_0$. 
Define $\ce_2^- = \{(f-g)^2: f \in \ce, g \in \ce\}$.
Then for every $\epsilon>0$ and probability measure $P$ on $\cx$ we have
\begin{align}\label{eq:sq_bracket}
\cn(\epsilon; \ce_2^-, L^2(P)) \leq \cn\Bigl(
\frac{\epsilon}4; \ce, L^2(P)\Bigr)^2.
\end{align}
\end{lemma}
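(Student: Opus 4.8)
The plan is to build an explicit $\epsilon$-cover of $\ce_2^-$ directly from a finer cover of $\ce$, using that the squaring map is Lipschitz on the bounded range occupied by differences of functions in $\cf_0$. First I would set $N = \cn(\epsilon/4, \ce, L^2(P))$ and fix an $\epsilon/4$-cover $e_1, \ldots, e_N$ of $\ce$ in $L^2(P)$; as in the proof of Lemma~\ref{lemma:covering_bracketing}, I may assume without loss of generality that each $e_k$ takes values in $[0,1]$, since clipping a cover point to $[0,1]$ only decreases its $L^2(P)$ distance to any target in $\ce \subseteq \cf_0$ (projection onto the convex set $[0,1]$ is nonexpansive) and hence preserves the covering property. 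I then claim that the $N^2$ functions $\{(e_i - e_j)^2 : 1 \le i, j \le N\}$ form an $\epsilon$-cover of $\ce_2^-$ in $L^2(P)$, which immediately gives $\cn(\epsilon, \ce_2^-, L^2(P)) \le N^2$.

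To verify the claim, I would take an arbitrary element $(f - g)^2 \in \ce_2^-$ with $f, g \in \ce$ and choose cover points $e_i, e_j$ with $\|f - e_i\|_{2,P} \le \epsilon/4$ and $\|g - e_j\|_{2,P} \le \epsilon/4$. Writing $a = f - g$ and $b = e_i - e_j$, the key step is the factorization $a^2 - b^2 = (a - b)(a + b)$ combined with the pointwise bounds $|a + b| \le 2$ and $|a - b| \le |f - e_i| + |g - e_j|$. The first bound is exactly where the boundedness built into $\cf_0$ enters: since $f, g, e_i, e_j$ all take values in $[0,1]$, both $a$ and $b$ lie in $[-1, 1]$, so $|a + b| \le 2$. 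The second is the triangle inequality applied to the regrouping $a - b = (f - e_i) - (g - e_j)$.

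Together these yield the pointwise estimate $|a^2 - b^2| \le 2\bigl(|f - e_i| + |g - e_j|\bigr)$. Taking $L^2(P)$ norms and using the triangle inequality in $L^2(P)$ gives $\|(f - g)^2 - (e_i - e_j)^2\|_{2,P} \le 2\bigl(\|f - e_i\|_{2,P} + \|g - e_j\|_{2,P}\bigr) \le 2(\epsilon/4 + \epsilon/4) = \epsilon$, which establishes the claim and hence the lemma.

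I do not expect a genuine obstacle here: this is a standard Lipschitz-composition bound for covering numbers, structurally parallel to Lemma~\ref{lemma:covering_bracketing}. The only point requiring care is that differences of functions, after squaring, are Lipschitz only on a bounded domain, so one must explicitly invoke $\ce \subseteq \cf_0$ to control the factor $|a + b|$; the constant $4$ in the statement is precisely this Lipschitz factor of $2$ absorbing the two triangle-inequality error terms.
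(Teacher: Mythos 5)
Your proof is correct and follows essentially the same route as the paper's: cover $\ce$ by $N = \cn(\epsilon/4,\ce,L^2(P))$ points, take the $N^2$ squared pairwise differences as the candidate cover, and control the error by the triangle inequality together with the factor-of-$2$ Lipschitz bound for squaring on $[-1,1]$. The only difference is presentational: the paper routes the Lipschitz step through Lemma~\ref{lemma:covering_bracketing} applied to the difference class $\ce^- = \{f-g : f,g \in \ce\}$ with $g(e,w)=e^2$, whereas you inline it via the factorization $a^2-b^2=(a-b)(a+b)$.
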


\begin{proof}
Fix $\epsilon>0$ and a probability distribution $P$ on $\cx$.
Define the collection $\ce^- = \{f-g:f \in \ce, g \in \ce\}$.
Suppose $\{f_1,\ldots,f_N\}$ is a $\epsilon/4$ cover of $\ce$ in the $L^2(P)$ norm.
Then the collection $D = \{d_{ij}=f_i-f_j: 1 \leq i,j \leq N\}$ is a $\epsilon/2$ cover of $\ce^-$ in the $L^2(P)$ norm, since
for any $f-g \in \ce^-$ there exist $i,j$ such that $\|f-f_i\|_{2,P} \vee \|g-f_j\|_{2,P} \leq \epsilon/4$ and so
\[
\|(f-g)-d_{ij}\|_{2,P} \leq \|f-f_i\|_{2,P} + \|g-f_j\|_{2,P} \leq \frac{\epsilon}{2}
\]
showing that $\cn(\epsilon/2,\ce^-,L^2(P)) \leq (\cn(\epsilon/4,\ce,L^2(P)))^2$.
But by applying Lemma~\ref{lemma:covering_bracketing} with $\ce=\ce^-$ and $g(e,w)=e^2$ (hence we can take $C=2$) we have $\cn(\epsilon,\ce_2^-,L^2(P)) \leq \cn(\epsilon/2,\ce^-,L^2(P))$ for all $\epsilon > 0$.
Chaining together the inequalities preceding two sentences establishes~\eqref{eq:sq_bracket}.
\end{proof}
 
\subsection{Miscellaneous}
Here we have some standalone technical lemmas.
Their proofs
do not depend on any of our other results.
\begin{lemma}
\label{lemma:multivariate_cs}
Suppose $X$ and $Y$ are mean zero random vectors in $\real^p$ with finite second moments where $\Cov(Y)$ has full rank.
Then
\[
\Cov(X) \succeq \Cov(X,Y)(\Cov(Y))^{-1}\Cov(X,Y)^{\fromart\top}.
\]
\end{lemma}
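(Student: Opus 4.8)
The plan is to prove this multivariate Cauchy--Schwarz inequality by passing to the residual of the population least-squares (best linear) predictor of $X$ given $Y$, which realizes the right-hand side as a Schur complement of a joint covariance matrix. First I would set $A=\Cov(X,Y)(\Cov(Y))^{-1}$, which is well defined because $\Cov(Y)$ has full rank by hypothesis, and introduce the residual vector $R=X-AY$. Since $X$ and $Y$ are mean zero, so is $R$, and therefore $\Cov(R)=\e[RR^\top]$ is a genuine covariance matrix of a vector with finite second moments.

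Next I would expand $\Cov(R)$ using the bilinearity of covariance:
\[
\Cov(R)=\Cov(X)-A\,\Cov(Y,X)-\Cov(X,Y)\,A^\top+A\,\Cov(Y)\,A^\top.
\]
Substituting the definition of $A$ and using that $\Cov(Y)$ is symmetric (hence so is $(\Cov(Y))^{-1}$) together with $\Cov(Y,X)=\Cov(X,Y)^\top$, each of the three terms $A\,\Cov(Y,X)$, $\Cov(X,Y)\,A^\top$, and $A\,\Cov(Y)\,A^\top$ reduces to the single matrix $\Cov(X,Y)(\Cov(Y))^{-1}\Cov(X,Y)^\top$. Two of these cancel against the third, leaving
\[
\Cov(R)=\Cov(X)-\Cov(X,Y)(\Cov(Y))^{-1}\Cov(X,Y)^\top.
\]

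Finally, since $\Cov(R)\succeq 0$ (any covariance matrix is symmetric positive semidefinite, as $v^\top\Cov(R)v=\Var(v^\top R)\ge 0$ for every $v\in\real^p$), rearranging the last display gives $\Cov(X)\succeq\Cov(X,Y)(\Cov(Y))^{-1}\Cov(X,Y)^\top$, as claimed. I do not expect any serious obstacle here; the only step requiring a little care is the algebraic collapse of the three quadratic and cross terms to a common matrix, which relies on the symmetry of $(\Cov(Y))^{-1}$ and on $A$ being built from precisely $(\Cov(Y))^{-1}$ so that the quadratic term $A\,\Cov(Y)\,A^\top$ telescopes.
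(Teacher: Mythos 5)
Your proof is correct and is essentially the paper's own argument: the paper expands $\e[(X+AY)(X+AY)^\top]\succeq 0$ and substitutes $A=-\Cov(X,Y)(\Cov(Y))^{-1}$, which is exactly your residual $R=X-\Cov(X,Y)(\Cov(Y))^{-1}Y$ with $\Cov(R)\succeq 0$. The only difference is presentational—you name the best-linear-predictor residual and carry out the cancellation explicitly—so there is nothing substantive to add.
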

\begin{proof}[Proof of Lemma~\ref{lemma:multivariate_cs}]
For any matrix $A \in \real^{p \times p}$ we have $(X+AY)(X+AY)^\top \succeq 0$,
hence
\[
\e[(X+AY)(X+AY)^\top] = \Cov(X) + \Cov(X,Y)A^\top + A\Cov(X,Y)^\top + A\Cov(Y)A^\top \succeq 0
\]
Taking $A=-\Cov(X,Y)(\Cov(Y))^{-1}$ yields the desired result.
\end{proof}
\begin{lemma}
\label{lemma:mean_norm_sq}
If $Y=\sum_{i=1}^n Y_i$ is a random vector where $Y_1,\ldots,Y_n$ are independent with mean 0
and finite second moments,
then $\e[\|Y\|^2] = \sum_{i=1}^n \e[\|Y_i\|^2]$.
\end{lemma}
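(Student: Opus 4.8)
The plan is to reduce the statement to the familiar fact that variances of independent mean-zero random variables add, applied coordinatewise, by first expanding the squared Euclidean norm through the bilinearity of the standard inner product on $\real^p$.

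First I would write $\|Y\|^2 = Y^\top Y = \Bigl(\sum_{i=1}^n Y_i\Bigr)^\top\Bigl(\sum_{j=1}^n Y_j\Bigr) = \sum_{i=1}^n \sum_{j=1}^n Y_i^\top Y_j$, where the expansion is valid because the sums are finite. Taking expectations and using linearity (again valid because the sum is finite, and each term is integrable by Cauchy--Schwarz together with the assumed finiteness of the second moments $\e[\|Y_i\|^2]$), I obtain $\e[\|Y\|^2] = \sum_{i=1}^n \sum_{j=1}^n \e[Y_i^\top Y_j]$.

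Next I would show that every off-diagonal term vanishes. Fix $i \neq j$ and expand in coordinates as $Y_i^\top Y_j = \sum_{\ell=1}^p Y_i^{(\ell)} Y_j^{(\ell)}$. Since $Y_i$ and $Y_j$ are independent, each product factors in expectation, giving $\e[Y_i^{(\ell)} Y_j^{(\ell)}] = \e[Y_i^{(\ell)}]\,\e[Y_j^{(\ell)}]$, which is zero because each $Y_i$ has mean zero. Summing over $\ell$ yields $\e[Y_i^\top Y_j] = 0$ whenever $i \neq j$.

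Only the diagonal terms then survive, so $\e[\|Y\|^2] = \sum_{i=1}^n \e[Y_i^\top Y_i] = \sum_{i=1}^n \e[\|Y_i\|^2]$, as claimed. There is no substantive obstacle here; the only points deserving a word of care are the integrability of the cross terms (handled by Cauchy--Schwarz) and the factorization of expectations under independence applied coordinatewise, both of which are routine.
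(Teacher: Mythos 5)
Your proof is correct and follows essentially the same route as the paper's: expand $\|Y\|^2$ as the double sum of inner products $Y_i^\top Y_j$ and use independence plus mean zero to annihilate the off-diagonal terms. The extra care you take with coordinatewise factorization and Cauchy--Schwarz integrability is fine but not a departure in method.
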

\begin{proof}[Proof of Lemma~\ref{lemma:mean_norm_sq}]
By assumption we have $\e[Y_i^{\top}Y_j] = \e[Y_i]^{\top}\e[Y_j]=0$ for $i \neq j$, so
\[
\e[\|Y\|^2] = \e\Biggl[\bigg\|\sum_{i=1}^n Y_i\bigg\|^2\Biggr]  = \e\Biggl[\biggl(\sum_{i=1}^n  Y_i\biggr)^{\top}\biggl(\sum_{i=1}^n Y_i\biggr)\Biggr] = \sum_{i=1}^n \e[Y_i^{\top}Y_i] = \sum_{i=1}^n \e[\|Y_i\|^2].
\]
\end{proof}

\section{Proofs}
\label{app:proofs}
Here we collect proofs of the formal results stated in the main text;
some are generalized to allow for some nonstationarities across batches,
as described in Appendix~\ref{app:nonstationary}.
\subsection{Proof of Proposition~\ref{prop:oracle_clt}: CLT for oracle $\hat{\theta}^*$}
\label{proof:prop:oracle_clt}
The CLT of Proposition~\ref{prop:oracle_clt} for our oracle pooled estimator $\hat{\theta}^*$
holds under the numbered generalizations in Appendix~\ref{app:nonstationary} without further restrictions;
here we prove this more general proposition.
It is helpful to begin by noting that $P_{0,e_0} = \sum_{t=1}^t \kappa_t P_{t,e_t}$,
which shows, for instance, that for any $P_0$-integrable function $f$
\[
\e_t[|f(W)|]  = \e_0\left[|f(W)|\frac{\mrd P_{t,e_t}}{\mrd P_{0,e_0}}(W)\right] \leq \kappa_t^{-1} \e_0[|f(W)|]
\]

By score linearity~\eqref{eq:linear_score} we can write
\begin{equation}
\label{eq:theta_star_diff}
\sqrt{N}(\hat{\theta}^*-\theta_0) = -\biggl(\frac{1}{N}\sum_{t=1}^T\sum_{i=1}^{N_t} s_a(W_{ti};\nu_0,e_{0,N})\biggr)^{-1} \biggl(\frac{1}{\sqrt{N}}\sum_{t=1}^T\sum_{i=1}^{N_t} s(W_{ti};\theta_0,\nu_0,e_{0,N})\biggr)
\end{equation}
whenever $\hat{\theta}^*$ exists.
Defining
\[
r_N = \sum_{t=1}^T \frac{N_t}{N} \frac{1}{N_t} \sum_{i=1}^{N_t}(s_a(W_{ti};\nu_0,e_{0,N})-s_a(W_{ti};\nu_0,e_0)).
\]
we have from~\eqref{eq:prop_asymp_limit} and the law of large numbers that
\begin{align}
\frac{1}{N}\sum_{t=1}^T\sum_{i=1}^{N_t} s_a(W_{ti};\nu_0,e_{0,N}) & = \sum_{t=1}^T \frac{N_t}{N} \frac{1}{N_t} \sum_{i=1}^{N_t} s_a(W_{ti};\nu_0,e_{0,N}) \nonumber \\
& = r_N + \sum_{t=1}^T \frac{N_t}{N} (\e_t[s_a(W;\nu_0,e_0)] + o_p(1)) \nonumber \\
& = \sum_{t=1}^T \kappa_t \e_t[s_a(W;\nu_0,e_0)] + o_p(1) \nonumber \\
& = \e_0\left[\sum_{t=1}^T \kappa_t \frac{\mrd P_t}{\mrd P_0}(W)s_a(W;\nu_0,e_0)\right] + o_p(1) \nonumber \\
& = \e_0[s_a(W;\nu_0,e_0)] + o_p(1). \label{eq:oracle_s_a_lln}
\end{align}
Then the third equality follows because
\[
\|r_N\| \leq \sum_{t=1}^T \frac{N_t}{N} \frac{1}{N_t} \sum_{i=1}^{N_t}\|s_a(W_{ti};\nu_0,e_{0,N})-s_a(W_{ti};\nu_0,e_0)\| = o_p(1)
\]
by Lemma~\ref{lemma:conditional_oh_pee} and condition~\ref{cond:score_continuity} of the Proposition,
noting that 
\begin{align*}
\e\left[\frac{1}{N_t} \sum_{i=1}^{N_t}\|s_a(W_{ti};\nu_0,e_{0,N})-s_a(W_{ti};\nu_0,e_0)\|\right] & \leq \kappa_t^{-1} \e_0[\|s_a(W;\nu_0,e_{0,N})-s_a(W;\nu_0,e_0)\|] \\
& \leq \kappa_t^{-1} (\e_0[\|s_a(W;\nu_0,e_{0,N})-s_a(W;\nu_0,e_0)\|^2])^{1/2}  \\
& \leq \kappa_t^{-1} \delta_N
\end{align*}
for $t=1,\ldots,T$.
Invertibility of $\e_0[s_a]$ from condition~\ref{cond:s_a_invertibility} ensures that $\hat{\theta}^*$ is well-defined with probability tending to 1 by~\eqref{eq:oracle_s_a_lln}.

Next, we fix $c \in \real^p$ with $\|c\|=1$ and a batch $t \in \{1,\ldots,T\}$.
Define
\[
U_{N,t,i} = \frac{c^{\top}s(W_{ti};\theta_0,\nu_0,e_{0,N})}{(N_t \cdot c^{\top}V_{t,N}c)^{1/2}}, \quad i=1,\ldots,N_t
\]
where $V_{t,N} = \e_t[s(W;\theta_0,\nu_0,e_{0,N})^{\otimes 2}]$.
Evidently the random variables $U_{N,1},\ldots,U_{N,N_t}$ are independent,
with $\e[U_{N,t,i}]=0$ for all $i$ by~\eqref{eq:gen_score},
and $\sum_{i=1}^{N_t} \e[U_{N_{t,i}}^2] = 1$.
Furthermore we have
\[
\lim_{N_t \rightarrow \infty} \sum_{i=1}^{N_t} \e[|U_{N,t,i}|^q] \leq N_t \cdot \frac{\e_t[\|s(W;\theta_0,\nu_0,e_{0,N})\|^q]}{(N_t \cdot c^{\top}V_{t,N}c)^{q/2}} = O(N_t^{1-q/2}) = o(1)
\]
for all sufficiently large $N$ and some $q>2$ by condition~\ref{cond:moment_boundedness}.
Then by the Lyapunov CLT we have 
\[
\sum_{i=1}^{N_t} U_{N,t,i} = c^{\top}\biggl((c^{\top}V_{t,N}c)^{-1/2}\frac{1}{\sqrt{N_t}} \sum_{i=1}^{N_t} s(W_{ti};\theta_0,\nu_0,e_{0,N}) \biggr) \tod \mathcal{N}(0,1).
\]
\[
V_{t,N} = \e_t[s(W;\theta_0,\nu_0,e_{0,N})^{\otimes 2}] \rightarrow \e_t[s(W;\theta_0,\nu_0,e_0)^{\otimes 2}] \equiv V_t
\]
as $N \rightarrow \infty$.
Therefore
\[
c^{\top}\biggl(\frac{1}{\sqrt{N_t}} \sum_{i=1}^{N_t} s(W_{ti};\theta_0,\nu_0,e_{0,N}) \biggr) \tod \mathcal{N}(0, c^{\top}V_tc)
\]
and since $c$ was arbitrary,
\[
\frac{1}{\sqrt{N_t}} \sum_{i=1}^{N_t} s(W_{ti};\theta_0,\nu_0,e_{0,N}) \tod \mathcal{N}(0,V_t), \quad t=1,\ldots,T.
\]
With the left-hand side of the preceding display independent across batches $t=1,\ldots,T$, we have
\begin{align}
\label{eq:oracle_s_clt}
\frac{1}{\sqrt{N}}\sum_{t=1}^T \sum_{i=1}^{N_t} s(W_{ti};\theta_0,\nu_0,e_{0,N}) 
&= \sum_{t=1}^T \sqrt{\frac{N_t}{N}}\frac{1}{\sqrt{N_t}}\sum_{i=1}^{N_t} s(W_{ti};\theta_0,\nu_0,e_{0,N}) \notag\\
&\tod \mathcal{N}\biggl(0,\sum_{t=1}^T \kappa_t V_t\biggr).
\end{align}
With 
\begin{align*}
\sum_{t=1}^T \kappa_t V_t = \sum_{t=1}^T \kappa_t \e_t[s(W;\theta_0,\nu_0,e_0)^{\otimes 2}] & = \e_0\left[\sum_{t=1}^T \kappa_t \frac{\mrd P_t}{\mrd P_0}(W) s(W;\theta_0,\nu_0,e_0)^{\otimes 2}\right] \\
& = \e_0[s(W;\theta_0,\nu_0,e_0)^{\otimes 2}]
\end{align*}
the result of the Proposition follows by~\eqref{eq:theta_star_diff} and~\eqref{eq:oracle_s_a_lln}. 
\subsection{Proof of Corollary~\ref{cor:ate_oracle_clt}: CLT for $\hat\theta^*_{\aipw}$}
\label{proof:cor:ate_oracle_clt}

Corollary~\ref{cor:ate_oracle_clt},
which applies Proposition~\ref{prop:oracle_clt}
to prove a CLT for the oracle estimator $\hat{\theta}_{\aipw}^*$ of $\theta_{0,\ate}$,
holds under the numbered generalizations of Appendix~\ref{app:nonstationary}
under one additional condition:
that the mean functions are stationary,
meaning $\e_t[Y(z) \mid X=x] = m_0(z,x)$ for all $t=1,\ldots,T$, $z=0,1$, and $x \in \cx$.
This condition is needed to ensure Assumption~\ref{assump:gen_identification} holds,
as discussed in Appendix~\ref{app:nonstationary}.

Our proof proceeds by showing that the conditions of the Corollary
imply the conditions of generalized Proposition~\ref{prop:oracle_clt} proven in the previous section with $\theta_0=\theta_{0,\ate}$ and $s=s_{\aipw}(\cdot)$.
That is, first we show Assumption~\ref{assump:gen_identification} is satisfied with $\theta_0=\theta_{0,\ate}$ and $s=s_{\aipw}(\cdot)$.
Then we show the three numbered conditions in Proposition~\ref{prop:oracle_clt}.

First, for brevity let $\nu_0=\nu_{0,\aipw}(\cdot)=(m_0(0,\cdot),m_0(1,\cdot))$,
and note that for any $e$, $e' \in \cf_{\gamma}$ we have
\begin{align*}
\e_{t,e}[s_{\aipw}(W;\theta_{0,\ate},\nu_0,e')]  & = \e_{t,e}[m_0(1,X)-m_0(0,X)-\theta_{0,\ate}]\\
&\phe + \e_{t,e}\left[\frac{Z(Y(1)-m_0(1,X))}{e'(X)} - \frac{(1-Z)(Y(0)-m_0(0,X))}{1-e'(X)} \right] \\
& = \e_{t,e}\left[\frac{e(X)}{e'(X)}\e_{t,e}[Y(1)-m_0(1,X) \mid X]\right] \\
&\phe - \e_{t,e}\left[\frac{1-e(X)}{1-e'(X)}\e_{t,e}[Y(0)-m_0(0,X) \mid X]\right] \\
& = 0
\end{align*}
using unconfoundedness and stationarity of the mean function.
All the necessary expectations exist by our assumption that $\gamma>0$.
Hence Assumption~\ref{assump:gen_identification} is satisfied.
Next we show the conditions of Proposition~\ref{prop:oracle_clt}:
\begin{enumerate}
\item Trivially we have
\[
\e_0[|s_{\aipw,a}(W;\nu_0,e_{0,N})-s_{\aipw,a}(W;\nu_0,e_0)|^2] = \e_0[|-1-(-1)|^2] = 0.
\]
Next we compute the following for each $(\nu,e) \in \cn \times \cf_{\gamma}$:
\begin{align}
\label{eq:s_aipw_diff} 
&\phe\, s_{\aipw}(W;\theta_0,\nu,e)-s_{\aipw}(W;\theta_0,\nu_0,e_0) \nonumber \\
& = \left(1-\frac{Z}{e_0(X)}\right)(m(1,X)-m_0(1,X)) \nonumber \\
&\phe +\left(1-\frac{1-Z}{1-e_0(X)}\right)(m_0(0,X)-m(0,X)) \nonumber \\
&\phe + Z(Y-m(1,X))(e(X)^{-1}-e_0(X)^{-1}) \nonumber \\
&\phe - (1-Z)(Y-m(0,X))((1-e(X))^{-1}-(1-e_0(X))^{-1}). \nonumber
\end{align}
Plugging in $(\nu,e)=(\nu_0,e_{0,N})$ gives,
by Minkowski's inequality, that
\begin{align*}
(\e_0[|s_{\aipw}(W;\theta_0,\nu_0,e_{0,N})-s_{\aipw}(W;\theta_0,\nu_0,e_0)|^2])^{1/2} \leq A_0+B_0
\end{align*}
where
\begin{align*}
A_0 & = (\e_0[Z^2(Y(1)-m_0(1,X))^2(e_{0,N}(X)^{-1}-e_0(X)^{-1})^2])^{1/2} \\
& \leq \gamma^{-2} (\e_0[(Y(1)-m_0(1,X))^2(e_{0,N}(X)-e_0(X))^2])^{1/2} \\
& = \gamma^{-2}(\e_0[(e_{0,N}(X)-e_0(X))^2v_0(1,X)])^{1/2} \\
& \leq C\gamma^{-2}\|e_{0,N}-e_0\|_{2,P_0^X}.
\end{align*}
By an analogous computation
\begin{align*}
B_0 & = (\e_0[(1-Z)^2(Y(0)-m_0(0,X))^2((1-e_{0,N}(X))^{-1}-(1-e_0(X))^{-1})^2])^{1/2} \\
& \leq C\gamma^{-2}\|e_{0,N}-e_0\|_{2,P_0^X}.
\end{align*}
The result now follows because
\begin{align*}
|e_{0,N}(x)-e_0(x)| &= \Bigg|\sum_{t=1}^T \left(\frac{N_t}{N}-\kappa_t\right)e_t(x)\frac{\mrd P_t^X}{\mrd P_0^X}(x) \Bigg|  \\
& \leq \left(\sup_{1\le t\le T} \kappa_t^{-1} \Big|\frac{N_t}{N}-\kappa_t\Big|\right) \sum_{1\le t\le T} e_t(x) \nonumber \\
& \leq T\left(\sup_{1\le t\le T} \kappa_t^{-1} \Big|\frac{N_t}{N}-\kappa_t\Big|\right) = o(1)
\end{align*}
and hence
\[
\|e_{0,N}-e_0\|_{2,P_0^X}^2 \leq \sup_{x \in \cx} |e_{0,N}(x)-e_0(x)|^2 = o(1).
\]
This bound only uses the fact that propensities are bounded between 0 and 1, so
\begin{equation}
\label{eq:e_0_N_vs_e_0}
\sup_{x \in \cx, e_1(\cdot),\ldots,e_T(\cdot) \in \cf_0} |e_{0,N}(x)-e_0(x)| = o(1).
\end{equation}

\item Evidently $\e_0[s_{\aipw,a}(W;\nu_0,e_0)] = -1$ is invertible.
Additionally for $z=0,1$ we have $\e[Y(z)^2] \le C$ by the moment conditions in Assumption~\ref{assump:ate_regularity} so
\[
\e_0[m_0(z,X)^2] = \e_0[(\e_0(Y(z) \mid X))^2 ]\leq \e_0[Y(z)^2] < \infty.
\]
Now $s_{\aipw}(W;\theta_{0,\ate},\nu_{0,\aipw},e_0)$ is the sum of the following terms:
\begin{align*}
&m_0(1,X)\left(1-\frac{Z}{e_0(X)}\right),\\
&m_0(0,X)\left(\frac{1-Z}{1-e_0(X)}-1\right)-\theta_{0,\ate},\quad\text{and}\\
&Y\left(\frac{Z}{e_0(X)}-\frac{1-Z}{1-e(X)}\right).
\end{align*}
These are all square integrable, because $Z$, $(e_0(X))^{-1}$, and $(1-e_0(X))^{-1}$ are all uniformly bounded.
\item 
From $(\e_0[|Y(z)|^q])^{1/q} \leq C$ for $z=0,1$ by Assumption~\ref{assump:ate_regularity},
we have
\[
\e_0[|m_0(z,X)|^q] = \e[|\e_{0}[Y(z) \mid X]|^q] 
\leq \e_0[|Y(z)|^q] \leq C^q.
\]
Then each of
\begin{align*}
&\e_0\left[|m_0(1,X)|^q\Big|1-\frac{Z}{e_0(X)}\Big|^q\right], \\
&\e_0\left[|m_0(0,X)|^q\Big|\frac{1-Z}{1-e_0(X)}-1\Big|^q\right],\quad\text{and}\\
&\e_0\left[|Y|^q\Big|\frac{Z}{e_0(X)}-\frac{1-Z}{1-e(X)}\Big|^q\right]
\end{align*}
is at most $[C(1+\gamma^{-1})]^q$
and the desired condition holds by Minkowski's inequality.
\end{enumerate}
Now we can apply Proposition~\ref{prop:oracle_clt},
to conclude $\sqrt{N}(\hat{\theta}^*-\theta_{0,\ate}) \tod \mathcal{N}(0,V_0)$ where
\begin{align*}
V_0 & = \e_0[s_{\aipw}(W;\theta_{0,\ate},\nu_{0,\aipw},e_0)^2] \\
& = \e_0\left[\left((\tau_0(X)-\theta_{0,\ate})+\frac{Z(Y(1)-m_0(1,X))}{e_0(X)} - \frac{(1-Z)(Y(0)-m_0(0,X))}{1-e_0(X)}\right)^2\right] \\
& = \e_0[(\tau_0(X)-\theta_{0,\ate})^2] + \e_0\left[\frac{Z(Y(1)-m_0(1,X))^2}{(e_0(X))^2}\right] + \e_0\left[\frac{(1-Z)(Y(0)-m_0(0,X))^2}{(1-e_0(X))^2}\right] \\
& = \e_0[(\tau_0(X)-\theta_{0,\ate})^2] + \e_0\left[\frac{1}{e_0(X)} \cdot \e[(Y(1)-m_0(1,X))^2 \mid X]\right]  \\
& + \e_0\left[\frac{1}{1-e_0(X)} \cdot \e[(Y(0)-m_0(0,X))^2 \mid X]\right] \\
& = V_{0,\aipw}
\end{align*}
The third equality in the preceding display follows by noting that the three cross terms in the expansion of the square have mean zero. The first two vanish by conditioning on $X$ and the third because $Z(1-Z)=0$.

\subsection{Proof of Corollary~\ref{cor:pl_oracle_clt}:
CLT for $\hat\theta^*_{\epl}$}
\label{proof:cor:pl_oracle_clt}

Here we prove Corollary~\ref{cor:pl_oracle_clt},
the CLT for the partial linear estimator
$\hat\theta^*_{\epl}$ of the regression parameter $\theta_{0,\pl}$ under the linear treatment effect assumption~\eqref{eq:pl_assumption}.
This Corollary holds under the numbered generalizations of Appendix~\ref{app:nonstationary}
with the additional condition
that the mean and variance functions are stationary.
That is,
we have $\e_t[Y(z) \mid X=x] = m_0(z,x)$ and $\Var_t(Y(z) \mid X=x) = v_0(z,x)$ for all $t=1,\ldots,T$, $z=0,1$, and $x \in \cx$.
This condition is needed to ensure Assumption~\ref{assump:gen_identification} holds.

As in the proof of Corollary~\ref{cor:ate_oracle_clt},
we first show that Assumption~\ref{assump:gen_identification} holds with $\gamma=0$,
estimand $\theta_0=\theta_{0,\pl}$,
score $s(\cdot)=s_{\epl}(\cdot)$,
and nuisance functions $\nu_0=\nu_{0,\epl}(\cdot)=(m_0(0,\cdot),v_0(0,\cdot),v_0(1,\cdot))$
lying in the nuisance set $\cn=\cn_{\epl}$.
Then we show that the three numbered conditions in Proposition~\ref{prop:oracle_clt} hold.

Fix $e(\cdot),e'(\cdot) \in \cf_0$.
For each $t=0,1,\ldots,T$, 
\begin{align*}
\e_{t,e}[Y \mid X,Z=0] & = \e_{t,e}[Y(0) \mid X] = m_0(0,X),\quad\text{and} \\
\e_{t,e}[Y \mid X,Z=1] & = \e_{t,e}[Y(1) \mid X] = m_0(1,X)
\end{align*}
hold by the unconfoundedness and SUTVA assumptions.
Hence by~\eqref{eq:pl_assumption}, 
\begin{equation}
\label{eq:pl_cond_means}
\e_{t,e}[Y \mid X,Z] = m_0(0,X) + Z\psi(X)^{\top}\theta_0.
\end{equation}
Thus for any $e'(\cdot) \in \cf_{\gamma}$,
\begin{align*}
\e_{t,e}[s(W;\theta_{0,\pl},\nu_0,e')] & = \e_{t,e}[w(X;\nu_0,e')(Z-e'(X))(Y-m_0(0,X)-Z\psi(X)^{\top}\theta_{0,\pl})\psi(X)] \\
& = 0
\end{align*} 
after conditioning on $(X,Z)$ and applying~\eqref{eq:pl_cond_means}.
Integrability is not a concern because $w(X;\nu_0,e') \leq c^{-1}$ for $\nu_0 \in \cn_{\epl}$.
Thus Assumption~\ref{assump:gen_identification} is satisfied.

Now we consider the numbered conditions of Proposition~\ref{prop:oracle_clt} in turn.
\begin{enumerate}
\item 
Because the predictor variables $\psi(X)$ satisfy $\|\psi(X)\| \leq C$
\begin{align*}
(\e_0[\|s_a(W;\nu_0,e_{0,N})-s_a(W;\nu_0,e_0)\|^2])^{1/2} & = (\e_0[\|Z\Delta(X,Z)\psi(X)\psi(X)^{\top}\|^2])^{1/2} \\
& \leq C^2(\e_0[\Delta(X,Z)^2])^{1/2}
\end{align*}
where
\begin{align*}
\Delta(X,Z) & = w(X;\nu_0,e_0)(Z-e_0(X)) -w(X;\nu_0,e_{0,N})(Z-e_{0,N}(X)) \\
& = (w(X;\nu_0,e_0)-w(X;\nu_0,e_{0,N}))(Z-e_0(X)) + w(X;\nu_0,e_{0,N})(e_{0,N}(X)-e_0(X)).
\end{align*}
This $\Delta$ satisfies
\begin{align}\label{eq:whatdeltasatisfies}
\begin{split}
(\e_0[\Delta(X,Z)^2])^{1/2} & \leq (\e_0[(w(X;\nu_0,e_0)-w(X;\nu_0,e_{0,N}))^2(Z-e_0(X))^2])^{1/2} \\
&\phe + (\e_0[w^2(X;\nu_0,e_{0,N})(e_{0,N}(X)-e_0(X))^2])^{1/2}.
\end{split}
\end{align}
Now $c \leq v_0(z,x) \leq \e[Y(z)^2 \mid X=x] \leq C$ for all $z=0,1$ and $x \in \cx$,
so for any propensity $e(\cdot)$ we have
\[
C^{-1} \leq w(X;\nu_0,e) = (v_0(0,X)e(X)+v_0(1,X)(1-e(X)))^{-1} \leq c^{-1}
\]
and
\begin{align*}
\sup_{e \in [0,1]} \Big|\frac{\partial}{\partial e} \frac{1}{v_0(0,X)e+v_0(1,X)(1-e)}\Big| &= 
\sup_{e \in [0,1]} \Big|\frac{v_0(0,X)-v_0(1,X)}{(v_0(0,X)e+v_0(1,X)(1-e))^2} \Big| \\
&\leq \frac{2C}{c^2}.
\end{align*}
Applying these latter two facts to equation~\eqref{eq:whatdeltasatisfies} yields
\begin{align*}
(\e_0[\Delta(X,Z)^2])^{1/2}& \leq \left(\frac{2C}{c^2}+c^{-1}\right)\|e_{0,N}-e_0\|_{2,P_0^X} 
 = o(1) 
\end{align*}
by~\eqref{eq:e_0_N_vs_e_0}.
Similarly,
\begin{align*}
&\phe\,\, (\e_0 [\|s(W;\theta_{0,\pl},\nu_0,e_{0,N})-s(W;\theta_{0,\pl},\nu_0,e_0)\|^2])^{1/2} \\
& = (\e_0[\|\Delta(X,Z)(Y-m_0(0,X)-Z\psi(X)^{\top}\theta_0)\psi(X)\|^2])^{1/2} \\
& \leq C(\e_0[\Delta(X,Z)^2v_0(Z,X)])^{1/2} \\
& \leq C^{3/2}(\e_0[\Delta(X,Z)^2])^{1/2}\\ 
&= o(1).
\end{align*}
\item For invertibility of $\e_0[s_a]$, we compute
\begin{align*}
\e_0[s_a(W;\nu_0,e_0)] & = \e_0[-w(X;\nu_0,e_0)(Z-e_0(X))Z\psi(X)\psi(X)^{\top}] \\
& = \e_0[-w(X;\nu_0,e_0)e_0(X)(1-e_0(X))\psi(X)\psi(X)^{\top}] \\
& \preceq -C^{-1}\e_0[e_0^2(X)(1-e_0^2(X))\psi(X)\psi(X)^{\top}].
\end{align*}
With $\e_0[e_0(X)(1-e_0(X))\psi(X)\psi(X)^{\top}]$ positive definite by assumption,
$\e_0[s_a(W;\nu_0,e_0)]$ is strictly negative definite and hence invertible.

For boundedness of $\e_0[\Vert s\Vert^2]$ with $e=e_0$, 
we compute
\begin{align*}
\e_0[\|s(W;\theta_0,\nu_0,e_0)\|^2] & = \e_0[w(X;\nu_0,e_0)^2(Z-e_0(X))^2(Y-m_0(0,X)-Z\psi(X)^{\top}\theta_0)^2\|\psi(X)\|^2] \\
& \leq \frac{C^2}{c^2}\e_0[(Y-m_0(0,X)-Z\psi(X)^{\top}\theta_0)^2] \\
& = \frac{C^2}{c^2}\e_0[v_0(Z,X)] \leq \frac{C^3}{c^2}
\end{align*}
\item For boundedness of $\e_0[\Vert s\Vert^q]$ with $e=e_{0,N}$,
\begin{align*}
(\e_0&[\|s(W;\theta_0,\nu_0,e_{0,N})\|^q])^{1/q} \\
& = \Big(\e_0[|w(X;\nu_0,e_{0,N})|^q|Z-e_{0,N}(X)|^q|Y-m_0(0,X)-Z\psi(X)^{\top}\theta_0|^q\|\psi(X)\|^q]\Big)^{1/q} \\
& \leq \frac{C}{c}(\e_0[|Y-m_0(0,X)-Z\psi(X)^{\top}\theta_0|^q])^{1/q} \\
& \leq \frac{4C^2}{c}
\end{align*}
where the final inequality follows from Minkowski's inequality and Jensen's inequality as below:
\begin{align*}
(\e_0[|Y-\e_0(Y \mid X,Z)|^q])^{1/q} & \leq (\e_0[|Y|^q])^{1/q} + (\e_0[|\e_0[Y \mid X,Z]|^q])^{1/q}  \\
& \leq (\e_0[|Y|^q])^{1/q} + (\e_0[\e_0[|Y|^q \mid X,Z]])^{1/q} \\
& = 2(\e_0[|Y|^q])^{1/q} \\
& \leq 2[(\e_0[|Y(0)|^q])^{1/q} + (\e_0[|Y(1)|^q])^{1/q}] \\
& \leq 4C \text{ by Assumption~\ref{assump:pl_regularity}}
\end{align*}
\end{enumerate}
We now compute
\begin{align*}
\e_0[s_a(W;\nu_0,e_0)] & = \e_0[-w(X;\nu_0,e_0)Z(Z-e_0(X))\psi(X)\psi(X)^{\top}] \\
& = \e_0[-w(X;\nu_0,e_0)(1-e_0(X))e_0(X)\psi(X)\psi(X)^{\top}] \\
& = -\e_0\left[\frac{e_0(X)(1-e_0(X))}{v_0(0,X)e_0(X)+v_0(1,X)(1-e_0(X))}\psi(X)\psi(X)^{\top}\right]
\end{align*}
and
\begin{align*}
\e_0[s(W;\theta_0,\nu_0,e_0)^{\otimes 2}] & = \e_0[w^2(X;\nu_0,e_0)(Z-e_0(X))^2(Y-m_0(0,X)-Z\psi(X)^{\top}\theta_0)^2\psi(X)\psi(X)^{\top}] \\
& = \e_0[w^2(X;\nu_0,e_0)(Z-e_0(X))^2v_0(Z,X)\psi(X)\psi(X)^{\top}] \\
& = \e_0[w^2(X;\nu_0,e_0)e_0^2(X)v_0(0,X)(1-e_0(X))\psi(X)\psi(X)^{\top}] \\
&\phe\ + \e_0[w^2(X;\nu_0,e_0)(1-e_0(X))^2v_0(1,X)e_0(X)\psi(X)\psi(X)^{\top}] \\
& = \e_0[e_0(X)(1-e_0(X))w^2(X;\nu_0,e_0)(v_0(0,X)e_0(X)+v_0(1,X)(1-e_0(X)))\psi(X)\psi(X)^{\top}] \\
& = -\e_0[s_a(W;\nu_0,e_0)].
\end{align*}
Finally, we apply Proposition~\ref{prop:oracle_clt} to conclude $\sqrt{N}(\hat{\theta}^*-\theta_{0,\pl}) \rightarrow \mathcal{N}(0,V_0)$ where
\begin{align*}
V_0 & = (\e_0[s_a(W;\nu_0,e_0)])^{-1}(\e_0[s(W;\theta_0,\nu_0,e_0)^{\otimes 2}])(\e_0[s_a(W;\nu_0,e_0)])^{-1}  \\
& = -(\e_0[s_a(W;\nu_0,e_0)])^{-1} \\
& = V_{0,\epl}.
\end{align*}

\subsection{Proof of Theorem~\ref{thm:pooled_covariance}:
Pooling dominates linear aggregation}
\label{proof:thm:pooled_covariance}

Here we prove that the oracle pooled estimators $\hat{\theta}_{\aipw}^*$ and $\hat{\theta}_{\epl}^*$ dominate the best
linearly aggregated estimators $\hat{\theta}_{\aipw}^{*,(\la)}$ and $\hat{\theta}_{\epl}^{*,(\la)}$ for estimating $\theta_{0,\ate}$ and $\theta_{0,\pl}$, respectively.
For $\theta_{0.\pl}$, Theorem~\ref{thm:pooled_covariance} generalizes to nonstationary batches as described in Appendix~\ref{app:nonstationary}. 
But for $\theta_{0,\ate}$, the Theorem does not necessarily hold under the nonstationarities of Appendix~\ref{app:nonstationary}.
For example,
suppose that $T=2$, $P_1^X$ is uniform on $(0,1)$, and $P_2^X$ is the distribution with density $2x$ on $(0,1)$ with respect to Lebesgue measure.
Let $v_0(z,x)=x$
and suppose $e_1(x)=e_2(x)=\kappa_1=1/2$ for each $x \in \cx$.
Then $V_{0,\aipw}=7/3 > V^{(\la)}_{\aipw} = 16/7$.
We can however generalize the ATE result to the case of a multivariate outcome variable,
i.e., $Y(1),Y(0) \in \real^q$,
where the conditional mean and variance functions $m_0(\cdot)$ and $v_0(\cdot)$ take values in $\real^q$ and $\ess_+^q$, respectively.

We begin by showing the result for ATE estimation.
We have $V^{(\la)}_{\aipw} = \left(\sum_{t=1}^T \kappa_t V_{t,\aipw}^{-1}\right)^{-1}$
where we compute
\[
V_{t,\aipw} = \e\left[\frac{v_0(1,X)}{e_t(X)} + \frac{v_0(0,X)}{1-e_t(X)} + (\tau_0(X)-\theta_0)^{\otimes 2}\right], \quad t=1,\ldots,T
\]
by applying Corollary~\ref{cor:ate_oracle_clt} to the observations in each single batch $t \in \{1,\ldots,T\}$.
Letting
\begin{align*}
h(e) & = \left(\e\left[\frac{v_0(1,X)}{e(X)} + \frac{v_0(0,X)}{1-e(X)} + (\tau_0(X)-\theta_0)^{\otimes 2}\right]\right)^{-1} \\
& = \left(\e\left[\frac{v_0(1,X)(1-e(X)) + v_0(0,X)e(X)}{e(X)(1-e(X))} + (\tau_0(X)-\theta_0)^{\otimes 2}\right]\right)^{-1}
\end{align*}
for each $e \in \cf_{\gamma}$,
we have
\begin{align*}
\Bigl(V^{(\la)}_{\aipw}\Bigr)^{-1} & = \sum_{t=1}^T \kappa_t \left(\e\left[\frac{v_0(1,X)}{e_t(X)} + \frac{v_0(0,X)}{1-e_t(X)} + (\tau_0(X)-\theta_0)^{\otimes 2}\right]\right)^{-1} \\
&= \sum_{t=1}^T \kappa_t h(e_t),\quad\text{versus}\\
V_{0,\aipw}^{-1} & = \left(\e\left[\frac{v_0(1,X)}{e_0(X)} + \frac{v_0(0,X)}{1-e_0(X)} + (\tau_0(X)-\theta_0)^{\otimes 2}\right]\right)^{-1}\\ 
&= h(e_0).
\end{align*}
Thus, to prove the desired result,
it suffices to show $h(\cdot)$ is a concave (matrix-valued) function on $\cf_{\gamma}$.
To that end, fix $e_1,e_2 \in \cf_{\gamma}$ and $\lambda \in (0,1)$.
It suffices to show that $g(\lambda) = h(e_1+\lambda(e_2-e_1))$ is concave on $[0,1]$,
i.e., that $g(\lambda)\succeq \lambda g(e_2)+(1-\lambda)g(e_1)$ for each $\lambda \in [0,1]$.
Because
$g(\cdot)$ is 
continuous on $[0,1]$,
we need only show that $g''(\lambda) \preceq 0$ for each $\lambda \in (0,1)$.
Letting $e_{\lambda}=e_1+\lambda(e_2-e_1)$ and $V_{\lambda}=h(e_{\lambda})^{-1}$, we obtain 
\begin{align}\label{eq:gpp}
g''(\lambda) = 2h(e_{\lambda})[B_{\lambda}h(e_{\lambda})B_{\lambda}-C_{\lambda}]h(e_{\lambda})
\end{align}
for
\begin{align*}
B_{\lambda} & = \e\left[(e_2(X)-e_1(X))\left(\frac{v_0(0,X)}{(1-e_{\lambda}(X))^2} - \frac{v_0(1,X)}{(e_{\lambda}(X))^2}\right)\right],\quad\text{and} \\
C_{\lambda} & = \e\left[(e_2(X)-e_1(X))^2\left(\frac{v_0(1,X)}{(e_{\lambda}(X))^3} + \frac{v_0(0,X)}{(1-e_{\lambda}(X))^3}\right)\right].
\end{align*}
To get this, we reversed the order of differentiation and expectation, 
which is justified by the regularity conditions in Assumption~\ref{assump:ate_regularity}.

Since $h(e_{\lambda})$ is symmetric and appears on the
left and right in~\eqref{eq:gpp},
it suffices to show that $B_{\lambda}h(e_{\lambda})B_{\lambda} \preceq C_{\lambda}$.
For this we note that for $\tilde{Z}$ conditionally independent of $(Y(0),Y(1))$ given $X$
with $\tilde{Z} \mid X \sim \Bern(e_{\lambda}(X))$,
we have $h(e_{\lambda}) = (\Cov(b_{\lambda}))^{-1}$,
$B_{\lambda}=\Cov(c_{\lambda},b_{\lambda})=B_{\lambda}'$, and $C_{\lambda}=\Cov(c_{\lambda})$ where
\begin{align*}
b_{\lambda} & = \frac{(1-\tilde{Z})(Y(0)-m_0(0,X))}{1-e_{\lambda}(X)} - \frac{\tilde{Z}(Y(1)-m_0(1,X))}{e_{\lambda}(X)} - (\tau_0(X)-\theta_0) \\
c_{\lambda} & = (e_2(X)-e_1(X))\biggl(\frac{\tilde{Z}(Y(1)-m_0(1,X))}{(e_{\lambda}(X))^2} + \frac{(1-\tilde{Z})(Y(0)-m_0(0,X))}{(1-e_{\lambda}(X))^2}\biggr)
\end{align*}
are mean zero random vectors in $\real^d$.
Applying Lemma~\ref{lemma:multivariate_cs} 
completes the proof for ATE.

Now we consider the partially linear model,
allowing for the generalizations in Appendix~\ref{app:nonstationary},
as discussed above.
We have $V_{\epl}^{(\la)} = \bigl(\sum_{t=1}^T \kappa_t V_{t,\epl}^{-1}\bigr)^{-1}$ where
\begin{align*}
V_{t,\epl} = \biggl(\e_t\left[\frac{e_t(X)(1-e_t(X))}{v_0(0,X)e_t(X)+v_0(1,X)(1-e_t(X))}\psi(X)\psi(X)^{\top}\right]\biggr)^{-1}
\end{align*}
by applying Corollary~\ref{cor:pl_oracle_clt} to the observations in each single batch $t \in \{1,\ldots,T\}$.
Then
\begin{align*}
\bigl(V^{(\la)}_{\epl} \bigr)^{-1} & = \sum_{t=1}^T \kappa_t V_{t,\epl}^{-1} \\
&= \sum_{t=1}^T \e_t\left[\kappa_t\frac{e_t(X)(1-e_t(X))}{v_0(0,X)e_t(X)+v_0(1,X)(1-e_t(X))}\psi(X)\psi(X)^{\top}\right] \\
& = \e_0\left[\sum_{t=1}^T \kappa_t \frac{\mrd P_t^X}{\mrd P_0^X}(X) \frac{e_t(X)(1-e_t(X))}{v_0(0,X)e_t(X)+v_0(1,X)(1-e_t(X))}\psi(X)\psi(X)^{\top}\right] \\
& \preceq \e_0\left[\frac{e_0(X)(1-e_0(X))}{v_0(0,X)e_0(X)+v_0(1,X)(1-e_0(X))}\psi(X)\psi(X)^{\top}\right]\\
&= (V_{0,\epl})^{-1}
\end{align*}
where the inequality
follows from the fact that for each $x \in \cx$,
we have $\sum_{t=1}^T \kappa_t {\mrd P_t^X}/{\mrd P_0^X}(x) =1$ and
the map 
$$e \mapsto \frac{e(1-e)}{v(0,x)e+v(1,x)(1-e)}\psi(x)\psi(x)^{\top}$$
is concave on $e \in [0,1]$.

\subsection{Proof of Theorem~\ref{thm:batch_clt}: CLT for feasible $\hat{\theta}$ in a CSBAE}\label{proof:thm:batch_clt}
Here we prove the CLT for the feasible estimator $\hat{\theta}$ in a CSBAE.
It holds under the nonstationarity conditions
described in Appendix~\ref{app:nonstationary}.

We begin by constructing the non-adaptive batch experiment in the statement of the theorem.
Define the counterfactual treatment indicators via
\[
\tilde{Z}_{ti} = \indic(U_{ti} \leq e_t(X_{ti})), \quad t=1,\ldots,T,\quad i=1,\ldots,N_t
\]
for $U_{ti}\simiid \dunif(0,1)$.
Then the observations in the counterfactual non-adaptive batch experiment are the vectors $\tilde{W}_{ti}=(\tilde{Y}_{ti},X_{ti},\tilde{Z}_{ti})$,
where
\[
\tilde{Y}_{ti} = Y_{ti}(\tilde{Z}_{it})=\tilde{Z}_{it}Y_{ti}(1) + (1-\tilde{Z}_{it})Y_{ti}(0).
\]
The corresponding oracle estimator $\hat{\theta}^*$ from~\eqref{eq:theta_star} is then
\begin{align*}
\hat{\theta}^*
& = -\biggl(\frac{1}{N} \sum_{k=1}^K \sum_{(t,i) \in \ci_k} s_a(\tilde{W}_{ti};\nu_0,e_{0,N}) \biggr)^{-1} \biggl(\frac{1}{N}\sum_{k=1}^K \sum_{(t,i) \in \ci_k}s_b(\tilde{W}_{ti};\nu_0,e_{0,N})\biggr).
\end{align*}
Because $(\nu_0,e_{0,N}) \in \ct_N$,
conditions~\ref{cond:score_continuity},~\ref{cond:s_a_invertibility}, and~\ref{cond:moment_boundedness} of Proposition~\ref{prop:oracle_clt}
are satisfied for this counterfactual non-adaptive batch experiment
by equations~\eqref{eq:s_a_consistency},~\eqref{eq:s_consistency}, and~\eqref{eq:s_moment_condition}
along with condition~\ref{cond:regularity} of Assumption~\ref{assump:dml}.
Hence
the oracle CLT $\sqrt{n}(\hat{\theta}^*-\theta_0) \tod \mathcal{N}(0,V_0)$ holds
with $V_0$ as in the conclusion of Proposition~\ref{prop:oracle_clt}.

It remains to show that 
\[
\hat{\theta} = \hat{\theta}^* + o_p(N^{-1/2})
\]
where
\begin{align*}
\hat{\theta} = -\biggl(\frac{1}{N} \sum_{k=1}^K \sum_{(t,i) \in \ci_k} s_a(W_{ti};\hat{\nu}^{(-k)},\hat{e}^{(-k)})\biggr)^{-1}\biggl(\frac{1}{N} \sum_{k=1}^K \sum_{(t,i) \in \ci_k} s_b(W_{ti};\hat{\nu}^{(-k)},\hat{e}^{(-k)})\biggr)
\end{align*}
as in~\eqref{eq:theta_hat}.
To do so, 
we show that the following intermediate quantity
\[
\tilde{\theta} = -\biggl(\frac{1}{N} \sum_{k=1}^K \sum_{(t,i) \in \ci_k} s_a(\tilde{W}_{ti};\hat{\nu}^{(-k)},\hat{e}^{(-k)})\biggr)^{-1}\biggl(\frac{1}{N} \sum_{k=1}^K \sum_{(t,i) \in \ci_k} s_b(\tilde{W}_{ti};\hat{\nu}^{(-k)},\hat{e}^{(-k)})\biggr)
\]
satisfies both $\tilde{\theta}=\hat{\theta}^*+o_p(N^{-1/2})$ and $\hat{\theta}=\tilde{\theta}+o_p(N^{-1/2})$.

\subsubsection{Showing that $\tilde\theta =\hat\theta^*+o_p(N^{-1/2})$}

We first show $\tilde{\theta}=\hat{\theta}^*+o_p(N^{-1/2})$.
By score linearity~\eqref{eq:linear_score}, we can write
\[
N^{1/2}(\tilde{\theta}-\hat{\theta}^*) = N^{1/2}(\tilde{\theta}-\theta_0) - N^{1/2}(\hat{\theta}^*-\theta_0) = A_1B_1 + A_2B_2
\]
where
\begin{align*}
A_1 & = \biggl(\frac{1}{N} \sum_{k=1}^K \sum_{(t,i) \in \ci_k} s_a(\tilde{W}_{ti};\nu_0,e_{0,N}) \biggr)^{-1}
-\biggl(\frac{1}{N} \sum_{k=1}^K \sum_{(t,i) \in \ci_k} s_a(\tilde{W}_{ti};\hat{\nu}^{(-k)},\hat{e}^{(-k)})\biggr)^{-1}, \\
B_1 & = \frac{1}{\sqrt{N}}\sum_{k=1}^K \sum_{(t,i) \in \ci_k}s(\tilde{W}_{ti};\theta_0,\nu_0,e_{0,N}), \\
A_2 & = \biggl(\frac{1}{N} \sum_{k=1}^K \sum_{(t,i) \in \ci_k} s_a(\tilde{W}_{ti};\hat{\nu}^{(-k)},\hat{e}^{(-k)})\biggr)^{-1}\quad\text{and} \\
B_2 & = \frac{1}{\sqrt{N}} \sum_{k=1}^K \sum_{(t,i) \in \ci_k} [s(\tilde{W}_{ti};\theta_0,\nu_0,e_{0,N})-s(\tilde{W}_{ti};\theta_0,\hat{\nu}^{(-k)},\hat{e}^{(-k)})].
\end{align*}
We will prove that $A_1=o_p(1)$, $B_1=O_p(1)$, $A_2=O_p(1)$, and $B_2=o_p(1)$.

To show $A_1=o_p(1)$,
for each fold $k \in \{1,\ldots,K\}$
let $N_k = \sum_{t=1}^T n_{t,k} = N/K + O(1)$ be the total number of observations in fold $k$ across all batches $t=1,\ldots,T$.
Also define the quantity
\[
\tilde{A}_1^{(k)} = \frac{1}{N_k} \sum_{(t,i) \in \ci_k} [s_a(\tilde{W}_{ti};\nu_0,e_{0,N})-s_a(\tilde{W}_{ti};\hat{\nu}^{(-k)},\hat{e}^{(-k)})] = \sum_{t=1}^T \frac{n_{t,k}}{N_k} \tilde{A}_{1,t}^{(k)}
\]
where
\[
\tilde{A}_{1,t}^{(k)} = \frac{1}{n_{t,k}} \sum_{i:(t,i) \in \ci_k} s_a(\tilde{W}_{ti};\nu_0,e_{0,N})-s_a(\tilde{W}_{ti};\hat{\nu}^{(-k)},\hat{e}^{(-k)}).
\]
For each $k=1,\ldots,K$,
let $\ce_{N,k}$ be the event that $(\hat{\nu}^{(-k)},\hat{e}^{(-k)}) \in \ct_N$.
This $\ce_{N,k}$ is $\cs^{(-k)}$ measurable and $\Pr(\ce_{N,k}) \rightarrow 1$ as $N \rightarrow \infty$ by assumption.
Then for all sufficiently large $N$, 
\begin{align*}
&\phe\,\e\bigl[\|\tilde{A}_{1,t}^{(k)}\indic(\ce_{N,k})\| \mid \cs^{(-k)}\bigr] \\
& \leq \indic(\ce_{N,k}) \biggl(\frac{1}{n_{t,k}} \sum_{i:(t,i) \in I_j} \e\bigl[\|s_a(\tilde{W}_{ti};\hat{\nu}^{(-k)},\hat{e}^{(-k)}) - s_a(\tilde{W}_{ti};\nu_0,e_{0,N})\| \mid \cs^{(-k)}\bigr] \biggr) \\
& \leq \sup_{(\nu,e) \in \ct_N} \e_t\bigl[\|s_a(W;\nu,e)-s_a(W;\nu_0,e_{0,N})\|\bigr] \\
& \leq \kappa_t^{-1} \sup_{(\nu,e) \in \ct_N} \e_0\bigl[\|s_a(W;\nu,e)-s_a(W;\nu_0,e_0)\|\bigr] 
 + \kappa_t^{-1} \e_0[\|s_a(W;\nu_0,e_{0,N})-s_a(W;\nu_0,e_0)\|]  \\
& \leq 2\kappa_t^{-1} \delta_N 
\end{align*}
by equation~\eqref{eq:s_a_consistency}.
The second inequality above uses the fact that $\hat{\nu}^{(-k)}$, $\hat{e}^{(-k)}$ are nonrandom given $\cs^{(-k)}$,
but the vectors $\tilde{W}_{ti}$ in fold $k$ are independent of $\cs^{(-k)}$ and i.i.d.\ from $P_t$;
the third inequality follows because
\begin{equation}
\label{eq:dP_t_dP_0}
\frac{\mrd P_t}{\mrd P_0}(w) \leq \kappa_t^{-1}
\end{equation}
for all $w$ by the definitions of $P_t$ and $P_0$.
Then $\tilde{A}_{1,t}^{(k)} = o_p(1)$ by Lemma~\ref{lemma:conditional_oh_pee}.
This immediately shows $\tilde{A}_1^{(k)}=o_p(1)$ for all folds $k=1,\ldots,K$ and so
\begin{equation}
\label{eq:A_1_tilde}
\tilde{A}_1 = \sum_{k=1}^K \frac{N_k}{N} \tilde{A}_1^{(k)} 
= o_p(1),
\end{equation}
too. Now recall the identity
\begin{equation}
\label{eq:matrix_inverse_submultiplicative}
\|A^{-1}-B^{-1}\| = \|A^{-1}(A-B)B^{-1}\| \leq \|A^{-1}\|\|A-B\|\|B^{-1}\|
\end{equation}
for any invertible square matrices $A,B$ of the same size.
By~\eqref{eq:oracle_s_a_lln} we know that 
\begin{equation}
\label{eq:oracle_s_a_lln_folds}
\frac{1}{N} \sum_{k=1}^K \sum_{(t,i) \in \ci_k} s_a(\tilde{W}_{ti};\nu_0,e_{0,N}) = \e_0[s_a(W;\nu_0,e_0)] + o_p(1)
\end{equation}
and hence
\begin{align*}
\biggl(\frac{1}{N} \sum_{k=1}^K \sum_{(t,i) \in \ci_k} s_a(\tilde{W}_{ti};\nu_0,e_{0,N})\biggr)^{-1}\quad\text{and}\quad \biggl(\frac{1}{N} \sum_{k=1}^K \sum_{(t,i) \in \ci_k} s_a(\tilde{W}_{ti};\hat{\nu}^{(-k)},\hat{e}^{(-k)})\biggr)^{-1}
\end{align*}
are both $O_p(1)$.
The preceding display along with~\eqref{eq:A_1_tilde} and~\eqref{eq:matrix_inverse_submultiplicative} show $A_1 = o_p(1)$.

Equations~\eqref{eq:A_1_tilde} and~\eqref{eq:oracle_s_a_lln_folds} immediately imply that $A_2=O_p(1)$,
while~\eqref{eq:oracle_s_clt} implies that $B_1=O_p(1)$.
It remains to show $B_2=o_p(1)$.
To that end we consider the quantity
\begin{align*}
B_2^{(k)} &= \frac{1}{\sqrt{N_k}} \sum_{(t,i) \in \ci_k} s(\tilde{W}_{ti};\theta_0,\hat{\nu}^{(-k)},\hat{e}^{(-k)}) - s(\tilde{W}_{ti};\theta_0,\nu_0,e_{0,N}) \\
&= \bar{B}_2^{(k)} + \sum_{t=1}^T \sqrt{\frac{n_{t,k}}{N_k}}\tilde{B}_{2,t}^{(k)}
\end{align*}
where for each $t=1,\ldots,T$
\begin{align*}
\bar{B}_2^{(k)} & = \sum_{t=1}^T \sqrt{\frac{n_{t,k}}{N_k}} \frac{1}{\sqrt{n_{t,k}}} \sum_{i:(t,i) \in \ci_k} \e[s(\tilde{W}_{ti};\theta_0,\hat{\nu}^{(-k)},\hat{e}^{(-k)})-s(\tilde{W}_{ti};\theta_0,\nu_0,e_{0,N}) \mid \cs^{(-k)}],\quad\text{and} \\
\tilde{B}_{2,t}^{(k)} & = \frac{1}{\sqrt{n_{t,k}}} \sum_{i:(t,i) \in \ci_k} s(\tilde{W}_{ti};\theta_0,\hat{\nu}^{(-k)},\hat{e}^{(-k)}) - s(\tilde{W}_{ti};\theta_0,\nu_0,e_{0,N}) \\
&\phe - \e[s(\tilde{W}_{ti};\theta_0,\hat{\nu}^{(-k)},\hat{e}^{(-k)})-s(\tilde{W}_{ti};\theta_0,\nu_0,e_{0,N}) \mid \cs^{(-k)}].
\end{align*}
To see that $\bar{B}_2^{(k)} = o_p(1)$,
we write
\begin{align*}
\bar{B}_2^{(k)} & = N_k^{1/2} \sum_{t=1}^T \frac{n_{t,k}}{N_k} \int s(w;\theta_0,\hat{\nu}^{(-k)},\hat{e}^{(-k)})-s(w;\theta_0,\nu_0,e_{0,N}) \mrd P_t(w) \\
& = N_k^{1/2} \int s(w;\theta_0,\hat{\nu}^{(-k)},\hat{e}^{(-k)})-s(w;\theta_0,\nu_0,e_{0,N}) \mrd P_{0,N}(w) \\
&\phe + N_k^{1/2} \sum_{t=1}^T \left(\frac{n_{t,k}}{N_k}-\frac{N_t}{N}\right)\int s(w;\theta_0,\hat{\nu}^{(-k)},\hat{e}^{(-k)})-s(w;\theta_0,\nu_0,e_{0,N}) \mrd P_t(w).
\end{align*}
Now 
\[
f_N^{(k)}(\lambda) = \int s(w;\theta_0,\nu_0+\lambda(\hat{\nu}^{(-k)}-\nu_0),e_{0,N}+\lambda(\hat{e}^{(-k)}-e_{0,N}))-s(w;\theta_0,\nu_0,e_{0,N}) \mrd P_{0,N}(w)
\]
is twice continuously differentiable on $[0,1]$ by regularity.
Hence by Taylor's theorem
\begin{equation}
\label{eq:f_N_k_lambda}
\|f_N^{(k)}(1)-f_N^{(k)}(0)-f_N^{(k)\prime}(0)\| \leq \frac{1}{2} \sup_{\lambda \in (0,1)} \|f_N^{(k)\prime\prime}(\lambda)\|.
\end{equation}
By~\eqref{eq:vanishing_second_derivatives} we know that
\begin{align*}
\sup_{\lambda \in (0,1)} \|f_N^{(k)\prime\prime}(\lambda)\| \indic(\ce_{N,k}) & \leq \sup_{\lambda \in (0,1)} \sup_{(\nu,e) \in \ct_N} \bigg\|\frac{\partial^2}{\partial \lambda^2} \e_{0,N}[s(W;\theta_0,\nu_0+\lambda(\nu-\nu_0),e_{0,N}+\lambda(e-e_{0,N}))]\bigg\| \\
& \leq N^{-1/2}\delta_N.
\end{align*}
With $f_N^{(k)}(0)=0$,
by~\eqref{eq:neyman_orthogonality} and~\eqref{eq:f_N_k_lambda} we conclude that that
\[
\|f_N^{(k)}(1)\|\bm{1}(\ce_{N,K}) \leq \frac{3}{2}N^{-1/2} \delta_N.
\]
Recalling that $\Pr(\mathcal{E}_{N,k}) \rightarrow 1$,
we have $\|f_N^{(k)}(1)\| = o_p(N^{-1/2})$ and
\[
\bigg\|N_k^{1/2} \int s(w;\theta_0,\hat{\nu}^{(-k)},\hat{e}^{(-k)})-s(w;\theta_0,\nu_0,e_{0,N}) \,\mrd P_{0,N}(w)\bigg\| = N_k^{1/2} \|f_N^{(k)}(1)\| = o_p(1).
\]
Using $N_t/N\to\kappa_t$ from~\eqref{eq:prop_asymp_limit} we have
\begin{align*}
&\phe\,\bigg\|N_k^{1/2} \sum_{t=1}^T \left(\frac{n_{t,k}}{N_k}-\frac{N_t}{N}\right)\int s(w;\theta_0,\hat{\nu}^{(-k)},\hat{e}^{(-k)})-s(w;\theta_0,\nu_0,e_{0,N}) \,\mrd P_t(w)\bigg\| \\
&\leq  N_k^{1/2} \sum_{t=1}^T \bigg|\frac{n_{t,k}}{N_k}-\frac{N_t}{N}\bigg| \cdot \frac{N}{N_t} \|f_N^{(k)}(1)\| \\
& = o_p(1)
\end{align*}
as
\begin{equation}
\label{eq:dP_t_dP_0N}
\frac{\mrd P_t}{\mrd P_{0,N}}(w) \leq \frac{N}{N_t}, \quad t=1,\ldots,T, \quad w \in \cw.
\end{equation}
Thus we have shown $\bar{B}_2^{(k)} = o_p(1)$.
Finally, for each batch $t=1,\ldots,T$,
the quantity $\tilde{B}_{2,t}^{(k)}$ is a sum of $n_{t,k}$ random variables that are i.i.d.\ and mean 0 conditional on $\cs^{(-k)}$.
Thus by Lemma~\ref{lemma:mean_norm_sq},
we have
\begin{align*}
\e\bigl[\|\tilde{B}_{2,t}^{(k)}\|^2 \mid \cs^{(-k)}\bigr] & = \sum_{i:(t,i) \in \ci_k} \frac{1}{n_{t,k}} \bar{\e}\bigl[\|r_{ti}^{(k)}\|^2\bigr] \\
& \leq \sum_{i:(t,i) \in \ci_k} \frac{1}{n_{t,k}} \e\bigl[\|s(W_{ti};\theta_0,\hat{\nu}^{(-k)},\hat{e}^{(-k)})-s(W_{ti};\theta_0,\nu_0,e_{0,N})\|^2 \mid \cs^{(-k)}\bigr]
\end{align*}
where for each $i$ such that $(t,i) \in \ci_k$ we've defined
\begin{align}
\begin{split}
r_{ti}^{(k)} &= s(W_{ti};\theta_0,\hat{\nu}^{(-k)},\hat{e}^{(-k)})-s(\tilde{W}_{ti};\theta_0,\nu_0,e_{0,N}) \\&\phe\,- \e[ s(\tilde{W}_{ti};\theta_0,\hat{\nu}^{(-k)},\hat{e}^{(-k)})-s(\tilde{W}_{ti};\theta_0,\nu_0,e_{0,N}) \mid \cs^{(-k)}]
\end{split}\end{align}
and used the basic variance inequality
\begin{equation}
\label{eq:basic_var_inequality}
\e\bigl[\|X-\e[X \mid \cs]\|^2\bigr] \leq \e\bigl[\|X\|^2 \mid \cs\bigr]
\end{equation}
for any random vector $X$ and $\sigma$-algebra $\cs$.
Hence by~\eqref{eq:dP_t_dP_0},
Minkowski's inequality,
and then~\eqref{eq:s_consistency},
we have
\begin{align*}
\bigl(\e\bigl[\|\tilde{B}_{2,t}^{(k)}\|^2 \mid \cs^{(-k)}\bigr]\bigr)^{1/2}\indic(\ce_{N,k}) & \leq \sup_{(\nu,e) \in \ct_N} \bigl(\e_t\bigl[\|s(W;\theta_0,\nu,e)-s(W;\theta_0,\nu_0,e_{0,N})\|^2\bigr]\bigr)^{1/2} \\
& \leq  \kappa_t^{-1/2} \sup_{(\nu,e) \in \ct_N} \bigl(\e_{0,N}\bigl[\|s(W;\theta_0,\nu,e)-s(W;\theta_0,\nu_0,e_0)\|^2\bigr]\bigr)^{1/2} \\
&\phe\, + \kappa_t^{-1/2} \bigl(\e_{0,N}\bigl[\|s(W;\theta_0,\nu_0,e_{0,N})-s(W;\theta_0,\nu_0,e_0)\|^2\bigr]\bigr)^{1/2} \\
& \leq 2\kappa_t^{-1/2}\delta_N.
\end{align*}
Thus $\tilde{B}_{2,t}^{(k)}=o_p(1)$ by Lemma~\ref{lemma:conditional_oh_pee}.
With
\[
-B_2 = \sum_{k=1}^K \sqrt{\frac{N_k}{N}} B_2^{(k)} = \sum_{k=1}^K \sqrt{\frac{N_k}{N}}\biggl(\bar{B}_2^{(k)}+\sum_{t=1}^T \sqrt{\frac{n_{t,k}}{N_k}} \tilde{B}_{2,t}^{(k)}\biggr)
\]
we conclude that $B_2=o_p(1)$, as desired.
This establishes that $\hat\theta^*=\tilde\theta+o_p(N^{-1/2})$.

\subsubsection{Showing that $\hat\theta =\hat\theta^*+o_p(N^{-1/2})$}

To establish that $\hat{\theta}=\tilde{\theta}+o_p(N^{-1/2})$,
similar to above we write
\[
N^{1/2}(\hat{\theta}-\tilde{\theta}) = N^{1/2}(\hat{\theta}-\theta_0)-N^{1/2}(\tilde{\theta}-\theta_0) = A_3B_3+A_4B_4
\]
where
\begin{align*}
A_3 & = \biggl(\frac{1}{N}\sum_{k=1}^K \sum_{(t,i) \in \ci_k} s_a(\tilde{W}_{ti};\hat{\nu}^{(-k)},\hat{e}^{(-k)})\biggr)^{-1} - \biggl(\frac{1}{N}\sum_{k=1}^K \sum_{(t,i) \in \ci_k} s_a(W_{ti};\hat{\nu}^{(-k)},\hat{e}^{(-k)})\biggr)^{-1}, \\
B_3 & = \frac{1}{\sqrt{N}} \sum_{k=1}^K \sum_{(t,i) \in \ci_k} s(\tilde{W}_{ti};\theta_0,\hat{\nu}^{(-k)},\hat{e}^{(-k)}), \\
A_4 & = \biggl(\frac{1}{N}\sum_{k=1}^K \sum_{(t,i) \in \ci_k} s_a(W_{ti};\hat{\nu}^{(-k)},\hat{e}^{(-k)})\biggr)^{-1},\quad\text{and} \\
B_4 & = \frac{1}{\sqrt{N}} \sum_{k=1}^K \sum_{(t,i) \in \ci_k}[s(\tilde{W}_{ti};\theta_0,\hat{\nu}^{(-k)},\hat{e}^{(-k)})-s(W_{ti};\theta_0,\hat{\nu}^{(-k)},\hat{e}^{(-k)})].
\end{align*}
We will show that $A_3=o_p(1)$, $B_3=O_p(1)$, $A_4=O_p(1)$, and $B_4=o_p(1)$.
First we define
\[
\tilde{A}_3 = \sum_{k=1}^K \frac{N_k}{N} \tilde{A}_3^{(k)}
\]
where the term $\tilde{A}_3^{(k)}$ for each fold is decomposed into a sum over batches $t=1,\ldots,T$:
\begin{align*}
\tilde{A}_3^{(k)} & = \frac{1}{N_k} \sum_{(t,i) \in \ci_k}[s_a(\tilde{W}_{ti};\hat{\nu}^{(-k)},\hat{e}^{(-k)})-s_a(W_{ti};\hat{\nu}^{(-k)},\hat{e}^{(-k)})] \\
& = \frac{1}{N_k} \sum_{(t,i) \in \ci_k} (\tilde{Z}_{ti}-Z_{ti})(s_a(W_{ti}(1);\hat{\nu}^{(-k)},\hat{e}^{(-k)})-s_a(W_{ti}(0);\hat{\nu}^{(-k)},\hat{e}^{(-k)})) \\
& = \sum_{t=1}^T \frac{n_{t,k}}{N_k} \tilde{A}_{3,t}^{(k)}.
\end{align*}
Here
\begin{align*}
\tilde{A}_{3,t}^{(k)} & := \frac{1}{n_{t,k}} \sum_{i:(t,i) \in \ci_k} (\tilde{Z}_{ti}-Z_{ti})(s_a(W_{ti}(1);\hat{\nu}^{(-k)},\hat{e}^{(-k)})-s_a(W_{ti}(0);\hat{\nu}^{(-k)},\hat{e}^{(-k)}))
\end{align*}
satisfies
\begin{align*}
\|\tilde{A}_{3,t}^{(k)}\| & \leq \frac{1}{n_{t,k}} \sum_{i:(t,i) \in \ci_k} \bm{1}(Z_{ti} \neq \tilde{Z}_{ti}) \cdot \|s_a(W_{ti}(1);\hat{\nu}^{(-k)},\hat{e}^{(-k)})-s_a(W_{ti}(0);\hat{\nu}^{(-k)},\hat{e}^{(-k)})\| \\
& \leq \sqrt{\frac{1}{n_{t,k}} \sum_{i:(t,i) \in \ci_k}  \|s_a(W_{ti}(1);\hat{\nu}^{(-k)},\hat{e}^{(-k)})-s_a(W_{ti}(0);\hat{\nu}^{(-k)},\hat{e}^{(-k)})\|^2} \\
&\phe\,\cdot  \sqrt{\frac{1}{n_{t,k}} \sum_{i:(t,i) \in \ci_k} \indic(Z_{ti} \neq \tilde{Z}_{ti})}
\end{align*}
by the Cauchy-Schwarz inequality.
Recalling the $\sigma$-algebra $\cs_t^{X,(k)}$ in the definition of a CSBAE (Definition~\ref{def:CSBAE}),
we have
\[
\Pr(Z_{ti} \neq \tilde{Z}_{ti} \mid \cs_t^{X,(k)}) = |\hat{e}_t^{(k)}(X_{ti})-e_t(X_{ti})|
\]
and so by~\eqref{eq:csbae_limit}
\begin{align*}
\e\biggl[\frac{1}{n_{t,k}} \sum_{i:(t,i) \in \ci_k} \indic(Z_{ti} \neq \tilde{Z}_{ti}) \Bigm| \cs_t^{X,(k)} \biggr] & = \frac{1}{n_{t,k}} \sum_{i:(t,i) \in \ci_k} |\hat{e}_t^{(k)}(X_{ti})-e_t(X_{ti})| \\
& \leq \|\hat{e}_t^{(k)}-e_t\|_{2,P_{N,t}^{X,(k)}} \\
&= o_p(1).
\end{align*}
Hence 
\[
\sqrt{\frac{1}{n_{t,k}} \sum_{i:(t,i) \in \ci_k} \indic(Z_{ti} \neq \tilde{Z}_{ti})} = o_p(1)
\]
by Lemma~\ref{lemma:conditional_oh_pee}.
Next, for $z=0,1$ and $(t,i) \in \ci_k$,
by Jensen's inequality and~\eqref{eq:dP_t_dP_0} we have
\begin{align*}
\left(\e\bigl[\|s_a(W_{ti}(z);\hat{\nu}^{(-k)},\hat{e}^{(-k)})\|^2 \bm{1}(\mathcal{E}_{N,k}) \mid \cs^{(-k)}\bigr]\right)^{q/2} & \leq \sup_{(\nu,e) \in \ct_N} (\e_t\bigl[\|s_a(W(z);\nu,e)\|^2\bigr])^{q/2} \\
& \leq \kappa_t^{-1}\sup_{(\nu,e) \in \ct_N} \e_0[\|s_a(W(z);\nu,e)\|^q] \\
& \leq \kappa_t^{-1}C^q 
\end{align*}
by~\eqref{eq:s_moment_condition}.
By Markov's inequality,
for each batch $t=1,\ldots,T$ and fold $k=1,\ldots,K$,
\[
\sqrt{\frac{1}{n_{t,k}} \sum_{i:(t,i) \in \ci_k} \|s_a(W_{ti}(z);\hat{\nu}^{(-k)},\hat{e}^{(-k)})\|^2} = O_p(1)
\]
holds for $z=0,1$.
Applying Minkowski's inequality with the $L^2_{P_{N,t}^X}$ norm shows
\[
\sqrt{\frac{1}{n_{t,k}} \sum_{i:(t,i) \in \ci_k}  \|s_a(W_{ti}(1);\hat{\nu}^{(-k)},\hat{e}^{(-k)})-s_a(W_{ti}(0);\hat{\nu}^{(-k)},\hat{e}^{(-k)})\|^2} = O_p(1).
\]
We conclude $\tilde{A}_{3,t}^{(k)} = o_p(1)$,
hence $\tilde{A}_3^{(k)} = o_p(1)$ 
and $\tilde{A}_3 = o_p(1)$ as well.
Now we recall
\begin{align*}
\frac{1}{N} \sum_{k=1}^K \sum_{(t,i) \in \ci_k} s_a(\tilde{W}_{ti};\hat{\nu}^{(-k)},\hat{e}^{(-k)}) 
& = \frac{1}{N} \sum_{k=1}^K \sum_{(t,i) \in \ci_k} s_a(\tilde{W}_{ti};\nu_0,e_{0,N}) + o_p(1) \text{\quad  (by~\eqref{eq:A_1_tilde})} \\
& = \e_0[s_a(W;\nu_0,e_0)] + o_p(1) \text{\quad  (by~\eqref{eq:oracle_s_a_lln_folds})}.
\end{align*}
Then 
\begin{align*}
\frac{1}{N} \sum_{k=1}^K \sum_{(t,i) \in \ci_k} s_a(W_{ti};\hat{\nu}^{(-k)},\hat{e}^{(-k)}) & = \frac{1}{N} \sum_{k=1}^K \sum_{(t,i) \in \ci_k} s_a(\tilde{W}_{ti};\hat{\nu}^{(-k)},\hat{e}^{(-k)}) - \tilde{A}_3 \\
& =  \e_0[s_a(W;\nu_0,e_0)] + o_p(1)
\end{align*}
as well.
Thus we have both
\begin{align}
\biggl(\frac{1}{N}\sum_{k=1}^K \sum_{(t,i) \in \ci_k} s_a(\tilde{W}_{ti};\hat{\nu}^{(-k)},\hat{e}^{(-k)})\biggr)^{-1} & = (\e_0[s_a(W;\nu_0,e_0)])^{-1} + o_p(1),\quad\text{and} \nonumber \\
\biggl(\frac{1}{N}\sum_{k=1}^K \sum_{(t,i) \in \ci_k} s_a(W_{ti};\hat{\nu}^{(-k)},\hat{e}^{(-k)})\biggr)^{-1} & = (\e_0[s_a(W;\nu_0,e_0)])^{-1} + o_p(1). \label{eq:A_4}
\end{align}
Subtracting these two equations gives $A_3=o_p(1)$.

Next, equation~\eqref{eq:A_4} immediately provides $A_4=O_p(1)$,
while $B_3=B_1-B_2=O_p(1)$ as shown above.
Thus it only remains to show that $B_4=o_p(1)$.
We write
\[
B_4 = \sum_{k=1}^K \sqrt{\frac{N_k}{N}}\Bigl(B_4^{(k),1} +  B_4^{(k),2} +  B_4^{(k),3}\Bigr)
\]
where
\begin{align*}
B_4^{(k),1} & = \sum_{t=1}^T \sqrt{\frac{n_{t,k}}{N_k}} \frac{1}{\sqrt{n_{t,k}}} \sum_{i:(t,i) \in \ci_k} (\tilde{Z}_{ti}-Z_{ti})\bigl(s(W_{ti}(1);\theta_0,\nu_0,e_{0,N})-s(W_{ti}(0);\theta_0,\nu_0,e_{0,N})\bigr), \\
B_4^{(k),2} & = N_k^{1/2} \sum_{t=1}^T \frac{n_{t,k}}{N_k} \frac{1}{n_{t,k}} \sum_{i:(t,i) \in \ci_k} (\tilde{Z}_{ti}-Z_{ti})\e[s(W_{ti}(1);\theta_0,\hat{\nu}^{(-k)},\hat{e}^{(-k)})-s(W_{ti}(1);\theta_0,\nu_0,e_{0,N}) \mid \cs^{(-k)},X_{ti}]\\
&\phe\, -  N_k^{1/2} \sum_{t=1}^T \frac{n_{t,k}}{N_k} \frac{1}{n_{t,k}} \sum_{i:(t,i) \in \ci_k} (\tilde{Z}_{ti}-Z_{ti})\e[s(W_{ti}(0);\theta_0,\hat{\nu}^{(-k)},\hat{e}^{(-k)})-s(W_{ti}(0);\theta_0,\nu_0,e_{0,N}) \mid \cs^{(-k)},X_{ti}] \\
\intertext{and}
B_4^{(k),3} & = \sum_{t=1}^T \sqrt{\frac{n_{t,k}}{N_k}}\frac{1}{\sqrt{n_{t,k}}}\sum_{i:(t,i) \in \ci_k} (\tilde{Z}_{ti}-Z_{ti})\bigl(s(W_{ti}(1);\theta_0,\hat{\nu}^{(-k)},\hat{e}^{(-k)})-s(W_{ti}(0);\theta_0,\hat{\nu}^{(-k)},\hat{e}^{(-k)})\bigr) \\
& - B_4^{(k),1} - B_4^{(k),2}.
\end{align*}
It suffices to prove that for each $k=1,\ldots,K$,
the terms $B_4^{(k),1}$, $B_4^{(k),2}$, and $B_4^{(k),3}$ are all asymptotically negligible.
First, we have
\begin{align*}
\e&\left[(\tilde{Z}_{ti}-Z_{ti})(s(W_{ti}(1);\theta_0,\nu_0,e_{0,N})-s(W_{ti}(0);\theta_0,\nu_0,e_{0,N})) \mid \cs_t^{X,(k)}\right] \\
& = (e_t(X_{ti})-\hat{e}_t^{(k)}(X_{ti}))\e[s(W_{ti}(1);\theta_0,\nu_0,e_{0,N})-s(W_{ti}(0);\theta_0,\nu_0,e_{0,N}) \mid \cs_t^{X,(k)}] \\
& = (e_t(X_{ti})-\hat{e}_t^{(k)}(X_{ti}))\e[s(W_{ti}(1);\theta_0,\nu_0,e_{0,N})-s(W_{ti}(0);\theta_0,\nu_0,e_{0,N}) \mid X_{ti}] \\
& = 0.
\end{align*}
We briefly justify each of the equalities in the preceding display:
\begin{itemize}
    \item The first equality holds because given $\cs_t^{X,(k)}$,
    the only randomness in $(Z_{ti},\tilde{Z}_{ti})$ for any $(t,i) \in \ci_k$ is in $U_{ti}$, which is independent of $(W_{ti}(0),W_{ti}(1))$, $\cs^{(-k)}$, and $\cs_t^{X,(k)}$; hence
\begin{equation}
\label{eq:Z_cond_ind}
\begin{split}
(Z_{ti},\tilde{Z}_{ti}) &\indep (W_{ti}(0),W_{ti}(1)) \mid \cs_t^{X,(k)}\quad\text{ and } \\
(Z_{ti},\tilde{Z}_{ti}) &\indep (W_{ti}(0),W_{ti}(1)) \mid \cs^{(-k)},\cs_t^{X,(k)}.
\end{split}
\end{equation}
    \item The second equality holds because
    the vectors $\{S_{ti}, 1 \leq t \leq T, 1 \leq i \leq N_t\}$ are mutually independent (Assumption~\ref{assump:gen_DGP}).
    \item The third equality follows directly from~\eqref{eq:potential_score_diff}.
\end{itemize}
For batches $t=1,\ldots,T$, 
the vectors $\{(\tilde{Z}_{ti},Z_{ti},W_{ti}(1),W_{ti}(0)), 1 \leq i \leq N_t\}$ are conditionally independent given $\cs_t^{X,(k)}$.
Defining
\[
\ci_{4,t}^{(k)} = \frac{1}{\sqrt{n_{t,k}}} \sum_{i:(t,i) \in \ci_k} (\tilde{Z}_{ti}-Z_{ti})(s(W_{ti}(1);\theta_0,\nu_0,e_{0,N})-s(W_{ti}(0);\theta_0,\nu_0,e_{0,N}))
\]
and letting $p$ be the real solution to $p^{-1}+2q^{-1}=1$,
we apply Lemma~\ref{lemma:mean_norm_sq} and Holder's inequality to get
\begin{align*}
&\phe\,\e[\|\ci_{4,t}^{(k)}\|^2 \mid \cs_t^{X,(k)}] \\
& = \frac{1}{n_{t,k}} \sum_{i:(t,i) \in \ci_k} \e[(\tilde{Z}_{ti}-Z_{ti})^2\|s(W_{ti}(1);\theta_0,\nu_0,e_{0,N})-s(W_{ti}(0);\theta_0,\nu_0,e_{0,N})\|^2 \mid \cs_t^{X,(k)}] \\
& = \frac{1}{n_{t,k}} \sum_{i:(t,i) \in \ci_k} |\hat{e}_t^{(k)}(X_{ti})-e_t(X_{ti})|\e\big[\|s(W_{ti}(1);\theta_0,\nu_0,e_{0,N})-s(W_{ti}(0);\theta_0,\nu_0,e_{0,N})\|^2 \mid \cs_t^{X,(k)}\big] 
\end{align*}
using the conditional independence in~\eqref{eq:Z_cond_ind} once again.
Then
\begin{align*}
\e[\|\ci_{4,t}^{(k)}\|^2 \mid \cs_t^{X,(k)}] & \leq \biggl(\frac{1}{n_{t,k}} \sum_{i:(t,i) \in \ci_k} |\hat{e}_t^{(k)}(X_{ti})-e_t(X_{ti})|^p\biggr)^{1/p}  \\
& \cdot \biggl(\frac{1}{n_{t,k}} \sum_{i:(t,i) \in \ci_k} \left(\e[\|s(W_{ti}(1);\theta_0,\nu_0,e_{0,N})-s(W_{ti}(0);\theta_0,\nu_0,e_{0,N})\|^2 \mid \cs_t^{X,(k)}]\right)^{q/2}\biggr)^{2/q}.
\end{align*}
We can assume WLOG that $q < 4$ so that $p>2$.
Then with $0 \leq |\hat{e}_t^{(j)}(x)-e_t(x)| \leq 1$ for all $x$,
we conclude
\begin{equation}
\label{eq:e_hat_t_j_p_convergence}
\frac{1}{n_{t,k}} \sum_{i:(t,i) \in \ci_k} \big|\hat{e}_t^{(k)}(X_{ti})-e_t(X_{ti})\big|^p \leq \frac{1}{n_{t,k}} \sum_{i:(t,i) \in \ci_k} \bigl(\hat{e}_t^{(k)}(X_{ti})-e_t(X_{ti})\bigr)^2 = o_p(1)
\end{equation}
by~\eqref{eq:csbae_limit}.
Next, from Jensen's inequality
\begin{align*}
&\phe\, \frac{1}{n_{t,k}} \sum_{i:(t,i) \in \ci_k} \left(\e\bigl[\|s(W_{ti}(1);\theta_0,\nu_0,e_{0,N})-s(W_{ti}(0);\theta_0,\nu_0,e_{0,N})\|^2 \mid \cs_t^{X,(k)}\bigr]\right)^{q/2} \\
&\leq  \frac{1}{n_{t,k}} \sum_{i:(t,i) \in \ci_k} \e\left[\|s(W_{ti}(1);\theta_0,\nu_0,e_{0,N})-s(W_{ti}(0);\theta_0,\nu_0,e_{0,N})\|^q \mid \cs_t^{X,(k)}\right] \\
&= \e_t\bigl[\|s(W(1);\theta_0,\nu_0,e_{0,N})-s(W(0);\theta_0,\nu_0,e_{0,N})\|^q \mid X\bigr].
\end{align*}
But by Minkowski's inequality,~\eqref{eq:dP_t_dP_0}, and~\eqref{eq:s_moment_condition}, we see
\begin{align*}
&\phe\ \bigl(\e_t[\|s(W(1);\theta_0,\nu_0,e_{0,N})-s(W(0);\theta_0,\nu_0,e_{0,N})\|^q]\bigr)^{1/q} \\
& \leq \bigl(\e_t[\|s(W(1);\theta_0,\nu_0,e_{0,N})\|^q])^{1/q} + (\e_t[\|s(W(0);\theta_0,\nu_0,e_{0,N})\|^q]\bigr)^{1/q} \\
& \leq \kappa_t^{-1/q}[(\e_0[\|s(W(1);\theta_0,\nu_0,e_{0,N})\|^q])^{1/q} + (\e_0[\|s(W(0);\theta_0,\nu_0,e_{0,N})\|^q])^{1/q} ] \\
& \leq 2C\kappa_t^{-1/q}.
\end{align*}
We conclude by Markov's inequality that
\[
\frac{1}{n_{t,k}} \sum_{i:(t,i) \in \ci_k} \left(\e[\|s(W_{ti}(1);\theta_0,\nu_0,e_{0,N})-s(W_{ti}(0);\theta_0,\nu_0,e_{0,N})\|^2 \mid \cs_t^{X,(k)}]\right)^{q/2} = O_p(1).
\]
Along with~\eqref{eq:e_hat_t_j_p_convergence} we can conclude that $\ci_{4,t}^{(k)} = o_p(1)$ by Lemma~\ref{lemma:conditional_oh_pee}.
Then also
\[
B_4^{(k),1} = \sum_{t=1}^T \sqrt{\frac{n_{t,k}}{N_k}} \ci_{4,t}^{(k)} = o_p(1)
\]

Next, for $z=0,1$ define
\[
B_4^{(k),2}(z) = N_k^{1/2} \sum_{t=1}^T \frac{n_{t,k}}{N_k}\frac{1}{n_{t,k}} \sum_{i:(t,i) \in \ci_k} (\tilde{Z}_{ti}-Z_{ti})\e\bigl[s(W_{ti}(z);\theta_0,\hat{\nu}^{(-k)},\hat{e}^{(-k)})-s(W_{ti}(z);\theta_0,\nu_0,e_{0,N}) \mid \cs^{(-k)},X_{ti}\bigr]
\]
so that $B_4^{(k),2} = B_4^{(k),2}(1)-B_4^{(k),2}(0)$.
Then by the triangle inequality $\|B_4^{(k),2}(z)\|$
is no larger than
\[
 N_k^{1/2} \sum_{t=1}^T \frac{n_{t,k}}{N_k} \frac{1}{n_{t,k}} \sum_{i:(t,i) \in \ci_k}|\tilde{Z}_{ti}-Z_{ti}| \cdot \|\e[s(W_{ti}(z);\theta_0,\hat{\nu}^{(-k)},\hat{e}^{(-k)})-s(W_{ti}(z);\theta_0,\nu_0,e_{0,N}) \mid \cs^{(-k)},X_{ti}]\|.
\]
Taking conditional expectations of both sides yields
\begin{align*}
&\e\bigl[\|B_4^{(k),2}(z)\| \mid \cs^{(-k)},\cs_t^{X,(k)}\bigr] \\
& \leq N_k^{1/2} \sum_{t=1}^T \frac{n_{t,k}}{N_k}\frac{1}{n_{t,k}} \sum_{i:(t,i) \in \ci_k} \bigl|\hat{e}_t^{(k)}(X_{ti})-e_t(X_{ti})\bigr| \\
&\qquad\qquad\qquad\times \bigl\Vert\e[s(W_{ti}(z);\theta_0,\hat{\nu}^{(-k)},\hat{e}^{(-k)})-s(W_{ti}(z);\theta_0,\nu_0,e_{0,N}) \mid \cs^{(-k)},X_{ti}]\bigr\Vert \\
& \leq N_k^{1/2} \sum_{t=1}^T \frac{n_{t,k}}{N_k} \cdot S_t^{(-k)}(z) \cdot \sqrt{\frac{1}{n_{t,k}} \sum_{i:(t,i) \in \ci_k} (\hat{e}_t^{(k)}(X_{ti})-e_t(X_{ti}))^2} 
\end{align*}
by Cauchy-Schwarz.
We have by~\eqref{eq:csbae_limit} that
\[
\sqrt{\frac{1}{n_{t,k}} \sum_{i:(t,i) \in \ci_k} (\hat{e}_t^{(k)}(X_{ti})-e_t(X_{ti}))^2} = O_p(N^{-1/4}),
\]
Then by equation~\eqref{eq:S_t_neg_k},
we get $B_4^{(k),2}(z)=o_p(1)$ for $z=0,1$,
and so $B_4^{(k),2} = o_p(1)$ as well.

Finally, we write
\[
B_4^{(k),3} = \sum_{t=1}^T \sqrt{\frac{n_{t,k}}{N_k}} B_{4,t}^{(k),3}
\]
for
\begin{align*}
B_{4,t}^{(k),3} & = \frac{1}{\sqrt{n_{t,k}}} \sum_{i:(t,i) \in \ci_k} (\tilde{Z}_{ti}-Z_{ti})\Delta_{ti}^{(-k)},\\
\intertext{where}
\Delta_{ti}^{(-k)} & = (s(W_{ti}(1);\theta_0,\hat{\nu}^{(-k)},\hat{e}^{(-k)})-s(W_{ti}(0);\theta_0,\hat{\nu}^{(-k)},\hat{e}^{(-k)})) \\
&\phe\ - (s(W_{ti}(1);\theta_0,\nu_0,e_{0,N})-s(W_{ti}(0);\theta_0,\nu_0,e_{0,N})) \\
&\phe\ - \e[s(W_{ti}(1);\theta_0,\hat{\nu}^{(-k)},\hat{e}^{(-k)})-s(W_{ti}(0);\theta_0,\hat{\nu}^{(-k)},\hat{e}^{(-k)}) \mid \cs^{(-k)},X_{ti}] \\
&\phe\ - \e[s(W_{ti}(1);\theta_0,\nu_0,e_0)-s(W_{ti}(0);\theta_0,\nu_0,e_0) \mid  \cs^{(-k)},X_{ti}].
\end{align*}
For each $(t,i) \in \ci_k$,
we have
\begin{equation}
\label{eq:W_cond_ind}
(W_{ti}(0),W_{ti}(1)) \indep \cs_t^{X,(k)} \mid \cs^{(-k)},X_{ti}
\end{equation}
since given $\cs^{(-k)}$ and $X_{ti}$,
the only remaining randomness in $(W_{ti}(0),W_{ti}(1))$ is in the potential outcomes $(Y_{ti}(0),Y_{ti}(1))$.
These are independent of both $\cs^{(-k)}$ and $\{X_{tj} \mid j \neq i\}$,
the covariates of the other subjects in batch $t$.
Thus $\e[\Delta_{ti}^{(-k)} \mid \cs^{(-k)},\cs_t^{X,(k)}] = 0$,
and by~\eqref{eq:Z_cond_ind} we have
\[
\e[B_{4,t}^{(k),3} \mid \cs^{(-k)},\cs_t^{X,(k)}] = 0.
\]
Then applying Lemma~\ref{lemma:mean_norm_sq} and equations
\eqref{eq:Z_cond_ind} and~\eqref{eq:basic_var_inequality},
\begin{align*}
\e[\|B_{4,t}^{(k),3}\|^2 \mid \cs^{(-k)},\cs_t^{X,(k)}] & = \frac{1}{n_{t,k}} \sum_{i:(t,i) \in \ci_k} \e[(\tilde{Z}_{ti}-Z_{ti})^2\|\Delta_{ti}^{(-k)}\|^2 \mid \cs^{(-k)},\cs_t^{X,(k)}] \\
& = \frac{1}{n_{t,k}} \sum_{i:(t,i) \in \ci_k} |\hat{e}_t^{(k)}(X_{ti})-e_t(X_{ti})| \e[\|\Delta_{ti}^{(-k)}\|^2 \mid \cs^{(-k)},\cs_t^{X,(k)}] \\
& \leq \frac{1}{n_{t,k}} \sum_{i:(t,i) \in \ci_k} |\hat{e}_t^{(k)}(X_{ti})-e_t(X_{ti})| \e[\|\tilde{\Delta}_{ti}^{(-k)}\|^2 \mid \cs^{(-k)},\cs_t^{X,(k)}] \\
& \leq \biggl(\frac{1}{n_{t,k}} \sum_{i:(t,i) \in \ci_k} |\hat{e}_t^{(k)}(X_{ti})-e_t(X_{ti})|^p\biggr)^{1/p} \Gamma_t^{(k)} 
\end{align*}
by Holder's inequality, where $p$ is as above for
\begin{align*}
\Gamma_t^{(k)} & = \biggl(\frac{1}{n_{t,k}} \sum_{i:(t,i) \in \ci_k} (\e[\|\tilde{\Delta}_{ti}^{(-k)}\|^2 \mid \cs^{(-k)},\cs_t^{X,(k)}])^{q/2}\biggr)^{2/q},\quad\text{with} \\
\tilde{\Delta}_{ti}^{(-k)} & =  (s(W_{ti}(1);\theta_0,\hat{\nu}^{(-k)},\hat{e}^{(-k)})-s(W_{ti}(0);\theta_0,\hat{\nu}^{(-k)},\hat{e}^{(-k)}))\\ 
&\phe\ - (s(W_{ti}(1);\theta_0,\nu_0,e_{0,N})-s(W_{ti}(0);\theta_0,\nu_0,e_{0,N})).
\end{align*}
By Jensen's inequality
\[
(\Gamma_t^{(k)})^{q/2} \leq \frac{1}{n_{t,k}} \sum_{i:(t,i) \in \ci_k}\e\bigl[\|\tilde{\Delta}_{ti}^{(-k)}\|^q \mid \cs^{(-k)},\cs_t^{X,(k)}\bigr].
\]
Taking conditional expectations, we get
\[
\e\bigl[(\Gamma_t^{(k)})^{q/2} \mid \cs^{(-k)}\bigr] = \frac{1}{n_{t,k}} \sum_{i:(t,i) \in \ci_k} \e\bigl[\|\tilde{\Delta}_{ti}^{(-k)}\|^q \mid \cs^{(-k)}\bigr].
\]
Then
\begin{align*}
&\phe\ \left(\e[(\Gamma_t^{(k)})^{q/2} \mid \cs^{(-k)}]\right)^{1/q}\bm{1}(\ce_{N,k}) \\
& \leq \sup_{(\nu,e) \in \ct_N} \left(\e_t[\|(s(W(1);\theta_0,\nu,e)-s(W(0);\theta_0,\nu,e)) -(s(W(1);\theta_0,\nu_0,e_{0,N})-s(W(0);\theta_0,\nu_0,e_{0,N}))\|^q]\right)^{1/q} \\
& \leq \sup_{(\nu,e) \in \ct_N} \sum_{z \in \{0,1\}} (\e_t[\|s(W(z);\theta_0,\nu,e)\|^q])^{1/q} + (\e_t[\|s(W(z);\theta_0,\nu_0,e_{0,N})\|^q])^{1/q} \\
& \leq 4C\kappa_t^{-1/q} 
\end{align*}
by~\eqref{eq:dP_t_dP_0} and moment boundedness.
We conclude that $\e[(\Gamma_t^{(k)})^{q/2} \mid \cs^{(-k)}]\bm{1}(\mathcal{E}_{N,k})$ is uniformly bounded.
Recalling that $\Pr(\mathcal{E}_{N,k}) \rightarrow 1$,
Markov's inequality then ensures $\Gamma_t^{(k)} = O_p(1)$.
In view of~\eqref{eq:e_hat_t_j_p_convergence},
we then have $\e[\|B_{4,t}^{(k),3}\|^2 \mid \cs^{(-k)},\cs_t^{X,(k)}] = o_p(1)$.
This implies $B_{4,t}^{(k),3} = o_p(1)$ by Lemma~\ref{lemma:conditional_oh_pee},
and hence $B_4^{(k),3} = o_p(1)$.
This establishes that $\hat{\theta}=\tilde{\theta}+o_p(N^{-1/2})$
and completes the proof.

\subsection{Proof of Corollary~\ref{cor:AIPW}: CLT for feasible $\hat{\theta}_{\aipw}$ in a CSBAE}
\label{proof:cor:AIPW}
Corollary~\ref{cor:AIPW}, which shows the feasible estimator $\hat{\theta}_{\aipw}$ satisfies a CLT for estimating $\hat{\theta}_{0,\ate}$ in a CSBAE,
holds under the numbered generalizations at the end of Appendix~\ref{app:nonstationary},
subject to the additional requirement that the mean functions are stationary,
as in our generalization of Corollary~\ref{cor:ate_oracle_clt} to nonstationary batches.
We prove this more general result.
The proof of our generalized Corollary~\ref{cor:ate_oracle_clt} shows that Assumption~\ref{assump:gen_identification} is satisfied under the moment bounds in Assumption~\ref{assump:ate_regularity} with
$\theta_0=\theta_{0,\ate}$,
$s(\cdot)=s_{\aipw}(\cdot)$,
$\nu_0=\nu_{0,\aipw} \in \cn=\cn_{\aipw}$,
and any $\gamma \in (0,1/2)$.
It remains to show that the further
conditions of Assumption~\ref{assump:dml}, 
namely (a), (b), (c) and equations~\eqref{eq:neyman_orthogonality} through~\eqref{eq:s_moment_condition},
are satisfied.
Then Corollary~\ref{cor:AIPW} follows by Theorem~\ref{thm:batch_clt}.

\begin{description}
\item[Condition~\ref{cond:regularity}] Invertibility of $\e_0[\theta_0(W;\nu_0,e_0)]$ and existence of $\e_0[s(W;\theta_0,\nu_0,e_0)^2]$
follow from Assumption~\ref{assump:ate_regularity},
as shown in the proof of Corollary~\ref{cor:ate_oracle_clt}.
\item[Condition~\ref{cond:score_differentiability}]
Fix $\lambda \in [0,1]$ and $(\nu,e) \in \ct=\cn \times \cf_{\gamma}$,
where $\nu(\cdot)=(m(0,\cdot),m(1,\cdot))$. Consider
\begin{align*}
f_N(\lambda) & := s(W;\theta_0,\nu_0+\lambda(\nu-\nu_0),e_{0,N}+\lambda(e-e_{0,N})) \\
& = m_{\lambda}(1,X)-m_{\lambda}(0,X) + \frac{Z(Y-m_{\lambda}(1,X))}{e_{\lambda,N}(x)} - \frac{(1-Z)(Y-m_{\lambda}(0,X))}{1-e_{\lambda,N}(x)}
\end{align*}
where $m_{\lambda}(z,x) := m_0(z,x) + \lambda(m(z,x)-m_0(z,x))$ and $e_{\lambda,N}(x) := e_{0,N}(x) + \lambda(e(x)-e_{0,N}(x))$.
Taking two derivatives with respect to $\lambda$ we get
\begin{align*}
 f_N'(\lambda)&  = (m(1,X)-m_0(1,X))-(m(0,X)-m_0(0,X)) \\
&\phe\ + Z\left(\frac{m_0(1,X)-m(1,X)}{e_{\lambda,N}(X)}-\frac{(Y-m_{\lambda}(1,X))(e(X)-e_{0,N}(X))}{e_{\lambda,N}(X)^2}\right) \\
&\phe\ - (1-Z)\left(\frac{m_0(0,X)-m(0,X)}{1-e_{\lambda,N}(X)} - \frac{(Y-m_{\lambda}(0,X))(e_{0,N}(X)-e(X))}{(1-e_{\lambda,N}(X))^2}\right),\quad\text{and} \\
f_N''(\lambda) & = 2Z\left(\frac{(Y-m_{\lambda}(1,X))(e(X)-e_{0,N}(X))^2}{e_{\lambda,N}(X)^3}-\frac{(m_0(1,X)-m(1,X))(e(X)-e_{0,N}(X))}{e_{\lambda,N}(X)^2}\right) \\
&\phe\ + 2(1-Z)\left(\frac{m_0(0,X)-m(0,X)(e_{0,N}(X)-e(X))}{(1-e_{\lambda,N}(X))^2} - \frac{(Y-m_{\lambda}(0,X))(e_{0,N}(X)-e(X))^2}{(1-e_{\lambda,N}(X))^3}\right).
\end{align*}

We now show that $f_N(\cdot)$, $f_N'(\cdot)$, and $f_N''(\cdot)$ are upper bounded by an integrable random variable on an open interval containing $[0,1]$,
so that by the Leibniz rule,
we can swap both first and second derivatives with expectations to conclude the function $\lambda \mapsto \e_{0,N}[s(W;\theta_0,\nu_0+\lambda(\nu-\nu_0),e_{0,N}+\lambda(e-e_{0,N}))]$ has second derivative $\e_{0,N}[f_N''(\lambda)]$
for $\lambda \in [0,1]$.
Note $e_{\lambda,N} \in \cf_{\gamma}$.
Then by repeated application of the triangle inequality,
we see that for sufficiently small $\epsilon>0$,
we must have
\begin{align*}
\sup_{\lambda \in (-\epsilon,1+\epsilon)} |f_N(\lambda)| & \leq 2(|m(1,X)| + |m_0(1,X)| + |m(0,X)| + |m_0(0,X)|) \\
& + 2\gamma^{-1}(|Y(1)|+|Y(0)|+|m(1,X)| + |m_0(1,X)| + |m(0,X)| + |m_0(0,X)|) \\
\sup_{\lambda \in (-\epsilon,1+\epsilon)} |f_N'(\lambda)| & \leq 2(|m(1,X)| + |m_0(1,X)| + |m(0,X)| + |m_0(0,X)|) \\
& + 2\gamma^{-2}(|Y(1)|+|Y(0)|+|m(1,X)| + |m_0(1,X)| + |m(0,X)| + |m_0(0,X)|) \\
\sup_{\lambda \in (-\epsilon,1+\epsilon)} |f_N''(\lambda)| & \leq 2\gamma^{-3}(|Y(1)|+|Y(0)|+|m(1,X)| + |m_0(1,X)| + |m(0,X)| + |m_0(0,X)|).
\end{align*}
The right-hand sides above are clearly integrable under $P_{0,N}$.
Hence, the mapping $\lambda \mapsto \e_{0,N}[s(W;\theta_0,\nu_0+\lambda(\nu-\nu_0),e_{0,N}+\lambda(e-e_{0,N}))]$ is indeed twice differentiable
with second derivative $\e_{0,N}[f_N''(\lambda)]$
for $\lambda \in [0,1]$.
This second derivative is continuous for such $\lambda$ by continuity of $f_N''(\cdot)$ and dominated convergence.
\item[Condition~\ref{cond:potential_score_equations}] 
Fix $e \in \cf_{\gamma}$. 
By stationarity of the mean functions $m_0(z,\cdot)$, $z=0,1$, for each $t=1,\ldots,T$, $i=1,\ldots,N_t$ we have 
(letting $\nu_0=\nu_{0,\aipw}$ for brevity)
\begin{align*}
\e[s(W_{ti}(1);\theta_0,\nu_0,e) \mid X_{ti}] & = m_0(1,X)-m_0(0,X) - \theta_0+ \e_t\left[\frac{Y(1)-m_0(1,X)}{e(X)} \Bigm| X\right] \\
& = m_0(1,X)-m_0(0,X) - \theta_0 + (e(X))^{-1} \e_t[Y(1)-m_0(1,X) \giv X] \\
& = m_0(1,X)-m_0(0,X) - \theta_0
\end{align*}
and similarly 
\begin{align*}
\e[s(W_{ti}(0);\theta_0,\nu_0,e) \mid X_{ti}] & = m_0(1,X)-m_0(0,X)-\theta_0-\e_t\left[\frac{Y(0)-m_0(0,X)}{1-e(X)} \Bigm| X\right] \\
& = m_0(1,X)-m_0(0,X)-\theta_0-(1-e(X))^{-1} \e_t[Y(0)-m_0(0,X) \giv X] \\
& = m_0(1,X)-m_0(0,X)-\theta_0.
\end{align*}
Subtracting shows~\eqref{eq:potential_score_diff}.
\end{description}
It remains to show that
the out-of-fold estimators $(\hat{m}^{(-k)}(0,\cdot),\hat{m}^{(-k)}(1,\cdot),\hat{e}^{(-k)})$ lie in some set $\ct_N$ with high probability for all sufficiently large $N$,
where this set $\ct_N$ satisfies equations~\eqref{eq:neyman_orthogonality} through~\eqref{eq:s_moment_condition}.
To construct this $\ct_N$,
we see that
by the rate conditions on the nuisance estimators along with equation~\eqref{eq:e_0_N_vs_e_0},
there exists a sequence $\tilde{\delta}_N \downarrow 0$ so that $\|e_{0,N}-e_0\|_{2,P_0^X} \leq \tilde{\delta}_N$ for all $N$.
Furthermore,
with probability approaching 1 as $N\rightarrow \infty$,
these four conditions hold for $z=0,1$ and all folds $k=1,\ldots,K$:
\begin{align*}
\|\hat{m}^{(-k)}(z,\cdot)-m_0(z,\cdot)\|_{2,P_0^X} + \|\hat{e}^{(-k)}-e_{0,N}\|_{2,P_0^X} & \leq \tilde{\delta}_N,\\
\|\hat{m}^{(-k)}(z,\cdot)\|_{2,P_0^X}
\times \|\hat{e}^{(-k)}-e_{0,N}\|_{2,P_0^X} & \leq N^{-1/2} \tilde{\delta}_N,\\
\|\hat{m}^{(-k)}(z,\cdot)-m_0(z,\cdot)\|_{q,P_0^X} & \leq C, \\
\hat{e}^{(-k)}(\cdot) & \in \cf_{\gamma}.
\end{align*}
We then define $\ct_N$ be the set of functions $(m(0,\cdot),m(1,\cdot),e(\cdot))$ in $\ct=\cn_{\aipw} \times \ce$ 
obeying these conditions:
\begin{align*}
    \|m(z,\cdot)-m_0(z,\cdot)\|_{2,P_0^X} + \|e-e_{0,N}\|_{2,P_0^X} & \leq \tilde{\delta}_N, \\
\|m(z,\cdot)\|_{2,P_0^X} \times \|e-e_{0,N}\|_{2,P_0^X} & \leq N^{-1/2} \tilde{\delta}_N, \\
\|m(z,\cdot)-m_0(z,\cdot)\|_{q,P_0^X} & \leq C, \\
e(\cdot) & \in \cf_{\gamma}.
\end{align*}
By construction, 
for all $k=1,\ldots,K$ we have $\Pr((\hat{m}^{(-k)},\hat{e}^{(-k)}) \in \ct_N) \rightarrow 1$ as $N \rightarrow \infty$.

\medskip 
For the remainder of the proof,
we take
$N$ large enough so that $1/2 \leq {\mrd P_{0,N}^X}/{\mrd P_0^X} \leq 2$.
For such $N$, $\|f\|_{2,P_{0,N}^X} \leq \sqrt{2}\|f\|_{2,P_0^X}$ holds for all $f \in L^2(P_0^X)$.
We will show equations~\eqref{eq:neyman_orthogonality} through~\eqref{eq:s_consistency} hold
for a sequence $\delta_N$ that is some constant multiple of $\tilde{\delta}_N$.

\begin{description}
\item[Equation~\eqref{eq:neyman_orthogonality}] Fix $(\nu,e) \in \ct_N$. 
By the calculations and notation
above in the proof of condition~\ref{cond:score_differentiability}
and interchanging differentiation with expectation, we can verify using unconfoundedness that
\[
\frac{\partial}{\partial \lambda} \e_{0,N}[s(W;\theta_0,\nu_0,\lambda(\nu-\nu_0),e_{0,N}+\lambda(e-e_{0,N})]\Big|_{\lambda=0} = \e_{0,N}[f_N'(0)] = 0
\]
so the left-hand side of the Neyman orthogonality condition~\eqref{eq:neyman_orthogonality} is 0. The full calculation is shown in the proof of Theorem 5.1 of~\citet{chernozhukov2018double}.
\item[Equation~\eqref{eq:vanishing_second_derivatives}] Once again, we fix $(\nu,e) \in \ct_N$ and recall the calculations and notations in the proof of condition~\ref{cond:score_differentiability} above.
We see that
\begin{align*}
\Big|\frac{\partial^2}{\partial \lambda^2} \e_{0,N}[s(W;\theta_0,\nu_0,\lambda(\nu-\nu_0),e_{0,N}+\lambda(e-e_{0,N})]\Big| 
=\bigl|\e_{0,N}[f_N''(\lambda)]\bigr|
\end{align*}
and hence
\begin{align*}
\bigl|\e_{0,N}[f_N''(\lambda)]\bigr|
&\le\frac{2}{\gamma^3}\Bigl(\bigl|\e_{0,N}[(Y(1)-m_{\lambda}(1,X))(e(X)-e_{0,N}(X))^2]\bigr| \\
&\phe\ +\bigl|\e_{0,N}[(Y(0)-m_{\lambda}(0,X))(e_{0,N}(X)-e(X))^2]\bigr|\Bigr) \\
&\phe\ + \frac{2}{\gamma^2}\Bigl(|\e_{0,N}[(m_0(1,X)-m(1,X))(e(X)-e_0(X))]|\\
&\phe\ + \bigl|\e_{0,N}[(m_0(0,X)-m(0,X))(e(X)-e_0(X))]\bigr|\Bigr).
\end{align*}
By Cauchy-Schwarz and the definition of $\ct_N$ we have for $z=0,1$ that
\begin{align*}
&\phe\ \e_{0,N}\bigl[|m_0(z,X)-m(z,X)| \times |e(X)-e_{0,N}(X)|\bigr]\\
&\leq \|m_0(z,\cdot)-m(z,\cdot)\|_{2,P_{0,N}^X} \times \|e-e_{0,N}\|_{2,P_{0,N}^X} \\
&\leq 2N^{-1/2}\tilde{\delta}_N.
\end{align*}
Furthermore for $z=0,1$ we have $\e_{0,N}\bigl[(Y(z)-m_0(z,X))(e(X)-e_{0,N}(X))^2\bigr]=0$ by conditioning on $X$.
Hence, for all $\lambda \in [0,1]$, 
\begin{align*}
&\phe\ \bigl|\e_{0,N}[(Y(z)-m_{\lambda}(z,X))(e(X)-e_{0,N}(X))^2]\bigr| \\
& = \bigl|\e_{0,N}[\lambda(m(z,X)-m_0(z,X))(e(X)-e_{0,N}(X))^2]\bigr| \\
& \leq \lambda \e_{0,N}\bigl[|m(z,X)-m_0(z,X)||e(X)-e_{0,N}(X)|\bigr]  \\
& \leq 2N^{-1/2}\tilde{\delta}_N
\end{align*}
where the first inequality uses the fact that $|e(X)-e_{0,N}(X)| \leq 1$.
Taking suprema over $(\nu,e) \in \ct_N$ and $\lambda \in [0,1]$, we get
\[
\sup_{(\nu,e) \in \ct_N} \sup_{\lambda \in [0,1]} \Big|\frac{\partial^2}{\partial \lambda^2} \e_{0,N}[s(W;\theta_0,\nu_0,\lambda(\nu-\nu_0),e_{0,N}+\lambda(e-e_{0,N})]\Big| \leq N^{-1/2}\left(\frac{8}{\gamma^3}+\frac{8}{\gamma^2}\right)\tilde{\delta}_N
\]
which shows equation~\eqref{eq:vanishing_second_derivatives}.

\item[Equation~\eqref{eq:s_a_consistency}] For any $(\nu,e) \in \ct_N$, trivially
\[
\e_0\bigl[|s_a(W;\nu,e)-s_a(W;\nu_0,e_0)|^2\bigr]=\e_0\bigl[|-1-(-1)|^2\bigr]=0.
\]
\item[Equation~\eqref{eq:s_consistency}]
We fix $(\nu,e) \in \ct_N$ and write
\begin{align*}
s(W;\theta_0,\nu,e)-s(W;\theta_0,\nu_0,e_0) & = \left(1-\frac{Z}{e_0(X)}\right)(m(1,X)-m_0(1,X))\\
&\phe\ +\left(1-\frac{1-Z}{1-e_0(X)}\right)(m_0(0,X)-m(0,X)) \\
&\phe\ + Z(Y-m(1,X))(e(X)^{-1}-e_0(X)^{-1}) \\
&\phe\ - (1-Z)(Y-m(0,X))((1-e(X))^{-1}-(1-e_0(X))^{-1}).
\end{align*}
It now suffices to show that each of the summands has asymptotically vanishing second moment.
The definition of $\ct_N$ ensures that
\begin{align*}
\e_0\biggl[\biggl(1-\frac{Z}{e_0(X)}\biggr)^2(m(1,X)-m_0(1,X))^2\biggr] & \leq (1+\gamma^{-1})^2\tilde{\delta}_N^2\quad\text{and} \\
\e_0\biggl[\biggl(1-\frac{1-Z}{1-e_0(X)}\biggr)^2(m_0(0,X)-m(0,X))^2\biggr] & \leq (1+\gamma^{-1})^2\tilde{\delta}_N^2.
\end{align*}
Next 
\begin{align*}
&\phe\ \,\e_0\bigl[Z^2(Y(1)-m(1,X))^2(e(X)^{-1}-e_0(X)^{-1})^2\bigr] \\
& \leq \gamma^{-4} \e_0\bigl[(Y(1)-m(1,X))^2(e(X)-e_0(X))^2\bigr] \\
& = \gamma^{-4}\e_0\bigl[(e(X)-e_0(X))^2v_0(1,X)\bigr] \\
&\phe\ + \gamma^{-4}\e_0\bigl[(e(X)-e_0(X))^2(m(1,X)-m_0(1,X))^2\bigr] \\
& \leq C\gamma^{-4}\|e-e_0\|_{2,P_0^X}^2 + \gamma^{-4}\|m(1,\cdot)-m_0(1,\cdot)\|_{2,P_0^X}^2 \\
& \leq (2C+1)\gamma^{-4}\tilde{\delta}_N^2.
\end{align*}
By a similar computation
\[
\e_0[(1-Z)^2(Y(0)-m(0,X))^2((1-e(X))^{-1}-(1-e_0(X))^{-1})^2] \leq (2C+1)\gamma^{-4}\tilde{\delta}_N^2.
\]
This completes the proof of~\eqref{eq:s_consistency}.

\item[Equation~\eqref{eq:s_a_moment_condition}]
For any $(\nu,e) \in \ct_N$, trivially $\e\bigl[|s_a(W(z);\nu,e)|^q\bigr]=1$ for $z=0,1$. 
\item[Equation~\eqref{eq:s_moment_condition}]
Fix $(\nu,e) \in \ct_N$. We write
\begin{align}\label{eq:sw1parts}
s(W(1);\theta_0,\nu,e) = m(1,X)-m(0,X)-\theta_0 + \frac{Y(1)-m(1,X)}{e(X)}.
\end{align}
Since $(\e_0[|Y(z)|^q])^{1/q} \leq C$ for $z=0,1$,
\[
\e_0\bigl[|m_0(z,X)|^q\bigr] = \e\bigl[|\e_{0}[Y(z) \mid X]|^q\bigr] 
\leq \e_0\bigl[|Y(z)|^q\bigr] \leq C^q
\]
and then by the definition of $\theta_0$
\begin{align*}
|\theta_0| &= \bigl(|\e_0[m_0(1,X)-m_0(0,X)]|^q\bigr)^{1/q} \\
&\leq \bigl(\e_0\bigl[|m_0(1,X)-m_0(0,X)|^q\bigr]\bigr)^{1/q} \\
&\leq 2C.
\end{align*}
With $\|m(z,\cdot)-m_0(z,\cdot)\|_{q,P_0^X} \leq C$ by definition of $\ct_N$,
we have 
$$\|m(z,\cdot)\|_{q,P_0^X} \leq \|m(z,\cdot)-m_0(z,\cdot)\|_{q,P_0^X} + \|m_0(z,\cdot)\|_{q,P_0^X} \leq 2C$$ 
for $z=0,1$.
Therefore
\begin{align*}
\sup_{(\nu,e) \in \ct_N}\bigl(\e_0\bigl[|s(W(1);\theta_0,\nu,e)|^q\bigr]\bigr)^{1/q} & \leq \sup_{(\nu,e) \in \ct_N} \|m(1,\cdot)\|_{q,P_0^X} + \|m(0,\cdot)\|_{q,P_0^X} + |\theta_0| \\
&\phe\ + \sup_{(\nu,e) \in \ct_N} \gamma^{-1}\bigl((\e_0[|Y(1)|^q])^{1/q}+\|m(1,\cdot)\|_{q,P_0^X}\bigr) \\
& \leq 6C + \frac{3C}{\gamma}.
\end{align*}
Similarly,
$\sup_{(\nu,e) \in \ct_N}\bigl(\e_0\bigl[|s(W(0);\theta_0,\nu,e)|^q\bigr]\bigr)^{1/q} \leq 6C+{3C}/{\gamma}$ 
as well.
\item[Equation~\eqref{eq:S_t_neg_k}] 
We begin by considering $S_t^{(-k)}(z)$ for $z=1$.
For each $k=1,\ldots,K$ we have
\begin{align*}
s(W_{ti}(1);\hat{\nu}^{(-k)},\hat{e}^{(-k)})-s(W_{ti}(1);\nu_0,e_{0,N}) & = \bigl(1-\hat{e}^{(-k)}(X)^{-1}\bigr)\bigl(\hat{m}^{(-k)}(1,X_{ti})-m_0(1,X_{ti})\bigr) \\
&\phe\ +\bigl(Y_{ti}(1)-m_0(1,X_{ti})\bigr)\bigl(\hat{e}^{(-k)}(X_{ti})^{-1}-e_0(X_{ti})^{-1}\bigr) \\
&\phe\ -\bigl(\hat{m}^{(-k)}(0,X_{ti})-m_0(0,X_{ti})\bigr).
\end{align*}
Taking the conditional expectation given $\cs^{(-k)}$ and $X_{ti}$ yields
\begin{align*}
&\phe\ \e\bigl[s(W_{ti}(1);\hat{\nu}^{(-k)},\hat{e}^{(-k)})-s(W_{ti}(1);\nu_0,e_{0,N}) \mid \cs^{(-k)},X_{ti}\bigr] \\
 &= \bigl(1-\hat{e}^{(-k)}(X)^{-1}\bigr)(\hat{m}^{(-k)}(1,X_{ti})-m_0(1,X_{ti})) - (\hat{m}^{(-k)}(0,X_{ti})-m_0(0,X_{ti})).
\end{align*}
Then for $z=1$ we get
\begin{align*}
S_t^{(-k)}(1) & \leq \sqrt{\frac{1}{n_{t,k}}\sum_{i:(t,i) \in \ci_k} \bigl(1-\hat{e}^{(-k)}(X_{ti})^{-1}\bigr)^2\bigl(\hat{m}^{(-k)}(1,X_{ti})-m_0(1,X_{ti})\bigr)^2} \\
&\phe\ + \sqrt{\frac{1}{n_{t,k}}\sum_{i:(t,i) \in \ci_k}(\hat{m}^{(-k)}(0,X_{ti})-m_0(0,X_{ti}))^2} \\
& = O_p\bigl(\|\hat{m}^{(-k)}(1,\cdot)-m(1,\cdot)\|_{2,P_0^X} + \|\hat{m}^{(-k)}(0,\cdot)-m(0,\cdot)\|_{2,P_0^X}\bigr) \\
& = o_p(N^{-1/4}).
\end{align*}
The first equality above follows from
$\left(1-\hat{e}^{(-k)}(X_{ti})\right)^2 \leq (1+\gamma^{-1})^2 < \infty$,
followed by 
an application of the conditional Markov inequality using
\begin{align*}
\e\biggl[\frac{1}{n_{t,k}}\sum_{i:(t,i) \in \ci_k}(\hat{m}^{(-k)}(z,X_{ti})-m_0(z,X_{ti}))^2 \Bigm| \cs^{(-k)} \biggr] & = \|\hat{m}^{(-k)}(z,\cdot)-m_0(z,\cdot)\|_{2,P_t^X}^2 \\
& \leq \kappa_t^{-1}  \|\hat{m}^{(-k)}(z,\cdot)-m_0(z,\cdot)\|_{2,P_0^X}^2
\end{align*}
for $z=0,1$.
By an identical argument, $S_t^{(-k)}(0)=o_p(N^{-1/4})$,
establishing~\eqref{eq:S_t_neg_k}.
\end{description}
Having shown all conditions of Assumption~\ref{assump:dml},
the conclusion of Corollary~\ref{cor:AIPW} follows by Theorem~\ref{thm:batch_clt}.
\subsection{Proof of Corollary~\ref{cor:pl}:
CLT for feasible $\hat{\theta}_{\epl}$ in a CSBAE}
\label{proof:cor:pl}

Corollary~\ref{cor:pl},
which shows the feasible estimator $\hat{\theta}_{\epl}$ satisfies a CLT for estimating $\theta_{0,\pl}$ in a CSBAE,
holds under the numbered generalizations at the end of Appendix~\ref{app:nonstationary},
subject to the additional requirement that the mean and variance functions are stationary across batches,
as in Appendix~\ref{proof:cor:pl_oracle_clt}.
We prove this more general result using the same structure as the proof of Corollary~\ref{cor:AIPW} in Appendix~\ref{proof:cor:AIPW}.

The proof of Corollary~\ref{cor:pl_oracle_clt} in Appendix~\ref{proof:cor:pl_oracle_clt} shows that Assumption~\ref{assump:gen_identification} is satisfied with
$\theta_0=\theta_{0,\pl}$,
$s(\cdot)=s(\cdot)_{\epl}$,
$\nu_0=\nu_{0,\epl} \in \cn=\cn_{\epl}$,
and $\gamma=0$.
Then it suffices to show the remaining conditions of Assumption~\ref{assump:dml}
to complete the proof,
in view of Theorem~\ref{thm:batch_clt}.
Throughout the remainder of this proof we let $C_0$ be a generic positive finite constant;
possibly depending on $c$ and $C$ in Assumption~\ref{assump:pl_regularity};
different appearances of $C_0$ may correspond to different constants.

\begin{description}
\item[Condition~\ref{cond:regularity}] Invertibility of $\e_0[s_a(W;\nu_0,e_0)]$ and existence of $\e_0[\|s(W;\theta_0,\nu_0,e_0)\|^2]$ follow from Assumption~\ref{assump:pl_regularity},
as shown in the proof of Corollary~\ref{cor:pl_oracle_clt}.
\item[Condition~\ref{cond:score_differentiability}] Fix $(\nu,e) \in \ct$.
Recall that the weight 
function in our estimator $\hat{\theta}_{\epl}$
is $w(X,\nu,e)= (v(0,x)e(x)+v(1,x)(1-e(x)))^{-1}$.
We compute
\begin{align*}
\e_{0,N}&[s(W;\theta_0,\nu_0+\lambda(\nu-\nu_0),e_{0,N}+\lambda(e-e_{0,N}))] \\
& = \e_{0,N}[w(X;\nu_{\lambda},e_{\lambda,N})(Z-e_{\lambda,N}(X))(Y-m_{\lambda}(0,X)-Z\psi(X)^{\top}\theta_0)\psi(X)] \\
& = \lambda \e_{0,N}[w(X;\nu_{\lambda},e_{\lambda,N})(Z-e_{\lambda,N}(X))(m_0(0,X)-m(0,X))\psi(X)] \\
& = \e_{0,N}[f_N(\lambda)]
\end{align*}
where the last two equalities follow by conditioning on $(X,Z)$ then just $X$,
and we have defined
\begin{align*}
\nu_{\lambda}(\cdot) & =\nu_0(\cdot)+\lambda(\nu(\cdot)-\nu_0(\cdot)), \\
e_{\lambda,N}(\cdot) & = e_{0,N}(\cdot)+\lambda(e(\cdot)-e_{0,N}(\cdot))\quad\text{and} \\
f_N(\lambda) & = \lambda^2w(X;\nu_{\lambda},e_{\lambda,N})(e_{0,N}(X)-e(X))(m_0(0,X)-m(0,X))\psi(X)\\
&=\lambda^2w(X;\nu_\lambda,e_{\lambda,N})g_N(X)
\end{align*}
for $g_N(X)=(e_{0,N}(X)-e(X))(m_0(0,X)-m(0,X))\psi(X)$.
We compute
\begin{align*}
\frac{\partial}{\partial \lambda} w(X;\nu_{\lambda},e_{\lambda,N}) & = -w^2(X;\nu_{\lambda},e_{\lambda,N})\Delta_N(\lambda,X)  \\ 
\frac{\partial}{\partial \lambda} w^2(X;\nu_{\lambda},e_{\lambda,N}) & = -2w^3(X;\nu_{\lambda},e_{\lambda,N})\Delta_N(\lambda,X)
\end{align*} 
where
\begin{align*}
\Delta_N(\lambda,X) & = \frac{\partial}{\partial \lambda} \bigl(v_{\lambda}(0,X)e_{\lambda,N}(X)+v_{\lambda}(1,X)(1-e_{\lambda,N}(X))\bigr) \\
& = (1-e_{\lambda}(X))(v(1,X)-v_0(1,X))+e_{\lambda}(X)(v(0,X)-v_0(0,X)) \quad\text{and} \\
&\phe\ +(e(X)-e_{0,N}(X))(v_{\lambda}(0,X)-v_{\lambda}(1,X))
\end{align*}
Then
\begin{align*}
f_N'(\lambda) & =  (-\lambda^2\Delta_N(\lambda,X)w^2(X;\nu_{\lambda},e_{\lambda,N}) + 2\lambda w(X;\nu_{\lambda},e_{\lambda,N}))g_N(X) \\
f_N''(\lambda) & = (2\lambda^2(\Delta_N(\lambda,X))^2w^3(X;\nu_{\lambda},e_{\lambda,N})-\lambda^2\Delta_N^{(2)}(\lambda,X)w^2(X;\nu_{\lambda},e_{\lambda,N}))g_N(X) \\
& +(2w(X;\nu_{\lambda},e_{\lambda,N})-4\lambda\Delta_N(\lambda,X)w^2(X;\nu_{\lambda},e_{\lambda,N}))g_N(X)
\end{align*}
where
\begin{align*}
\Delta_N^{(2)}(\lambda,X) & = \frac{\partial}{\partial \lambda} \Delta_N(\lambda,X) \\
& = 2(e(X)-e_{0,N}(X))(v(0,X)-v_0(0,X)+v_0(1,X)-v(1,X))
\end{align*}
We conclude that for sufficiently small $\epsilon>0$,
\begin{align*}
\sup_{\lambda \in (-\epsilon,1+\epsilon)}   |\Delta_N(\lambda,X)| & \leq \sup_{\lambda \in (-\epsilon,1+\epsilon)} |1-e_{\lambda}(X))||v(1,X)-v_0(1,X)|+|e_{\lambda}(X)||v(0,X)-v_0(0,X)| \\
&\phe\ + \sup_{\lambda \in (-\epsilon,1+\epsilon)} |e(X)-e_{0,N}(X)||v_{\lambda}(0,X)-v_{\lambda}(1,X)| \\
& \leq C_0 \quad\text{and}\\
\sup_{\lambda \in (-\epsilon,1+\epsilon)} |\Delta_N^{(2)}(\lambda,X)| & \leq C_0.
\end{align*}
Taking $\epsilon$ smaller if necessary,
we can ensure
\[
\sup_{\lambda \in (-\epsilon,1+\epsilon)} w(X;\nu_{\lambda},e_{\lambda,N}) = \sup_{\lambda \in (-\epsilon,1+\epsilon)} 
 (v_{\lambda}(0,X)e_{\lambda,N}(X)+v_{\lambda}(1,X)(1-e_{\lambda,N}(X)))^{-1} < 2c^{-1}
\]
and then
\begin{align*}
\sup_{\lambda \in (-\epsilon,1+\epsilon)} \|f_N(\lambda)\| & \leq  C_0\|g_n(X)\| \leq C_0C|e_{0,N}(X)-e(X)||m_0(0,X)-m(0,X)| \\
\sup_{\lambda \in (-\epsilon,1+\epsilon)} \|f_N'(\lambda)\| & \leq  C_0\|g_n(X)\| \leq C_0C|e_{0,N}(X)-e(X)||m_0(0,X)-m(0,X)| \\
\sup_{\lambda \in (-\epsilon,1+\epsilon) } \|f_N''(\lambda)\| & \leq  K\|g_n(X)\| \leq C_0C|e_{0,N}(X)-e(X)||m_0(0,X)-m(0,X)|
\end{align*} 
By the Leibniz integral rule,
the mapping $\lambda \mapsto \e_{0,N}[s(W;\theta_0,\nu_0+\lambda(\nu-\nu_0),e_{0,N}+\lambda(e-e_{0,N}))]$
then has second derivative $\e_{0,N}[f_N''(\lambda)]$ on $[0,1]$.
With $f_N''(\lambda)$ continuous on $[0,1]$,
we conclude by dominated convergence that 
$\e_{0,N}[f''_N(\lambda)]$
is continuous as well.
\item[Condition~\ref{cond:potential_score_equations}]
For $z=0,1$,
\[
\e_0[Y(z) \giv X]=\e_0[Y \giv X,Z=z]=m_0(0,X)+z\psi(X)^{\top}\theta_0. \]
Fix $e(\cdot) \in \cf_0$. 
For each $t=1,\ldots,T$, $i=1,\ldots,N_t$ we have
\begin{align*}
\e&[s(W_{ti}(1);\theta_0,\nu_0,e) \mid X_{ti}] \\
& = w(X_{ti};\nu_0,e)(1-e(X))\bigl(\e_0[Y(1) \giv X=X_{ti}]-m_0(0,X_{ti})-\psi(X_{ti})^{\top}\theta_0\bigr)\psi(X_{ti}) \\
& = 0 
\intertext{and similarly}
\e&[s(W_{ti}(0);\theta_0,\nu_0,e) \mid X_{ti}] \\
& = -w(X_{ti};\nu_0,e)e(X)\bigl(\e_0[Y(0) \giv X=X_{ti}]-m_0(0,X_{ti})\bigr)\psi(X_{ti}) \\
& = 0.
\end{align*}
These are equal so their difference is 0
as required for~\eqref{eq:potential_score_diff}.
\end{description}

Now we show the out-of-fold estimates $(\hat{m}^{(-k)}(0,\cdot), \hat{v}^{(-k)}(0,\cdot),\hat{v}^{(-k)}(1,\cdot),\hat{e}^{(-k)}(\cdot))$ lie in a set $\ct_N$ with high probability for all sufficiently large $N$,
where this $\ct_N$ satisfies equations~\eqref{eq:neyman_orthogonality} through~\eqref{eq:s_moment_condition}.
By the rate and regularity conditions on the nuisance estimators in Corollary~\ref{cor:pl},
there exists a sequence $\tilde{\delta}_N \downarrow 0$ and constants $0<c<C<\infty$ so that with probability approaching 1 as $N \rightarrow \infty$, we have
\begin{align*}
\|\hat{m}^{(-k)}(0,\cdot) - m_0(0,\cdot)\|_{2,P_0^X} + \|\hat{v}(z,\cdot)-v_0(z,\cdot)\|_{2,P_0^X} + \|\hat{e}^{(-k)}-e_{0,N}\|_{2,P_0^X} & \leq \tilde{\delta}_N \\
\|\hat{m}^{(-k))}(0,\cdot)-m_0(0,\cdot\|_{2,P_0^X}\|\hat{e}^{(-k)}-e_{0,N}\|_{2,P_0^X} & \leq N^{-1/2}\tilde{\delta}_N \\
\|\hat{m}^{(-k)}(0,\cdot)-m_0(0,\cdot)\|_{q,p_0^X} & \leq C \\
\hat{v}^{(-k)}(z,x) & \geq c, \quad z=0,1 
\end{align*}
for all folds $k=1,\ldots,K$.
Then let $\ct_N$ be the set of functions $(m(0,\cdot),v(0,\cdot),v(1,\cdot),e(\cdot))$ in $\cn_{\epl} \times \cf_0$ for which 
\begin{align*}
\|m(0,\cdot) - m_0(0,\cdot)\|_{2,P_0^X} + \|v(z,\cdot)-v_0(z,\cdot)\|_{2,P_0^X} + \|e-e_{0,N}\|_{2,P_0^X} & \leq \tilde{\delta}_N \\
\|m(0,\cdot)-m_0(0,\cdot\|_{2,P_0^X}\|e-e_{0,N}\|_{2,P_0^X} & \leq N^{-1/2}\tilde{\delta}_N \\
\|m(0,\cdot)-m_0(0,\cdot)\|_{q,p_0^X} & \leq C \\
v(z,x) & \geq c, \quad z=0,1. 
\end{align*}
By construction,
$\Pr((\hat{m}^{(-k)}(0,\cdot),\hat{v}^{(-k)}(0,\cdot),\hat{v}^{(-k)}(1,\cdot),\hat{e}^{(-k)}(\cdot)) \in \mathcal{T}_N) \rightarrow 1$ as $N \rightarrow \infty$ for all $k=1,\dots,K$.
Now we show that $\ct_N$ satisfies equations~\eqref{eq:neyman_orthogonality} through~\eqref{eq:s_moment_condition}
for all $N$ large enough to ensure that $1/2 \leq {\mrd P_{0,N}^X}/{\mrd P_0^X} \leq 2$.
For such $N$,  $\|f\|_{2,P_{0,N}^X} \leq \sqrt{2}\|f\|_{2,P_0^X}$ 
holds for all $f \in L^2(P_0^X)$.
As in the proof of Corollary~\ref{cor:AIPW},
we will show equations~\eqref{eq:neyman_orthogonality} through~\eqref{eq:s_consistency} hold
for a sequence $\delta_N$ that is some constant multiple of $\tilde{\delta}_N$.
\begin{description}
\item[Equation~\eqref{eq:neyman_orthogonality}]
Using the notation from our proof of condition~\ref{cond:score_differentiability} above, we see $f_N'(0)=0$ with probability 1, so for any $(\nu,e) \in \ct_N$ we have
\[
\frac{\partial}{\partial \lambda} \e_{0,N}[s(W;\theta_0,\nu_0+\lambda(\nu-\nu_0),e_{0,N}+\lambda(e-e_{0,N}))]\Big|_{\lambda=0} = \e_{0,N}[f_N'(0)]=0
\]
which shows the left-hand side of~\eqref{eq:neyman_orthogonality} is identically zero.
\item[Equation~\eqref{eq:vanishing_second_derivatives}] Once again we recall the notation and calculations from the proof of condition~\ref{cond:score_differentiability} above.
For each $\lambda \in [0,1]$ and $(\nu,e) \in \ct_N$ we see
\begin{align*}
\bigg\|\frac{\partial^2}{\partial \lambda^2} \e_{0,N}\bigl[s& \bigl(W;\theta_0,\nu_0+\lambda(\nu-\nu_0),e_{0,N}+\lambda(e-e_{0,N})\bigr)\bigr]\bigg\| = \bigl\|\e_{0,N}[f_N''(\lambda)]\bigr\|
\end{align*}
which is no larger than
\begin{align*}
\e_{0,N}\biggl[\,\sup_{\lambda \in [0,1]} \|f_N''(\lambda)\|\biggr]
\leq C_0\|e-e_{0,N}\|_{2,P_0^X}\|m(0,\cdot)-m_0(0,\cdot)\|_{2,P_0^X} 
\leq C_0N^{-1/2}\tilde{\delta}_N
\end{align*}
by the definition of $\ct_N$.
\item[Equation~\eqref{eq:s_a_consistency}]
Recall that $s_{\epl,a}(W;\nu,e)=-w(X;\nu,e(X))Z(Z-e)\psi(X)\psi(X)^\top$.
Fix $(\nu,e) \in \ct_N$.
We compute
\begin{align*}
s_a(W;\nu,e)-s_a(W;\nu_0,e_0) & = \bigl(w(X;\nu_0,e_0)(Z-e_0(X))-w(X;\nu,e)(Z-e(X))\bigr)Z\psi(X)\psi(X)^{\top} \\
& = w(X;\nu_0,e_0)(e(X)-e_0(X))Z\psi(X)\psi(X)^{\top} \\
&\phe\ + (w(X;\nu_0,e_0)-w(X;\nu,e))(Z-e(X))Z\psi(X)\psi(X)^{\top}.
\end{align*}
Hence, using $\|\psi(X)\psi(X)^{\top}\| \leq C^2$,
\begin{align*}
&\phe\ \e_0\bigl[\|s_a(W;\nu,e)-s_a(W;\nu_0,e_0)\|^2\bigr]^{1/2} \\
& \leq C^2\e_0\bigl[(w(X;\nu_0,e_0)Z(e(X)-e_0(X)))^2\bigr]^{1/2} \\
&\phe\ 
+ C^2\e_0\bigl[(w(X;\nu_0,e_0)-w(X;\nu,e))^2(Z-e(X))^2Z^2\bigr]^{1/2}.
\end{align*}.
To bound the right-hand side,
first note that
\[
\Big(\e_0\bigl[(w(X;\nu_0,e_0)Z(e(X)-e_0(X)))^2\bigr]\Big)^{1/2}  
\leq c^{-1}\|e-e_0\|_{2,P_0^X} \leq c^{-1}\tilde{\delta}_N.
\]
Next
\begin{align*}
&\phe\ \, |w(X;\nu,e)-w(X;\nu_0,e_0)| \\
& \leq c^{-2}\bigl|v(0,X)e(X)+v(1,X)(1-e(X))-v_0(0,X)e_0(X)-v_0(1,X)(1-e_0(X))\bigr| \\
& \leq c^{-2}\bigl(|v(0,X)-v_0(0,X)| \cdot e(X) + v_0(0,X) \cdot |e(X)-e_0(X)|\bigr) \\
&\phe\ + c^{-2}\bigl(|v(1,X)-v_0(1,X)| \cdot (1-e(X)) + v_0(1,X) \cdot |e(X)-e_0(X)|\bigr) \\
& \leq c^{-2}\bigl(|v(0,X)-v_0(0,X)| + |v(1,X)-v_0(1,X)| + 2C|e(X)-e_0(X)|\bigr)
\end{align*}
and so
\begin{align*}
&\phe\ \,\e_0 \bigl[(w(X;\nu_0,e_0)-w(X;\nu,e))^2(Z-e(X))^2Z^2\bigr]^{1/2} \\
& \leq c^{-2}\bigl(\|v(0,\cdot)-v_0(0,\cdot)\|_{2,P_0^X} + \|v(1,\cdot)-v_0(1,\cdot)\|_{2,P_0^X} + 2C\|e-e_0\|_{2,P_0^X}\bigr) \\
& \leq C_0\tilde{\delta}_N,
\end{align*}
which shows~\eqref{eq:s_a_consistency}.
\item[Equation~\eqref{eq:s_consistency}]
Once again, fix $(\nu,e) \in \ct_N$. Then
\begin{align*}
&\phe\ s(W;\theta_0,\nu,e)-s(W;\theta_0,\nu_0,e_0) \\
& = [w(X;\nu,e)(Z-e(X))-w(X;\nu_0,e_0)(Z-e_0(X))][Y-m_0(0,X)-Z\psi(X)^{\top}\theta_0]\psi(X) \\
&\phe\ + w(X;\nu,e)(Z-e(X))(m_0(0,X)-m(0,X))\psi(X)
\end{align*}
so that
\begin{align*}
&\phe\ \,\e_0\bigl[\|s(W;\theta_0,\nu,e)-s(W;\theta_0,\nu_0,e_0)\|^2\bigr]^{1/2} \\
& \leq C\Bigl(\e_0\bigl[\bigl(w(X;\nu,e)(Z-e(X))-w(X;\nu_0,e_0)(Z-e_0(X))\bigr)^2\bigl(Y-m_0(0,X)-Z\psi(X)^{\top}\theta_0\bigr)^2\bigr]\Bigr)^{1/2} \\
&\phe\ + C\bigl(\e_0\bigl[(w(X;\nu,e)(Z-e(X)))^2(m_0(0,X)-m(0,X))^2\bigr]\bigr)^{1/2}.
\end{align*}
By conditioning on $(X,Z)$ we see
that
\begin{align*}
\e_0&\bigl[|w(X;\nu,e)(Z-e(X))-w(X;\nu_0,e_0)(Z-e_0(X))|^2|Y-m_0(0,X)-Z\psi(X)^{\top}\theta_0|^2\bigr]^{1/2} \\
& = \e_0\bigl[v_0(Z,X)|w(X;\nu,e)(Z-e(X))-w(X;\nu_0,e_0)(Z-e_0(X))|^2\bigr]^{1/2} \\
& \leq C^{1/2}\e_0\bigl[|w(X;\nu,e)(Z-e(X))-w(X;\nu_0,e_0)(Z-e_0(X))|^2\bigr]^{1/2}.
\end{align*}
Now we write
\begin{align*}
\bigl|&w(X;\nu,e)(Z-e(X)) -w(X;\nu_0,e_0)(Z-e_0(X))\bigr| 
 \\
 & \leq |w(X;\nu,e)-w(X;\nu_0,e_0)||Z-e(X)| + |w(X;\nu_0,e_0)||e_0(X)-e(X)| \\
 & \leq c^{-2}(|v(0,X)-v_0(0,X)| + |v(1,X)-v_0(1,X)| + 2C|e(X)-e_0(X)|) + c^{-1}|e_0(X)-e(X)|
\end{align*}
so that
\begin{align*}
\e_0&\bigl[|w(X;\nu,e)(Z-e(X))-w(X;\nu_0,e_0)(Z-e_0(X))|^2|Y-m_0(0,X)-Z\psi(X)^{\top}\theta_0|^2\bigr]^{1/2} \\
& \leq C^{1/2}\bigl[c^{-2}(\|v(0,\cdot)-v_0(0,\cdot)\|_{2,P_0^X} + \|v(1,\cdot)-v_0(1,\cdot)\|_{2,P_0^X}+2C\|e-e_0\|_{2,P_0^X})+c^{-1}\|e-e_0\|_{2,P_0^X}\bigr] \\
& \leq C_0\tilde{\delta}_N
\end{align*}
With $(\nu,e) \in \ct_N$ arbitrary and
\[
\e_0\bigl[|w(X;\nu,e)(Z-e(X))|^2|m_0(0,X)-m(0,X)|^2\bigr]^{1/2} \leq c^{-1}\|m(0,\cdot)-m_0(0,\cdot)\|_{2,P_0^X} \leq C_0\tilde{\delta}_N
\]
we have shown~\eqref{eq:s_consistency}.

\item[Equation~\eqref{eq:s_a_moment_condition}] Fix $(\nu,e) \in \ct_N$.
For $z=0,1$ we have
\[
\sup_{z \in \{0,1\}} \|s_a(W(z);\nu,e)\|^q = \sup_{z \in \{0,1\}} \|w(X;\nu,e)z(z-e(X))\psi(X)\psi(X)^{\top}\|^q \leq \left(\frac{C^2}{c}\right)^q
\]
which immediately shows that
\[ 
\e_0\bigl[\|s_a(W(z);\nu,e)\|^q\bigr]^{1/q} \leq C_0.
\]

\item[Equation~\eqref{eq:s_moment_condition}] For any $(\nu,e) \in \ct_N$ we have
\[
\e_0\bigl[\|s(W(z);\theta_0,\nu,e)\|^q\bigr]^{1/q} \leq \frac{C}{c}\e\bigl[|Y(z)-m(0,X)-z\psi(X)^{\top}\theta_0|^q\bigr]^{1/q} \leq C_0
\]
since
\begin{align*}
\e_0&\bigl[|Y(z)-m(0,X)-z\psi(X)^{\top}\theta_0|^q\bigr]^{1/q} \\
& \leq \e_0\bigl[|Y(z)|^q\bigr]^{1/q} + \e_0\bigl[|m(0,X)|^q\bigr]^{1/q} + \e_0\bigl[|m(1,X)|^q\bigr]^{1/q}.
\end{align*}
We have $(\e_0[|Y(z)|^q])^{1/q} \leq C_0$ by Assumption~\ref{assump:pl_regularity},
hence 
for $z=0,1$
\begin{align*}
(\e_0[|m(z,X)|^q])^{1/q} & \leq (\e_0[|m_0(z,X)|^q])^{1/q} + \|m(z,\cdot)-m_0(z,\cdot)\|_{q,P_0^X} \\
& \leq (\e_0[|Y(z)|^q])^{1/q} + \|m(z,\cdot)-m_0(z,\cdot)\|_{q,P_0^X} \\
& \leq C_0
\end{align*}
which shows~\eqref{eq:s_moment_condition}.
\item[Equation~\eqref{eq:S_t_neg_k}]
Fix a fold $k \in \{1,\ldots,K\}$.
For each $z=0,1$ and $(t,i) \in \ci_k$ we have
\begin{align*}
s&(W_{ti}(z);\hat{\nu}^{(-k)},\hat{e}^{(-k)})-s(W_{ti}(z);\nu_0,e_{0,N}) \\
& = \bigl[w(X_{ti};\hat{\nu}^{(-k)},\hat{e}^{(-k)})(z-\hat{e}^{(-k)}(X_{ti}))-w(X_{ti};\nu_0,e_{0,N})(z-e_{0,N}(X_{ti}))\bigr] \\
& \phe\times [Y_{ti}(z)-m_0(0,X_{ti})-z\psi(X_{ti})^{\top}\theta_0]\psi(X_{ti}) \\
& \phe+ w(X_{ti};\hat{\nu}^{(-k)},\hat{e}^{(-k)})(z-\hat{e}^{(-k)}(X_{ti}))(m_0(0,X_{ti})-\hat{m}^{(-k)}(0,X_{ti}))\psi(X_{ti}).
\end{align*}
Taking the conditional expectation given $\cs^{(-k)},X_{ti}$ gives
\begin{align*}
\Big\|\e&\bigl[s(W_{ti}(z);\hat{\nu}^{(-k)},\hat{e}^{(-k)})-s(W_{ti}(z);\nu_0,e_{0,N}) \giv \cs^{(-k)},X_{ti}\bigr]\Big\|^2 \\
& = w^2(X_{ti};\hat{\nu}^{(-k)},\hat{e}^{(-k)})(z-\hat{e}^{(-k)}(X_{ti}))^2(\hat{m}^{(-k)}(0,X_{ti})-m_0(0,X_{ti}))^2\|\psi(X_{ti})\|^2 \\
& \leq \frac{C^2}{c^2}(\hat{m}^{(-k)}(0,X_{ti})-m_0(0,X_{ti}))^2
\end{align*}
and so
\[
S_t^{(-k)}(z) \leq C_0 \sqrt{\frac{1}{n_{t,k}} \sum_{(t,i) \in \ci_k} (\hat{m}^{(-k)}(0,X_{ti})-m_0(0,X_{ti}))^2} = o_p(N^{-1/4})
\]
in view of conditional Markov's inequality, as
\begin{align*}
\e\biggl[\frac{1}{n_{t,k}} \sum_{(t,i) \in \ci_k} (\hat{m}^{(-k)}(0,X_{ti})-m_0(0,X_{ti}))^2 \Bigm| \cs^{(-k)}\biggr] & = \|\hat{m}^{(-k)}(0,\cdot)-m_0(0,\cdot)\|_{2,P_t^X} \\
& \leq \kappa_t^{-1}\|\hat{m}^{(-k)}(0,\cdot)-m_0(0,\cdot)\|_{2,P_0^X} \\
& = o_p(N^{-1/4}) 
\end{align*}
by assumption.
\end{description}
Having shown all conditions of Assumption~\ref{assump:dml}, we can apply Theorem~\ref{thm:batch_clt} to complete the proof of Corollary~\ref{cor:pl}.

\subsection{Proof of Lemma~\ref{lemma:Psi_cond}}\label{proof:lemma:Psi_cond}
Here we show that the information functions $\Psi_d(\cdot)$ and $\Psi_a(\cdot)$ for
D-optimality and A-optimality, respectively, both satisfy conditions (a) through (d)
of Assumption~\ref{assump:Psi}.
That both $\Psi_d(\cdot)$ and $\Psi_a(\cdot)$ are continuous, concave, and non-decreasing on $\ess_+^p$ is well known.

If $M$ is singular then $\Psi_d(M)=\Psi_a(M)=-\infty$;
however if $M \in \ess_{++}^p$ then $\Psi_d(M)$ and $\Psi_a(M)$ are finite.
Thus condition (a) holds with $\Psi_0 = -\infty$.

Next, we recall that for $M \in \ess_{++}^p$, 
we have $\nabla \Psi_d(M) = M^{-1}$ and $\nabla \Psi_a(M) = M^{-2}$.
Now we fix $0<k<K$ and $A, B \in \ess_{++}^p$ such that
$KI \succeq A \succeq kI$ and $KI \succeq B \succeq kI$.
Then
\begin{align*}
\|\nabla \Psi_d(A) - \nabla \Psi_d(B)\| & = \|A^{-1} - B^{-1} \| = \|A^{-1}(B-A)B^{-1}\| \\
& \leq \|A^{-1}\|\|B^{-1}\|\|B-A\| \leq k^{-2} \|A-B\|\quad\text{and} \\
\|\nabla \Psi_a(A) - \nabla \Psi_a(B)\| & = \|A^{-2}-B^{-2} \| = \|A^{-2}(B^2-A^2)B^{-2} \| \\
& \leq \|A^{-2}\|\|B^{-2}\|(\|B\|\|B-A\| + \|B-A\|\|A\|) \\
& \leq Kk^{-4} \|A-B\|
\end{align*}
which shows condition (b).

We also have $K^{-1} I \preceq \nabla \Psi_d(A) \preceq k^{-1} I$ and $K^{-2} I \preceq \nabla \Psi_a(A) \preceq k^{-2} I$.
Therefore condition (c) holds.

Finally fix $\tilde{\Psi}_0 > -\infty$,
and suppose $0 \preceq A \preceq KI$ with $\Psi_d(A) \geq \tilde{\Psi}_0$.
Letting $\lambda_1 \geq \ldots \geq \lambda_p$ be the eigenvalues of $A$,
we have $\lambda_j \leq K$ for all $j$ and so
\[
(p-1)\log(K) + \lambda_p \geq \Psi_d(A) = \sum_{j=1}^p \log \lambda_j \geq \tilde{\Psi}_0
\]
so that $\lambda_p \geq \exp(\tilde{\Psi}_0-(p-1)\log K) > 0$,
showing condition (d) for $\Psi_d(\cdot)$.
Similarly if $\Psi_a(A) \geq \tilde{\Psi}_0$ then
\[
-\frac{p-1}{K}-\lambda_p^{-1} \geq \Psi_a(A) = -\sum_{j=1}^p \lambda_j^{-1} \geq \tilde{\Psi}_0
\]
which implies $-\tilde{\Psi}_0-(p-1)/K \geq \lambda_p^{-1} > 0$. Therefore
$\lambda_p \geq (-\tilde{\Psi}_0 - {(p-1)}/{K})^{-1} > 0$,
showing condition (d) for $\Psi_a(\cdot)$ as well.

\subsection{Proof of Lemma~\ref{lemma:concave_maximization}: Convergence of generic concave maximization routine}
\label{proof:lemma:concave_maximization}

Here we prove Lemma~\ref{lemma:concave_maximization} about the convergence rates of our generic concave maximization routine. 

\begin{proof}
Let $\tilde{\delta} = \delta/2$,
with $\delta < 0$ as in Assumption~\ref{assump:info_matrix}.
We will repeatedly use the fact that since $[\tilde{\delta},1-\tilde{\delta}] \times \cw \subseteq \real^{r+1}$ is compact, 
Assumption~\ref{assump:info_matrix} implies that $f=f(\cdot,\cdot)$ and all of its partial derivatives up to second order are uniformly bounded above in norm 
on that set.
Then WLOG we can make $C$ from Assumption~\ref{assump:info_matrix} larger so that
\begin{equation}
\label{eq:uniform_bound}
    \sup_{(k,w) \in [\tilde{\delta},1-\tilde{\delta}] \times \cw} \|h(k,w)\| \leq C, \quad \forall h=h(\cdot,\cdot) \in \{f,f',f'',f_w,f_{ww},f_w'\}
\end{equation}
where $f_w=f_w(\cdot,\cdot)$, $f_{ww}=f_{ww}(\cdot,\cdot)$, and $f_w'=f_w'(\cdot,\cdot)$ are tensors with $f_w$ the partial derivative of $f$ with respect to the second argument,
$f_{ww}$ the second partial derivative of $f$ with respect to the second argument,
and $f_w'$ the partial derivative of $f'$ with respect to the second argument.

First, we show the existence of $e^*(\cdot)$ satisfying~\eqref{eq:e_star}.
For each propensity $e=e(\cdot) \in \ce$ define
$\phi(e) = \Psi(M(e))$ and $\hat{\phi}_n(e) = \Psi(\hat{M}_n(e))$ where
\[
M(e) = \int_{\cx} f(e(x),\eta(x)) \mrd P(x) \quad\text{and}\quad \hat{M}_n(e) = \frac{1}{n} \sum_{i=1}^n f(e(X_i),\hat{\eta}(X_i)).
\]
For any $e_1=e_1(\cdot),e_2=e_2(\cdot) \in \ce$ and $\lambda \in [0,1]$ we have
\begin{align*}
M(\lambda e_1 + (1-\lambda)e_2) & = \int_{\cx} f\bigl(\lambda e_1(x) + (1-\lambda)e_2(x),\eta(x)\bigr)\, \mrd P(x) \\
& \succeq \int_{\cx} \bigl( \lambda f(e_1(x),\eta(x)) + (1-\lambda) f(e_2(x),\eta(x)) \bigr) \, \mrd P(x) \\
& = \lambda M(e_1) + (1-\lambda)M(e_2)
\end{align*}
where the matrix inequality follows from the fact that the function $u \mapsto f(u,\eta(x))$ is a concave matrix-valued function on $[0,1]$ since its second derivative is globally negative semidefinite by~\eqref{eq:f_strong_concavity}.
Thus $M=M(\cdot)$ is also a concave matrix-valued function on $\ce$.
In fact, $M$ is also Lipschitz continuous in the sense that 
\begin{align}
\|M(e_1)-M(e_2)\| & \leq \int_{\cx} \|f(e_1(x),\eta(x))-f(e_2(x),\eta(x))\|\, \mrd P(x)\nonumber \\
& \leq C \int_{\cx} |e_1(x)-e_2(x)|\, \mrd P(x)\nonumber \\
& \leq C \|e_1-e_2\|_{2,P}, \quad \forall e_1,e_2 \in \ce \label{eq:M_lipschitz}
\end{align}
where the second inequality uses~\eqref{eq:uniform_bound} with $h=f'$
and Taylor's theorem with the Lagrange form of the remainder.
With the information function $\Psi=\Psi(\cdot)$ continuous, concave, and increasing in the semidefinite ordering by Assumption~\ref{assump:Psi},
we conclude that $\phi=\phi(\cdot)$ is continuous and concave on $\ce$ (e.g., by Section 3.6 of~\citet{boyd2004convex}).

We now consider an
extension $\bar{\phi}=\bar{\phi}(\cdot)$ of $\phi$ to 
$L^2(P)$,
\[
\bar{\phi}(e) = 
\begin{cases}
\phi(e), & e \in \ce \\
-\infty, & \text{ otherwise.}
\end{cases}
\]
Since $\phi(\cdot)$ is continuous and concave on $\ce$,
it is straightforward to show that the extension $\bar{\phi}$ is concave and upper semicontinuous on $L^2(P)$;
the latter means that
$\bar{\phi}(e_0) = \limsup_{e \rightarrow e_0} \bar{\phi}(e)$ for all $e_0=e_0(\cdot) \in L^2(P)$.
The function $\bar{\phi}$ is also ``proper" in that it 
never equals $+\infty$.
Since $\cf_*$ is a closed, bounded, and convex
subset of the Hilbert space $L^2(P)$, 
by Proposition 1.88 and Theorem 2.11 of~\citet{barbu2012convexity} we conclude $\bar{\phi}(\cdot)$,
and hence $\phi(\cdot)$,
attains its maximum on $\cf_*$.
This shows the existence of $e^* = \argmax_{e \in \cf} \phi(e)$.
We will show uniqueness ($P$-almost surely) later.

Next, note that compactness of $F_n \subseteq \real^n$ 
and continuity of the map $(e_1,\ldots,e_n) \mapsto \Psi\left(n^{-1} \sum_{i=1}^n f(e_i,\hat{\eta}(X_i))\right)$,
which follows from continuity of the information function $\Psi$ 
and of $e \mapsto f(e,\hat{\eta}(x))$ on $[0,1]$ 
for each $x \in \cx$,
ensure the existence of a vector $(\hat{e}_1,\ldots,\hat{e}_n)$ satisfying~\eqref{eq:e_hat}.
This shows the second claim of Lemma~\ref{lemma:concave_maximization}.
Then condition~\ref{cond:transform_En} of Assumption~\ref{assump:info_matrix} ensures that there exists a propensity $\hat{e}=\hat{e}(\cdot) \in \ce$ with $\hat{e}(X_i)=\hat{e}_i$ for all $i=1,\ldots,n$.
So all that remains is to show that any such propensity $\hat{e}(\cdot)$ satisfies the rate conditions $\|\hat{e}-e^*\|_{2,P_n} + \|\hat{e}-e^*\|_{2,P} = O_p(n^{-1/4}+\alpha_n)$,
along with uniqueness of $e^*(\cdot)$ satisfying~\eqref{eq:e_star} ($P$-almost surely).

We have now established existence of $e^*(\cdot)$, $\hat e_1,\dots,\hat e_n$ and $\hat e(\cdot)$.
It remains to show
the desired convergence rates of $\hat e(\cdot)$ to $e^*(\cdot)$ in $L^2$ under both $P$ and $\hat P_n$.
Before proceeding further, we list 
a few useful facts.
By~\eqref{eq:uniform_bound} with $h=f$ 
and recalling that the Frobenius norm of a matrix upper bounds its spectral norm,
we have
\begin{equation}
\label{eq:M_bounded}
M(e) \preceq C I \quad\text{ and }\quad \hat{M}_n(e) \preceq C I, \quad \forall e \in \ce.
\end{equation}
Next, we claim that
\begin{equation}
\label{eq:M_pos_def}
M(e^*) \succeq k^*I
\end{equation}
for some $k^*>0$.
To see this, note that $\Psi(M(e^*)) = \phi(e^*) \geq \phi(e_0) =\Psi(M(e_0)) > \Psi_0$ by Assumption~\ref{assump:Psi}(a) and condition~\ref{cond:positivity} of Assumption~\ref{assump:info_matrix}.
Then Assumption~\ref{assump:Psi}(d) shows that the $k_*>0$ we need exists.

Now, we define the function class
\[
\cf_n=\biggl\{e \in \ce \Bigm| m_L \leq \frac1n\sum_{i=1}^n e(X_i) \leq m_H\biggr\}.
\]
This class contains any propensity $\hat{e}(\cdot)$ derived as above by solving~\eqref{eq:e_hat} and interpolating within the base propensity class $\ce$.

We complete the proof by the following 4 steps:
\begin{enumerate}
    \item Show that $\phi$ is strongly concave on $\cf_*$ with respect to the $\|\cdot\|_{2,P}$ norm and that $\hat{\phi}_n$ is strongly concave on $\cf_n$ with respect to the $\|\cdot\|_{2,P_n}$ norm.
    This means there exist nonrandom positive constants $c_0$, $r_0$, and $k_0$ such that
    \begin{align}
    &c_0 \min\bigl(r_0, \|e^*-e\|_{2,P}\bigr)\times\|e^*-e\|_{2,P} \leq \phi(e^*)-\phi(e), \quad \forall e \in \cf_*     \label{eq:strong_concavity} \\
    \intertext{and}
    &c_0 \min\bigl(r_0, \|\hat{e}-e\|_{2,P_n}\bigr)\times\|\hat{e}-e\|_{2,P_n}\indic(A_n) \leq (\hat{\phi}_n(\hat{e})-\hat{\phi}_n(e))\indic(A_n), \quad \forall e \in \cf_n \label{eq:strong_concavity_n}
    \end{align}
    where $A_n$ is the event that $\hat{M}_n(\hat{e}) \succeq k_0I$. 
    Equation~\eqref{eq:strong_concavity} shows that $e^*(\cdot)$ is unique $P$-a.s.,
    because any $e$ that maximizes~\eqref{eq:e_star} makes the right hand size of~\eqref{eq:strong_concavity} equal $0$ which then makes $\|e-e^*\|_{2,P}=0$.
    \item Conclude by the previous step that with probability approaching 1,
    \begin{align*}
    &c_0\min(r_0,\|e^*-\hat{e}_{\cf}\|_{2,P})\times\|e^*-\hat{e}_{\cf}\|_{2,P}  \leq \phi(e^*)-\phi(\hat{e}_{\cf})+\hat{\phi}_n(\hat{e})-\hat{\phi}_n(e_n^*),\\
\intertext{and}
    &c_0 \min(r_0, \|\hat{e}-e_n^*\|_{2,P_n})\times\|\hat{e}-e_n^*\|_{2,P_n}\indic(A_n)  \leq (\phi(e^*)-\phi(\hat{e}_{\cf})+\hat{\phi}_n(\hat{e})-\hat{\phi}_n(e_n^*))\indic(A_n)
    \end{align*}
    for all $\hat{e}_{\cf} \in \cf$, $e_n^* \in \cf_n$.
    \item Show that with probability approaching 1, there exist $\hat{e}_{\cf} \in \cf$ and $e_n^* \in \cf_n$ converging at the rate $O_p(n^{-1/2})$ in sup-norm to $\hat{e}$ and $e^*$, respectively, so that by empirical process arguments we can argue that
    \begin{equation}
        \label{eq:step_3}
        \phi(e^*)-\phi(\hat{e}_{\cf})+\hat{\phi}_n(\hat{e})-\hat{\phi}_n(e_n^*) = O_p(n^{-1/2}) + O_p(\alpha_n).
    \end{equation}
    Show that $\Pr(A_n) \rightarrow 1$ as $n \rightarrow \infty$ and conclude by the previous step that $\|e^*-\hat{e}_{\cf}\|_{2,P} + \|\hat{e}-e_n^*\|_{2,P_n} = O_p(n^{-1/4}) + O_p(\alpha_n^{-1/2})$.
    Then by the definitions of $\hat{e}_{\cf}$ and $e_n^*$ we can conclude that
    \[
    \|\hat{e}-e^*\|_{2,P} + \|\hat{e}-e^*\|_{2,P_n} = O_p(n^{-1/4}) + O_p(\alpha_n^{-1/2})
    \]
    as well.
    In particular, $\hat{e}$ is mean square consistent for $e^*$ both in-sample and out-of-sample.
    \item Apply a ``peeling" argument, similar to Theorem 3.2.5 in~\citet{van1996weak}, to show that $\|\hat{e}-\tilde{e}\|_{2,P_n} = O_p(\alpha_n)$, where 
    \[
    \tilde{e} \in \argmax_{e \in \cf_n} \Psi\biggl(n^{-1} \sum_{i=1}^n f(e(X_i),\eta(X_i))\biggr)
    \]
    is the propensity score we'd estimate with knowledge of $\eta$,
    i.e., by taking $\hat{\eta}=\eta$.
    Conclude by the previous step that $\|\hat{e}-e^*\|_{2,P_n} = O_p(n^{-1/4}) + O_p(\alpha_n)$,
    and show the same convergence rate holds for $\|\hat{e}-e^*\|_{2,P}$ by empirical process arguments.
\end{enumerate}

\subsection*{Step 1}
Strong concavity will be proven using calculus along with Assumptions~\ref{assump:info_matrix} and~\ref{assump:Psi}.
First, we notice that by Assumption~\ref{assump:Psi}(c) and~\eqref{eq:M_pos_def},
we know $\nabla \Psi(A)|_{A=M(e^*)} \succeq m^*I$
for some $m^*>0$.
Then by continuity of the smallest eigenvalue function $\lambda_{\min}(\cdot)$ and of $\nabla \Psi(\cdot)$,
there exists $r_0>0$ such that if $e \in \ce$ satisfies $\|e-e^*\|_{2,P} \leq r_0$ (which implies $\|M(e)-M(e^*)\| \leq 
Cr_0$ by~\eqref{eq:M_lipschitz}),
then $M(e) \succeq (k^*/2)I$ and $\nabla \Psi(A)|_{A=M(e)} \succeq (m^*/2)I$.
We now extend this argument to provide a high probability eigenvalue lower bound on $\hat{M}_n(e)$
for $e$ sufficiently close to $\hat{e}$ in $L^2(\cx,P_n)$:
\begin{lemma}
\label{lemma:stochastic_pos_def}
Suppose that all conditions of Lemma~\ref{lemma:concave_maximization} hold.
Fix any $k_0>0$ and define $A_n$ to be the event that $\hat{M}_n(\hat{e}) \succeq k_0I$.
Then there exist $\tilde{r}>0$ and $0<\tilde{k}<\tilde{K}$
such that  whenever $A_n$ holds,
for all $e \in \ce$ with $\|e-\hat{e}\|_{2,P_n} \leq \tilde{r}$
we have $\hat{M}_n(e) \succeq (k_0/2)I$ and $\tilde{K}I \succeq \nabla \Psi(\hat{M}_n(e)) \succeq \tilde{k}I$.
\end{lemma}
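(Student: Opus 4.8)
The plan is to combine a Lipschitz-continuity estimate for $\hat{M}_n(\cdot)$ with a standard eigenvalue-perturbation (Weyl) argument, and then feed the resulting eigenvalue bounds into Assumption~\ref{assump:Psi}(c). The whole argument is deterministic once we condition on the event $A_n$, so there is no probabilistic content beyond keeping track of which constants are data-free.

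First I would establish that $\hat{M}_n(\cdot)$ is Lipschitz with respect to $\|\cdot\|_{2,P_n}$, exactly as in the derivation of~\eqref{eq:M_lipschitz} but with $\eta$ replaced by $\hat{\eta}$. Since $\hat{\eta}(x) \in \cw$ for every $x$ by~\eqref{eq:eta_n}, and $e(x) \in [0,1] \subseteq [\tilde{\delta},1-\tilde{\delta}]$ (as $\tilde{\delta}<0$), the uniform bound~\eqref{eq:uniform_bound} on $f'$ applies at each sample point $X_i$. A pointwise mean-value estimate followed by Jensen's inequality then gives $\|\hat{M}_n(e)-\hat{M}_n(\hat{e})\| \leq C\|e-\hat{e}\|_{1,P_n} \leq C\|e-\hat{e}\|_{2,P_n}$ for all $e \in \ce$.

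Second, I would invoke Weyl's inequality: because the Frobenius norm dominates the spectral norm, $\lambda_{\min}(\hat{M}_n(e)) \geq \lambda_{\min}(\hat{M}_n(\hat{e})) - \|\hat{M}_n(e)-\hat{M}_n(\hat{e})\|$. On the event $A_n$ we have $\hat{M}_n(\hat{e}) \succeq k_0 I$, so setting $\tilde{r} = k_0/(2C)$ yields $\lambda_{\min}(\hat{M}_n(e)) \geq k_0 - C\tilde{r} = k_0/2$ whenever $\|e-\hat{e}\|_{2,P_n} \leq \tilde{r}$; that is, $\hat{M}_n(e) \succeq (k_0/2)I$. Combined with the a priori upper bound $\hat{M}_n(e) \preceq CI$ from~\eqref{eq:M_bounded}, this sandwiches $\hat{M}_n(e)$ as $(k_0/2)I \preceq \hat{M}_n(e) \preceq CI$ for every such $e$.

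Finally I would apply Assumption~\ref{assump:Psi}(c) with $k = k_0/2$ and $K = C$ to obtain constants $0 < \tilde{k} < \tilde{K}$ with $\tilde{k}I \preceq \nabla\Psi(\hat{M}_n(e)) \preceq \tilde{K}I$. I do not anticipate a genuine obstacle; the only point requiring mild care is that $\tilde{r}$, $\tilde{k}$, and $\tilde{K}$ depend only on the deterministic constants $k_0$ and $C$ (not on $n$ or on the data), so that the stated bounds hold uniformly over the entire $L^2(P_n)$-ball of radius $\tilde{r}$ around $\hat{e}$ on the event $A_n$, as required.
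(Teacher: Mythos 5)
Your proof is correct and follows essentially the same route as the paper's: both establish the Lipschitz bound $\|\hat{M}_n(e)-\hat{M}_n(\hat{e})\| \leq C\|e-\hat{e}\|_{2,P_n}$ from the uniform derivative bound, deduce that the eigenvalue lower bound persists on an $L^2(P_n)$-ball around $\hat{e}$, and then feed the resulting sandwich $(k_0/2)I \preceq \hat{M}_n(e) \preceq CI$ into Assumption~\ref{assump:Psi}(c). The only difference is minor and in your favor: where the paper derives the eigenvalue stability from uniform continuity of $\lambda_{\min}(\cdot)$ on the compact set $\{A : 0 \preceq A \preceq CI\}$, which yields an implicit radius, you use Weyl's inequality, which gives the explicit nonrandom radius $\tilde{r}=k_0/(2C)$.
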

\begin{proof}[Proof of Lemma~\ref{lemma:stochastic_pos_def}]
The function $\lambda_{\min}(\cdot)$ is uniformly continuous on the compact subset 
$\cm = \{A:0 \preceq A \preceq CI\}$ of $\real^{p \times p}$.
Hence, when the event $A_n$ holds,
we know that there exists (nonrandom) $\tilde{\delta}>0$ such that $A \succeq (k_0/2)I$ for all $A \in \cm$ with $\|A-\hat{M}_n(\hat{e})\| \leq \tilde{\delta}$.
By~\eqref{eq:M_bounded},
$\cm$ contains both $\{M(e):e \in \ce\}$ and $\{\hat{M}_n(e):e \in \ce\}$.
Noting that
\begin{align}
\|\hat{M}_n(e_1)-\hat{M}_n(e_2)\| & \leq \frac{1}{n} \sum_{i=1}^n \|f(e_1(X_i),\hat{\eta}(X_i))-f(e_2(X_i),\hat{\eta}(X_i)\| \nonumber \\
& \leq \frac{C}{n} \sum_{i=1}^n |e_1(X_i)-e_2(X_i)| \nonumber \\
& \leq C \|e_1-e_2\|_{2,P_n}. \label{eq:M_n_lipschitz}
\end{align}
we see that
whenever $e \in \ce$ with $\|e-\hat{e}\|_{2,P_n} \leq \tilde{r} := \tilde{\delta}/C$,
we have $\|\hat{M}_n(e)-\hat{M}_n(\hat{e})\| \leq \tilde{\delta}$
and hence $\hat{M}_n(e) \succeq (k_0/2)I$.
The conclusion of Lemma~\ref{lemma:stochastic_pos_def} follows immediately by Assumption~\ref{assump:Psi}(c).
\end{proof}
Next, we bound directional derivatives of $\phi$ and $\hat{\phi}_n$.
\begin{lemma}
\label{lemma:phi_prime}
For any $e_1,e_2 \in \cf$ with $M(e_1)$ and $M(e_2)$ nonsingular,
the inequality
\begin{equation}
\label{eq:phi_prime_prime}
\frac{\mrd^2}{\mrd t^2} \phi(e_1+t(e_2-e_1)) \leq \tr\left[\nabla \Psi(M(e_1+t(e_2-e_1)))^{\top} \left(\frac{\mrd^2}{\mrd t^2} M(e_1+t(e_2-e_1))\right)\right]
\end{equation}
holds for each $t \in (0,1)$.
Similarly,
for any $e_1,e_2 \in \cf_n$ with $\hat{M}_n(e_1)$ and $\hat{M}_n(e_2)$ nonsingular,
we have the inequality
\begin{equation}
\label{eq:phi_hat_n_prime_prime}
\frac{\mrd^2}{\mrd t^2} \hat{\phi}_n(e_1+t(e_2-e_1)) \leq \tr\biggl[\nabla \Psi(\hat{M}_n(e_1+t(e_2-e_1)))^{\top}\biggl(\frac{\mrd^2}{\mrd t^2} \hat{M}_n(e_1+t(e_2-e_1))\biggr)\biggr].
\end{equation}
\end{lemma}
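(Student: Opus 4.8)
The plan is to reduce both inequalities to a one-variable calculus computation along the segment. Fix $e_1,e_2$ and write $e_t = e_1 + t(e_2-e_1)$, so that $\phi(e_t) = \Psi(M_t)$ with $M_t := M(e_t)$ a curve of symmetric matrices; the two displayed bounds are then just the second-order chain rule for the composition $\Psi \circ M$, and the key is to isolate the gradient term $\tr[\nabla \Psi(M_t)^\top \ddot M_t]$ and argue that the remaining term is nonpositive by concavity of $\Psi$.

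First I would check that $\Psi$ is $C^2$ along the curve, which requires $M_t \in \ess_{++}^p$ for every $t \in [0,1]$. This follows from concavity of $M(\cdot)$, already established in the proof of Lemma~\ref{lemma:concave_maximization} (just before~\eqref{eq:M_lipschitz}): concavity gives $M_t \succeq (1-t)M(e_1) + t M(e_2)$, and since $M(e_1),M(e_2) \succ 0$ by hypothesis, the right-hand side is positive definite. Hence $M_t \succ 0$ on $[0,1]$, and $\Psi$ is twice continuously differentiable on a neighborhood of $\{M_t : t \in [0,1]\}$ by Assumption~\ref{assump:Psi}(b). The same reasoning, using concavity of $\hat{M}_n(\cdot)$ (proved identically via~\eqref{eq:M_n_lipschitz}) and nonsingularity of $\hat{M}_n(e_1),\hat{M}_n(e_2)$, handles the $\hat{\phi}_n$ case.

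Next I would verify that $M_t$ is itself twice continuously differentiable in $t$ and compute its derivatives by differentiating under the integral sign,
\[
\dot M_t = \int_{\cx} f'(e_t(x),\eta(x))(e_2(x)-e_1(x))\,\mrd P(x), \qquad \ddot M_t = \int_{\cx} f''(e_t(x),\eta(x))(e_2(x)-e_1(x))^2\,\mrd P(x),
\]
which is justified by the uniform bounds on $f'$ and $f''$ in~\eqref{eq:uniform_bound} together with dominated convergence; for $\hat{\phi}_n$ the integral is a finite average, so no interchange issue arises. Applying the second-order chain rule then gives
\[
\frac{\mrd^2}{\mrd t^2}\phi(e_t) = \big\langle \nabla^2\Psi(M_t)[\dot M_t],\,\dot M_t\big\rangle + \tr\!\big[\nabla \Psi(M_t)^\top \ddot M_t\big],
\]
where $\nabla^2\Psi(M_t)$ denotes the Hessian of $\Psi$ viewed as a bilinear form on symmetric matrices and $\langle A,B\rangle := \tr(A^\top B)$. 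Because $\Psi$ is concave and $\dot M_t$ is symmetric (as $f'$ is $\ess_+^p$-valued), the map $s \mapsto \Psi(M_t + s\dot M_t)$ is concave on the neighborhood of $s=0$ where $M_t + s\dot M_t \succ 0$, so its second derivative $\langle \nabla^2\Psi(M_t)[\dot M_t],\dot M_t\rangle$ is nonpositive; dropping this term yields exactly~\eqref{eq:phi_prime_prime}. The inequality~\eqref{eq:phi_hat_n_prime_prime} follows verbatim with $M_t$, $P$, $\eta$ replaced by $\hat{M}_n(e_t)$, $P_n$, $\hat{\eta}$.

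The main obstacle is the bookkeeping in the second-order chain rule: one must confirm that $\nabla \Psi$ is itself differentiable along the curve (which is where both the $C^2$ smoothness of $\Psi$ on $\ess_{++}^p$ and the positivity $M_t \succ 0$ are used) and that the Hessian bilinear form is genuinely negative semidefinite at every interior point. I expect no difficulty beyond this, since positivity of the curve is already in hand from concavity of $M$ and the endpoint hypotheses, and concavity of $\Psi$ is part of Assumption~\ref{assump:Psi}.
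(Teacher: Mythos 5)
Your proposal is correct and follows essentially the same route as the paper's proof: positive definiteness of $M(e_1+t(e_2-e_1))$ via concavity of $M(\cdot)$, differentiation under the integral justified by the uniform bounds~\eqref{eq:uniform_bound} and dominated convergence (term-by-term for the empirical version), the second-order chain rule for $\Psi \circ M$, and dropping the Hessian term $D^2\Psi(\tilde{M}(t))(\tilde{M}'(t),\tilde{M}'(t)) \leq 0$ by concavity of $\Psi$. The only nitpick is your parenthetical that $f'$ is $\ess_+^p$-valued; it is merely symmetric-valued (as a derivative of symmetric matrices), which is all the symmetry argument needs.
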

\begin{proof}[Proof of Lemma~\ref{lemma:phi_prime}]
Fix $e_1=e_1(\cdot)$ and $e_2=e_2(\cdot) \in \cf$ with $M(e_1)$ and $M(e_2)$ invertible.
Define
\[
\tilde{M}(t) = M(e_{(t)})
\]
where $e_{(t)}=e_1+t(e_2-e_1)$.
We first show that
\begin{align}
\tilde{M}'(t) & = \int_{\cx} (e_2(x)-e_1(x))f'(e_{(t)}(x),\eta(x)) \,\mrd P(x),\quad \forall t \in [0,1],\quad\text{and} \label{eq:M_prime} \\
\tilde{M}''(t) & = \int_{\cx} (e_2(x)-e_1(x))^2f''(e_{(t)}(x),\eta(x)) \,\mrd P(x),\quad \forall t \in (0,1). \label{eq:M_prime_prime}
\end{align}
We include the endpoints $t=0,1$ in~\eqref{eq:M_prime} so that we can apply Taylor's theorem with the Lagrange form of the second order remainder to complete the proof of Lemma~\ref{lemma:phi_prime}.
We could also strengthen~\eqref{eq:M_prime_prime} to include those endpoints,
but this will not be needed.

Consider the difference quotient
\[
D_1(t,h;X) = \frac{f(e_{(t+h)}(X),\eta(X))-f(e_{(t)}(X),\eta(X))}{h}
\]
By the chain rule we know that 
\[
\lim_{h \rightarrow 0} D_1(t,h;X) = \frac{\mrd}{\mrd t} f(e_{(t)}(X),\eta(X)) = (e_2(X)-e_1(X))f'(e_{(t)}(X),\eta(X)).
\]
Furthermore the fact that $e_1(x),e_2(x) \in [0,1]$ for all $x \in \cx$ indicates
\[
\tilde{\delta} \leq -h \leq e_{(t)}(x) \wedge e_{(t+h)}(x) \leq e_{(t)}(x) \vee e_{(t+h)}(x) \leq 1+h \leq 1-\tilde{\delta}
\]
for all $t \in [0,1]$ and $|h| \leq -\tilde{\delta}$.
By uniform boundedness of $f'$ on $[\tilde{\delta},1-\tilde{\delta}] \times \cw$ and Taylor's theorem,
and noting $e_{(t+h)}(x)-e_{(t)}(x) = h(e_2(x)-e_1(x))$
we conclude that 
\[
\sup_{0<|h| \leq -\tilde{\delta}} \|D_1(t,h;X)\| \leq |e_2(X)-e_1(X)| \times\sup_{e \in [\tilde{\delta},1-\tilde{\delta}]}  \|f'(e,\eta(X))\|  \leq C
\]
so by dominated convergence
\begin{align*}
\lim_{h \rightarrow 0} \frac{\tilde{M}(t+h)-\tilde{M}(t)}{h} &= \lim_{h \rightarrow 0} \int_{\cx} D_1(t,h;x)
\,\mrd P(x)\\
&= \int_{\cx} (e_2(x)-e_1(x))f'(e_{(t)}(x),\eta(x))
\,\mrd P(x)
\end{align*}
which establishes~\eqref{eq:M_prime}.

Similarly we define the second difference quotient 
\[
D_2(t,h;X) = \frac{f'(e_{(t+h)}(X),\eta(X))-f'(e_{(t)}(X),\eta(X))}{h}.
\]
By the chain rule we once again have
\[
\lim_{h \rightarrow 0} D_2(t,h;X) = \frac{\mrd}{\mrd t} f'(e_{(t)}(X),\eta(X)) = (e_2(X)-e_1(X))f''(e_{(t)}(X),\eta(X)).
\]
By uniform boundedness of $f''$ on  $[\tilde{\delta},1-\tilde{\delta}] \times \cw$ we get
\[
\sup_{0<|h| \leq -\tilde{\delta}} |D_2(t,h;X)| \leq |e_2(X)-e_1(X)| \times\sup_{e \in [\tilde{\delta},1-\tilde{\delta}]}  \|f''(e,\eta(X))\|  \leq C.
\]
Then in view of~\eqref{eq:M_prime} we can apply dominated convergence to conclude that
\begin{align*}
\lim_{h \rightarrow 0} \frac{\tilde{M}'(t+h)-\tilde{M}'(t)}{h} & = \lim_{h \rightarrow 0} \int_{\cx} D_2(t,h;x)(e_2(x)-e_1(x))
\,\mrd P(x) \\
& = \int_{\cx} (e_2(x)-e_1(x))^2f''(e_{(t)}(x),\eta(x)) 
\,\mrd P(x)
\end{align*}
establishing~\eqref{eq:M_prime_prime}.

Now we differentiate $\tilde{\phi}(t) := \phi(e_{(t)})$.
Note $M(e_{(t)}) \succ 0$ for all $t \in [0,1]$ by concavity of $M(\cdot)$,
shown previously.
Then using~\eqref{eq:M_prime} and~\eqref{eq:M_prime_prime},
we apply the chain rule to get
\begin{align*}
\tilde{\phi}'(t) & = \tr\bigl[\nabla \Psi(\tilde{M}(t))^{\top}\tilde{M}'(t)\bigr],\quad t\in[0,1].
\end{align*}
Similarly for all $t\in(0,1)$
\begin{align*}
\tilde{\phi}''(t) & = D^2 \Psi(\tilde{M}(t))(\tilde{M}'(t),\tilde{M}'(t)) + \tr[\nabla \Psi(\tilde{M}(t))^{\top} \tilde{M}''(t)]\\ & \leq \tr\bigl[\nabla \Psi(\tilde{M}(t))^{\top} \tilde{M}''(t)\bigr],
\end{align*}
which shows~\eqref{eq:phi_prime_prime}.
Here $D^2\Psi(\tilde{M}(t))$ is the second derivative mapping of $\Psi$ evaluated at $\tilde{M}(t)$,
viewed as a bilinear function from $\real^{p \times p} \times \real^{p \times p}$ to $\real$;
the inequality in the preceding display follows from concavity of $\Psi$.

Equation~\eqref{eq:phi_hat_n_prime_prime} follows by a very similar calculation,
though the argument is simplified,
since dominated convergence is no longer needed as we are dealing with finite sums instead of integrals.
Instead we immediately perform term-by-term differentiation to conclude
\begin{align}
\tilde{M}_n'(t) & = \frac1n \sum_{i=1}^n (e_2(X_i)-e_1(X_i))f'(e_{(t)}(X_i),\hat{\eta}(X_i))
\label{eq:M_n_prime}, \quad\forall t \in [0,1],\quad\text{and} \\
\tilde{M}_n''(t) & = \frac1n \sum_{i=1}^n (e_2(X_i)-e_1(X_i))^2f''(e_{(t)}(X_i),\hat{\eta}(X_i)) 
\label{eq:M_n_prime_prime}, \quad\forall t \in (0,1)
\end{align}
where $\tilde{M}_n(t) := \hat{M}_n(e_{(t)})$,
and then use the chain rule as above.
\end{proof}

We are now ready to prove~\eqref{eq:strong_concavity}.
We apply Lemma~\ref{lemma:phi_prime} with $e_1=e^*$ and any $e_2\in \cf$ with $\|e_2-e^*\|_{2,P} \leq r_0$.
Note that our definition of $r_0$
(in the paragraph before the statement of Lemma~\ref{lemma:stochastic_pos_def})
along with~\eqref{eq:M_pos_def} ensures that $M(e_1)$ and $M(e_2)$ are nonsingular. 
Also, note that $\tilde{\phi}'(0) \leq 0$ by optimality of $e^*$.
Lemma~\ref{lemma:phi_prime} along with Taylor's theorem with the Lagrange form of the remainder then enables us to conclude
\[
\phi(e_2) = \tilde{\phi}(1) = \tilde{\phi}(0) + \tilde{\phi}'(0) + \frac{1}{2}\tilde{\phi}''(t) \leq \phi(e^*) + \frac{1}{2}\tr\bigl[\nabla \Psi(\tilde{M}(t))^{\top}\tilde{M}''(t)\bigr]
\]
for some $t \in (0,1)$,
where $\tilde{M}(t)=M(e_1+t(e_2-e_1))$ as in the proof of Lemma~\ref{lemma:phi_prime}.
By~\eqref{eq:f_strong_concavity} and~\eqref{eq:M_prime_prime},
we know that $\tilde{M}''(t) \preceq 0$.
Recalling that the trace of the product of two symmetric positive semidefinite matrices is nonnegative,
we have
\[
0 \geq \tr[(\nabla \Psi(\tilde{M}(t))-(m^*/2)I)^{\top}\tilde{M}''(t)] = \tr[\nabla \Psi(\tilde{M}(t))^{\top}\tilde{M}''(t)] - \frac{m^*}{2}\tr(\tilde{M}''(t))
\]
since $\nabla \Psi(\tilde{M}(t)) \succeq (m^*/2)I$.
Then
\begin{align*}
\tr[\nabla \Psi(\tilde{M}(t))^{\top}\tilde{M}''(t)] &\leq \frac{m^*}{2} \tr(\tilde{M}''(t)) \\
&\leq -\frac{cm^*}{2}\int_{\cx}(e^*(x)-e_2(x))^2 \,\mrd P(x) \\
&= -\frac{cm^*}{2} \|e^*-e_2\|_{2,P}^2
\end{align*}
where the second inequality follows by~\eqref{eq:f_strong_concavity} and~\eqref{eq:M_prime_prime} once again.
We conclude that whenever $e_2 \in \cf$ with $\|e_2-e^*\|_{2,P} \leq r_0$ we have
\[
\phi(e^*) \geq \phi(e_2) + \frac{cm^*}{4} \|e_2-e^*\|_{2,P}^2.
\]

Now take any $e_2 \in \cf$ with $\|e_2-e^*\|_{2,P} > r_0$.
Define $t=1-r_0/\|e_2-e^*\|_{2,P} \in (0,1)$ and consider $\tilde{e}_2 = te^*+(1-t)e_2$ so that $\|\tilde{e}_2-e^*\|_{2,P} = r_0$.
Note $\tilde{e}_2 \in \cf$ by convexity of $\cf_*$, so by the preceding display
\[
\phi(e^*) \geq \phi(\tilde{e}_2) + \frac{cm^*r_0^2}{4} \geq t\phi(e^*) + (1-t)\phi(e_2) + \frac{cm^*r_0^2}{4}
\]
where the second inequality is by concavity of $\phi$. Rearranging we have
\[
\phi(e^*) \geq \phi(e_2) + \frac{cm^*r_0^2}{4(1-t)} = \phi(e_2) + \frac{cm^*r_0}{4}\|e^*-e_2\|_{2,P}.
\]
Letting $c_0=cm^*/4$, we conclude that for all $e \in \cf$ we have
\[
\phi(e^*) \geq \phi(e) + c_0\min(r_0,\|e^*-e\|_{2,P})\|e^*-e\|_{2,P}
\]
which shows~\eqref{eq:strong_concavity}.

The proof of~\eqref{eq:strong_concavity_n} is quite similar.
Take $k_0>0$ such that
with $k^*$ as in~\eqref{eq:M_pos_def},
whenever $0 \preceq A \preceq 
CI$ with $\Psi(A) \geq \inf_{B \succeq (k^*/4)I} \Psi(B)$,
we have $A \succeq k_0I$.
Such a $k_0$ exists by Assumptions~\ref{assump:Psi}(a) and~\ref{assump:Psi}(d).
By Lemma~\ref{lemma:stochastic_pos_def},
on the truncation event $A_n$ that
$\hat{M}_n(\hat{e}) \succeq k_0I$,
we have $\tilde{K}I \geq \nabla \Psi(\hat{M}_n(e)) \geq \tilde{k}I$ whenever $\|e-\hat{e}\|_{2,P_n} \leq \tilde{r}$,
for some $\tilde{r}>0$ and $0<\tilde{k} < \tilde{K}$.
Now we apply~\eqref{eq:phi_hat_n_prime_prime} with $e_1 = \hat{e}$ and any $e_2 \in \cf_n$ with $\|e_2-\hat{e}\|_{2,P_n} \leq \tilde{r}$
(note we must have $\hat{M}_n(e_1)$ and $\hat{M}_n(e_2)$ nonsingular).
Defining $\tilde{\phi}_n(t) := \hat{\phi}_n(e_{(t)})$ for $e_{(t)} = e_1+t(e_2-e_1)$,
by optimality of $\hat{e}$ we must have $\tilde{\phi}_n'(0) \leq 0$.
Taylor's theorem and the second part of Lemma~\ref{lemma:phi_prime} then allow us to conclude that
\begin{align}
\label{eq:taylor_phi_n}
\hat{\phi}_n(e_2)\indic(A_n) = \tilde{\phi}_n(1)\indic(A_n) & = \Bigl(\tilde{\phi}_n(0) + \tilde{\phi}_n'(0) + \frac{1}{2} \tilde{\phi}_n''(t)\Bigr)\indic(A_n) \nonumber \\
& \leq \Bigl(\hat{\phi}_n(\hat{e}) + \frac{1}{2} \tr\bigl[\nabla \Psi(\tilde{M}_n(t))^{\top} \tilde{M}_n''(t)\bigr]\Bigr)\indic(A_n)
\end{align}
for some $t \in (0,1)$,
where $\tilde{M}_n(t)=\hat{M}_n(e_{(t)})$ as in the proof of Lemma~\ref{lemma:phi_prime}.
With $\nabla \Psi(\tilde{M}_n(t)) \succeq \tilde{k}I$ for all $t \in (0,1)$ whenever $A_n$ holds,
we have by~\eqref{eq:M_n_prime_prime} that
\[
0 \geq \tr\bigl[(\nabla \Psi(\tilde{M}_n(t))-\tilde{k}I)^{\top}\tilde{M}_n''(t)\bigr]\indic(A_n) = \bigl(\tr[(\nabla \Psi(\tilde{M}_n(t))^{\top}\tilde{M}_n''(t)] -\tilde{k}\tr[\tilde{M}_n''(t)]\bigr)\indic(A_n).
\]
Hence by~\eqref{eq:f_strong_concavity}
\[
\tr[(\nabla \Psi(\tilde{M}_n(t))^{\top}\tilde{M}_n''(t)]\indic(A_n) \leq \tilde{k}\tr[\tilde{M}_n''(t)]\indic(A_n) \leq -\tilde{k}c\|\hat{e}-e\|_{2,P_n}^2 \indic(A_n).
\]
Then by~\eqref{eq:taylor_phi_n} we conclude that whenever $\|e-\hat{e}\|_{2,P_n} \leq \tilde{r}$ we have
\[
(\hat{\phi}_n(\hat{e})-\hat{\phi}_n(e))\indic(A_n) \geq c\frac{\tilde{k}}{2} \|\hat{e}-e\|_{2,P_n}^2 \indic(A_n).
\]
Redefining $r_0$ to be the minimum of the $r_0$ appearing in the proof of~\eqref{eq:strong_concavity} and $\tilde{r}$,
since $\hat{\phi}_n$ is always concave on $\cf_n$.
we can repeat the argument at the end of the proof of~\eqref{eq:strong_concavity}
to conclude that
\[
(\hat{\phi}_n(\hat{e})-\hat{\phi}_n(e)) \indic(A_n) \geq \left(c_0\min(r_0,\|\hat{e}-e\|_{2,P_n})\|\hat{e}-e\|_{2,P_n}\right)\indic(A_n)
\]
for all $e \in \cf_n$;
here $c_0 := c\tilde{k}/2 > 0$.

\subsection*{Step 2}
Fix $\hat{e}_{\cf} \in \cf_*$ and $e_n^* \in \cf_n$.
The result of Step 1 shows that for some positive constants $c_0$ and $r_0$,
we have
\begin{align*}
c_0 \min(r_0, \|e^*-\hat{e}_{\cf}\|_{2,P})\|e^*-\hat{e}_{\cf}\|_{2,P} & \leq \phi(e^*)-\phi(\hat{e}_{\cf}),\quad\text{and} \\
c_0 \min(r_0, \|\hat{e}-e_n^*\|_{2,P_n})\|\hat{e}-e_n^*\|_{2,P_n}\indic(A_n) & \leq (\hat{\phi}_n(\hat{e})-\hat{\phi}_n(e_n^*))\indic(A_n).
\end{align*}
With $\phi(e^*)-\phi(\hat{e}_{\cf}) \geq 0$ and $\hat{\phi}_n(\hat{e})-\hat{\phi}_n(e_n^*) \geq 0$
by the definitions of $e^*$ and $\hat{e}$,
we can further upper bound the right-hand sides by
\begin{align*}
\phi(e^*)-\phi(\hat{e}_{\cf}) & \leq \phi(e^*)-\phi(\hat{e}_{\cf}) + \hat{\phi}_n(\hat{e})-\hat{\phi}_n(e_n^*),\quad\text{and} \\
(\hat{\phi}_n(\hat{e})-\hat{\phi}_n(e_n^*))\indic(A_n) & \leq \phi(e^*)-\phi(\hat{e}_{\cf}) +  \hat{\phi}_n(\hat{e})-\hat{\phi}_n(e_n^*).
\end{align*}

\subsection*{Step 3}
For brevity, in this section
we introduce the empirical process notation
\[
Pe = \int_{\cx} e(x) \,\mrd P(x)
\]
for all $e \in \ce$.
For instance, with $P_n$ the empirical measure induced by $X_1,\ldots,X_n$, we have $P_n e = n^{-1} \sum_{i=1}^n X_i$ for all $e \in \ce$.

We first show that we can choose particular $\hat{e}_{\cf}=\hat{e}_{\cf}(\cdot) \in \cf$ and $e_n^*(\cdot) \in \cf_n$ that are very close in sup norm to $\hat{e} \in \cf_n$ and $e^* \in \cf$, respectively,
with high probability.
\begin{lemma}
\label{lemma:e_0_e_n_star}
Under the conditions of Lemma~\ref{lemma:concave_maximization},
there exist $\hat{e}_{\cf}=\hat{e}_{\cf}(\cdot) \in \cf$ and $e_n^*=e_n^*(\cdot) \in \cf_n$ such that
$\sup_{x \in \cx} |\hat{e}(x)-\hat{e}_{\cf}(x)| + \sup_{x \in \cx} |e^*(x)-e_n^*(x)| = O_p(n^{-1/2})$.
\end{lemma}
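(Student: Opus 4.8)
The only obstacle to simply taking $\hat{e}_{\cf} = \hat{e}$ and $e_n^* = e^*$ is that the budget constraints defining $\cf_*$ and $\cf_n$ are stated through the population mean $Pe = \e_P[e(X)]$ and the empirical mean $P_n e = n^{-1}\sum_{i=1}^n e(X_i)$, respectively. The plan is first to show these two functionals are uniformly close over the base class $\ce$, and then to repair any constraint violation by mixing with the slack functions supplied by condition~\ref{cond:hilo} of Assumption~\ref{assump:info_matrix}. For the uniform closeness I would apply Lemma~\ref{lemma:empirical_process} to the class $\ce$, whose envelope is $1$ since $\ce \subseteq \cf_0$; combined with the finite-entropy bound~\eqref{eq:finite_entropy} this yields $\sup_{e \in \ce} |P_n e - P e| = O_p(n^{-1/2})$. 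In particular $|P\hat{e} - P_n\hat{e}| = O_p(n^{-1/2})$ and $|P_n e^* - P e^*| = O_p(n^{-1/2})$, so each of $P\hat{e}$ and $P_n e^*$ lies within $O_p(n^{-1/2})$ of the interval $[m_L, m_H]$.

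To construct $\hat{e}_{\cf} \in \cf_*$ I would correct $\hat{e}$ only when its population mean falls outside $[m_L, m_H]$. If $P\hat{e} > m_H$, then using $e_H \in \ce$ with $P e_H < m_H$ from condition~\ref{cond:hilo} I set $\hat{e}_{\cf} = (1-\lambda)\hat{e} + \lambda e_H$ with $\lambda = (P\hat{e} - m_H)/(P\hat{e} - P e_H)$, so that $P\hat{e}_{\cf} = m_H$ exactly. Convexity of $\ce$ keeps $\hat{e}_{\cf} \in \ce$, and since $P\hat{e}_{\cf} = m_H \geq m_L$ the lower constraint holds automatically, giving $\hat{e}_{\cf} \in \cf_*$. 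The numerator $P\hat{e} - m_H$ is $O_p(n^{-1/2})$ by the first paragraph, while the denominator exceeds the fixed positive constant $m_H - P e_H$, so $\lambda = O_p(n^{-1/2})$. Because every function in $\ce$ takes values in $[0,1]$, we get $\sup_x|\hat{e}(x) - \hat{e}_{\cf}(x)| = \lambda \sup_x|\hat{e}(x) - e_H(x)| \leq \lambda = O_p(n^{-1/2})$. The symmetric case $P\hat{e} < m_L$ is handled identically using $e_L$, and when $m_L \leq P\hat{e} \leq m_H$ I simply take $\hat{e}_{\cf} = \hat{e}$.

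The construction of $e_n^* \in \cf_n$ mirrors this argument with the roles of $P$ and $P_n$ interchanged. The one extra ingredient is that the slack functions retain strict slack under the empirical measure: since $P e_L > m_L$ and $P e_H < m_H$ hold with fixed gaps, uniform convergence gives $P_n e_L > m_L$ and $P_n e_H < m_H$ with probability tending to $1$. On that high-probability event the same minimal-mixing construction, now controlling $P_n e_n^*$, applies verbatim and produces $e_n^* \in \cf_n$ with $\sup_x|e^*(x) - e_n^*(x)| = O_p(n^{-1/2})$. Adding the two sup-norm bounds yields the stated conclusion.

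The step requiring the most care is not any individual estimate but the bookkeeping that the $O_p(n^{-1/2})$ mixing correction enforcing one budget inequality never pushes the mean past the other. This is automatic here because the constrained quantity is a single scalar, which can violate at most one of $m_L \leq \cdot$ and $\cdot \leq m_H$ at a time, and the correction is taken minimal, landing exactly on the violated endpoint; the opposite inequality is then either satisfied exactly ($m_L \leq m_H$) or held by the uniform $O_p(n^{-1/2})$ gap between population and empirical means on an event of probability tending to one.
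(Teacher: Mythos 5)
Your proposal is correct and follows essentially the same route as the paper's proof: both invoke Lemma~\ref{lemma:empirical_process} with the entropy bound~\eqref{eq:finite_entropy} to get $\sup_{e \in \ce}|P_n e - Pe| = O_p(n^{-1/2})$, then repair any budget-constraint violation by convex mixing with the slack functions $e_L, e_H$ from condition~\ref{cond:hilo}, with the mixing weight $O_p(n^{-1/2})$ because the violation is $O_p(n^{-1/2})$ while the denominator stays bounded away from zero (deterministically for the population constraint, on a probability-tending-to-one event for the empirical one). The sup-norm bound then follows from $\ce \subseteq \cf_0$ exactly as in the paper.
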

\begin{proof}[Proof of Lemma~\ref{lemma:e_0_e_n_star}]
With the functions in $\ce$ uniformly bounded by 1,
by Lemma~\ref{lemma:empirical_process}
we conclude
\[
\e_P\left[\sup_{e \in \ce} \Big|P_ne - Pe\Big|\right] \leq KC n^{-1/2}
\]
so that $\sup_{e \in \ce} \big|P_ne - Pe\big| = O_p(n^{-1/2})$ by Markov's inequality.
In view of the fact that $m_L \leq P_n \hat{e} \leq m_H$ and $m_L \leq P e^* \leq m_H$ since $\hat{e} \in \cf_n$ and $e^* \in \cf$,
we have 
\begin{align}m_L -\sup_{e \in \ce} \big|P_ne - Pe\big| & \leq P_n e^* \leq m_H+\sup_{e \in \ce} \big|P_ne - Pe\big|,\quad\text{and}\label{eq:P_n_e_star} \\
m_L - \sup_{e \in \ce} \big|P_ne - Pe\big| & \leq P\hat{e} \leq m_H+\sup_{e \in \ce} \big|P_ne - Pe\big|. \label{eq:P_hat_e}
\end{align}

Next, with $e_L=e_L(\cdot)$ and $e_H=e_H(\cdot)$ as in the assumptions of Lemma~\ref{lemma:concave_maximization}, 
define
\[
e_n^*(x) = 
\begin{cases}
e^*(x), & m_L \leq P_n e^* \leq m_H \\
e^*(x) + \lambda_n(e_L(x)-e^*(x)), & P_n e^* < m_L \\
e^*(x) + \lambda_n(e_H(x)-e^*(x)), & P_n e^* > m_H
\end{cases}
\]
where 
\[
\lambda_n = 
\begin{cases}
\dfrac{m_L-P_n e^*}{P_n[e_L-e^*]}, & P_n e^* < m_L \\[2ex]
\dfrac{P_n e^*-m_H}{P_n[e^*-e_H]}, & P_n e^* > m_H \\[2ex]
0 & \text{otherwise.}
\end{cases}
\]
On the event 
\[
A_n = \left\{P_n e_L \geq \frac{P e_L + m_L}{2}, P_n e_H \leq \frac{Pe_H+m_H}{2} \right\}
\]
we must have $0 \leq \lambda_n \leq 1$ since $Pe_L > m_L$ and $Pe_H < m_H$,
and so when $A_n$ holds
we know $e_n^* \in \ce$ by convexity of $\ce$.
Furthermore we have $P_n e_n^* = \max(m_L,\min(m_H,\e_{P_n}[e^*(X)]))$ so that in fact $e_n^* \in \cf_n$.
But with $M = 2\max\left((Pe_L-m_L)^{-1},(m_H-Pe_H)^{-1}\right)$ we have
\begin{align*}
0 \leq \lambda_n\indic(A_n) & \leq M\indic(A_n)[(m_L-P_n e^*)\indic(P_ne^* < m_L) + (P_ne^*-m_H)\indic(P_ne^* > m_H)] \\
& \stackrel{\eqref{eq:P_n_e_star}}{\leq} 2M\sup_{e \in \ce}|P_ne-P_e| \indic(A_n).
\end{align*}
With $\Pr(A_n) \rightarrow 1$ as $n \rightarrow \infty$ by the law of large numbers,
we get $\lambda_n = O_p(n^{-1/2})$,
and hence for each $x \in \cx$ we have
\[
|e_n^*(x)-e^*(x)| \leq \lambda_n(|e_L(x)-e^*(x)| \vee |e_H(x)-e^*(x)|) \leq \lambda_n = O_p(n^{-1/2}).
\]

Next, define
\[
\hat{e}_{\cf}(x) = 
\begin{cases}
\hat{e}(x), & m_L \leq P\hat{e} \leq m_H \\
\hat{e}(x) + \tilde{\lambda}_n(e_L(x)-\hat{e}(x)), & P\hat{e} < m_L \\
\hat{e}(x) + \tilde{\lambda}_n(e_H(x)-\hat{e}(x)), & P\hat{e} > m_H
\end{cases}
\]
with
\[
\tilde{\lambda}_n = 
\begin{cases}
\dfrac{m_L-P \hat{e}}{m_H-\hat{e}}, & P\hat{e} < m_L \\[2ex]
\dfrac{P \hat{e}-m_H}{P\hat{e}-m_L}, & P\hat{e} > m_H \\[2ex]
0, & \text{ otherwise}
\end{cases}
\]
so that $0 \leq \tilde{\lambda}_n \leq 1$ and $\hat{e}_{\cf} \in \cf$ always.
In fact
\begin{align*}
\tilde{\lambda}_n &\leq \frac{M}{2}\bigl[(m_L-P\hat{e})\indic(P\hat{e}<m_L)+(P\hat{e}-m_H)\indic(P\hat{e} > m_H)\bigr] \\
&\leq M\sup_{e \in \ce}|P_n e - Pe|
\end{align*}
so that $\tilde{\lambda}_n = O_p(n^{-1/2})$ as well and the lemma follows since for all $x \in \cx$
\[
|\hat{e}(x)-\hat{e}_{\cf}(x)| \leq \tilde{\lambda}_n(|e_L(x)-\hat{e}(x)| \vee |e_H(x)-\hat{e}(x)|) \leq \tilde{\lambda}_n.
\]
\end{proof}
We are now ready to prove consistency.
Taking $\hat{e}_{\cf}$ as in Lemma~\ref{lemma:e_0_e_n_star},
we upper bound the right-hand side of the inequalities in step 2:
\begin{align}\begin{split}
\label{eq:basic_inequality}
\phi(e^*)-\phi(\hat{e}_{\cf})+\hat{\phi}_n(\hat{e})-\hat{\phi}_n(e_n^*) &\leq |\phi(e^*)-\phi(e_n^*)| + |\phi(e_n^*)-\hat{\phi}_n(e_n^*)| \\
&\phe\ + |\hat{\phi}_n(\hat{e})-\hat{\phi}_n(\hat{e}_{\cf})| + |\hat{\phi}_n(\hat{e}_{\cf})-\phi(\hat{e}_{\cf})|.
\end{split}\end{align}
In view of~\eqref{eq:M_lipschitz} and~\eqref{eq:M_n_lipschitz},
Lemma~\ref{lemma:e_0_e_n_star} shows that 
\begin{align}
\|M(e^*)-M(e_n^*)\| & = O_p(n^{-1/2}),\quad\text{and} \label{eq:M_star_diff} \\
\|\hat{M}_n(\hat{e})-\hat{M}_n(\hat{e}_{\cf})\| & = O_p(n^{-1/2}). \label{eq:M_n_hat_diff}
\end{align}
We now use~\eqref{eq:M_star_diff} to show that
\begin{equation}
\label{eq:e_star_e_n_star}
|\phi(e^*)-\phi(e_n^*)|=O_p(n^{-1/2}).
\end{equation}
As shown at the start of step 1,
whenever $\|e_n^*-e^*\|_{2,P} \leq r_0$
we have $M(e_n^*) \succeq (k^*/2)I$,
so that
\[
tM(e^*) + (1-t)M(e_n^*) \succeq (k^*/2)I\quad \forall t \in [0,1].
\]
Applying Taylor's theorem to $\Psi(\cdot)$ we have for some $K<\infty$ that
\begin{align}
|\phi(e^*)-\phi(e_n^*)|\indic(\tilde{A}_n) & = |\Psi(M(e^*))-\Psi(M(e_n^*))|\indic(\tilde{A}_n)  \nonumber \\
& \leq \sup_{t \in [0,1]}\Big|\tr[\nabla \Psi(tM(e^*) + (1-t)M(e_n^*))]^{\top}[M(e^*)-M(e_n^*)])\Big|\indic(\tilde{A}_n)  \nonumber \\
& \leq \sup_{t \in [0,1]} \|\nabla \Psi(tM(e^*) + (1-t)M(e_n^*))\| \cdot \|M(e^*)-M(e_n^*)\|\indic(\tilde{A}_n) \nonumber \\
& \leq K\sqrt{p} \cdot \|M(e^*)-M(e_n^*)\|\indic(\tilde{A}_n) \label{eq:taylor} 
\end{align}
where $\tilde{A}_n$ is the event $\|e_n^*-e^*\|_{2,P} \leq r_0$ and the last inequality follows from Assumption~\ref{assump:Psi}(c)
and the fact that $\|A\| \leq \sqrt{p}\lambda_{\max}(A)$ for any $A \in \ess_+^p$.
Here $\lambda_{\max}(A)$ denotes the largest eigenvalue of $A$.
Hence $|\phi(e^*)-\phi(e_n^*)|\indic(A_n)=O_p(n^{-1/2})$ by~\eqref{eq:M_star_diff}.
But $\Pr(\tilde{A}_n) \rightarrow 1$ by~\eqref{eq:M_star_diff},
so indeed $|\phi(e^*)-\phi(e_n^*)| = O_p(n^{-1/2})$.

Convergence of the remaining three terms in~\eqref{eq:basic_inequality} depends on the following result:
\begin{equation}
\label{eq:supremum}
\sup_{e \in \ce} \|\hat{M}_n(e)-M(e)\| = O_p(n^{-1/2}) + O_p(\alpha_n).
\end{equation}
To show this, define
\[
M_n(e) = \frac{1}{n} \sum_{i=1}^n f(e(X_i),\eta(X_i))
\]
which replaces the estimated nuisance function $\hat{\eta}$ in the definition of $\hat{M}_n$ with the true $\eta$.
First note that by uniform boundedness of $\|f_w\|_2$,
\[
\frac{1}{n} \sum_{i=1}^n \sup_{e \in [0,1]}\|f(e,\hat{\eta}(X_i))-f(e,\eta(X_i))\| \leq Cn^{-1} \sum_{i=1}^n \|\hat{\eta}(X_i)-\eta(X_i)\|_2 \leq C\|\hat{\eta}-\eta\|_{2,P_n}.
\]
Then by~\eqref{eq:eta_n} we have
\begin{align}
\sup_{e \in \ce} \|\hat{M}_n(e)-M_n(e)\| & = \sup_{e \in \ce} \bigg\|\frac{1}{n} \sum_{i=1}^n f(e(X_i),\hat{\eta}(X_i))-f(e(X_i),\eta(X_i))\bigg\| \nonumber \\
& \leq \frac{1}{n} \sum_{i=1}^n \sup_{e \in [0,1]} \|f(e,\hat{\eta}(X_i))-f(e,\eta(X_i))\| \nonumber \\
& = O_p(\alpha_n). \label{eq:f_hat_minus_f}
\end{align}
Now for $i,j \in \{1,\ldots,p\}$ define the class of functions
\[
\cg_{ij} = \{x \mapsto f_{ij}(e(x),\eta(x)) \mid e \in \ce\} 
\]
where $f_{ij}$ denotes the $(i,j)$-th entry of the function $f$.
By Lemma~\ref{lemma:covering_bracketing}
\begin{align*}
n^{-1/2} \int_0^1 \sqrt{\log \cn(r,\cg_{ij},L^2(P_n))}\,\mrd r & = C n^{-1/2} \int_0^{C^{-1}} \sqrt{\log \cn(C\epsilon,\cg_{ij},L^2(P_n))} \,\mrd\epsilon  \\
 & \leq Cn^{-1/2} \int_0^{C^{-1}} \sqrt{\log \cn(\epsilon,\ce,L^2(P_n))} \,\mrd\epsilon.
\end{align*}
Let
\[
D(e) = \frac{1}{n} \sum_{i=1}^n f(e(X_i),\eta(X_i)) 
- \int_{\cx} f(e(x),\eta(x))
\,\mrd P(x) = M_n(e)-M(e).
\]
Then Lemma~\ref{lemma:empirical_process} and Assumption~\ref{assump:info_matrix} indicate that $\sup_{e \in \ce} |D_{ij}(e)| = O_p(n^{-1/2})$ and so
\begin{equation}
\label{eq:D_e}
\sup_{e \in \ce} \|D(e)\| = \sup_{e \in \ce} \biggl(\,\sum_{i=1}^p\sum_{j=1}^p D_{ij}(e)^2\biggr)^{1/2} \leq \biggl(\,\sum_{i=1}^p \sum_{j=1}^p \sup_{e \in \ce} |D_{ij}(e)|^2\biggr)^{1/2} = O_p(n^{-1/2}).
\end{equation}
The result~\eqref{eq:supremum} follows by the triangle inequality in view of~\eqref{eq:f_hat_minus_f} and~\eqref{eq:D_e}.

We are finally ready to bound the remaining terms on the right-hand side of~\eqref{eq:basic_inequality},
and show that $\Pr(A_n) \rightarrow 1$ as $n \rightarrow \infty$
where $A_n$ is the event $\hat{M}_n(\hat{e}) \succeq k_0I$ for $k_0>0$ defined in Step 1.
Choose $\delta>0$ so that for any $A_1$, $A_2$ in $\cm = \{A \in \ess_+^p: 0 \preceq A \preceq CI\}$,
with $\|A_1-A_2\| \leq \delta$,
we have $|\lambda_{\min}(A_1)-\lambda_{\min}(A_2)| \leq \min(k_0/2,k^*/4)$,
where $k^*$ satisfies~\eqref{eq:M_pos_def}.
Such $\delta$ exists by uniform continuity of $\lambda_{\min}(\cdot)$ on the compact subset $\cm$ of $\real^{p \times p}$ (cf. the proof of Lemma~\ref{lemma:stochastic_pos_def}).
Also define the event $B_n$ that all of the following are true:
\begin{align}
\|e_n^*-e^*\|_{2,P} & \leq r_0, \label{eq:e_n_star}  \\
\sup_{e \in \ce} \|\hat{M}_n(e)-M(e)\| & \leq \delta\quad \text{and} \label{eq:M_n_minus_M} \\
\|\hat{M}_n(\hat{e})-\hat{M}_n(\hat{e}_{\cf})\| & \leq \delta. \label{eq:M_n_hat} 
\end{align}
We claim that $B_n$ implies the following conditions:
$$M(e_n^*) \succeq \frac{k^*}{2}I,\quad
\hat{M}_n(e_n^*) \succeq \frac{k^*}{4}I,\quad
\hat{M}_n(\hat{e}) \succeq k_0I\quad\text{and}\quad
\hat{M}_n(\hat{e}_{\cf}) \succeq \frac{k_0}{2}I.$$
We prove these statements briefly.
Assume $B_n$ holds.
First, note that $M(e_n^*) \succeq (k^*/2)I$ holds by definition of $r_0$ and~\eqref{eq:e_n_star}.
Next, the definition of $\delta$ immediately ensures by~\eqref{eq:M_n_minus_M} that
$\hat{M}_n(e_n^*) \succeq (k^*/4)I$.
But then
\[
\Psi(\hat{M}_n(\hat{e})) \geq \Psi(\hat{M}_n(e_n^*)) \geq \inf_{B \succeq (k^*/4)I} \Psi(B)
\]
so that $\hat{M}_n(\hat{e}) \succeq k_0I$ by the definition of $k_0$,
and in particular we have shown $B_n \subseteq A_n$.
Finally~\eqref{eq:M_n_hat} shows $\hat{M}_n(\hat{e}_{\cf}) \succeq (k_0/2)I$.

Now take $\tilde{K}<\infty$ to be as derived from Assumption~\ref{assump:Psi}(c) with $k=\min(k^*/4,k_0/2)$ and 
$K=C$.
Then repeated applications of arguments analogous to~\eqref{eq:taylor} show that
\begin{align*}
|\phi(e_n^*)-\hat{\phi}_n(e_n^*)|\indic(B_n) &= |\Psi(M(e_n^*))-\Psi(\hat{M}_n(e_n^*))|\indic(B_n)\\
& \leq \sup_{0 \preceq A \preceq \tilde{K}I} |\tr(A^{\top}[M(e_n^*)-\hat{M}_n(e_n^*)])| \\ 
& \leq \tilde{K}\sqrt{p} \cdot \|M(e_n^*)-\hat{M}_n(e_n^*)\| \\
& = O_p(n^{-1/2}) + O_p(\alpha_n)  
\end{align*}
by~\eqref{eq:supremum} and similarly
\begin{align*}
|\hat{\phi}_n(\hat{e})-\hat{\phi}_n(\hat{e}_{\cf})|\indic(B_n) \leq \tilde{K}\sqrt{p} \cdot \|\hat{M}_n(\hat{e})-\hat{M}_n(\hat{e}_{\cf})\| = O_p(n^{-1/2})
\end{align*}
by~\eqref{eq:M_n_hat_diff}. Also
\begin{align*}
|\hat{\phi}_n(\hat{e}_{\cf})-\phi(\hat{e}_{\cf})|\indic(B_n) \leq \tilde{K}\sqrt{p} \cdot \|\hat{M}_n(\hat{e}_{\cf})-M(\hat{e}_{\cf})\|  = O_p(n^{-1/2}) + O_p(\alpha_n) 
\end{align*}
by~\eqref{eq:supremum}.

However, by~\eqref{eq:M_star_diff},~\eqref{eq:M_lipschitz},~\eqref{eq:M_n_hat_diff}, and~\eqref{eq:supremum}
we know that $\Pr(B_n) \rightarrow 1$ as $n \rightarrow \infty$.
Since $B_n \subseteq A_n$ we also have $\Pr(A_n) \rightarrow 1$.
We conclude from the preceding displays that
\[
|\phi(e_n^*)-\hat{\phi}_n(e_n^*)| + |\hat{\phi}_n(\hat{e})-\hat{\phi}_n(\hat{e}_{\cf})| + |\hat{\phi}_n(\hat{e}_{\cf})-\phi(\hat{e}_{\cf})| = O_p(n^{-1/2}) + O_p(\alpha_n).
\]
Then by step 2,~\eqref{eq:basic_inequality}, and~\eqref{eq:e_star_e_n_star},
we conclude that 
\begin{align*}
\|e^*-\hat{e}_{\cf}\|_{2,P} & = O_p(n^{-1/4}) + O_p(\alpha_n^{-1/2})\quad\text{and} \\
\|\hat{e}-e_n^*\|_{2,P_n} & = O_p(n^{-1/4}) + O_p(\alpha_n^{-1/2}).
\end{align*}
But by Lemma~\ref{lemma:e_0_e_n_star} we know that $\|\hat{e}-\hat{e}_{\cf}\|_{2,P}+\|e^*-e_n^*\|_{2,P_n}=O_p(n^{-1/2})$.
So by the triangle inequality we conclude $\|e^*-\hat{e}\|_{2,P} + \|e^*-\hat{e}\|_{2,P_n} = O_p(n^{-1/4}) + O_p(\alpha_n^{-1/2})$ as well.

\subsection*{Step 4}
The final step in the argument to derive our best convergence rates is a variation of a standard ``peeling" argument used in deriving convergence rates of $M$-estimators.
The main argument requires deriving a bound on the ``locally centered empirical process" as in our next result.
\begin{lemma}
\label{lemma:centered_empirical_process}
For each $e \in \ce$, let $\phi_n(e) = \Psi(M_n(e))$
and take $\tilde{e} \in \argmax_{e \in \cf_n} \phi_n(e)$.
Then there exists $\beta>0$ and a universal constant $C_0 < \infty$ such that for all $u \leq \beta$
\begin{multline}\label{eq:centered_empirical_process}
\sup_{e \in \ce: \|e-\tilde{e}\|_{2,P_n} \leq u} \bigl[(\hat{\phi}_n(e)-\phi_n(e))-(\hat{\phi}_n(\tilde{e})-\phi_n(\tilde{e}))\bigr]\indic(B_n) 
\leq C_0\bigl(u\|\hat{\eta}-\eta\|_{2,P_n}+\|\hat{\eta}-\eta\|_{2,P_n}^2\bigr)
\end{multline}
for some sequence of events $B_n$ with $\Pr(B_n) \rightarrow 1$ as $n \rightarrow \infty$.
\end{lemma}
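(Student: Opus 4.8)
The plan is to expand $\hat\phi_n-\phi_n=\Psi(\hat M_n(\cdot))-\Psi(M_n(\cdot))$ to first order in the nuisance error and then track how the resulting expression varies as $e$ ranges over a $P_n$-ball of radius $u$ about $\tilde e$. Write $D_n(e)=\hat M_n(e)-M_n(e)=n^{-1}\sum_i[f(e(X_i),\hat\eta(X_i))-f(e(X_i),\eta(X_i))]$; the calculation in~\eqref{eq:f_hat_minus_f} already shows $\sup_{e\in\ce}\|D_n(e)\|\le C\|\hat\eta-\eta\|_{2,P_n}$, which is $o_p(1)$. First I would construct the event $B_n$ on which $\Psi$ is smooth, with bounded and Lipschitz gradient, along every matrix we touch. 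As in Step~3, there exists $e_*$ with $M(e_*)\succeq cI$, whence $M_n(e_*)\succeq(c/2)I$ and then $M_n(\tilde e)\succeq kI$ for a fixed $k>0$ with probability tending to one, by Assumptions~\ref{assump:Psi}(a),(d). Shrinking $\beta$ using the Lipschitz bound~\eqref{eq:M_n_lipschitz} forces $M_n(e)\succeq(k/2)I$ whenever $\|e-\tilde e\|_{2,P_n}\le\beta$, and additionally requiring $\sup_e\|D_n(e)\|\le k/4$ (which holds w.h.p.) keeps the whole segment $M_n(e)+sD_n(e)$, $s\in[0,1]$, inside $\{(k/4)I\preceq A\preceq CI\}$, where $\nabla\Psi$ is bounded by some $\tilde K\sqrt p$ and Lipschitz with constant $L$ by Assumption~\ref{assump:Psi}(b),(c).

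On $B_n$, integrating along this segment gives
\begin{equation*}
\hat\phi_n(e)-\phi_n(e)=\tr\bigl[\nabla\Psi(M_n(e))^\top D_n(e)\bigr]+R(e),
\end{equation*}
where $R(e)=\int_0^1\tr[(\nabla\Psi(M_n(e)+sD_n(e))-\nabla\Psi(M_n(e)))^\top D_n(e)]\,\mrd s$ satisfies $\sup_e|R(e)|\le\tfrac12 L\,\sup_e\|D_n(e)\|^2=O(\|\hat\eta-\eta\|_{2,P_n}^2)$, so $|R(e)-R(\tilde e)|$ supplies the $\|\hat\eta-\eta\|_{2,P_n}^2$ term. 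For the linear part I would center and split
\begin{align*}
&\tr[\nabla\Psi(M_n(e))^\top D_n(e)]-\tr[\nabla\Psi(M_n(\tilde e))^\top D_n(\tilde e)] \\
&\quad=\tr[(\nabla\Psi(M_n(e))-\nabla\Psi(M_n(\tilde e)))^\top D_n(e)]+\tr[\nabla\Psi(M_n(\tilde e))^\top(D_n(e)-D_n(\tilde e))].
\end{align*}
The first summand is at most $\|\nabla\Psi(M_n(e))-\nabla\Psi(M_n(\tilde e))\|\,\|D_n(e)\|\le L\,\|M_n(e)-M_n(\tilde e)\|\,\|D_n(e)\|\le LC^2\,u\,\|\hat\eta-\eta\|_{2,P_n}$, using the Lipschitzness of $\nabla\Psi$, the $M_n$-analogue of~\eqref{eq:M_n_lipschitz}, and $\|e-\tilde e\|_{2,P_n}\le u$.

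The crux is the second summand, which must also scale like $u\,\|\hat\eta-\eta\|_{2,P_n}$. Here I would exploit that differencing $D_n$ in $e$ produces a further nuisance difference: for each $i$, $\tfrac{\partial}{\partial e}[f(e,\hat\eta(X_i))-f(e,\eta(X_i))]=f'(e,\hat\eta(X_i))-f'(e,\eta(X_i))$, whose norm is at most $C\|\hat\eta(X_i)-\eta(X_i)\|$ by the uniform bound on $f_w'$ in~\eqref{eq:uniform_bound} (valid since $e(X_i)\in[0,1]\subseteq[\tilde\delta,1-\tilde\delta]$). A mean value expansion in $e$ then gives, per observation, the bound $|e(X_i)-\tilde e(X_i)|\cdot C\|\hat\eta(X_i)-\eta(X_i)\|$, and Cauchy--Schwarz over $i=1,\ldots,n$ yields $\|D_n(e)-D_n(\tilde e)\|\le C\|e-\tilde e\|_{2,P_n}\|\hat\eta-\eta\|_{2,P_n}\le Cu\,\|\hat\eta-\eta\|_{2,P_n}$; multiplying by $\|\nabla\Psi(M_n(\tilde e))\|\le\tilde K\sqrt p$ bounds the second summand. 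Collecting the three pieces, taking the supremum over the ball, and setting $C_0$ to the largest constant yields~\eqref{eq:centered_empirical_process}. The main obstacle is exactly this last estimate: naively $D_n(e)-D_n(\tilde e)$ is only $O(\|\hat\eta-\eta\|_{2,P_n})$, and the extra factor $u$ — precisely what lets the peeling argument in Step~4 close — comes from recognizing that differencing in $e$ turns $f_w$ into $f_w'$ while keeping the nuisance difference as a multiplicative factor, so that the $e$-variation and the nuisance error multiply rather than add.
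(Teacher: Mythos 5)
Your proposal is correct and takes essentially the same route as the paper's proof: the same event $B_n$ (well-conditioning of $M_n(\tilde e)$ together with uniform smallness of $\hat M_n - M_n$), the same two-term decomposition into a gradient-difference term and a term of the form $\nabla\Psi(\cdot)^\top\bigl[(\hat M_n(e)-M_n(e))-(\hat M_n(\tilde e)-M_n(\tilde e))\bigr]$, and the same key mechanism for the crux term, namely a per-observation mean-value argument using the uniform bounds on the mixed derivatives of $f$ so that the $e$-variation and the nuisance error multiply, followed by Cauchy--Schwarz. The only differences are cosmetic: you expand $\Psi$ at $M_n(e)$ with an integral remainder instead of the paper's Lagrange form with intermediate matrices $R_n(e)$, and you difference in $e$ before the nuisance direction, which slightly streamlines the paper's per-observation bound (avoiding its extra $\|\hat\eta(X_i)-\eta(X_i)\|^2$ term there, harmless since that contribution appears in the final bound anyway).
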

\begin{proof}[Proof of Lemma~\ref{lemma:centered_empirical_process}]
Let $U_n(e) = \hat{\phi}_n(e)-\phi_n(e)$ for each $e \in \ce$.
By Taylor's theorem, for each $e \in \ce$
\[
U_n(e) = \Psi(\hat{M}_n(e))-\Psi(M_n(e)) = \nabla \Psi(R_n(e))^{\top}(\hat{M}_n(e)-M_n(e))
\]
for some $R_n(e)$ lying on the line segment between $M_n(e)$ and $\hat{M}_n(e)$.
Write
\begin{align}
U_n(e)-U_n(\tilde{e}) & = (\nabla \Psi(R_n(e)) - \nabla \Psi(R_n(\tilde{e})))^{\top}(\hat{M}_n(e)-M_n(e)) \nonumber \\
&\phe\ + \nabla \Psi(R_n(\tilde{e}))^{\top}[(\hat{M}_n(e)-M_n(e))-(\hat{M}_n(\tilde{e})-M_n(\tilde{e}))] \label{eq:U_n_diff}.
\end{align}
Because $\hat{\eta}=\eta$ trivially satisfies the conditions of Lemma~\ref{lemma:concave_maximization},
all of our results in steps 1\fromart{--}3 apply if we replace $\hat{M}_n$ with $M_n$.
In particular with $k_0>0$ as chosen in step 1,
we have $\Pr(M_n(\tilde{e}) \succeq k_0I) \rightarrow 1$ as $n \rightarrow \infty$
by the argument at the end of step 3.
Take $\tilde{r}$, $\tilde{k}$, and $\tilde{K}$ derived from Lemma~\ref{lemma:stochastic_pos_def} with this choice of $k_0$.
By the proof of Lemma~\ref{lemma:stochastic_pos_def},
we know that for all $0 \preceq A \preceq CI$ with $\|A-M_n(\tilde{e})\| \leq C\tilde{r}$,
we must have $A \succeq (k_0/2)I$ and $\tilde{K}I \succeq \nabla \Psi(A) \succeq \tilde{k}I$.
Let $B_n$ be the intersection of the events $M_n(\tilde{e}) \succeq k_0I$ and 
$\sup_{e \in \ce} \|\hat{M}_n(e)-M_n(e)\| \leq C\tilde{r}/3$.
Now $\Pr(B_n) \rightarrow 1$ as $n \rightarrow \infty$ by~\eqref{eq:f_hat_minus_f}.
Then for any $e \in \ce$ with $\|e-\tilde{e}\|_{2,P_n} \leq \tilde{r}/3$,
in view of~\eqref{eq:M_n_lipschitz} we have
\begin{align*}
\|\hat{M}_n(e)-M_n(\tilde{e})\|\indic(B_n) & \leq \|\hat{M}_n(e)-\hat{M}_n(\tilde{e})\| + \|\hat{M}_n(\tilde{e})-M_n(\tilde{e})\|\indic(B_n) \\
&\leq \frac{2C\tilde{r}}{3}\quad\text{and}  \\
\|M_n(e)-M_n(\tilde{e})\|\indic(B_n) & \leq \|M_n(e)-\hat{M}_n(e)\|\indic(B_n) + \|\hat{M}_n(e)-M_n(\tilde{e})\|\indic(B_n) \\
&\leq C\tilde{r}.
\end{align*}
We conclude by the preceding display
that whenever $B_n$ holds,
for all $e \in \ce$ with $\|e-\tilde{e}\|_{2,P_n} \leq \tilde{r}/3$ we have 
$(k_0/2)I \preceq \hat{M}_n(e) \preceq C
I$ and $(k_0/2)I \preceq M_n(e) \preceq C
I$,
and thus $(k_0/2)I \preceq R_n(e) \preceq C
I$ along with $\tilde{K}I \succeq \nabla \Psi(R_n(e)) \succeq \tilde{k}I$.
Then by Assumption~\ref{assump:Psi}(b), 
for all $u \leq \tilde{r}/3$ there exists a constant $K_0<\infty$ (independent of $u$) for which
\begin{align*}
\|\nabla \Psi(R_n(e)) - \nabla \Psi(R_n(\tilde{e}))\|\indic(B_n) & \leq K_0\|R_n(e)-R_n(\tilde{e})\|\indic(B_n) \\
& \leq K_0(\|R_n(e)-\hat{M}_n(e)\| + \|\hat{M}_n(e)-\hat{M}_n(\tilde{e})\| + \|\hat{M}_n(\tilde{e})-R_n(\tilde{e})\|)
\end{align*}
The preceding inequality holds for all $e \in \ce$ with $\|e-\tilde{e}\|_{2,P_n} \leq u$.
Taking a supremum over such $e(\cdot)$,
another application of~\eqref{eq:M_n_lipschitz}
and the fact that $\|R_n(e)-\hat{M}_n(e)\| \leq \|M_n(e)-\hat{M}_n(e)\|$ for all $e \in \ce$
show that
\[
\sup_{e \in \ce: \|e-\tilde{e}\|_{2,P_n} \leq u} \|\nabla \Psi(R_n(e)) - \nabla \Psi(R_n(\tilde{e}))\|\indic(B_n)  \leq K_0\bigl(Cu+2S_n(\ce)\bigr)
\]
where $S_n(\ce) = \sup_{e \in \ce} \|\hat{M}_n(e)-M_n(e)\|$.
Then by Cauchy-Schwarz
\begin{equation}\begin{split}
\label{eq:peeling_sup_1}
\phe\ &\sup_{e \in \ce: \|e-\tilde{e}\|_{2,P_n} \leq u} \bigl|(\nabla \Psi(R_n(e)) - \nabla \Psi(R_n(\tilde{e})))^{\top}(\hat{M}_n(e)-M_n(e))\bigr|\indic(B_n) \\
&\leq K_0\bigl(Cu+2S_n(\ce)\bigr)S_n(\ce)
\end{split}
\end{equation}
for all $u \leq \tilde{r}/3$.

Next, define 
\[
c(x;e,\tilde{e},\eta,\hat{\eta}) = \bigl[f(e(X_i),\hat{\eta}(X_i))-f(e(X_i),\eta(X_i))\bigr]-\bigl[f(\tilde{e}(X_i),\hat{\eta}(X_i))-f(\tilde{e}(X_i),\eta(X_i))\bigr]
\]
for all $x \in \cx$ so that
\[
\bigl(\hat{M}_n(e)-M_n(e)\bigr)-\bigl(\hat{M}_n(\tilde{e})-M_n(\tilde{e})\bigr) = \frac1n \sum_{i=1}^n c(X_i;e,\tilde{e},\eta,\hat{\eta}).
\]
Fix $k,\ell \in \{1,\ldots,p\}$.
By Taylor's theorem with the Lagrange form of the remainder,
\begin{align*}
|c(X_i;e,\tilde{e},\eta,\hat{\eta})| & = |[f(e(X_i),\hat{\eta}(X_i))-f(e(X_i),\eta(X_i))]-[f(\tilde{e}(X_i),\hat{\eta}(X_i))-f(\tilde{e}(X_i),\eta(X_i))]| \\
& = |f_w(e(X_i),\eta_1(X_i))^{\top}(\hat{\eta}(X_i)-\eta(X_i)) - f_w(\tilde{e}(X_i),\eta_2(X_i))^{\top}(\hat{\eta}(X_i)-\eta(X_i))| \\
& \leq \|\hat{\eta}(X_i)-\eta(X_i)\|_2\|f_w(e(X_i),\eta_1(X_i))-f_w(e(X_i),\eta(X_i))\|_2 \\
&\phe\ + \|\hat{\eta}(X_i)-\eta(X_i)\|_2 \|f_w(e(X_i),\eta(X_i))-f_w(\tilde{e}(X_i),\eta(X_i))\|_2\\
&\phe\ + \|\hat{\eta}(X_i)-\eta(X_i)\|_2\|f_w(\tilde{e}(X_i),\eta(X_i))-f_w(\tilde{e}(X_i),\eta_2(X_i))\|_2 \\
& \leq 2C\|\hat{\eta}(X_i)-\eta(X_i)\|_2^2 + C\|\hat{\eta}(X_i)-\eta(X_i)\|_2|\tilde{e}(X_i)-e(X_i)|.
\end{align*}
We have omitted subscripts $k\ell$ on $c$ and $f$ everywhere in the preceding display for brevity
(i.e., $c$ above denotes $c_{k\ell}$ and $f$ above denotes $f_{k\ell}$).
The functions
$e_1(x)$ and $e_2(x)$ are somewhere on the line segment between $e(x)$ and $\tilde{e}(x)$,
and he functions $\eta_1(x)$ and $\eta_2(x)$ are somewhere on the line segment between $\eta(x)$ and $\hat{\eta}(x)$,
and the final inequality follows from uniform boundedness of $f_{ww}$ and $f_w'$.
We conclude
\begin{align*}
\sup_{e \in \ce: \|e-\tilde{e}\|_{2,P_n} \leq u} \|(\hat{M}_n(e)-M_n(e))-(\hat{M}_n(\tilde{e})-M_n(\tilde{e}))\| & \leq 
\frac{1}{n}\sum_{i=1}^n \|c(X_i;e,\tilde{e},\eta,\hat{\eta})\|  \\
& \leq Cp^2
\bigl(2\|\hat{\eta}-\eta\|_{2,P_n}^2 + u\|\hat{\eta}-\eta\|_{2,P_n}\bigr)
\end{align*}
where the last inequality follows by Cauchy-Schwarz.
Recalling $\nabla \Psi(R_n(\tilde{e})) \preceq \tilde{K}I$ whenever $B_n$ holds,
we conclude
\begin{align*}
&   \sup_{e \in \ce: \|e-\tilde{e}\|_{2,P_n} \leq u} \nabla \Psi(R_n(\tilde{e}))^{\top}[(\hat{M}_n(e)-M_n(e))-(\hat{M}_n(\tilde{e})-M_n(\tilde{e}))]\indic(B_n)\\
&\leq \tilde{K}C
   p^2(2\|\hat{\eta}-\eta\|_{2,P_n}^2 + u\|\hat{\eta}-\eta\|_{2,P_n})
\end{align*}
for all $u \leq r/3$.
The preceding display and~\eqref{eq:peeling_sup_1} imply Lemma~\ref{lemma:centered_empirical_process} in view of the decomposition~\eqref{eq:U_n_diff}. 
\end{proof}

Continuing with the proof of Step 4,
let
$C_n=A_n \cap B_n$ and $r=r_0 \wedge \beta > 0$,
where $A_n$ and $r_0$ are as in~\eqref{eq:strong_concavity_n} with $\hat{e}=\tilde{e}$ and $B_n$ and $\beta$ are as in~\eqref{eq:centered_empirical_process}.
Fix $M > -\infty$ and an arbitrary sequence $a_n \uparrow \infty$.
For each $j>M$ with $2^ja_n\alpha_n \leq r$,
define the ``shell" $S_j = \{e \in \cf_n: 2^{j-1}a_n\alpha_n < \|e-\tilde{e}\|_{2,P_n} \leq 2^j a_n\alpha_n\}$.
It follows that for each such $j$, 
whenever $e \in S_j$ we have $r_0 \geq r \geq \|e-\tilde{e}\|_{2,P_n} \geq 2^{j-1}a_n\alpha_n$,
and so by~\eqref{eq:strong_concavity_n},
we have
\[
(\phi_n(e)-\phi_n(\tilde{e}))\indic(C_n) \leq -c_0\|e-\tilde{e}\|_{2,P_n}^2\indic(C_n) \leq -c_02^{2j-2}a_n^2\alpha_n^2\indic(C_n)
\]
for all $e \in S_j$.
Hence using the definition of $\hat e$
\begin{align*}
\indic(\hat{e} \in S_j)\indic(C_n) & \leq \indic\biggl(\,\sup_{e \in S_j}\hat{\phi}_n(e)-\hat{\phi}_n(\tilde{e}) \geq 0\biggr)\indic(C_n)
\\
& \leq \indic\biggl(\,\sup_{e:\|e-\tilde{e}\|_{2,P_n} \leq 2^ja_n\alpha_n}(\hat{\phi}_n(e)-\hat{\phi}_n(\tilde{e}))-(\phi_n(e)-\phi_n(\tilde{e})) \geq c_02^{2j-2}a_n^2\alpha_n^2\biggr)\indic(C_n) \\
& \leq \indic\bigl(2^ja_n\alpha_n \|\hat{\eta}-\eta\|_{2,P_n} + \|\hat{\eta}-\eta\|_{2,P_n}^2 \geq c_0C_0^{-1}2^{2j-2}a_n^2\alpha_n^2\bigr)\indic(C_n) \\
& \leq \frac{2^ja_n\alpha_n\|\hat{\eta}-\eta\|_{2,P_n} + \|\hat{\eta}-\eta\|_{2,P_n}^2}{c_0C_0^{-1}2^{2j-2}a_n^2\alpha_n^2}\indic(C_n) \\
& = \frac{C_0}{c_0} \biggl(\frac{\|\hat{\eta}-\eta\|_{2,P_n}}{2^{j-2}a_n\alpha_n} + \frac{\|\hat{\eta}-\eta\|_{2,P_n}^2}{2^{2j-2}a_n^2\alpha_n^2}\biggr) \indic(C_n)
\end{align*}
where the last inequality follows from Lemma~\ref{lemma:centered_empirical_process}.
Then
\begin{align*}
\indic\left(\frac{r}{2} \geq \|\hat{e}-\tilde{e}\|_{2,P_n} > 2^Ma_n\alpha_n\right)\indic(C_n) & \leq \sum_{j > M,2^ja_n\alpha_n \leq r} \indic(\hat{e} \in S_j) \indic(C_n) \\
& \leq \frac{C_0}{c_0} \sum_{j=M+1}^{\infty} \left(\frac{\|\hat{\eta}-\eta\|_{2,P_n}}{2^{j-2}a_n\alpha_n} + \frac{\|\hat{\eta}-\eta\|_{2,P_n}^2}{2^{2j-2}a_n^2\alpha_n^2}\right) \\
& = \frac{C_0}{c_0}\left(\frac{\|\hat{\eta}-\eta\|_{2,P_n}}{2^{M-2}a_n\alpha_n} + \frac{4}{3}\frac{\|\hat{\eta}-\eta\|_{2,P_n}^2}{2^{2M}a_n^2\alpha_n^2}\right).
\end{align*}
Since $\|\hat{\eta}-\eta\|_{2,P} = o_p(a_n\alpha_n)$,
we conclude
\[
\Pr(r/2 \geq \|\hat{e}-\tilde{e}\|_{2,P_n} > 2^Ma_n\alpha_n,C_n) = o(1)
\]
for each $M$.
Now, by step 3
we know that $\|\hat{e}-e^*\|_{2,P_n} = o_p(1)$.
Applying step 3 again but with $\hat{\eta}=\eta$ shows $\|\tilde{e}-e^*\|_{2,P_n} = O_p(n^{-1/4}) = o_p(1)$.
Thus
\[
\|\hat{e}-\tilde{e}\|_{2,P_n} \leq \|\hat{e}-e^*\|_{2,P_n} + \|e^*-\tilde{e}\|_{2,P_n} = o_p(1)
\]
so that $\Pr(\|\hat{e}-\tilde{e}\|_{2,P_n} > r/2) = o(1)$.
Since $\Pr(C_n) \rightarrow 1$ as $n \rightarrow \infty$,
we can conclude that 
\[
\Pr(\|\hat{e}-\tilde{e}\|_{2,P_n} > 2^Ma_n\alpha_n) = o(1).
\]
With $a_n \uparrow \infty$ arbitrary,
by the preceding display and Lemma~\ref{lemma:oh_pee} we have $\|\hat{e}-\tilde{e}\|_{2,P_n} = O_p(\alpha_n)$ and
\[
\|\hat{e}-e^*\|_{2,P_n} \leq \|\hat{e}-\tilde{e}\|_{2,P_n} + \|\tilde{e}-e^*\|_{2,P_n} = O_p(n^{-1/4}) + O_p(\alpha_n)
\]
as desired.

It remains to show the same convergence rate holds out of sample,
i.e., $\|\hat{e}-e^*\|_{2,P} = O_p(n^{-1/4})+O_p(\alpha_n)$.
We do this by showing that $\big| \|\hat{e}-e^*\|_{2,P_n} - \|\hat{e}-e^*\|_{2,P} \big| = O_p(n^{-1/4})$.
With $\ce_2^-$ as defined in Lemma~\ref{lemma:sq_bracket} in terms of the collection $\ce$ of Assumption~\ref{assump:info_matrix},
we know that 
\begin{align*}
\big| \|\hat{e}-e^*\|_{2,P_n}^2 - \|\hat{e}-e^*\|_{2,P}^2 \big| & = \Big|\frac1n\sum_{i=1}^n [(\hat{e}(X_i)-e^*(X_i))^2 - \int_{\cx} (\hat{e}(x)-e^*(x))^2\, \mrd P(x)] \Big| \\
& \leq \sup_{e \in \ce_2^-} |(P_n-P)e|.
\end{align*}
Furthermore,
by Lemma~\ref{lemma:empirical_process} and Lemma~\ref{lemma:sq_bracket} we know that for some $K_0<\infty$ we have
\begin{align*}
\sup_{e \in \ce_2^-} |(P_n-P)f| & \leq K_0 n^{-1/2} \int_0^1 \sqrt{\log \cn(\epsilon,\ce_2^-,L^2(P_n))} \,\mrd\epsilon \\
& \leq K_0 \sqrt{2} n^{-1/2} \int_0^1 \sqrt{\log\cn(\epsilon/4,\ce,L^2(P_n))} \,\mrd\epsilon \\
& = 4K_0\sqrt{2} n^{-1/2} \int_0^{1/4}\sqrt{\log \cn(\delta,\ce,L^2(P_n))} \,\mrd\delta.
\end{align*}
Then by~\eqref{eq:finite_entropy} we conclude $\sup_{e \in \ce_2^-} |(P_n-P)e| = O_p(n^{-1/2})$.
Since 
\[
|a-b| \leq a+b \implies \sqrt{|a-b|} \leq \sqrt{a+b} \implies |a-b| \leq \sqrt{a+b}\sqrt{|a-b|} = \sqrt{|a^2-b^2|}
\]
for any $a,b \geq 0$,
we have
\[
\big| \|\hat{e}-e^*\|_{2,P_n} - \|\hat{e}-e^*\|_{2,P} \big| = O_p(n^{-1/4})
\]
as desired.
\end{proof}

\subsection{Proof of Theorem~\ref{thm:concave_maximization}}
\label{proof:thm:concave_maximization}
Here we prove convergence of Algorithm~\ref{alg:csbae},
our concave maximization procedure for designing an optimal CSBAE.
We begin by proving we can design for $\hat{\theta}_{\aipw}$, as stated in the first numbered condition of
Theorem~\ref{thm:concave_maximization}.
The proof proceeds by showing that the objective defining $e_{t,\aipw}^*(\cdot)$ in~\eqref{eq:e_t_star_aipw} can be written in a form so that Assumption~\ref{assump:info_matrix} and~\eqref{eq:eta_n} are satisfied,
the latter with $\alpha_N=N^{-1/4}$.
Then we conclude by applying Lemma~\ref{lemma:concave_maximization}.

Many of our expressions will include the cumulative sum of batch frequencies $\sum_{u=1}^t\kappa_u$. We use $\kappa_{1:t}$ to denote this quantity below. 
We similarly abbreviate $\sum_{u=1}^tN_u$ to $N_{1:t}$.

With $V_{0:t,\aipw}$ scalar,
the information function $\Psi=\Psi(\cdot)$ is simply an increasing scalar-valued function by Assumption~\ref{assump:Psi},
and hence we have
\begin{align*}
e_{t,\aipw}^*(\cdot) = \argmax_{e_t(\cdot) \in \cf_{*,t}} (V_{0:t,\aipw})^{-1} & = \argmin_{e_t(\cdot) \in \cf_{*,t}} V_{0:t,\aipw} \\
& = \argmax_{e_t(\cdot) \in \cf_{*,t}} h(V_{0:t,\aipw}) \\
& = \argmax_{e_t(\cdot) \in \cf_{t,*}} \e_{P^X}\left[\frac{2C}{\gamma_0}-\frac{v_0(1,X)}{e_0^{(t)}(X)} - \frac{v_0(0,X)}{1-e_0^{(t)}(X)}\right]
\end{align*}
where in the final equality we dropped the additive term $\e[(\tau_0(X)-\theta_0)^2]$ which is independent of $e_t(\cdot)$,
and defined $h(x) = 2C/\gamma_0-x$ for
\[
\gamma_0 := \frac{\epsilon_1}{2}\frac1{\kappa_{1:t}}\min(\kappa_1,\kappa_t)  > 0.
\]
Evidently $h(\cdot)$ is decreasing.
We can now define the information matrix
\[
\ci = \ci(e_t,\eta_0) = \e_{P^X}\left[\frac{2C}{\gamma_0}-\frac{v_0(1,X)}{e_0^{(t)}(X)} - \frac{v_0(0,X)}{1-e_0^{(t)}(X)}\right]
\]
which is of the form~\eqref{eq:generic_info} with
\[
f(e,w) = \frac{2C}{\gamma_0} - \frac{w_1}{1-w_3-w_4e} - \frac{w_2}{w_3+w_4e}
\]
and
\[
\eta(x) = \biggl(v_0(0,x),\, v_0(1,x),\,\frac1{\kappa_{1:t}} \sum_{u=1}^{t-1} \kappa_u e_u(x),\, \frac1{\kappa_{1:t}}\kappa_t\biggr).
\]
Let $\cw=[c,C]^2 \times \cw_+$ where $\cw_+=\{(x,y) \in \real^2 \mid x \geq \gamma_0, y \geq \gamma_0, x+y \leq 1-\gamma_0\}$.
By~\eqref{eq:prop_asymp_limit} and the assumptions of the Theorem
(specifically the uniform bounds on the variance functions and the assumption that $\epsilon_1 \leq e_1(x)=\hat{e}_1^{(k)}(x) \leq 1-\epsilon_1$ for all $x \in \cx$ and folds $k=1,\ldots,K$),
we can verify that
$\eta(x) \in \cw$ and $\hat{\eta}^{(k)}(x) \in \cw$ for all $x \in \cx$, folds $k=1,\ldots,K$, and sufficiently large $N$,
where
\[
\hat{\eta}^{(k)}(x) = \biggl(\hat{v}^{(k)}(0,x),\,\hat{v}^{(k)}(1,x),\,\frac1{N_{1:t}} \sum_{u=1}^{t-1} N_u \hat{e}_u^{(k)}(x),\, \frac{N_t}{N_{1:t}}\biggr).
\]
Also we have
\[
1-\gamma_0 \geq w_3+w_4e \geq \gamma_0 \quad \text{and} \quad f(e,w) \geq 0, \quad \forall (e,w) \in [0,1] \times \cw.
\]
Evidently $\cw$ is closed and bounded, hence compact.
With $w_3+w_4e$ linear in $e$,
there exists $\delta < 0$ such that for all $(e,w)$ in a neighborhood containing $(\delta,1-\delta) \times \cw$,
$w_3+w_4e$ is uniformly bounded away from 0
and then $f(e,w)$ evidently has continuous second partial derivatives on this neighborhood.
Finally, we compute
\[
-f''(e,w) = \frac{2w_2w_4^2}{(w_3+w_4e)^3} + \frac{2w_1w_4^2}{(1-w_3-w_4e)^3} \geq \frac{2c\gamma_0^2}{(1-\gamma_0)^3} > 0, \quad \forall (e,w) \in [0,1] \times \cw
\]
which shows that all conditions of Assumption~\ref{assump:info_matrix} have been satisfied.
Equation~\eqref{eq:eta_n} holds with $\alpha_N=N^{-1/4}$ by~\eqref{eq:prop_asymp_conv} and~\eqref{eq:csbae_limit},
completing the proof of the first numbered condition of Theorem~\ref{thm:concave_maximization},
pertaining to design for $\hat{\theta}_{\aipw}$.

It remains to show the second numbered condition holds.
As above,
the proof proceeds by showing the objective for $\hat{e}_{t,\epl}^*(\cdot)$ in~\eqref{eq:e_t_star_aipw} can be written in a form so that Assumption~\ref{assump:info_matrix} and~\eqref{eq:eta_n} are satisfied,
the latter with $\alpha_N=N^{-1/4}$.
To that end,
we take $\ci=\ci(e_t,\eta_0) =V_{0:t,\epl}^{-1}$ which takes the form~\eqref{eq:generic_info} with
\[
f(e,w) = f(e,w_1,w_2,w_3,w_4,w_5) = \frac{(w_3+w_4e)(1-w_3-w_4e)}{w_1(w_3+w_4e) + w_2(1-w_3-w_4e)}w_5w_5^{\top}
\]
and 
\[
\eta(x) = \biggl(v_0(0,x),\, v_0(1,x),\, \frac1{\kappa_{1:t}} \sum_{u=1}^{t-1} \kappa_u e_u(x),\, \frac{\kappa_t}{\kappa_{1:t}},\,\psi(x)\biggr).
\]
Let $\cw = [c,C]^2 \times \cw_+ \times [-C,C]^p \subseteq \real^{4+p}$,
where once again $\cw_+=\{(x,y) \in \real^2 \mid x \geq \gamma_0, y \geq \gamma_0, x+y \leq 1-\gamma_0\}$
for
\[
\gamma_0 := \frac{\epsilon_1}{2\kappa_{1:t}}\min(\kappa_1,\kappa_t)  > 0.
\]
Evidently $\cw$ is closed and bounded, hence compact.
Then the assumptions of the Theorem (specifically the uniform bounds on the variance functions and the assumption that $\epsilon_1 \leq e_1(x)=\hat{e}_1^{(k)}(x) \leq 1-\epsilon_1$ for all $x \in \cx$ and folds $k=1,\ldots,K$) ensure that for each fold $k=1,\ldots,K$,
if we take 
\[
\hat{\eta}^{(k)}(x)=\biggl(\hat{v}^{(k)}(0,x),\,\hat{v}^{(k)}(1,x),\,\frac1{N_{1:t}} \sum_{u=1}^{t-1} N_u \hat{e}_u^{(k)}(x),\, \frac{N_t}{N_{1:t}},\, \psi(x)\biggr)
\]
then $\eta(x) \in \cw$ and $\hat{\eta}^{(k)}(x) \in \cw$ for all $x \in \cx$ whenever $N$ is sufficiently large.
We note that for each $e \in [0,1]$,
$$
f(e,w) = \frac{(w_3+w_4e)(1-w_3-w_4e)}{w_1(w_3+w_4e) + w_2(1-w_3-w_4e)}w_5w_5^{\top}
$$
is positive semidefinite because the lead constant above is nonnegative.
Furthermore,
note that the denominator $w_1(w_3+w_4e)+w_2(1-w_3-w_4e)$ is bounded below by $c$ for any $(e,w) \in [0,1] \times \cw$ and continuous on $(e,w_1,w_2,w_3,w_4) \in \real^5$.
This denominator is linear in $e$ (for fixed $w$) and so additionally
there exists $\delta < 0$ such that on some open neighborhood containing $(\delta,1-\delta) \times \cw$,
this denominator is strictly positive.
Therefore $f(e,w)$ has two continuous partial derivatives with respect to $e$.
Finally, we compute
\[
-f''(e,w) = \frac{2w_4^2(w_3+w_4e)(1-w_3-w_4e)}{(w_1(w_3+w_4e)+w_2(1-w_3-w_4e))^3}w_5w_5^{\top}. 
\]
This is positive semidefinite since as above,
$1-\gamma_0 \geq w_3+w_4e \geq \gamma_0$ for all $(e,w) \in [0,1] \times \cw$,
so
$$\frac{2w_4^2(w_3+w_4e)(1-w_3-w_4e)}{(w_1(w_3+w_4e)+w_2(1-w_3-w_4e))^3}>0$$ 
on $[0,1] \times \cw$.
Furthermore, as all diagonal entries of $w_5w_5^{\top}$ are nonnegative,
the inclusion of an intercept in $\psi(x)$ ensures that
\begin{align*}
\inf_{(e,w) \in [0,1] \times \cw} \tr(-f''(e,w)) &\geq \inf_{(e,w) \in [0,1] \times \cw}\, \frac{2w_4^2(w_3+w_4e)(1-w_3-w_4e)}{[w_1(w_3+w_4e)+w_2(1-w_3-w_4e)]^3} \\
& \geq \frac{2\gamma_0^3(1-\gamma_0)}{C^3}.
\end{align*}
As before,
equation~\eqref{eq:eta_n} holds with $\alpha_N=N^{-1/4}$ by~\eqref{eq:prop_asymp_conv} and~\eqref{eq:csbae_limit},
enabling us to apply Lemma~\ref{lemma:concave_maximization} and
completing the proof of the Theorem.

\section{Additional simulations}
\label{app:simulations}
We present results from some additional numerical simulations in the framework of Section~\ref{sec:simulations}.
\subsection{Unequal budget constraints}
\label{app:simulations_unequal}
Tables~\ref{table:ate_unequal_sim} and~\ref{table:epl_unequal_sim} reproduce Tables~\ref{table:ate_sim} and~\ref{table:epl_sim} using simulations with budget constraints $m_{L,2}=m_{H,2}=0.4$.
The results are qualitatively similar to those in the main text.
One notable difference is that there seem to be some additional gains to pooling in ATE estimation,
both asymptotically and in finite samples.
For example, in the homoskedastic DGPs,
we see about a 5\% asymptotic efficiency gain from pooling in Table~\ref{table:ate_unequal_sim} when either $d=1$ or $d=10$,
which translates well to finite sample gains,
particularly for the $d=10$ DGP.
By contrast there is no asymptotic gain for the homoskedastic DGPs in Table~\ref{table:ate_sim}.

\begin{table}[!htb]
\centering
\caption{Same as Table~\ref{table:ate_sim}, but for simulations with $m_{L,2}=m_{H,2}=0.4$ \\
}
\label{table:ate_unequal_sim}
\begin{tabular}{ccccc}
\toprule
DGP & Estimator & Design &  Sim. rel. eff. (90\% CI) & Asymp. rel. eff. \\
\midrule
\multirow{5}{0.15\linewidth}{\centering $d=1$, Homoskedastic} & $\hat{\theta}_{\aipw}$ & Flexible & 1.051 (1.015, 1.088) & 1.050 \\
& $\hat{\theta}_{\aipw}$ & Binned & 1.034 (0.995, 1.073) & 1.050 \\
& $\hat{\theta}_{\aipw}$ & Simple RCT & 1.050 (1.024, 1.077) & 1.050 \\
& $\hat{\theta}_{\aipw}^{(\la)}$ & Flexible & 0.993 (0.977, 1.008) & 1.000 \\
& $\hat{\theta}_{\aipw}^{(\text{bin})}$ & Binned & 0.916 (0.871, 0.962) & 0.919 \\
\midrule
\multirow{5}{0.15\linewidth}{\centering $d=1$, Heteroskedastic} & $\hat{\theta}_{\aipw}$ & Flexible & 1.024 (0.978, 1.071) & 1.073 \\
& $\hat{\theta}_{\aipw}$ & Binned & 1.099 (1.046, 1.154) & 1.067 \\
& $\hat{\theta}_{\aipw}$ & Simple RCT & 1.043 (1.019, 1.067) & 1.034 \\
& $\hat{\theta}_{\aipw}^{(\la)}$ & Flexible & 1.032 (1.003, 1.063) & 1.012 \\
& $\hat{\theta}_{\aipw}^{(\text{bin})}$ & Binned & 1.016 (0.961, 1.071) & 0.993 \\
\midrule
\multirow{5}{0.15\linewidth}{\centering $d=10$, Homoskedastic} & $\hat{\theta}_{\aipw}$ & Flexible & 1.084 (1.031, 1.139) & 1.050 \\
& $\hat{\theta}_{\aipw}$ & Binned & 1.081 (1.030, 1.133) & 1.029 \\
& $\hat{\theta}_{\aipw}$ & Simple RCT & 1.128 (1.096, 1.161) & 1.050 \\
& $\hat{\theta}_{\aipw}^{(\la)}$ & Flexible & 0.994 (0.970, 1.018) & 1.000 \\
& $\hat{\theta}_{\aipw}^{(\text{bin})}$ & Binned & 0.520 (0.478, 0.565) & 0.448 \\
\midrule
\multirow{5}{0.15\linewidth}{\centering $d=10$, Heteroskedastic} & $\hat{\theta}_{\aipw}$ & Flexible & 1.051 (0.999, 1.107) & 1.064 \\
& $\hat{\theta}_{\aipw}$ & Binned & 1.001 (0.952, 1.051) & 1.045 \\
& $\hat{\theta}_{\aipw}$ & Simple RCT & 1.062 (1.031, 1.094) & 1.035 \\
& $\hat{\theta}_{\aipw}^{(\la)}$ & Flexible & 0.981 (0.948,1.015) & 1.012 \\
& $\hat{\theta}_{\aipw}^{(\text{bin})}$ & Binned & 0.626 (0.582, 0.671) & 0.611 \\
\bottomrule
\end{tabular}
\end{table}

\begin{table}[!htb]
\centering
\caption{Same as Table~\ref{table:epl_sim}, but for simulations with $m_{L,2}=m_{H,2}=0.4$ \\
}
\label{table:epl_unequal_sim}
\begin{tabular}{ccccc}
\toprule
DGP & Estimator & Design &  Sim. rel. eff. (90\% CI) & Asymp. rel. eff. \\
\midrule
\multirow{4}{0.15\linewidth}{\centering $d=1$, Homoskedastic} & $\hat{\theta}_{\epl}$ & Flexible & 1.118 (1.076, 1.162) & 1.092 \\
& $\hat{\theta}_{\epl}$ & Binned & 1.084 (1.038, 1.131) & 1.051 \\
& $\hat{\theta}_{\epl}$ & Simple RCT & 1.099 (1.072, 1.126) & 1.050 \\
& $\hat{\theta}_{\epl}^{(\la)}$ & Flexible & 0.995 (0.976, 1.015) & 1.006\\
\midrule
\multirow{4}{0.15\linewidth}{\centering $d=1$, Heteroskedastic} & $\hat{\theta}_{\epl}$ & Flexible & 1.139 (1.035, 1.249) & 1.090 \\
& $\hat{\theta}_{\epl}$ & Binned & 1.050 (0.905, 1.188) & 1.013 \\
& $\hat{\theta}_{\epl}$ & Simple RCT & 1.074 (0.980, 1.172) & 1.034 \\
& $\hat{\theta}_{\epl}^{(\la)}$ & Flexible & 1.028 (0.970, 1.090) & 1.017 \\
\midrule
\multirow{4}{0.15\linewidth}{\centering $d=10$, Homoskedastic} & $\hat{\theta}_{\epl}$ & Flexible & 1.234 (1.203, 1.266) & 1.056 \\
& $\hat{\theta}_{\epl}$ & Binned & 1.204 (1.172, 1.237) & 0.994 \\
& $\hat{\theta}_{\epl}$ & Simple RCT & 1.262 (1.234, 1.292) & 1.050 \\
& $\hat{\theta}_{\epl}^{(\la)}$ & Flexible & 0.997 (0.982, 1.013) & 1.002 \\
\midrule
\multirow{4}{0.15\linewidth}{\centering $d=10$, Heteroskedastic} & $\hat{\theta}_{\epl}$ & Flexible & 1.104 (1.065, 1.144) & 1.069 \\
& $\hat{\theta}_{\epl}$ & Binned & 1.055 (1.016, 1.094) & 1.043 \\
& $\hat{\theta}_{\epl}$ & Simple RCT & 1.078 (1.042, 1.116) & 1.035 \\
& $\hat{\theta}_{\epl}^{(\la)}$ & Flexible & 0.993 (0.967,1.019) & 1.013 \\
\bottomrule
\end{tabular}
\end{table}

\subsection{Perfect nuisance estimation}
\label{app:oracle_sim}
To better isolate the performance effects of our specific choices of nuisance estimation methods in the numerical study of Section~\ref{sec:simulations},
in Tables~\ref{table:ate_oracle_sim} and~\ref{table:epl_oracle_sim}
we reproduce Tables~\ref{table:ate_sim} and~\ref{table:epl_sim},
respectively,
but assume all nuisance functions are 
known exactly at both the design and estimation stages.
For ATE estimation (Table~\ref{table:ate_oracle_sim}),
the flexible designs are aware of a perfectly constant variance function in the homoskedastic DGPs,
which induces them to always learn the (optimal) simple RCT in every simulation.
However, for the binned designs in the homoskedastic DGPs and both the binned and flexible designs in the heteroskedastic DGPs,
there is some cross-simulation variability in the propensity learned.
This stems from variation in the parts of the variance function being sampled due to variation in the covariates across simulations.
Consequently, the simulated finite sample efficiency gain ends up being somewhat lower than the asymptotic gain.
This suggests that the finite sample efficiency gains from pooling observed in Table~\ref{table:ate_sim} are due to improved use of nuisance function estimates by the pooled estimator $\hat{\theta}_{\aipw}$.
One reason we might expect this is that the pooled estimator uses nuisance estimates from observations pooled across both batches of the experiment,
while each component of the linearly aggregated estimator only uses nuisance estimates from a single batch.

\begin{table}[!htb]
\centering
\caption{Same as Table~\ref{table:ate_sim}, but for oracle simulations that assume knowledge of the true nuisance functions \\
}
\label{table:ate_oracle_sim}
\begin{tabular}{ccccc}
\toprule
DGP & Estimator & Design &  Sim. rel. eff. (90\% CI) & Asymp. rel. eff. \\
\midrule
\multirow{5}{0.15\linewidth}{\centering $d=1$, Homoskedastic} & $\hat{\theta}_{\aipw}$ & Flexible & 1.000 (1.000, 1.000) & 1.000 \\
& $\hat{\theta}_{\aipw}$ & Binned & 0.978 (0.960, 0.997) & 0.999 \\
& $\hat{\theta}_{\aipw}$ & Simple RCT & 1.000 (1.000, 1.000) & 1.000 \\
& $\hat{\theta}_{\aipw}^{(\la)}$ & Flexible & 1.000 (1.000, 1.000) & 1.000 \\
& $\hat{\theta}_{\aipw}^{(\text{bin})}$ & Binned & 0.828 (0.794, 0.863) & 0.875 \\
\midrule
\multirow{5}{0.15\linewidth}{\centering $d=1$, Heteroskedastic} & $\hat{\theta}_{\aipw}$ & Flexible & 1.005 (0.960, 1.051) & 1.051 \\
& $\hat{\theta}_{\aipw}$ & Binned & 0.977 (0.935, 1.020) & 1.044 \\
& $\hat{\theta}_{\aipw}$ & Simple RCT & 1.000 (1.000, 1.000) & 1.000 \\
& $\hat{\theta}_{\aipw}^{(\la)}$ & Flexible & 1.008 (0.977, 1.039) & 1.026 \\
& $\hat{\theta}_{\aipw}^{(\text{bin})}$ & Binned & 0.905 (0.859, 0.952) & 0.976 \\
\midrule
\multirow{5}{0.15\linewidth}{\centering $d=10$, Homoskedastic} & $\hat{\theta}_{\aipw}$ & Flexible & 1.000 (1.000, 1.000) & 1.000 \\
& $\hat{\theta}_{\aipw}$ & Binned & 0.968 (0.923, 1.013) & 0.965 \\
& $\hat{\theta}_{\aipw}$ & Simple RCT & 1.000 (1.000, 1.000) & 1.000 \\
& $\hat{\theta}_{\aipw}^{(\la)}$ & Flexible & 1.000 (1.000,1.000) & 1.000 \\
& $\hat{\theta}_{\aipw}^{(\text{bin})}$ & Binned & 0.402 (0.368, 0.439) & 0.432 \\
\midrule
\multirow{5}{0.15\linewidth}{\centering $d=10$, Heteroskedastic} & $\hat{\theta}_{\aipw}$ & Flexible & 1.023 (0.986, 1.060) & 1.043 \\
& $\hat{\theta}_{\aipw}$ & Binned & 1.010 (0.967, 1.054) & 1.019 \\
& $\hat{\theta}_{\aipw}$ & Simple RCT & 1.000 (1.000, 1.000) & 1.000 \\
& $\hat{\theta}_{\aipw}^{(\la)}$ & Flexible & 1.020 (0.990,1.051) & 1.026 \\
& $\hat{\theta}_{\aipw}^{(\text{bin})}$ & Binned & 0.618 (0.575, 0.663) & 0.612 \\
\bottomrule
\end{tabular}
\end{table}

\begin{table}[t!]
\centering
\caption{Same as Table~\ref{table:epl_sim}, but for oracle simulations that assume knowledge of the true nuisance functions
}
\label{table:epl_oracle_sim}
\begin{tabular}{ccccc}
\toprule
DGP & Estimator & Design &  Sim. rel. eff. (90\% CI) & Asymp. rel. eff. \\
\midrule
\multirow{4}{0.15\linewidth}{\centering $d=1$, Homoskedastic} & $\hat{\theta}_{\epl}$ & Flexible & 1.129 (1.080, 1.179) & 1.100 \\
& $\hat{\theta}_{\epl}$ & Binned & 1.026 (0.984, 1.069) & 1.021 \\
& $\hat{\theta}_{\epl}$ & Simple RCT & 1.009 (1.003, 1.016) & 1.000 \\
& $\hat{\theta}_{\epl}^{(\la)}$ & Flexible & 1.073 (1.041, 1.107) & 1.056\\
\midrule
\multirow{4}{0.15\linewidth}{\centering $d=1$, Heteroskedastic} & $\hat{\theta}_{\epl}$ & Flexible & 1.126 (1.070, 1.184) & 1.128 \\
& $\hat{\theta}_{\epl}$ & Binned & 0.966 (0.916, 1.017) & 0.969 \\
& $\hat{\theta}_{\epl}$ & Simple RCT & 1.008 (1.001, 1.015) & 1.000 \\
& $\hat{\theta}_{\epl}^{(\la)}$ & Flexible & 1.082 (1.043, 1.123) & 1.079 \\
\midrule
\multirow{4}{0.15\linewidth}{\centering $d=10$, Homoskedastic} & $\hat{\theta}_{\epl}$ & Flexible & 1.079 (1.068, 1.090) & 1.025 \\
& $\hat{\theta}_{\epl}$ & Binned & 0.997 (0.979, 1.015) & 0.966 \\
& $\hat{\theta}_{\epl}$ & Simple RCT & 1.047 (1.041, 1.053) & 1.000 \\
& $\hat{\theta}_{\epl}^{(\la)}$ & Flexible & 1.021 (1.013, 1.028) & 1.019 \\
\midrule
\multirow{4}{0.15\linewidth}{\centering $d=10$, Heteroskedastic} & $\hat{\theta}_{\epl}$ & Flexible & 1.132 (1.115, 1.148) & 1.075 \\
& $\hat{\theta}_{\epl}$ & Binned & 1.082 (1.058, 1.106) & 1.054 \\
& $\hat{\theta}_{\epl}$ & Simple RCT & 1.059 (1.052, 1.066) & 1.000 \\
& $\hat{\theta}_{\epl}^{(\la)}$ & Flexible & 1.055 (1.041,1.068) & 1.062 \\
\bottomrule
\end{tabular}
\end{table}

In Table~\ref{table:epl_oracle_sim}, however,
we still see some finite sample efficiency gains from pooling,
though the effect is not as large as in Table~\ref{table:epl_sim} in the main text.
We attribute this to the fact that for estimating $\theta_{0,\pl}$,
pooling allows the asymptotic variance to be approached more quickly as a function of the total sample size $N$.
Such an effect does not show up in ATE estimation with AIPW,
since for $\hat{\theta}_{\aipw}^*$ and $\hat{\theta}_{\epl}^*$ the oracle estimators of Section~\ref{sec:pooled_estimation},
we can see that $N\Var(\hat{\theta}_{\aipw}^*)=V_{0,\aipw}$ exactly for all $N$ while $N\Var(\hat{\theta}_{\epl}^*)$ only approaches $V_{0,\epl}$ asymptotically as $N \rightarrow \infty$.
So letting $A_N^*$ be the (finite sample) AMSE of $\hat{\theta}_{\epl}^*$ computed on a sample of size $N$,
we'd expect $NA_N^* > 2NA_{2N}^*$.
Then averaging two independent copies of $\hat{\theta}_{\epl}^*$ on $N$ observations yields an estimator with AMSE $A_N^*/2$,
while pooling would yield an estimator with AMSE $A_{2N^*}$.

\clearpage
\bibliographystyle{apalike}
\bibliography{semiparametric}

\end{document}